\newtheorem{theorem}{Theorem}[section]
\newtheorem{ass}{Assumption}
\newenvironment{assp}[1]{
  
  \assumptionalt
}{\endassumptionalt}
\newtheorem{definition}{Definition}
\newtheorem{lemma}[theorem]{Lemma}
\DeclareMathOperator*{\argmin}{arg\,min}
\newenvironment{proof}[1][Proof]{\textbf{#1.} }{\  \rule{0.5em}{0.5em}}
\numberwithin{equation}{section}
\numberwithin{definition}{section}
\newcommand{\RNum}[1]{\uppercase\expandafter{\romannumeral #1\relax}}
\DeclareMathOperator*{\plim}{plim}
\begin{document}

\title{Panel Data Models with Time-Varying Latent Group Structures\thanks{%
Phillips acknowledges support from a Lee Kong Chian Fellowship at SMU, the
Kelly Fund at the University of Auckland, and the NSF under Grant No. SES
18-50860. Su gratefully acknowledges the support from the National Natural
Science Foundation of China under Grant No. 72133002. Address correspondence
to: Liangjun Su, School of Economics and Management, Tsinghua University,
Beijing, 100084, China. E-mail: sulj@sem.tsinghua.edu.cn. }}
\author{Yiren Wang$^{a}$, Peter C B Phillips$^{b}$ and Liangjun Su$^{c}$ \\
$^{a}$School of Economics, Singapore Management University, Singapore \\
$^{b}$Yale University, University of Auckland, 
Singapore Management University\\
$^{c}$School of Economics and Management, Tsinghua University, China}
\maketitle

\begin{abstract}
This paper considers a linear panel model with interactive fixed effects and
unobserved individual and time heterogeneities that are captured by some
latent group structures and an unknown structural break, respectively. To enhance realism the model may have different numbers of groups and/or different group
memberships before and after the break. With the preliminary
nuclear-norm-regularized estimation followed by row- and column-wise linear
regressions, we estimate the break point based on the idea of binary
segmentation and the latent group structures together with the number of
groups before and after the break by sequential testing K-means algorithm
simultaneously. It is shown that the break point, the number of groups and the
group memberships can each be estimated correctly with probability approaching
one. Asymptotic distributions of the estimators of the slope coefficients
are established. Monte Carlo simulations demonstrate excellent finite sample
performance for the proposed estimation algorithm. An empirical application to real house price data across 377 Metropolitan Statistical Areas in the US from 1975 to 2014 suggests the presence both of structural breaks and of changes in group membership. \medskip 

\noindent \textbf{Key words:} Interactive fixed effects, latent group
structure, structural break, nuclear norm regularization, sequential testing
K-means algorithm. \medskip

\noindent \textbf{JEL Classification:} C23, C33, C38, C51\medskip\ 
\end{abstract}

\ \ \ \ \pagebreak

\section{Introduction}

\label{sec:intro} Heterogeneous panel data models have been widely used in
empirical research in economics because they can capture a rich degree of
unobserved heterogeneity. But models with complete heterogeneity along
either the cross section or time dimension tend to possess too many
parameters to be identified, which results in slow convergence and
inefficient estimates. For this reason, researchers now frequently advocate
the use of panel data models with certain structures imposed along either
the cross section or time dimension. On the one hand, the recent burgeoning
of panels with latent group structures can be motivated from the observation
that different groups of individuals respond differently to exogenous
shocks. For instance, \cite{durlauf1995multiple}, \cite%
{berthelemy1996economic}, and \cite{ben1998convergence} show economies in
different groups of income per capita and/or education level may converge to
different steady state equilibria. \cite{klapper2011impact}, \cite%
{chu2012logistics}, and \cite{zhang2019threshold} show an exogenous shock
like policy implementation has different impacts on different individuals,
and \cite{long2012impact} argue that the influence of the 2008 financial
crisis on economic growth differs for emergent and developed
economies. On the other hand, the recent popularity of panels that evidence 
structural change can be motivated by events such as financial crises,
the economic impact of technological progress, and more general economic transitions that occur during the time
periods covered by the data. See \cite{qian2016shrinkage} for a survey of
panel data models and research that consider estimation and inference concerning structural change.

In spite of the now large literature that studies separately individual heterogeneity or
time heterogeneity in the slope coefficients of panel models, few works consider both types of heterogeneity simultaneously. Exceptions include \cite{keane2020climate} and \cite{lu2023uniform} who consider linear
panel data models with two-dimensional unobserved heterogeneity in the slope
coefficients that are modelled via the usual additive structure, and \cite%
{chernozhukov2019inference} and \cite{wang2022low-rank} who model the slope
coefficients via the use of low-rank matrices for conditional mean and
quantile regressions, respectively. In addition, \cite{okui2021heterogeneous}
and \cite{lumsdaine2023estimation} consider both individual heterogeneity
and time heterogeneity by modeling them as a grouped pattern and as structural
breaks, respectively. Specifically, \cite{okui2021heterogeneous} develop a
new panel data model with latent groups where the number of groups and the
group memberships do not change over time but the coefficients within each
group can change over time and they may have different break-dates; \cite%
{lumsdaine2023estimation} consider the panels with a grouped pattern of
heterogeneity when the latent group membership structure and/or the values
of slope coefficients change at a break point. Both papers provide
algorithms to recover the latent group structure based on linear panel
models with or without individual fixed effects, but cannot allow for the
presence of more complicated fixed effects such as interactive fixed
effects (IFEs) to capture strong cross-sectional dependence in the data.

This paper proposes a linear panel data model with IFEs that enable the
slope coefficients to exhibit two-way heterogeneity. Following the lead of \cite%
{okui2021heterogeneous} and \cite{lumsdaine2023estimation} and to encourage
the parameter parsimony, we use a latent group structure to capture 
individual heterogeneity and an unknown structural break to capture time
heterogeneity. The latent group structure of the model accommodates different 
group numbers and different group memberships before and after the
break. Given this complicated structure, the approach proposed is to estimate
the break point, the number of groups before and after the break, the group
membership before and after the break, and the group-specific parameters in
multiple steps. The key insight that permits this degree of complication is that 
the slope coefficients of each of the $p$ regressors in the model are permitted
to vary across both cross section and time dimensions by means of a factor structure with a fixed
number of factors so that they may be conveniently stacked into a low-rank matrix.

In the first step, the low-rank nature of the slope matrices is explored and
initial estimates are obtained by nuclear norm regularization
(NNR), a machine learning technique popular in computer science that is increasingly used in econometrics. Such
initial matrix estimates are consistent in terms of the Frobenius norm but do
not have pointwise or uniform convergence for their elements. Despite
this, by applying singular value decomposition (SVD) to these estimates, we
can obtain estimates of the associated factors and the factor loadings that are
also consistent in terms of the Frobenius norm. In the second step, we use the
first-step initial estimates of the factors and factor loadings to run the
row- and column-wise linear regressions to update the estimates of the
factors and factor loadings which now possess pointwise and uniform
consistency and can be used for subsequent analyses. In the third step, we
estimate the break point by using the celebrated idea of binary segmentation,
as commonly used for break point estimation in the time series literature.
Once the break point is estimated, the full-sample is naturally split into
two subsamples. In the fourth step, we follow the lead of \cite%
{lin2012estimation} and \cite{Jin2022Optimal} to focus on each subsample
before and after the estimated break point and propose a sequential testing
K-means algorithm to recover the latent group structure and obtain the
number of groups simultaneously. In the last step, we use the estimated
group structure to estimate the group-specific parameters. Asymptotic
analyses show that the break point, the number of groups and the group
memberships can be consistently estimated in Steps 3-4, so that the final
step estimates for the group-specific coefficients can enjoy the oracle
property. This means they have the same asymptotic distributions as the ones
obtained by knowing the break point and the latent group structures before
and after the break points.


The present paper relates to two branches of literature. First, it contributes to the panel data literature on one-way heterogeneity, especially with either latent group structures or structural breaks. With respect to latent group structures, there are several popular ways to recover the
latent groups. The first approach is the K-means algorithm. \cite%
{lin2012estimation} apply the K-means algorithm to linear panel data models
with grouped slope coefficients and propose an information criterion and a
sequential testing approach to estimate the true number of groups. \cite%
{sarafidis2015partially} analyze the unknown grouped slopes in the large $N$
and fixed $T$ framework, and \cite{zhang2019quantile} provide an iterative
algorithm based on K-means clustering for a panel quantile regression model. 
\cite{bonhomme2015grouped} and \cite{ando2016panel} consider panels with
grouped fixed effects. The second approach is the Classifier-Lasso (C-Lasso)
that has become a popular clustering method since \cite{su2016identifying}.
This method is extended by \cite{lu2017determining}, \cite{su2018identifying}%
, \cite{su2019sieve}, \cite{wang2019heterogeneous}, and \cite%
{huang2020identifying} to various contexts. In addition, both the clustering
algorithm in regression via a data-driven segmentation (CARDS) approach and
binary segmentation are also considered in \cite{ke2015homogeneity}, \cite%
{wang2018homogeneity}, \cite{ke2016structure} and \cite{wang2021identifying}%
, among others. As for panel models with structural breaks, binary
segmentation has become a common approach to estimate the break point. See 
\cite{bai2010common}, \cite{hsu2011change}, \cite{kim2011estimating}, \cite%
{kim2014common} and \cite{baltagi2017estimation}, among others. 
These papers focus on the case of a single break point in the model. In contrast, 
\cite{qian2016shrinkage} and \cite{li2016panel} allow for multiple breaks in
linear panel models with either classical fixed effects or IFEs, and propose an adaptive grouped fused lasso (AGFL) approach to estimate the break points. Compared to the existing panel literature on one-way
heterogeneity, our model allows for two-way heterogeneity. In
particular, not only are different membership structures in different
time blocks permitted but also changes in the number of groups over time. As a result,
our model is more flexible than all existing models that allow only for 
latent group structures or structural breaks, but not both.

Second, this paper contributes to the recent burgeoning literature that
models two-way heterogeneity in the slope coefficients of a panel model. 
As mentioned above, there are two approaches to model two-way
heterogeneity in the slope coefficients. One approach models them in an additive
structure so that both individual and time effects enter the slope
coefficients additively, as in \cite{keane2020climate} and \cite%
{lu2023uniform}. The other approach imposes certain low-rank structures on the
slope coefficient matrices in which case one models each slope coefficient
via the use of IFEs to capture strong cross sectional dependence
in the panel. In view of the low-rank structures, we can resort to NNR estimation 
which has attracted increasing attention recently in panel data analyses.
NNR has been used in recent econometric research -- see \cite%
{bai2019rank}, \cite{moon2018nuclear}, \cite{chernozhukov2019inference}, 
\cite{belloni2023high}, \cite{miao2023high}, \cite{feng2023regularized}, and 
\cite{hong2023profile}, among others. But none of these papers imposes any
latent group structures on the slope coefficients. With latent group
structures and structural breaks imposed, \cite{okui2021heterogeneous} allows
the slope coefficients within each group to have common breaks and the break
points to vary across different groups, and they propose to estimate the
latent group structures, the structural breaks, and the group-specific
regression parameters by the grouped adaptive group fused lasso (GAGFL).
But neither the number of groups nor the group memberships is allowed
to change over time in \cite{okui2021heterogeneous}. In a companion paper, 
\cite{lumsdaine2023estimation} allows the latent group membership structure
and/or the values of slope coefficients to change at a break point, and
proposes an estimation algorithm similar to the K-means of \cite%
{bonhomme2015grouped}. Both \cite{okui2021heterogeneous} and \cite%
{lumsdaine2023estimation} allow for at most one-way heterogeneity
(individual fixed effects) in the intercept and neither allows for IFEs to
capture strong cross section dependence. In contrast, this paper proposes an
algorithm to detect the unknown break point and to recover the group
structure based on linear panel model with IFEs, which involves a more
general model. In addition, \cite{lumsdaine2023estimation} first assume the
number of groups is known in the estimation algorithm and then estimate the
number of groups via an information criterion but they do not establish consistency 
for such an estimate. Instead, we estimate the number of
groups and group membership simultaneously by the sequential testing K-means
algorithm and establish the consistency of the number of groups estimator.

The rest of the paper is organized as follows. We first introduce the linear
panel model with time-varying latent group structures in Section \ref%
{sec:model} and provide the estimation algorithm in Section \ref%
{sec:estimation}. The asymptotic properties are given in Section \ref%
{sec:theory}. In Section \ref{sec:extension}, we propose an alternative
approach to detect the break point, provide the test statistics for the null
that the slope coefficients exhibit no structure change against the
alternative with one break point, and discuss the estimation for the model
with multiple breaks. In Sections \ref{sec:simul} and \ref{sec:emp}, we show
the finite sample performance of our method by Monte Carlo simulations and
an empirical application, respectively. Section \ref{sec:concl} concludes.
All proofs are provided in the online supplement.

\textit{Notation.} Let $\left\Vert \cdot \right\Vert _{\max },$ $\left\Vert
\cdot \right\Vert _{op},$ $\left\Vert \cdot \right\Vert $, and $\left\Vert
\cdot \right\Vert _{\ast }$ denote the (elementwise) maximum norm, operator
norm, Frobenius norm, and nuclear norm, respectively. Let $\odot $ denote
the element-wise Hadamard product. $\lfloor \cdot \rfloor $ and $\lceil
\cdot \rceil $ denote the floor and ceiling functions, respectively. Let $%
a\vee b=\max \left( a,b\right) $ and $a\wedge b=\min (a,b)$. $a_{n}\lesssim
b_{n}$ means $a_{n}/b_{n}=O_{p}\left( 1\right) $ and $a_{n}\gg b_{n}$ means $%
b_{n}a_{n}^{-1}=o(1)$. Let $A=\{A_{it}\}$ be a matrix with its $(i,t)$-th
entry denoted as $A_{it}$. Let denote $\{A_{j}\}_{j\in \lbrack p]\cup \{0\}}$
be the collection of matrices $A_{j},$ $j\in \{0,1,\cdots ,p\}$. For a
specific $A\in \mathbb{R}^{m\times n}$ with rank $n$, let $P_{A}=A(A^{\prime
}A)^{-1}A^{\prime }$ and $M_{A}=I_{m}-P_{A}$. When $A$ is symmetric, $%
\lambda _{\max }(A)$, $\lambda _{\min }(A)$ and $\lambda _{n}(A)$ denote its
largest, smallest and $n$-th largest eigenvalues, respectively. The
operators $\rightsquigarrow $ and $\overset{p}{\longrightarrow }$ denote
convergence in distribution and in probability, respectively. Let $[n]=$ $%
\{1,\cdots ,n\}$ for any positive integer $n$, let $\mathbf{1}\{\cdot \}$ be
the usual indicator function, and w.p.a.1 and a.s. abbreviate
\textquotedblleft with probability approaching 1\textquotedblright\ and
\textquotedblleft almost surely\textquotedblright , respectively.

\section{Model Setup}

\label{sec:model} In this paper we consider the following linear panel
model with IFEs: 
\begin{equation}
Y_{it}=\Theta _{0,it}^{0}+X_{it}^{\prime }\Theta _{it}^{0}+e_{it},
\label{eq2.1}
\end{equation}%
where $i\in \lbrack N]$, $t\in \lbrack T]$, $Y_{it}$ is the dependent
variable, $X_{it}=(X_{1,it},\cdots ,X_{p,it})^{\prime }$ is a $p\times 1$
vector of regressors, $\Theta _{it}^{0}=(\Theta _{1,it}^{0},\cdots ,\Theta
_{p,it}^{0})^{\prime }$ is a $p\times 1$ vector of slope coefficients, $%
\Theta _{0,it}^{0}=\lambda _{i}^{0\prime }f_{t}^{0}$ is an intercept term
that exhibits a factor structure with $r_{0}$ factors, and $e_{it}$ is the
error term. Here, we assume $r_{0}$ is a fixed integer that does not change
as $\left( N,T\right) \rightarrow \infty .$ Let $\Lambda ^{0}=\left( \lambda
_{1}^{0},\cdots ,\lambda _{N}^{0}\right) ^{\prime }$ and $F^{0}=\left(
f_{1}^{0},\cdots ,f_{T}^{0}\right) ^{\prime }$. Let $Y=\left\{
Y_{it}\right\} ,$ $X_{j}=\left\{ X_{j,it}\right\} ,$ $\Theta
_{j}^{0}=\{\Theta _{j,it}^{0}\}$ and $E=\left\{ e_{it}\right\} ,$ all of
which are $N\times T$ matrices. Then we can rewrite (\ref{eq2.1}) in matrix form as 
\begin{equation}
Y=\Theta _{0}^{0}+\sum_{j=1}^{p}X_{j}\odot \Theta _{j}^{0}+E.  \label{eq2.2}
\end{equation}%
We assume that the slope coefficients follow time-varying latent group
structures, viz., 
\begin{equation*}
\Theta _{it}^{0}=\sum_{k\in \lbrack K_{t}]}\alpha _{kt}\mathbf{1}\{i\in
G_{kt}\},
\end{equation*}%
where $\left\{ G_{kt}\right\} _{k\in K_{t}}$ forms a partition of $[N]$ for
each specific time $t$ with $K_{t}$ being the number of groups at time $t$.
Moreover, we assume that the group-specific slope coefficients $\alpha _{kt}$
or the memberships change at an unknown time point $T_{1}$, i.e., 
\begin{eqnarray*}
\alpha _{kt} &=&\left\{ \begin{aligned}
&\alpha_{k}^{(1)},\quad\text{for}\quad t=1,\dots,T_{1},\\ &
\alpha_{k}^{(2)},\quad\text{for}\quad t=T_{1}+1,\dots,T,\\ \end{aligned}%
\right.  \\
G_{kt} &=&\left\{ \begin{aligned} &G_{k}^{(1)},\quad\text{for}\quad
t=1,\dots,T_{1},~ k=1,\dots,K^{(1)},\\ & G_{k}^{(2)},\quad \text{for}\quad
t=T_{1}+1,\dots,T,~ k=1,\dots,K^{(2)},\\ \end{aligned}\right. 
\end{eqnarray*}%
with $K^{(1)}$ and $K^{(2)}$ being the number of latent groups before and
after the break point, respectively. Let $g_{i}^{(1)}$ and $g_{i}^{(2)}$
respectively denote the individual group indices before and after the break: 
\begin{equation*}
g_{i}^{(1)}=\sum_{k\in K^{(1)}}k\mathbf{1}\{i\in G_{k}^{(1)}\}\quad \text{and%
}\quad g_{i}^{(2)}=\sum_{k\in K^{(2)}}k\mathbf{1}\{i\in G_{k}^{(2)}\}.
\end{equation*}

Let $r_{j}$ be the rank of $\Theta _{j}^{0}$ for $j\in \lbrack p]\cup \{0\}$%
. It is easy to see that $\Theta _{j}^{0}$ exhibits a low-rank structure for
all $j.$ By the SVD, we have 
\begin{equation*}
\Theta _{j}^{0}=\sqrt{NT}\mathcal{U}_{j}^{0}\Sigma _{j}^{0}\mathcal{V}%
_{j}^{0\prime }:=U_{j}^{0}V_{j}^{0\prime },\quad j\in \lbrack p]\cup \{0\},
\end{equation*}%
where $\mathcal{U}_{j}^{0}\in \mathbb{R}^{N\times r_{j}}$, $\mathcal{V}%
_{j}^{0}\in \mathbb{R}^{T\times r_{j}}$, $\Sigma _{j}^{0}=\text{diag}(\sigma
_{1,j},\cdots ,\sigma _{r_{j},j})$, $U_{j}^{0}=\sqrt{N}\mathcal{U}%
_{j}^{0}\Sigma _{j}^{0}$ with each row being $u_{i,j}^{0\prime }$, and $%
V_{j}^{0}=\sqrt{T}\mathcal{V}_{j}^{0}$ with each row being $v_{t,j}^{0\prime
}$.

Note that we allow $\left\{ \Theta _{it}^{0}\right\} _{i=1}^{N}$ to exhibit
latent group structures before and after the break. For a particular $j\in
\lbrack p]$, the $N\times T$ matrix $\Theta _{j}^{0}$ may have no group
structure before or after the break, or no break, or more or fewer groups
after the break. Let $K_{j}^{(1)}$ and $K_{j}^{(2)}$ denote the number of
groups before and after the break, respectively, for $\{\Theta
_{j,it}^{0}\}_{i=1}^{N}$. Let $\mathcal{G}_{j}^{(\ell )}=\{G_{1,j}^{(\ell
)},\cdots ,G_{K_{j}^{(\ell )},j}^{(\ell )}\},$ $\ell =1,2,$ denote the
associated latent group structures. Define $N_{k,j}^{(\ell
)}=|G_{k,j}^{(\ell )}|$ and $\pi _{k,j}^{(\ell )}=\frac{N_{k,j}^{(\ell )}}{N}
$ for $\ell =1,2,$ where $\left\vert A\right\vert $ denotes the cardinality
of set $A.$ Further define $\tau _{T}:=\frac{T_{1}}{T}$. We show that $%
\Theta _{j}^{0}$ has a low-rank structure in all of the following cases:

\begin{description}
\item[Case 1:] $\Theta _{j}^{0}$ exhibits neither structural break nor group
structure.\newline
In this case, $K_{j}^{(1)}=K_{j}^{(2)}=1$, and $\Theta _{j,it}^{0}=\alpha
_{j}$ $\forall \left( i,t\right) \in \lbrack N]\times \lbrack T]$. Without
loss of generality, assume that $\alpha _{j}>0.$ Then by the SVD, we have 
\begin{align*}
& \mathcal{U}_{j}=\frac{1}{\sqrt{N}}\iota _{N}\in \mathbb{R}^{N\times
1},\quad \Sigma _{j}=\alpha _{j},\quad \mathcal{V}_{j}=\frac{1}{\sqrt{T}}%
\iota _{T}\in \mathbb{R}^{T\times 1}, \\
& U_{j}=\alpha _{j}\iota _{N}\in \mathbb{R}^{N\times 1},\quad V_{j}=\iota
_{T}\in \mathbb{R}^{T\times 1},
\end{align*}%
where $\iota _{d}=\left( 1,\cdots ,1\right) ^{\prime }\in \mathbb{R}%
^{d\times 1}$ for any natural number $d$. Obviously, $r_{j}=1$ in Case 1.

\item[Case 2:] $\Theta _{j}^{0}$ exhibits no structural break but a group
structure.\newline
In this case, $K_{j}^{(1)}=K_{j}^{(2)}=K_{j}$, $%
G_{k,j}^{(1)}=G_{k,j}^{(2)}=G_{k,j}$, $N_{k,j}^{(1)}=N_{k,j}^{(2)}=N_{k,j}$, 
$\pi _{k,j}^{(1)}=\pi _{k,j}^{(2)}=\pi _{k,j}~\forall k\in \left[ K_{j}%
\right] $, and $\Theta _{j,it}^{0}=\operatornamewithlimits{\sum}\limits%
_{k\in \lbrack K_{j}]}\alpha _{k,j}\mathbf{1}\left\{ i\in G_{k,j}\right\} $
for $t\in \lbrack T]$. Therefore, we have 
\begin{align*}
& \mathcal{U}_{j,i}=\frac{\sum_{k\in \lbrack K_{j}]}\alpha _{k,j}\mathbf{1}%
\left\{ i\in G_{k,j}\right\} }{\sqrt{\sum_{k\in \lbrack K_{j}]}N_{k,j}\left(
\alpha _{k,j}\right) ^{2}}},\quad \Sigma _{j}=\sqrt{\sum_{k\in \lbrack
K_{j}]}\pi _{k,j}\left( \alpha _{k,j}\right) ^{2}},\quad \mathcal{V}_{j}=%
\frac{1}{\sqrt{T}}\iota _{T}, \\
& u_{i,j}=\sum_{k\in \lbrack K_{j}]}\alpha _{k,j}\mathbf{1}\left\{ i\in
G_{k,j}\right\} ,\quad V_{j}=\iota _{T},
\end{align*}%
where $\mathcal{U}_{j,i}$ is the $i$-th element in $\mathcal{U}_{j}$.
Obviously, $r_{j}=1$ in this case.

\item[Case 3:] $\Theta _{j}^{0}$ exhibits both a structural break and a
group structure. \ 
\end{description}

\begin{itemize}
\item[\textbf{(i)}] $K_{j}^{(1)}\neq K_{j}^{(2)}$, where we have different
numbers of groups before and after the break;

\item[\textbf{(ii)}] $K_{j}^{(1)}=K_{j}^{(2)}=K_{j}$ and $G_{k,j}^{(1)}\neq
G_{k,j}^{(2)}$, where we have the same number of groups before and after the
break, but the group memberships change after the break point;

\item[\textbf{(iii)}] $K_{j}^{(1)}=K_{j}^{(2)}=K_{j}$, $%
G_{k,j}^{(1)}=G_{k,j}^{(2)}=G_{k,j}$ for $\forall k\in \lbrack K_{j}]$, and $%
\alpha _{k,j}^{(1)}\neq \alpha _{k,j}^{(2)}$ for at least one $k\in \lbrack
K_{j}]$, where even though neither the number of groups nor group membership
changes after the break, there exists at least one group whose slope
coefficients change.
\end{itemize}

For any positive integer $d$, we use $\mathbf{0}_{d}$ to denote a $d\times 1$
vector of zeros. The following lemma lays down the foundation for break point 
detection in our model.

\begin{lemma}
{\small \label{Lem:idfct} } For any $j\in [p]$ such that $\Theta_{j}^{0}$
lies in Case 3 above, we have $rank(\Theta_{j}^{0})\leq 2$. When $%
rank(\Theta_{j}^{0})=2$, we have

\begin{itemize}
\item[(i)] $\Theta_{j}^{0}=\mathcal{U}_{j}\Sigma_{j}\mathcal{V}%
_{j}^{\prime}=U_{j}V_{j}^{\prime }$ where $U_{j}=\mathcal{U}_{j}\Sigma_{j}/%
\sqrt{T},$ $V_{j}=\sqrt{T}\mathcal{V}_{j}=D_{j}R_{j}$$,$ $D_{j}=%
\begin{bmatrix}
\frac{1}{\sqrt{\tau_{T}}}\iota_{T_{1}} & \mathbf{0}_{T_{1}} \\ 
\mathbf{0}_{T-T_{1}} & \frac{1}{\sqrt{1-\tau_{T}}}\iota_{T-T_{1}}%
\end{bmatrix}%
$ and $R_{j}^{\prime }R_{j}=I_{2};$

\item[(ii)] $\left\Vert \frac{v_{t,j}^{0}}{\left\Vert v_{t,j}^{0}\right\Vert 
}-\frac{v_{t^{\ast },j}^{0}}{\left\Vert v_{t^{\ast },j}^{0}\right\Vert }%
\right\Vert =\sqrt{2}$ for any $t\leq T_{1}$ and $t^{\ast }>T_{1}$.
\end{itemize}
\end{lemma}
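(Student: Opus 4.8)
The plan is to exploit the two-block column structure of $\Theta_{j}^{0}$ in Case 3. Since the slope coefficients depend on $t$ only through which side of the break $t$ lies, each of the first $T_{1}$ columns of $\Theta_{j}^{0}$ equals a common vector $c^{(1)}\in\mathbb{R}^{N}$ with entries $c_{i}^{(1)}=\sum_{k}\alpha_{k,j}^{(1)}\mathbf{1}\{i\in G_{k,j}^{(1)}\}$, and each of the last $T-T_{1}$ columns equals a common vector $c^{(2)}$ defined analogously from the post-break groups. Writing $s_{1},s_{2}\in\mathbb{R}^{T}$ for the indicator vectors of the pre- and post-break time blocks, this gives $\Theta_{j}^{0}=c^{(1)}s_{1}^{\prime}+c^{(2)}s_{2}^{\prime}$, a sum of two rank-one matrices, so $\mathrm{rank}(\Theta_{j}^{0})\le 2$ immediately. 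This is the easy part and covers all three subcases (i)--(iii) at once; note that collinearity of $c^{(1)}$ and $c^{(2)}$ can make the rank drop to one, which is why the detailed structure below is claimed only when the rank equals two.

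For part (i), assume $\mathrm{rank}(\Theta_{j}^{0})=2$. First I would absorb the block sizes into $s_{1},s_{2}$: since $\tau_{T}=T_{1}/T$, the columns of $D_{j}$ are exactly $d_{1}=s_{1}/\sqrt{\tau_{T}}$ and $d_{2}=s_{2}/\sqrt{1-\tau_{T}}$, which have disjoint supports and squared norm $T$ each, so $D_{j}^{\prime}D_{j}=T I_{2}$ and $D_{j}/\sqrt{T}$ has orthonormal columns. Rewriting the decomposition as $\Theta_{j}^{0}=B D_{j}^{\prime}$ with $B=[\sqrt{\tau_{T}}\,c^{(1)},\,\sqrt{1-\tau_{T}}\,c^{(2)}]\in\mathbb{R}^{N\times 2}$, the row space of $\Theta_{j}^{0}$ is contained in, hence (being two-dimensional) equal to, the column span of $D_{j}$. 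This same row space is spanned by the columns of $\mathcal{V}_{j}$, the right singular vectors attached to the nonzero singular values. Two orthonormal bases of a common two-dimensional subspace differ by an orthogonal transformation, so there exists $R_{j}$ with $R_{j}^{\prime}R_{j}=I_{2}$ and $\mathcal{V}_{j}=(D_{j}/\sqrt{T})R_{j}$; multiplying by $\sqrt{T}$ yields $V_{j}=D_{j}R_{j}$, and $U_{j}=\mathcal{U}_{j}\Sigma_{j}/\sqrt{T}$ is the matching renormalization keeping $\Theta_{j}^{0}=U_{j}V_{j}^{\prime}=\mathcal{U}_{j}\Sigma_{j}\mathcal{V}_{j}^{\prime}$.

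For part (ii), I would read off $v_{t,j}^{0}$ as the $t$-th row of $V_{j}=D_{j}R_{j}$. For $t\le T_{1}$ the $t$-th row of $D_{j}$ is $(\tau_{T}^{-1/2},0)$, so $v_{t,j}^{0\prime}=\tau_{T}^{-1/2}R_{1\cdot}$ with $R_{1\cdot}$ the first row of $R_{j}$; for $t^{\ast}>T_{1}$ one gets $v_{t^{\ast},j}^{0\prime}=(1-\tau_{T})^{-1/2}R_{2\cdot}$. Since $R_{j}$ is square with $R_{j}^{\prime}R_{j}=I_{2}$ it is orthogonal, so $R_{j}R_{j}^{\prime}=I_{2}$ and the rows $R_{1\cdot},R_{2\cdot}$ are orthonormal. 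Hence the normalized vectors equal $R_{1\cdot}^{\prime}$ and $R_{2\cdot}^{\prime}$, and $\|R_{1\cdot}^{\prime}-R_{2\cdot}^{\prime}\|^{2}=\|R_{1\cdot}\|^{2}-2R_{1\cdot}R_{2\cdot}^{\prime}+\|R_{2\cdot}\|^{2}=1-0+1=2$, giving the claimed distance $\sqrt{2}$.

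The only genuinely delicate point is the orthogonal-rotation step in (i): one must argue carefully that the right-singular subspace of $\Theta_{j}^{0}$ equals the concrete block subspace $\mathrm{col}(D_{j})$, and that the (possibly non-unique, when the two singular values coincide) choice of $\mathcal{V}_{j}$ can always be matched to $D_{j}/\sqrt{T}$ through a single $2\times 2$ orthogonal $R_{j}$. The rank bound and the computation in (ii) are then routine bookkeeping with the $\tau_{T}$ and $\sqrt{T}$ scalings.
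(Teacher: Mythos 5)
Your proof is correct, and for the key step in part (i) it takes a genuinely different route from the paper. Both arguments begin from the same two-block factorization: your $\Theta_{j}^{0}=c^{(1)}s_{1}^{\prime}+c^{(2)}s_{2}^{\prime}=BD_{j}^{\prime}$ is exactly the paper's $\Theta_{j}^{0}=A_{j}\bigl[\begin{smallmatrix}\iota_{T_{1}} & \mathbf{0}\\ \mathbf{0} & \iota_{T-T_{1}}\end{smallmatrix}\bigr]^{\prime}$ with $A_{j}=[c^{(1)},c^{(2)}]$, and both note the rank drops to one when $c^{(1)}$ and $c^{(2)}$ are collinear. Where you diverge is in how $R_{j}$ is produced. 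The paper constructs the SVD of $\Theta_{j}^{0}$ explicitly: it inserts $B_{j}^{-1/2}B_{j}^{1/2}$ (with $B_{j}=A_{j}^{\prime}A_{j}$) and takes the SVD of the $2\times 2$ matrix $B_{j}^{1/2}\,\mathrm{diag}(\sqrt{\tau_{T}},\sqrt{1-\tau_{T}})=L_{j}S_{j}R_{j}^{\prime}$, which delivers not only $\mathcal{V}_{j}=T^{-1/2}D_{j}R_{j}$ but also closed-form expressions $\mathcal{U}_{j}=A_{j}B_{j}^{-1/2}L_{j}$ and $\Sigma_{j}=\sqrt{T}S_{j}$, whose orthonormality it then verifies directly. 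You instead argue abstractly: under the rank-2 hypothesis the row space of $\Theta_{j}^{0}=BD_{j}^{\prime}$ equals $\mathrm{col}(D_{j})$, so $\mathcal{V}_{j}$ and $T^{-1/2}D_{j}$ are two orthonormal bases of the same two-dimensional subspace and must differ by an orthogonal factor $R_{j}=(T^{-1/2}D_{j})^{\prime}\mathcal{V}_{j}$, with $R_{j}^{\prime}R_{j}=I_{2}$ following from the projector identity. Your route is more elementary — it needs no matrix square root and no explicit inversion of $B_{j}$ beyond what rank 2 already guarantees, and it disposes cleanly of the non-uniqueness of singular vectors when the two singular values coincide — while the paper's constructive route buys explicit formulas for the singular values and left singular vectors, which it can then reuse. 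Your computation for (ii), reading off the rows of $D_{j}R_{j}$, using positivity of the scalings $\tau_{T}^{-1/2}$ and $(1-\tau_{T})^{-1/2}$ so the normalized vectors are exactly the orthonormal rows of the square orthogonal $R_{j}$, is precisely the calculation behind the paper's one-line remark that (ii) follows from (i) automatically.
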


By Lemma \ref{Lem:idfct} for Case 3 and the above analyses for Cases 1 and
2, we conclude that $\Theta _{j}^{0}$ is a low-rank matrix with rank equal
to or less than $2$. In view of the low-rank structure of the slope
matrices, we propose to adopt the NNR to obtain the preliminary estimates
below. Moreover, under Case 3, Lemma \ref{Lem:idfct}(ii) indicates that
singular vectors of the slope matrix with rank 2 contain the structural
break information.

\section{Estimation}

\label{sec:estimation} In this section we provide the estimation algorithm.
We first assume that the ranks $r_{j}$ for $j\in [p]\cup \{0\}$ are known,
and then propose a singular value thresholding (SVT) procedure to estimate
them. After we recover the break point and the latent group structures, we
propose consistent estimates of the group-specific parameters.

\subsection{Estimation Algorithm}

Given $r_{j},~\forall j\in \lbrack p]\cup \{0\}$, we propose the following
four-step procedure to estimate the break point and to recover the latent
group structures before and after the break.

\begin{description}
\item[\textbf{Step 1:}] \textbf{Nuclear Norm Regularization (NNR).} We run
the nuclear norm regularized regression and obtain the preliminary estimates
as follows: 
\begin{equation}
\{\tilde{\Theta}_{j}\}_{j\in \lbrack p]\cup \{0\}}=\argmin\limits_{\left\{
\Theta _{j}\right\} _{j=0}^{p}}\frac{1}{NT}\left\Vert
Y-\sum_{j=1}^{p}X_{j}\odot \Theta _{j}-\Theta _{0}\right\Vert
^{2}+\sum_{j=0}^{p}\nu _{j}\left\Vert \Theta _{j}\right\Vert _{\ast },
\label{obj}
\end{equation}%
where $\nu _{j}$ is the tuning parameter for $j\in \lbrack p]\cup \{0\}$.
For each $j$, conduct the SVD: $\frac{1}{\sqrt{NT}}\tilde{\Theta}_{j}=\hat{%
\tilde{\mathcal{U}}}_{j}\hat{\tilde{\Sigma}}_{j}\hat{\tilde{\mathcal{V}}}%
_{j}^{\prime }$, where $\hat{\tilde{\Sigma}}_{j}$ is a diagonal matrix with
the diagonal elements being the descending singular values of $\tilde{\Theta}%
_{j}$. Let $\tilde{\mathcal{V}}_{j}$ consist of the first $r_{j}$ columns of 
$\hat{\tilde{\mathcal{V}}}_{j},$ and $\tilde{V}_{j}=\sqrt{T}\tilde{\mathcal{V%
}}_{j}.$ Let $\tilde{v}_{t,j}^{\prime }$ denote the $t$-th row of $\tilde{V}%
_{j}$ for $t\in \lbrack T]$.

\item[\textbf{Step 2:}] \textbf{Row- and Column-Wise Regressions.} First run
the row-wise regressions of $Y_{it}$ on $\left( \tilde{v}_{t,0},\{\tilde{v}%
_{t,j}X_{j,it}\}_{j\in \lbrack p]}\right) $ to obtain $\{\dot{u}%
_{i,j}\}_{j\in \lbrack p]\cup \{0\}}$ for $i\in \lbrack N]$. Then run the
column-wise regressions of $Y_{it}$ on $\left( \dot{u}_{i,0},\{\dot{u}%
_{i,j}X_{j,it}\}_{j\in \lbrack p]}\right) $ to obtain $\{\dot{v}%
_{t,j}\}_{j\in \lbrack p]\cup \{0\}}$ for $t\in \lbrack T].$ Let $\dot{\Theta%
}_{j,it}=\dot{u}_{i,j}^{\prime }\dot{v}_{t,j}$ for $\left( i,t\right) \in
\lbrack N]\times \lbrack T]$ and $j\in \lbrack p]\cup \{0\}$. Specifically,
the row- and column-wise regressions are given by 
\begin{align}
\left\{ \dot{u}_{i,j}\right\} _{j\in \lbrack p]\cup \{0\}}& =\argmin%
_{\{u_{i,j}\}_{j\in \lbrack p]\cup \{0\}}}\frac{1}{T}\sum_{t\in \lbrack
T]}\left( Y_{it}-u_{i,0}^{\prime }\tilde{v}_{t,0}-\sum_{j=1}^{p}u_{i,j}^{%
\prime }\tilde{v}_{t,j}X_{j,it}\right) ^{2},\quad i\in \lbrack N],
\label{obj2} \\
\left\{ \dot{v}_{t,j}\right\} _{j\in \lbrack p]\cup \{0\}}& =\argmin%
_{\{v_{t,j}\}_{j\in \lbrack p]\cup \{0\}}}\frac{1}{N}\sum_{i\in \lbrack
N]}\left( Y_{it}-v_{t,0}^{\prime }\dot{u}_{i,0}-\sum_{j=1}^{p}v_{t,j}^{%
\prime }\dot{u}_{i,j}X_{j,it}\right) ^{2},\quad t\in \lbrack T].
\end{align}

\item[\textbf{Step 3:}] \textbf{Break Point Estimation.} We estimate the
break point as follows: 
\begin{equation}
\hat{T}_{1}=\argmin_{s\in \{2,\cdots ,T-1\}}\frac{1}{pNT}\sum_{j\in \lbrack
p]}\sum_{i\in \lbrack N]}\left\{ \sum_{t=1}^{s}\left( \dot{\Theta}_{j,it}-%
\bar{\dot{\Theta}}_{j,i}^{(1s)}\right) ^{2}+\sum_{t=s+1}^{T}\left( \dot{%
\Theta}_{j,it}-\bar{\dot{\Theta}}_{j,i}^{(2s)}\right) ^{2}\right\} ,
\label{BP_estimates}
\end{equation}%
where $\bar{\dot{\Theta}}_{j,i}^{(1s)}=\frac{1}{s}\sum_{t=1}^{s}{\dot{\Theta}%
_{j,it}}$ and $\bar{\dot{\Theta}}_{j,i}^{(2s)}=\frac{1}{T-s}\sum_{t=s+1}^{T}{%
\dot{\Theta}_{j,it}}$.

\item[\textbf{Step 4:}] \textbf{Sequential Testing K-means (STK).} In this
step, we estimate the number of groups and the group membership before and
after the break by using the STK algorithm. For each $j\in \lbrack p]$,
define $\dot{\Theta}_{j,i}^{(1)}=(\dot{\Theta}_{j,i1},\cdots ,\dot{\Theta}%
_{j,i\hat{T}_{1}})^{\prime }$, $\dot{\Theta}_{j,i}^{(2)}=(\dot{\Theta}_{j,i,%
\hat{T}_{1}+1},\cdots ,\dot{\Theta}_{j,iT})^{\prime }$, $\dot{\beta}%
_{i}^{(1)}=\frac{1}{\sqrt{\hat{T}_{1}}}(\dot{\Theta}_{1,i}^{(1)\prime
},\cdots ,\dot{\Theta}_{p,i}^{(1)\prime })^{\prime },$ and $\dot{\beta}%
_{i}^{(2)}=\frac{1}{\sqrt{\hat{T}_{2}}}(\dot{\Theta}_{1,i}^{(2)\prime
},\cdots ,\dot{\Theta}_{p,i}^{(2)\prime })^{\prime }$. 
Let $z_{\varsigma }$ be some predetermined value which will be specified in
the next subsection. Given the subsample before and after the estimated
break point, initialize $m=1$ and classify each subsample into $m$ groups
by the K-means algorithm with group membership obtained as $\hat{\mathcal{G}%
}_{m}^{(\ell )}:=\{\hat{G}_{k,m}^{(\ell )}\}_{k\in \lbrack m]}$. Next, we
construct a suitable test statistic $\hat{\Gamma}_{m}^{(\ell )}$, defined by \eqref{eq:gamma-m} in the next subsection, and compare it to its critical value $z_{\varsigma }$ at significance level $\varsigma$ under the null hypothesis of $m$ subgroups, setting $m=m+1$ and moving to the next iteration if $\hat{\Gamma}_{m}^{(\ell )}>z_{\varsigma }$ and stopping the STK algorithm otherwise. Lastly, define $\hat{K}^{(\ell )}=m$ and $\hat{\mathcal{G}}^{(\ell )}=\hat{%
\mathcal{G}}_{m}^{(\ell )}$. The next subsection provides a detailed breakdown of these steps in the STK algorithm. 
\begin{figure}[h]
\centering
\includegraphics[width=0.8\textwidth]{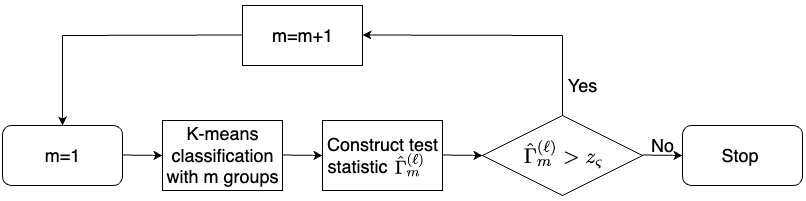}
\caption{The flow chart of STK algorithm}
\label{fig:STK}
\end{figure}

\end{description}

Several remarks are in order. First, the ranks of the intercept
and slope matrices are assumed known in Step 1 but consistent estimates 
for them are proposed in the SVT below. Second, we obtain preliminary estimates by
NNR based on the low-rank structure of the intercept and slope matrices in
the model. These estimates are consistent in terms of the Frobenius norm but pointwise or uniform convergence for their elements is not established. Nonetheless, SVD can be employed to obtain preliminary estimates of the
factors and factor loadings to be used subsequently. Third, row- and column-wise linear regressions are conducted to obtain updated estimates of the factors and factor loadings for which we can establish pointwise and
uniform convergence rates. Fourth, using the consistent estimates obtained in
the second step, we can estimate the break point in Step 3 consistently by
using a binary segmentation process. Fifth, the STK algorithm in Step
4 then yields the estimated number of groups and the group memberships together. 

In the latent group literature, it is standard and popular to assume the
number of groups in the K-means algorithm is known and then to estimate the
number of groups by using certain information criteria. In this case, one
needs to consider not only under- and just-fitting cases, but also
over-fitting cases. It is well known that the major difficulty with this
approach is showing that the over-fitting case occurs with probability
approaching zero. The STK algorithm ensures a focus on the
under-and just-fitting cases, which helps to avoid the difficulty caused
by K-means classification with a larger than true
number of groups. In addition, although this sequential algorithm approach is adopted, the
error from the previous iteration does not accumulate in the following
iterations owing to fact that the classification in each iteration is new
and is not based on the K-means outcomes in previous iterations.

\subsection{The STK algorithm}

This subsection describes the K-means algorithm and the construction
of the test statistics $\hat{\Gamma}_{m}^{(\ell )}$ that are used in the STK algorithm for $%
\ell \in \{1,2\}$.

First, we define the objective function for the K-means algorithm with $m$
clusters at each iteration. Let $a_{k,m}^{(\ell )}$ be a $p\hat{T}_{1}\times
1$ and $p(T-\hat{T}_{1})\times 1$ vector for $\ell =1,2$, respectively. We
obtain the group membership with $m$ groups by solving the following
minimization problem 
\begin{equation}
\left\{ \dot{a}_{k,m}^{(\ell )}\right\} _{k\in \lbrack m]}=\argmin_{\left\{
a_{k,m}^{(\ell )}\right\} _{k\in \lbrack m]}}\frac{1}{N}\sum_{i\in \lbrack
N]}\min_{k\in \lbrack m]}\left\Vert \dot{\beta}_{i}^{(\ell )}-a_{k,m}^{(\ell
)}\right\Vert ^{2},  \label{kmeans_obj}
\end{equation}%
which yields the membership estimates for each individual at the $m$-th
iteration as 
\begin{equation}
\hat{g}_{i,m}^{(\ell )}=\argmin_{k\in \lbrack m]}\left\Vert \dot{\beta}%
_{i}^{(\ell )}-\dot{a}_{k,m}^{(\ell )}\right\Vert \quad \forall i\in \lbrack
N].  \label{group estimates}
\end{equation}%
Let $\hat{G}_{k,m}^{(\ell )}:=\{i\in \lbrack N]:\hat{g}_{i,m}^{(\ell )}=k\}$.

Second, we discuss the construction of the test statistic based on the idea
of homogeneity test for several subsamples. At iteration $m$, we have $m$
potential subgroups $(\hat{G}_{1,m}^{(\ell )},\cdots ,\hat{G}_{m,m}^{(\ell
)})$ after the K-means classification for $\ell =1$ and 2. Let $\mathcal{%
\hat{T}}_{1}=[\hat{T}_{1}]$, $\mathcal{\hat{T}}_{2}=[T]\backslash \lbrack 
\hat{T}_{1}]$, $\mathcal{\hat{T}}_{1,-1}=\mathcal{\hat{T}}_{1}\backslash \{%
\hat{T}_{1}\}$, $\mathcal{\hat{T}}_{2,-1}=\mathcal{\hat{T}}_{2}\backslash
\{T\}$, $\mathcal{\hat{T}}_{1,j}=\{1+j,\cdots ,\hat{T}_{1}\}$, and $\mathcal{%
\hat{T}}_{2,j}=\{\hat{T}_{1}+1+j,\cdots ,T\}$ for some specific $j\in \hat{%
\mathcal{T}}_{\ell ,-1}$. Based on these estimated subgroups, we can obtain
the estimates of the coefficients, factors and factor loadings for each
subgroup in regime $\ell $ as follows: 
\begin{equation*}
\left( \left\{ \hat{\theta}_{i,k,m}^{(\ell )}\right\} _{i\in \hat{G}%
_{k,m}^{(\ell )}},\hat{\Lambda}_{k,m}^{(\ell )},\hat{F}_{k,m}^{(\ell
)}\right) =\argmin_{\left\{ \theta _{i},\lambda _{i},f_{t}\right\} _{i\in 
\hat{G}_{k,m}^{(\ell )},t\in \mathcal{\hat{T}}_{\ell }}}\sum_{i\in \hat{G}%
_{k,m}^{(\ell )}}\sum_{t\in \mathcal{\hat{T}}_{\ell }}\left(
Y_{it}-X_{it}^{\prime }\theta _{i}-\lambda _{i}^{\prime }f_{t}\right) ^{2},
\end{equation*}%
where $\hat{\Lambda}_{k,m}^{(\ell )}=\{\hat{\lambda}_{i,k,m}^{\left( \ell
\right) }\}_{i\in \hat{G}_{k,m}^{(\ell )}}$ and $\hat{F}_{k,m}^{(\ell )}=\{%
\hat{f}_{t,k,m}^{\left( \ell \right) }\}_{t\in \hat{\mathcal{T}}_{\ell }}.$
For all $i\in \lbrack N]$ and $t\in \lbrack T]$, define the residuals%
\begin{equation*}
\hat{e}_{it}=\sum_{\ell =1}^{2}\left( Y_{it}-\hat{f}_{t,k,m}^{(\ell )\prime }%
\hat{\lambda}_{i,k,m}^{(\ell )}-X_{it}^{\prime }\hat{\theta}_{i,k,m}^{(\ell
)}\right) \mathbf{1}\{t\in \mathcal{\hat{T}}_{\ell }\}.
\end{equation*}%
Let $\hat{X}_{i}^{(1)}=(X_{i1},\cdots ,X_{i\hat{T}_{1}})^{\prime }$, $\hat{X}%
_{i}^{(2)}=(X_{i,\hat{T}_{1}+1},\cdots ,X_{iT})^{\prime }$, and $\hat{T}%
_{2}=T-\hat{T}_{1}.$ Define 
\begin{align*}
& \hat{\bar{\theta}}_{k,m}^{(\ell )}=\frac{1}{|\hat{G}_{k,m}^{(\ell )}|}%
\sum_{i\in \hat{G}_{k,m}^{(\ell )}}\hat{\theta}_{i,k,m}^{(\ell )},\quad M_{%
\hat{F}_{k,m}^{(\ell )}}=I_{\hat{T}_{\ell }}-\frac{1}{\hat{T}_{\ell }}\hat{F}%
_{k,m}^{(\ell )}\hat{F}_{k,m}^{(\ell )\prime }, \\
& \hat{S}_{ii,k,m}^{(\ell )}=\frac{1}{\hat{T}_{\ell }}\hat{X}_{i}^{(\ell
)\prime }M_{\hat{F}_{k,m}^{(\ell )}}\hat{X}_{i}^{(\ell )},\quad \hat{a}%
_{ii,k}^{(\ell )}=\hat{\lambda}_{i,k,m}^{(\ell )\prime }\left( |\hat{G}%
_{k,m}^{(\ell )}|^{-1}\hat{\Lambda}_{k,m}^{(\ell )\prime }\hat{\Lambda}%
_{k,m}^{(\ell )}\right) ^{-1}\hat{\lambda}_{i,k,m}^{(\ell )}.
\end{align*}%
Let $\hat{z}_{it}^{(\ell )\prime }$ be the $t$-th row of $M_{\hat{F}%
_{k,m}^{(\ell )}}\hat{X}_{i}^{(\ell )}.$ For each subgroup $\hat{G}%
_{k,m}^{(\ell )}$ with $k\in \lbrack m]$, we follow the lead of \cite%
{pesaran2008testing} and \cite{ando2015simple} and define the following test statistic components
\begin{equation} \label{eq:gamma-km}
\hat{\Gamma}_{k,m}^{(\ell )}=\sqrt{|\hat{G}_{k,m}^{(\ell )}|}\cdot \frac{%
\frac{1}{|\hat{G}_{k,m}^{(\ell )}|}\sum_{i\in \hat{G}_{k,m}^{(\ell )}}\hat{%
\mathbb{S}}_{i,k,m}^{(\ell )}-p}{\sqrt{2p}},
\end{equation}%
where 
\begin{align*}
& \hat{\mathbb{S}}_{i,k,m}^{(\ell )}=\hat{T}_{\ell }(\hat{\theta}%
_{i,k,m}^{(\ell )}-\hat{\bar{\theta}}_{k,m}^{(\ell )})^{\prime }\hat{S}%
_{ii,k,m}^{(\ell )}(\hat{\Omega}_{i,k,m}^{(\ell )})^{-1}\hat{S}%
_{ii,k,m}^{(\ell )}(\hat{\theta}_{i,k}^{(\ell )}-\hat{\bar{\theta}}%
_{k}^{(\ell )})\left( 1-\frac{\hat{a}_{ii,k}^{(\ell )}}{|\hat{G}%
_{k,m}^{(\ell )}|}\right) ^{2}, \\
& \hat{\Omega}_{i,k,m}^{(\ell )}=\frac{1}{\hat{T}_{\ell }}\sum_{t\in \hat{%
\mathcal{T}}_{\ell }}\hat{z}_{it}^{(\ell )}\hat{z}_{it}^{(\ell )\prime }\hat{%
e}_{it}^{2}+\frac{1}{\hat{T}_{\ell }}\sum_{j\in \hat{\mathcal{T}}_{\ell
,-1}}k(j/S_{T})\sum_{t\in \mathcal{\hat{T}}_{\ell ,j}}[\hat{z}_{it}^{(\ell )}%
\hat{z}_{i,t+j}^{(\ell )\prime }\hat{e}_{it}\hat{e}_{i,t+j}+\hat{z}%
_{i,t-j}^{(\ell )}\hat{z}_{it}^{(\ell )\prime }\hat{e}_{i,t-j}\hat{e}_{i,t}],
\end{align*}%
$k(\cdot )$ is a kernel function, $S_{T}$ is a bandwidth/truncation
parameter, and $\hat{\Omega}_{i,k,m}^{(\ell )}$ is a traditional HAC
estimator. Using the components \eqref{eq:gamma-km} we now define the test statistic
\begin{align} \label{eq:gamma-m}
\hat{\Gamma}_{m}^{(\ell )}=\max_{k\in \lbrack m]}(\hat{\Gamma%
}_{k,m}^{(\ell )})^{2}.
\end{align}

We will show that $\hat{\Gamma}_{m}^{(\ell )}$ is asymptotically distributed
as the maximum of $m$ independent $\chi ^{2}(1)$ random variables under the
null hypothesis that the slope coefficients in each of the $m$ subsamples are
homogeneous, whereas it diverges to infinity under the alternative. Let $%
z_{\varsigma }$ denote the critical value at significance level $\varsigma ,$
which is calculated from the maximum of $m$ independent $\chi ^{2}(1)$
random variables. We reject the null of $m$ subgroups in favor of more
groups at level $\varsigma $ if $\hat{\Gamma}_{m}^{(\ell )}>z_{\varsigma }.$

\subsection{Rank Estimation}

To obtain the rank estimator, we use the low-rank estimators from (\ref{obj}%
) and estimate $r_{j}$ by the singular value thresholding (SVT) criterion 
\begin{equation*}
\hat{r}_{j}=\sum_{i=1}^{N\wedge T}\mathbf{1}\left\{ \sigma _{i}\left( \tilde{%
\Theta}_{j}\right) \geq 0.5\left( \nu _{j}\left\Vert \tilde{\Theta}%
_{j}\right\Vert _{op}\right) ^{1/2}\right\} \quad \forall j\in \{0\}\cup
\lbrack p],
\end{equation*}%
where $\sigma _{i}\left( A\right) $ denotes the $i$-th largest singular
value of $A$ and $N\wedge T=\min (N,T).$ By arguments as used in the proof
of Proposition D.1 in \cite{chernozhukov2019inference} and that of Theorem
3.2 in \cite{hong2023profile}, we can show that $\mathbb{P}(\hat{r}%
_{j}=r_{j})\rightarrow 1$ for each $j$ as $\left( N,T\right) \rightarrow
\infty .$

\subsection{Parameter Estimation}

Once we obtain the estimated break point, the number of groups and the group
membership before and after the estimated break point, we can estimate the
group-specific slope coefficients $\{\alpha _{k}^{(\ell )}\}_{k\in \lbrack 
\hat{K}^{(\ell )}]}$ along with the factors and factor loadings as follows 
\begin{equation}
\left( \hat{\Lambda}^{\left( \ell \right) },\hat{F}^{\left( \ell \right)
},\left\{ \hat{\alpha}_{k}^{(\ell )}\right\} _{k\in \lbrack \hat{K}^{(\ell
)}]}\right) =\argmin\mathbb{L}\left( \Lambda ,F,\left\{ a_{k}^{(\ell
)}\right\} _{k\in \lbrack \hat{K}^{(\ell )}]}\right) ,  \label{Post_estimate}
\end{equation}%
where $\mathbb{L}\left( \Lambda ,F,\left\{ a_{k}^{(\ell )}\right\} _{k\in
\lbrack \hat{K}^{(\ell )}]}\right) =\frac{1}{N\hat{T}_{\ell }}\sum_{k=1}^{%
\hat{K}^{(\ell )}}\sum_{i\in \hat{G}_{k}^{(\ell )}}\sum_{t\in \mathcal{\hat{T%
}}_{\ell }}\left( Y_{it}-\lambda _{i}^{\prime }f_{t}-X_{it}^{\prime
}a_{k}^{(\ell )}\right) ^{2}.$ Here, we ignore the fact that the prior- and
post-break regimes share the same set of factor loadings and estimate the
group-specific parameters separately for the two regimes at the cost of
sacrificing some efficiency for the factor loading estimates. Alternatively,
we can pool the observations before and after the break to estimate the
parameters as 
\begin{equation*}
\left( \hat{\Lambda},\hat{F},\left\{ \hat{\alpha}_{k}^{(1)}\right\} _{k\in
\lbrack \hat{K}^{(1)}]},\left\{ \hat{\alpha}_{k}^{(2)}\right\} _{k\in
\lbrack \hat{K}^{(2)}]}\right) =\argmin\mathbb{L}\left( \Lambda ,F,\left\{
a_{k}^{(1)}\right\} _{k\in \lbrack \hat{K}^{(1)}]},\left\{
a_{k}^{(2)}\right\} _{k\in \lbrack \hat{K}^{(2)}]}\right) 
\end{equation*}%
where 
\begin{equation}
\mathbb{L}\left( \Lambda ,F,\left\{ a_{k}^{(1)}\right\} _{k\in \lbrack \hat{K%
}^{(1)}]},\left\{ a_{k}^{(2)}\right\} _{k\in \lbrack \hat{K}^{(2)}]}\right) =%
\mathbb{L}\left( \Lambda ,F,\left\{ a_{k}^{(1)}\right\} _{k\in \lbrack \hat{K%
}^{(1)}]}\right) +\mathbb{L}\left( \Lambda ,F,\left\{ a_{k}^{(2)}\right\}
_{k\in \lbrack \hat{K}^{(2)}]}\right) .
\end{equation}

In either case, as should be clear because of the presence of the group structures,
establishment of the asymptotic properties of the post-classification
estimators of the group-specific slope coefficients becomes much more
involved than in \cite{bai2009panel} and \cite{moon2017dynamic}. For
this reason, we will focus on the estimates defined in (\ref{Post_estimate}).

\section{Asymptotic Theory}

\label{sec:theory} This section develops the asymptotic properties of
the estimators introduced above.

\subsection{Basic Assumptions}

Define $e_{i}=\left( e_{i1},\cdots ,e_{iT}\right) ^{\prime }$ and $%
X_{j,i}=\left( X_{j,i1},\cdots ,X_{j,iT}\right) ^{\prime }.$ Let $V_{j}^{0}$
be a $T\times r_{j}$ matrix with its $t$-th row being $v_{t,j}^{0\prime }$,
and $U_{j}^{0}$ be the $N\times r_{j}$ matrix with its $i$-th row being $%
u_{i,j}^{0\prime }$. Throughout the paper, we treat the factors $%
\{V_{j}^{0}\}_{j\in \lbrack p]\cup \{0\}}$ as random and their loadings $%
\{U_{j}^{0}\}_{j\in \lbrack p]\cup \{0\}}$ as deterministic. Let $\mathscr{D}%
:=\sigma (\{V_{j}^{0}\}_{j\in \lbrack p]\cup \{0\}})$ denote the minimum $%
\sigma $-field generated by $\{V_{j}^{0}\}_{j\in \lbrack p]\cup \{0\}}.$
Similarly, let $\mathscr{G}_{t}:=\sigma (\mathscr{D},\left\{ X_{is}\right\}
_{i\in \lbrack N],s\leq t+1},\left\{ e_{is}\right\} _{i\in \lbrack N],s\leq
t})$. Let $\max_{i}=\max_{i\in \lbrack N]},$ $\max_{t}=\max_{t\in \lbrack
T]}\ $and $\max_{i,t}=\max_{i\in \lbrack N],t\in \lbrack T]}.$ Let $M$ and $%
C $ be generic bounded positive constants which may vary across lines. \ 

\begin{ass}
\label{ass:1}

\begin{enumerate}
\item[(i)] $\left\{ e_{it},X_{it}\right\}_{t\in [T]}$ are conditionally
independent across $i$ given $\mathscr{D}$.

\item[(ii)] $\mathbb{E}(e_{it}|X_{it},\mathscr{D})=0$.

\item[(iii)] For each $i$, $\left\{ \left( e_{it},X_{it}\right) ,t\geq
1\right\} $ is strong mixing conditional on $\mathscr{D}$ with the mixing
coefficient $\alpha _{i}(\cdot )$ satisfying $\max_{i}\alpha _{i}(z)\leq
M\vartheta ^{z}$ for some constant $\vartheta \in \left( 0,1\right) $.

\item[(iv)] There exists a constant $C>0$ such that $\max_{i}\frac{1}{T}%
\sum_{t\in \lbrack T]}\left\Vert \xi _{it}\right\Vert ^{2}\leq C~a.s.\quad $%
and$\quad \max_{t}\frac{1}{N}\sum_{i\in \lbrack N]}\left\Vert \xi
_{it}\right\Vert ^{2}$ $\leq C~a.s.$ for $\xi _{it}=e_{it},$ $X_{it}$ and $%
X_{it}e_{it}$.

\item[(v)] $\max_{i,t}\mathbb{E}[\left\Vert \xi _{it}\right\Vert ^{q}\big|%
\mathscr{D}]\leq M~a.s.$ and $\max_{i,i^{\ast },t}\mathbb{E}[\left\Vert
X_{it}e_{i^{\ast }t}\right\Vert ^{q}\big|\mathscr{D}]\leq M~a.s.$ for some $%
q>8$ and $\xi _{it}=e_{it},$ $X_{it}$ and $X_{it}e_{it}.$

\item[(vi)] As $(N,T)\rightarrow \infty $, $\sqrt{N}(\log
N)^{2}T^{-1}\rightarrow 0$ and $T(\log N)^{2}N^{-3/2}\rightarrow 0$.
\end{enumerate}
\end{ass}

\begin{assp}{\ref*{ass:1}{$^*$} }
\label{ass:1*}
(i), (iv) and (v) are same as Assumption \ref{ass:1}(i), (iv) and (v). In addition:

\begin{itemize}
\item[(ii)] $\mathbb{E}(e_{it}|\mathscr{G}_{t-1})=0$ $\forall (i,t)\in
\lbrack N]\times \lbrack T]$, and $\max_{i,t}\mathbb{E(}e_{it}^{2}\big|%
\mathscr{G}_{t-1})\leq M~a.s.$.

\item[(iii)] $\left\{ e_{it}\right\} _{i\in \lbrack N]}$ is conditionally
independent across $t$ given $\mathscr{D}$.
\end{itemize}
\end{assp}


Assumption \ref{ass:1}(i) imposes conditional independence on $\left\{
e_{it},X_{it}\right\} _{t\in \lbrack T]}$ across the cross sectional units.
Assumption \ref{ass:1}(ii) is the conditional moment condition. Assumption \ref%
{ass:1}(iii) imposes conditional strong mixing conditions along the time
dimension. See \cite{Prakasa_Rao2009} for the definition of conditional
strong mixing and \cite{su2013testing} for an application in the panel
setup. Assumptions \ref{ass:1}(iv)-(v) impose conditions that 
restrict the tail behavior of $\xi _{it}$. Note that neither the regressors 
nor the errors are constrained to be bounded. Assumption \ref{ass:1}%
(vi) imposes restrictions on $N$ and $T$ but does not require $N$ and $
T$ to diverge at the the same rate. It is possible to allow $N$ to
diverge to infinity faster but not too much faster than $T$, and vice versa.

Assumption \ref{ass:1*} is used for the study of dynamic panel data models. To be
specific, Assumption \ref{ass:1*}(ii) requires that the error sequence $%
\left\{ e_{it},t\geq 1\right\} $ be a martingale difference sequence
(m.d.s.) with respect to the filtration $\mathscr{G}_{t},$ which allows for
lagged dependent variables in $X_{it}$. Assumption \ref{ass:1*}(iii) imposes
conditional independence of the errors over $t$. The presence of
serially correlated errors in dynamic panels typically induces endogeneity,
which invalidates least-squares-based PCA estimation.

\begin{ass}
\label{ass:2} rank$(\Theta _{j}^{0})=r_{j}\leq \bar{r}$ for $j\in \lbrack
p]\cup \{0\}$ and some fixed $\bar{r},$ and $\max_{j\in \lbrack p]\cup
\{0\}}||\Theta _{j}^{0}||_{\max }\leq M$.
\end{ass}

\noindent Assumption \ref{ass:2} imposes low-rank conditions on the coefficient
matrices, which facilitate the use of NNR in obtaining preliminary
estimates in the first step. As discussed in the previous section, we see
that the low-rank assumption for the slope matrices is satisfied for the
model in Section \ref{sec:model}. Moreover, we follow \cite{ma2020detecting}
and assume the elements of the coefficient matrices are uniformly bounded to
simplify the proofs. The boundedness of the slope coefficients is reasonable
given that their cardinality does not grow with the sample size. The
boundedness assumption for the intercept coefficient can be relaxed at the
cost of more lengthy arguments.

\begin{ass}
\label{ass:3} Let $\sigma _{l,j}$ denote the $l$-th largest singular values
of $\Theta _{j}^{0}$ for $j\in \lbrack p]\cup \{0\}.$ There exist some
constants $C_{\sigma }$ and $c_{\sigma }$ such that 
\begin{equation*}
\infty >C_{\sigma }\geq \lim \sup_{(N,T)\rightarrow \infty }\max_{j\in
\lbrack p]}\sigma _{1,j}\geq \lim \inf_{(N,T)\rightarrow \infty }\min_{j\in
\lbrack p]}\sigma _{r_{j},j}\geq c_{\sigma }>0.
\end{equation*}
\end{ass}

\noindent Assumption \ref{ass:3} imposes some conditions on the singular values of the
coefficient matrices. These ensure that only pervasive factors are allowed when
the matrices are written as a factor structure. The assumption can be
readily verified given the latent group structures of the slope coefficients.

Consider the SVD for $\Theta _{j}^{0}$: $\Theta _{j}^{0}=\mathcal{U}%
_{j}\Sigma _{j}\mathcal{V}_{j}^{\prime }$ $\forall j\in \lbrack p]\cup \{0\}$%
. Decompose $\mathcal{U}_{j}=\left( \mathcal{U}_{j,r},\mathcal{U}%
_{j,0}\right) $ and $\mathcal{V}_{j}=\left( \mathcal{V}_{j,r},\mathcal{V}%
_{j,0}\right) $ with $\left( \mathcal{U}_{j,r},\mathcal{V}_{j,r}\right) $
being the singular vectors corresponding to nonzero singular values and $%
\left( \mathcal{U}_{j,0},\mathcal{V}_{j,0}\right) $ being the singular
vectors corresponding to zero singular values. Hence, for any matrix $W\in 
\mathbb{R}^{N\times T}$, we define 
\begin{equation*}
\mathcal{P}_{j}^{\bot }\left( W\right) =\mathcal{U}_{j,0}\mathcal{U}%
_{j,0}^{\prime }W\mathcal{V}_{j,0}\mathcal{V}_{j,0}^{\prime },\quad \mathcal{%
P}_{j}\left( W\right) =W-\mathcal{P}_{j}^{\bot }\left( W\right) ,
\end{equation*}%
where $\mathcal{P}_{j}\left( W\right) $ can be seen as the linear projection
of matrix $W$ into the low-rank space with $\mathcal{P}_{j}^{\bot }\left(
W\right) $ being its orthogonal space. Let $\Delta _{\Theta _{j}}=\Theta
_{j}-\Theta _{j}^{0}$ for any $\Theta _{j}$. Based on the spaces constructed
above, with some positive constants $C_{1}$ and $C_{2}$, we define the
restricted set for full-sample parameters as follows: 
\begin{align}
\mathcal{R}(C_{1},C_{2})& :=\bigg\{(\{\Delta _{\Theta _{j}}\}_{j\in \lbrack
p]\cup \{0\}}):\sum_{j\in \lbrack p]\cup \{0\}}\left\Vert \mathcal{P}%
_{j}^{\bot }(\Delta _{\Theta _{j}})\right\Vert _{\ast }\leq C_{1}\sum_{j\in
\lbrack p]\cup \{0\}}\left\Vert \mathcal{P}_{j}(\Delta _{\Theta
_{j}})\right\Vert _{\ast },  \notag  \label{RS} \\
& \sum_{j\in \lbrack p]\cup \{0\}}\left\Vert \Theta _{j}\right\Vert ^{2}\geq
C_{2}\sqrt{NT}\bigg\}.
\end{align}

Lemma \ref{Lem:RS} in the online appendix shows that our nuclear norm
estimators are in a restricted set larger than (\ref{RS}), which derives from
the restriction on the Frobenius norm in the definition of $\mathcal{R}%
\left( C_{1},C_{2}\right) $. Intuitively, the first restriction in (\ref{RS}%
)\ means the projection onto the orthogonal low-rank space of the estimator
error can be controlled by its projection onto the low-rank space. Theorem %
\ref{Thm1} below largely hinges on this property.

\begin{ass}
\label{ass:4} For any $C_{2}>0$, there are constants $C_{3}$ and $C_{4}$
such that for any $(\{\Delta _{\Theta _{j}}\}_{j\in \lbrack p]\cup \{0\}})$ $%
\in \mathcal{R}(3,C_{2})$, we have 
\begin{equation*}
\left\Vert \Delta _{\Theta _{0}}+\sum_{j=1}^{p}\Delta _{\Theta _{j}}\odot
X_{j}\right\Vert ^{2}\geq C_{3}\sum_{j\in \lbrack p]\cup \{0\}}\left\Vert
\Delta _{\Theta _{j}}\right\Vert ^{2}-C_{4}(N+T)\quad \text{w.p.a.1.}
\end{equation*}
\end{ass}

\noindent Assumption \ref{ass:4} imposes the restricted strong convexity (RSC)
condition, which is similar to Assumption 3.1 in \cite%
{chernozhukov2019inference}. The latter authors also provide some sufficient
conditions to verify such an assumption.

Let $r=\sum_{j\in \lbrack p]\cup \{0\}}r_{j}.$ Define the following $r\times
r$ matrices: 
\begin{equation*}
\Phi _{i}=\frac{1}{T}\sum_{t=1}^{T}\phi _{it}^{0}\phi _{it}^{0\prime
}~~\forall i\in \lbrack N]\text{ and }\Psi _{t}=\frac{1}{N}\sum_{i\in
\lbrack N]}\psi _{it}^{0}\psi _{it}^{0\prime }~~\forall t\in \lbrack T],
\end{equation*}%
where $\phi _{it}^{0}=(v_{t,0}^{0\prime },v_{t,1}^{0\prime }X_{1,it},\cdots
,v_{t,p}^{0\prime }X_{p,it})^{\prime },$ and $\psi
_{it}^{0}=(u_{i,0}^{0\prime },u_{i,1}^{0\prime }X_{1,it},\cdots
,u_{i,p}^{0\prime }X_{p,it})^{\prime }.$

\begin{ass}
\label{ass:5} There exist constants $C_{\phi}$ and $c_{\phi}$ such that 
\begin{align*}
&\infty>C_{\phi}\geq \limsup_{T}\max_{t\in[T]}\lambda_{\max}(\Psi_{t})\geq%
\liminf_{T}\min_{t\in[T]}\lambda_{\min}(\Psi_{t})\geq c_{\phi}>0, \\
&\infty>C_{\phi}\geq \limsup_{N}\max_{i\in
[N]}\lambda_{\max}(\Phi_{i})\geq\liminf_{N}\min_{i\in
[N]}\lambda_{\min}(\Phi_{i})\geq c_{\phi}>0.
\end{align*}
\end{ass}
\noindent Assumption \ref{ass:5} is similar to Assumption 8 in \cite{ma2020detecting}
and it imposes some rank conditions.


\subsection{Asymptotics of NNR Estimators and Singular Vector
Estimators}

Let $\eta _{N,1}=\frac{\sqrt{\log T}}{\sqrt{N\wedge T}}$ and $\eta _{N,2}=%
\frac{\sqrt{\log (N\vee T)}}{\sqrt{N\wedge T}}(NT)^{1/q}.$ Let $\tilde{\sigma%
}_{k,j}$ denote the $k$-th largest singular value of $\tilde{\Theta}_{j}$
for $j\in \lbrack p]\cup \{0\}.$ Our first main result is about the
consistency of the first-stage NNR estimators and the second-stage singular
vector estimators.

\begin{theorem}
\label{Thm1} Suppose that Assumptions \ref{ass:1}--\ref{ass:4} hold. Then $%
\forall j\in \lbrack p]\cup \{0\}$, we have

\begin{itemize}
\item[(i)] $\frac{1}{\sqrt{NT}}||\tilde{\Theta}_{j}-\Theta
_{j}^{0}||=O_{p}(\eta _{N,1})$, $\max_{k\in \lbrack r_{j}]}|\tilde{\sigma}%
_{k,j}-\sigma _{k,j}|=O_{p}(\eta _{N,1})$, and $||V_{j}^{0}-\tilde{V}%
_{j}O_{j}||=O_{p}(\sqrt{T}\eta _{N,1})$ where $O_{j}$ is an orthogonal
matrix defined in the proof.

If in addition Assumption \ref{ass:5} is also satisfied, then we have

\item[(ii)] $\operatornamewithlimits{\max}\limits_{i\in \lbrack N]}||\dot{u}%
_{i,j}-O_{j}u_{i,j}^{0}||=O_{p}(\eta _{N,2}),$ $\operatornamewithlimits{\max}%
\limits_{t\in \lbrack T]}\left\Vert \dot{v}_{t,j}-O_{j}v_{t,j}^{0}\right%
\Vert _{2}=O_{p}(\eta _{N,2})$,

\item[(iii)] $\operatornamewithlimits{\max}\limits_{i\in \lbrack N],t\in
\lbrack T]}|\dot{\Theta}_{j,it}-\Theta _{j,it}^{0}|=O_{p}(\eta _{N,2})$.
\end{itemize}
\end{theorem}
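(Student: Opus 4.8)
The plan is to prove the three parts in sequence, since the Frobenius-norm consistency in part (i) is the input to the pointwise analysis of parts (ii) and (iii). For part (i) I would begin from the basic inequality implied by optimality of $\{\tilde{\Theta}_j\}$ in \eqref{obj}: evaluating the penalized objective at $\{\tilde\Theta_j\}$ versus at the truth $\{\Theta_j^0\}$ and substituting $Y=\Theta_0^0+\sum_j X_j\odot\Theta_j^0+E$ gives, after rearrangement, an upper bound on the prediction error $\frac{1}{NT}\|\Delta_{\Theta_0}+\sum_j\Delta_{\Theta_j}\odot X_j\|^2$ in terms of the linear term $\frac{2}{NT}\langle E,\Delta_{\Theta_0}+\sum_j\Delta_{\Theta_j}\odot X_j\rangle$ and the nuclear-norm gaps $\sum_j\nu_j(\|\Theta_j^0\|_\ast-\|\tilde\Theta_j\|_\ast)$. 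The linear term is handled by nuclear/operator-norm duality, writing $\langle E,\Delta_{\Theta_j}\odot X_j\rangle=\langle E\odot X_j,\Delta_{\Theta_j}\rangle$ and choosing $\nu_j$ to dominate the operator norms $\frac{1}{NT}\|E\|_{op}$ and $\frac{1}{NT}\|E\odot X_j\|_{op}$, which are $O_p(\sqrt{N\vee T}/NT)$ up to logarithmic factors under Assumption~\ref{ass:1}. The standard decomposition through the projections $\mathcal{P}_j,\mathcal{P}_j^{\bot}$ then places $\{\Delta_{\Theta_j}\}$ in $\mathcal{R}(3,C_2)$, so the RSC bound (Assumption~\ref{ass:4}) lower bounds the prediction error by $C_3\sum_j\|\Delta_{\Theta_j}\|^2-C_4(N+T)$; solving the resulting quadratic inequality yields $\frac{1}{\sqrt{NT}}\|\tilde\Theta_j-\Theta_j^0\|=O_p(\eta_{N,1})$. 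Weyl's inequality applied to $\frac{1}{\sqrt{NT}}\tilde\Theta_j$ gives the singular-value rate, and a Davis--Kahan $\sin\Theta$ argument, using the singular-value separation from Assumption~\ref{ass:3}, transfers the Frobenius bound to the right singular subspace, producing the rotation $O_j$ and (after the $\sqrt{T}$ normalization in $V_j=\sqrt{T}\mathcal{V}_j$) the rate $\|V_j^0-\tilde V_jO_j\|=O_p(\sqrt{T}\eta_{N,1})$.

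For part (ii) I would analyze the row-wise OLS in \eqref{obj2}. Writing the true model in row-$i$ form as $Y_{it}=\phi_{it}^{0\prime}u_i^0+e_{it}$, with $\phi_{it}^0$ as defined before Assumption~\ref{ass:5} and $u_i^0$ stacking $\{u_{i,j}^0\}_{j\in[p]\cup\{0\}}$, the estimation error of $\dot u_{i,j}$ splits into a term driven by the errors $e_{it}$ and a term driven by the first-stage discrepancy $\tilde v_{t,j}-O_jv_{t,j}^0$. Uniform invertibility of the sample design matrix $\frac{1}{T}\sum_t\tilde\phi_{it}\tilde\phi_{it}^{\prime}$ formed from the estimated regressors follows by combining Assumption~\ref{ass:5}, which bounds the eigenvalues of $\Phi_i$ away from $0$ and $\infty$, with the $\tilde v\to v^0$ consistency of part (i). The noise term requires the uniform bound $\max_i\|\frac{1}{T}\sum_t\phi_{it}^0e_{it}\|=O_p(\eta_{N,2})$, where the $\sqrt{\log(N\vee T)}$ factor comes from a union bound over the $N$ rows and the $(NT)^{1/q}$ factor from a truncation argument that accommodates only $q>8$ moments (Assumption~\ref{ass:1}(v)) together with a maximal inequality for the conditionally strong-mixing summands; the discrepancy term is controlled by the part (i) rate. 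The column-wise step in \eqref{group estimates} is entirely symmetric, now treating $\dot u_{i,j}$ as regressors and invoking the $\Psi_t$ conditioning in Assumption~\ref{ass:5}, giving $\max_t\|\dot v_{t,j}-O_jv_{t,j}^0\|=O_p(\eta_{N,2})$.

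For part (iii) I would use the orthogonality $O_j^\prime O_j=I$ to write $\Theta_{j,it}^0=(O_ju_{i,j}^0)^\prime(O_jv_{t,j}^0)$ and expand
\begin{equation*}
\dot\Theta_{j,it}-\Theta_{j,it}^0=(\dot u_{i,j}-O_ju_{i,j}^0)^\prime\dot v_{t,j}+(O_ju_{i,j}^0)^\prime(\dot v_{t,j}-O_jv_{t,j}^0).
\end{equation*}
Taking $\max_{i,t}$ and applying the part (ii) bounds, together with the uniform boundedness of $u_{i,j}^0$ and $v_{t,j}^0$ implied by Assumption~\ref{ass:2} and the consequent uniform boundedness of $\dot v_{t,j}$, then delivers $\max_{i,t}|\dot\Theta_{j,it}-\Theta_{j,it}^0|=O_p(\eta_{N,2})$.

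The main obstacle will be the uniform control in part (ii). The first-stage regressors $\tilde v_{t,j}$ are functions of the entire sample, including the errors that enter row $i$'s regression, so $\tilde v$ is not independent of $e_{it}$ and the naive error decomposition leaves a cross term $\frac{1}{T}\sum_t(\tilde v_{t,j}-O_jv_{t,j}^0)(\cdot)e_{it}$. Since part (i) supplies only an aggregate Frobenius rate for $\tilde v$, the crux is to show that, once this aggregate error is propagated through the row regression and summed over $t$, its contribution is uniformly negligible relative to $\eta_{N,2}$ across all $i$; this is precisely where the heavy-tail truncation and the conditional-mixing maximal inequality must be combined most carefully, and it is what forces the extra $(NT)^{1/q}$ inflation in $\eta_{N,2}$ relative to $\eta_{N,1}$.
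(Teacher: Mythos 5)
Your proposal is correct and follows essentially the same route as the paper's proof: part (i) via the basic inequality, nuclear/operator-norm duality with $\nu_j$ chosen to dominate $\|E\|_{op}$ and $\|E\odot X_j\|_{op}$, the restricted set $\mathcal{R}(3,C_2)$ plus the RSC condition, then Weyl and Davis--Kahan; part (ii) via the row- and column-wise least-squares error decomposition with a uniform minimum-eigenvalue bound on the estimated design matrices, a truncation-plus-conditional-Bernstein argument (union bound over $i$) for the noise term, and the aggregate first-stage rate for the discrepancy terms; part (iii) via the algebraic expansion and uniform bounds on $u^0_{i,j}$, $v^0_{t,j}$. The cross term you flag as the main obstacle is dispatched in the paper exactly as you anticipate, by Cauchy--Schwarz against the aggregate bound $\max_i\frac{1}{T}\sum_t\|\tilde\phi_{it}-\phi^0_{it}\|^2=O_p(\eta_{N,1}^2(NT)^{2/q})$, yielding a contribution of order $\eta_{N,1}(NT)^{1/q}\lesssim\eta_{N,2}$ without needing any independence between $\tilde v_{t,j}$ and $e_{it}$.
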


Theorem \ref{Thm1}(i) reports the error bounds for $\tilde{\Theta}_{j},$ $%
\tilde{\sigma}_{k,j},$ and $\tilde{V}_{j}$. The $\log T$ term in the
numerator of $\eta _{N,1}$ is due to the use of some exponential inequality
for (conditional)\ strong mixing processes. Theorem \ref{Thm1}(ii)--(iii)
report the uniform convergence rate of the factor and factor loading
estimators. The extra $(NT)^{1/q}$ term in the definition of $\eta _{N,2}$
is due to the nonboundedness of $X_{j,it}$ in Assumption \ref{ass:1}(v), and
it disappears when $X_{j,it}$ is assumed to be uniformly bounded.

\subsection{Consistency of the Break Point Estimate}

Recall that $g_{i}^{(1)}$ and $g_{i}^{(2)}$ denote the true group individual 
$i$ belongs to before and after the break, respectively. To estimate the
break point consistently, we add the following condition.

\begin{ass}
\label{ass:6}

\begin{enumerate}
\item[(i)] $\sqrt{\frac{1}{N}\sum_{i\in \lbrack N]}||\alpha
_{g_{i}^{(1)}}-\alpha _{g_{i}^{(2)}}||^{2}}=C_{5}\zeta _{NT}$, where $C_{5}$
is a positive constant and $\zeta _{NT}\gg \eta _{N,2}$.

\item[(ii)] $\tau_{T}:=\frac{T_{1}}{T}\rightarrow \tau \in (0,1)$ as $%
T\rightarrow \infty.$
\end{enumerate}
\end{ass}

Assumption \ref{ass:6}(i) imposes conditions on the break size in order to
identify the break point. Note that we allow the average break size to
shrink to zero at a rate slower than $\sqrt{\frac{\log (N\vee T)}{N\wedge T%
}}(NT)^{1/q}$. This rate is of much bigger magnitude than the optimal $%
\left( NT\right) ^{-1/2}$-rate that can be detected in the panel threshold
regressions (PTRs) for several reasons. First, in PTRs, the slope
coefficients are usually assumed to be homogeneous so that each individual
is subject to the same change in the slope coefficients and one can use the
cross-sectional information effectively. In contrast, we allow for
heterogeneous slope coefficients here and the change can occur only for a
subset of cross section units but not all. In addition, in the presence of
latent group structures, we not only allow the slope coefficients of some
specific groups to change with group membership fixed, but also allow the
slope coefficient to remain the same for some groups while the group
memberships change after the break. Second, our break point estimation
relies on the binary segmentation idea borrowed from the time series
literature where one can allow break sizes of bigger magnitude than $%
T^{-1/2} $ in order to identify the break ratio consistently but not the
break point consistently. As is apparent, even though we require bigger break
sizes, we can estimate the break date consistently by using information from
both the cross section and time dimensions. Third, as mentioned above, the
additional term $\log (N\vee T)$ in the above rate is mainly due to the use
of an exponential inequality and the factor $(NT)^{1/q}$ is due to the fact
that we only assume the existence of $q$-th order moments for some random
variables.

The following theorem indicates that we can estimate the break date $T_{1}$
consistently.

\begin{theorem}
\label{Thm2} Suppose Assumptions \ref{ass:1}--\ref{ass:6} hold, with the
true break point being $T_{1}$ and the estimator defined in (\ref%
{BP_estimates}). Then $\mathbb{P}(\hat{T}_{1}=T_{1})\rightarrow 1$ as $%
\left( N,T\right) \rightarrow \infty $.
\end{theorem}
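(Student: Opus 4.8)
The plan is to show that the objective function in \eqref{BP_estimates}, viewed as a function of the candidate break point $s$, is minimized at the true break point $T_1$ with probability approaching one. Let me denote the objective as $Q_N(s)$, the within-segment sum of squared deviations of the second-step estimates $\dot{\Theta}_{j,it}$ around their segment-specific means. The key structural fact, supplied by Lemma~\ref{Lem:idfct}(ii), is that after normalization the singular vectors $v_{t,j}^{0}$ take exactly two distinct directions separated by Euclidean distance $\sqrt{2}$—one before $T_1$ and one after—precisely in the Case~3 coordinates where a genuine break occurs. This means the population analogue of $Q_N(s)$, built from the true $\Theta_{j,it}^{0}$, has a sharp kink/minimum at $s=T_1$, with a strictly positive gap that is governed by the break size $\zeta_{NT}$ of Assumption~\ref{ass:6}(i). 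So the argument is essentially: (a) quantify the population separation at $s=T_1$ versus $s\neq T_1$, then (b) control the estimation error uniformly in $s$ using Theorem~\ref{Thm1}(iii).

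First I would replace $\dot{\Theta}_{j,it}$ by $\Theta_{j,it}^{0}$ everywhere in $Q_N(s)$, incurring an error that I can bound uniformly over $s$ and over $(i,t)$ using the uniform rate $\max_{i,t}|\dot{\Theta}_{j,it}-\Theta_{j,it}^{0}|=O_p(\eta_{N,2})$ from Theorem~\ref{Thm1}(iii). Because the objective is a sum of squared deviations, the cross terms between the estimation error and the centered true coefficients must be handled carefully; by Cauchy--Schwarz these are dominated by $\eta_{N,2}$ times the magnitude of the break signal, and since Assumption~\ref{ass:6}(i) posits $\zeta_{NT}\gg\eta_{N,2}$, the signal dominates. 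Second, I would analyze the population objective $Q_N^{0}(s)=\frac{1}{pNT}\sum_{j,i}\{\sum_{t\le s}(\Theta_{j,it}^{0}-\bar\Theta^{(1s)}_{j,i})^2+\sum_{t>s}(\Theta_{j,it}^{0}-\bar\Theta^{(2s)}_{j,i})^2\}$. For each individual $i$ and regressor $j$, the sequence $\{\Theta_{j,it}^{0}\}_t$ is piecewise constant, taking value $\alpha_{k,j}^{(1)}$ for $t\le T_1$ and $\alpha_{k,j}^{(2)}$ for $t>T_1$ (with $k$ the relevant group index). A direct computation shows $Q_N^{0}(s)$ is minimized at $s=T_1$, where it vanishes, and for $s\neq T_1$ it is bounded below by a quantity proportional to $\frac{|s-T_1|}{T}\cdot\frac{1}{N}\sum_i\|\alpha_{g_i^{(1)}}-\alpha_{g_i^{(2)}}\|^2$, i.e.\ of order $\frac{|s-T_1|}{T}\zeta_{NT}^2$ by Assumption~\ref{ass:6}(i).

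The main obstacle, and the heart of the proof, will be converting this into a statement about the \emph{argmin}: I must show that w.p.a.1 no $s\neq T_1$ beats $s=T_1$. The delicate regime is $s$ very close to $T_1$ (say $|s-T_1|$ small), where the population gap $\frac{|s-T_1|}{T}\zeta_{NT}^2$ is itself small and can in principle be swamped by the accumulated estimation noise. I would treat this by a two-scale argument: for $s$ bounded away from $\tau T$ (using Assumption~\ref{ass:6}(ii) that $\tau_T\to\tau\in(0,1)$), the gap is of order $\zeta_{NT}^2$ and the noise of order $\eta_{N,2}^2+\eta_{N,2}\zeta_{NT}$, so $\zeta_{NT}\gg\eta_{N,2}$ closes it; for $s$ near $T_1$, the smallest nonzero displacement is $|s-T_1|=1$, giving a population advantage of order $\zeta_{NT}^2/T$ for $s=T_1$, and I would need the noise fluctuation in the difference $Q_N(s)-Q_N(T_1)$ to be $o_p(\zeta_{NT}^2/T)$ for each such $s$, then union-bound over the finitely many $s$ in a neighborhood. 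This localized noise control is where one exploits that the error $\dot{\Theta}_{j,it}-\Theta_{j,it}^0$ enters the \emph{difference} rather than the level, producing telescoping cancellations that sharpen the crude $\eta_{N,2}$ bound; establishing that $\mathbb{P}(Q_N(s)<Q_N(T_1))\to 0$ uniformly over the $O(T)$ candidate values, and in particular over the hardest nearby candidates, is the technical crux. Combining the two regimes with Assumption~\ref{ass:1}(vi) (which reconciles the growth rates of $N$ and $T$) then yields $\mathbb{P}(\hat{T}_1=T_1)\to 1$.
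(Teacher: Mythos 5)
Your proposal follows essentially the same route as the paper's proof: both replace $\dot{\Theta}_{j,it}$ by $\Theta^{0}_{j,it}+\Delta_{it}(j)$ with the uniform bound of Theorem \ref{Thm1}(iii), decompose $L(s)-L(T_1)$ into a signal term of order $\kappa_s\zeta_{NT}^2$ (with $\kappa_s=|T_1-s|/T$) plus noise, and exploit exactly the telescoping cancellation you identify as the crux, which shows that every noise term --- the quadratic $O_p(\eta_{N,2}^2)$ pieces and the cross terms $O_p(\eta_{N,2}\zeta_{NT})$ --- also carries the factor $\kappa_s$, so that $\zeta_{NT}\gg\eta_{N,2}$ from Assumption \ref{ass:6}(i) closes the gap. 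Two cosmetic points: since the $\kappa_s$ factor is shared by signal and noise alike, no two-scale split or union bound near $T_1$ is needed (the paper normalizes $L(s)-L(T_1)$ by $\kappa_s\zeta_{NT}^2$ once, uniformly over all $s$), and Lemma \ref{Lem:idfct}(ii) plays no role here --- it is invoked only for the alternative singular-vector estimator of Theorem \ref{Thm6}, while this proof works directly with the piecewise-constant values $\alpha^{(1)}_{g_i^{(1)},j}$, $\alpha^{(2)}_{g_i^{(2)},j}$, as you do in your population computation.
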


Theorem \ref{Thm2} shows that we can estimate the true break date
consistently w.p.a.1 despite the fact that we allow the break size to shrink
to zero at a certain rate.

\subsection{Consistency of the Estimates of the Number of Groups and the
Latent Group Structures}

To study the asymptotic properties of the estimates of the number of groups
and the recovery of the latent group structures, we first add the following
definition.

\begin{definition}
\label{def:1} Fix $K^{(\ell)}>1$ and $m\leq K^{(\ell)}$. The estimated group
structure $\hat{\mathcal{G}}_{m}^{(\ell)}$ satisfies the non-splitting
property (NSP) if for any pair of individuals in the same true group, the
estimated group labels are the same.
\end{definition}

\noindent Definition \ref{def:1} describes the non-splitting property introduced by 
\cite{Jin2022Optimal}. The latter authors show that the STK algorithm yields
the estimated group structures enjoying the NSP.

To proceed, we add the following assumptions.

\begin{ass}
\label{ass:7}

\begin{enumerate}
\item[(i)] Let $k_{s}$ and $k_{s^{\ast }}$ be different group indices.
Assume that $\min_{1\leq k_{s}<k_{s^{\ast }}\leq K^{(\ell )}}\left\Vert
\alpha _{k_{s}}^{(\ell )}-\alpha _{k_{s^{\ast }}}^{(\ell )}\right\Vert _{2}$ 
$\geq C_{5}$ for$~\ell \in \{1,2\}.$

\item[(ii)] Let $N_{k}^{(\ell )}$ be the number of individuals in group $k\ $%
for $k\in \lbrack K^{(\ell )}]$. Define $\pi _{k}^{(\ell )}=\frac{%
N_{k}^{(\ell )}}{N}$ for $\ell =1,2.$ Assume $K^{(\ell )}$ is fixed and 
\begin{equation*}
\infty >\bar{C}\geq \lim \sup_{N}\sup_{k\in \lbrack K^{(\ell )}]}\pi
_{k}^{(\ell )}\geq \lim \inf_{N}\inf_{k\in \lbrack K^{(\ell )}]}\pi
_{k}^{(\ell )}\geq \underline{c}>0,~\ell =1,2.
\end{equation*}

\item[(iii)] For any combination of the collection of $n$ true groups with $%
n\geq 2$, we have 
\begin{equation*}
\frac{T_{\ell }}{\sqrt{N}}\sum_{s=1}^{n}N_{k_{s}}^{(\ell )}\left\Vert
\sum_{s^{\ast }\in \left[ n\right] ,s^{\ast }\neq s}(\alpha _{k_{s^{\ast
}}}^{(\ell )}-\alpha _{k_{s}}^{(\ell )})\right\Vert ^{2}/(\log
N)^{1/2}\rightarrow \infty ,~\ell =1,2.
\end{equation*}
\end{enumerate}
\end{ass}

\noindent \textbf{Remark 3.} Assumption \ref{ass:7}(i)--(ii) is the standard
assumption for K-means algorithm, which is comparable to Assumption 4 in 
\cite{su2020strong} and greatly facilitates the subsequent analyses.
Assumption \ref{ass:7}(i) assumes that the minimum distance of two distinct
groups is bounded away from 0, and Assumption \ref{ass:7}(ii) imposes that
each group has asymptotically non-negligible number of units. For Assumption %
\ref{ass:7}(iii), it can be shown to hold under mild conditions. Below we
explain this assumption in detail. When $n=2$, it is clear that 
\begin{align*}
\frac{T_{\ell }}{\sqrt{N}}\sum_{s=1}^{n}N_{k_{s}}^{(\ell )}\left\Vert
\sum_{s^{\ast }\in \left[ n\right] ,s^{\ast }\neq s}(\alpha _{k_{s^{\ast
}}}^{(\ell )}-\alpha _{k_{s}}^{(\ell )})\right\Vert ^{2}& =\frac{T_{\ell }}{%
\sqrt{N}}\left( N_{k_{1}}^{(\ell )}\left\Vert \alpha _{k_{2}}^{(\ell
)}-\alpha _{k_{1}}^{(\ell )}\right\Vert ^{2}+N_{k_{2}}^{(\ell )}\left\Vert
\alpha _{k_{1}}^{(\ell )}-\alpha _{k_{2}}^{(\ell )}\right\Vert ^{2}\right) \\
& \geq \frac{C_{5}^{2}T_{\ell }(N_{k_{1}}^{(\ell )}+N_{k_{2}}^{(\ell )})}{%
\sqrt{N}}=O(T\sqrt{N})
\end{align*}%
by Assumptions \ref{ass:6}(ii) and \ref{ass:7}(i)--(ii). When $n>2$, we
consider a special case such that $S_{s}=:||\sum_{s^{\ast }\in \left[ n%
\right] ,s^{\ast }\neq s}(\alpha _{k_{s^{\ast }}}^{(\ell )}-\alpha
_{k_{s}}^{(\ell )})||=0$ for some specific $s=s_{0}\in \lbrack n]$. Then it
is easy to see $S_{s}$ is non-zero for all $s\in \lbrack n]\backslash
\{s_{0}\}$ under Assumption \ref{ass:7}(i). Hence, if we assume $S_{s}$ is
lower bounded by a constant $c$ for any $s\in \lbrack n]\backslash \{s_{0}\}$%
, Assumption \ref{ass:7}(iii) will holds naturally. Similar arguments hold
for the other general cases.

\begin{ass}
\label{ass:8} Let $\mathcal{T}_{1}=\left[ T_{1}\right] \ $and $\mathcal{T}%
_{2}=[T]\backslash \left[ T_{1}\right] $.

\begin{itemize}
\item[(i)] $\frac{1}{T_{\ell }}\sum_{t\in \mathcal{T}_{\ell
}}f_{t}^{0}f_{t}^{0\prime }\overset{p}{\longrightarrow }\Sigma _{F}^{(\ell
)}>0$ as $T\rightarrow \infty $. $\frac{1}{N_{k}^{(\ell )}}\Lambda
_{k}^{0,\left( \ell \right) \prime }\Lambda _{k}^{0,\left( \ell \right) }%
\overset{p}{\longrightarrow }\Sigma _{\Lambda ,k}^{\left( \ell \right) }>0$
as $N\rightarrow \infty $, where $\Lambda _{k}^{0,\left( \ell \right) }$ is
a stack of $\lambda _{i}^{0}$ for all individuals in group $k\ $and $k\in
\lbrack K^{(\ell )}]$.

\item[(ii)] There exists a constant $C>0$ such that $\max_{i}\frac{1}{%
T_{\ell }}\sum_{t\in \mathcal{T}_{\ell }}\left\Vert \xi _{it}\right\Vert
^{2}\leq C~a.s.$ for $\xi _{it}=e_{it},$ $X_{it}$ and $X_{it}e_{it}.$
\end{itemize}
\end{ass}



\noindent Assumption \ref{ass:8}(i) imposes some standard assumptions on the factors
and factor loadings. Assumption \ref{ass:8}(ii) is similar as Assumption \ref%
{ass:1}(iv), which strengthens Assumption \ref{ass:1}(iv) to hold for two
time regimes. 

The next theorem reports the asymptotic properties of the STK estimators.

\begin{theorem}
\label{Thm3} Let $\varsigma =\varsigma _{N}\rightarrow 0$ at rate $N^{-c}$
for some $c>0$ as $N\rightarrow \infty .$ Suppose that Assumption \ref{ass:1*}
and Assumptions \ref{ass:2}--\ref{ass:8} hold. Then for $\ell \in
\{1,2\}$, we have

\begin{itemize}
\item[(i)] if $m=K^{(\ell )}$,

\begin{itemize}
\item[(a)] $\operatornamewithlimits{\max}\limits_{i\in \lbrack N]}\mathbf{1}%
\{\hat{g}_{i,K^{(\ell )}}^{(\ell )}\neq g_{i}^{(\ell )}\}=0$ w.p.a.1,

\item[(b)] $\hat{\Gamma}_{K^{(\ell )}}^{(\ell )}$ is asymptotically
distributed as the maximum of $K^{(\ell )}$ independent $\chi ^{2}(1)$
random variables,

\item[(c)] $\mathbb{P}(\hat{K}^{(\ell )}\leq K^{(\ell )})\geq 1-\alpha +o(1)$%
,
\end{itemize}

\item[(ii)] if $m<K^{(\ell )}$, $\hat{\Gamma}_{m}^{(\ell )}/\log
N\rightarrow \infty $ w.p.a.1. Thus $\mathbb{P}(\hat{K}^{(\ell )}\neq
K^{(\ell )})\leq \varsigma +o(1)$.
\end{itemize}
\end{theorem}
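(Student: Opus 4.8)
The plan is to establish the four claims in order, since each relies on the preceding one, and throughout to reduce the feasible statistics (built on the data-driven partition $\hat{\mathcal{G}}_{m}^{(\ell)}$) to the oracle statistics built on the true groups. \textbf{Part (i)(a).} First I would prove classification consistency at $m=K^{(\ell)}$. By Theorem \ref{Thm2} we may condition on $\{\hat{T}_{1}=T_{1}\}$, which holds w.p.a.1, so the regime lengths are effectively known. The population analogue $\beta_{i}^{(\ell),0}$ of the K-means feature $\dot{\beta}_{i}^{(\ell)}$ (obtained by replacing $\dot{\Theta}_{j,it}$ with $\Theta_{j,it}^{0}$) is the stacked vector whose $j$-th block equals the constant $\alpha_{j,g_{i}^{(\ell)}}^{(\ell)}$; it has squared norm $\|\alpha_{g_{i}^{(\ell)}}^{(\ell)}\|^{2}$, and for two units in distinct true groups $\|\beta_{i}^{(\ell),0}-\beta_{i'}^{(\ell),0}\|=\|\alpha_{g_{i}^{(\ell)}}^{(\ell)}-\alpha_{g_{i'}^{(\ell)}}^{(\ell)}\|\ge C_{5}$ (Assumption \ref{ass:7}(i)). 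Theorem \ref{Thm1}(iii) delivers $\max_{i}\|\dot{\beta}_{i}^{(\ell)}-\beta_{i}^{(\ell),0}\|=O_{p}(\eta_{N,2})=o_{p}(1)$ uniformly in $i$. Since this noise is uniformly negligible relative to the separation $C_{5}$ and each group retains non-negligible mass (Assumption \ref{ass:7}(ii)), a standard K-means argument shows that the minimizing centroids are $o_{p}(1)$-close to the true ones up to a label permutation and every unit is assigned to its own group, giving $\max_{i}\mathbf{1}\{\hat{g}_{i,K^{(\ell)}}^{(\ell)}\neq g_{i}^{(\ell)}\}=0$ w.p.a.1.

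\textbf{Part (i)(b).} On the event of (a) the estimated groups coincide with the true ones, so $\hat{\Gamma}_{K^{(\ell)}}^{(\ell)}$ equals its oracle version and it suffices to analyze the latter. For each true group $k$ I would expand the within-group IFE estimator following \cite{bai2009panel}, writing $\hat{\theta}_{i,k}^{(\ell)}-\alpha_{k}^{(\ell)}$ as the leading term $\hat{T}_{\ell}^{-1}(\hat{S}_{ii,k}^{(\ell)})^{-1}\sum_{t}\hat{z}_{it}^{(\ell)}e_{it}$ plus an incidental-parameter bias and a lower-order remainder; the key algebraic point is that the bias and the common factor-estimation error cancel in the centered deviation $\hat{\theta}_{i,k}^{(\ell)}-\hat{\bar{\theta}}_{k}^{(\ell)}$, leaving a term driven only by $\{e_{it}\}$. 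Under within-group homogeneity the studentized quadratic form $\hat{\mathbb{S}}_{i,k}^{(\ell)}$ (whose finite-sample factor $(1-\hat{a}_{ii,k}^{(\ell)}/|\hat{G}_{k,K^{(\ell)}}^{(\ell)}|)^{2}$ corrects for estimation of the common slope) is then asymptotically $\chi^{2}(p)$ with mean $p$ and variance $2p$, using consistency of the HAC estimator $\hat{\Omega}_{i,k}^{(\ell)}$ for the long-run variance of $\hat{z}_{it}^{(\ell)}e_{it}$, for which Assumption \ref{ass:1*} supplies the requisite martingale-difference and conditional-independence structure. A Lindeberg CLT for $|\hat{G}_{k,K^{(\ell)}}^{(\ell)}|^{-1/2}\sum_{i}(\hat{\mathbb{S}}_{i,k}^{(\ell)}-p)$ over units that are conditionally independent across $i$ (Assumption \ref{ass:1*}(i)), following \cite{pesaran2008testing} and \cite{ando2015simple}, then yields $\hat{\Gamma}_{k,K^{(\ell)}}^{(\ell)}\rightsquigarrow N(0,1)$ and hence $(\hat{\Gamma}_{k,K^{(\ell)}}^{(\ell)})^{2}\rightsquigarrow\chi^{2}(1)$. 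As distinct groups involve disjoint and conditionally independent unit sets, these $K^{(\ell)}$ statistics are asymptotically independent, so $\hat{\Gamma}_{K^{(\ell)}}^{(\ell)}=\max_{k}(\hat{\Gamma}_{k,K^{(\ell)}}^{(\ell)})^{2}$ converges to the maximum of $K^{(\ell)}$ independent $\chi^{2}(1)$ variables. I expect this to be the main obstacle, as it demands uniform control of the estimated-factor projection $M_{\hat{F}_{k}^{(\ell)}}$, the cancellation of the Bai-type bias in $\hat{\theta}_{i,k}^{(\ell)}-\hat{\bar{\theta}}_{k}^{(\ell)}$, and HAC consistency with estimated residuals.

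\textbf{Parts (i)(c) and (ii).} Part (c) follows from (b): since $z_{\varsigma}$ is the $(1-\varsigma)$-quantile of the maximum of $K^{(\ell)}$ independent $\chi^{2}(1)$ variables and $\{\hat{K}^{(\ell)}>K^{(\ell)}\}\subseteq\{\hat{\Gamma}_{K^{(\ell)}}^{(\ell)}>z_{\varsigma}\}$, we get $\mathbb{P}(\hat{K}^{(\ell)}>K^{(\ell)})\to\varsigma$, i.e. $\mathbb{P}(\hat{K}^{(\ell)}\le K^{(\ell)})\ge 1-\varsigma+o(1)$, which is the content of (c). For the under-fitting case (ii) I would invoke the non-splitting property of Definition \ref{def:1}: when $m<K^{(\ell)}$ the K-means partition cannot split any true group, so at least one estimated group must merge $n\ge 2$ true groups. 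On such a group the centered deviations stay bounded away from zero for a non-negligible share of units, and a lower-bound computation gives $\hat{\Gamma}_{k,m}^{(\ell)}\gtrsim\frac{T_{\ell}}{\sqrt{N}}\sum_{s}N_{k_{s}}^{(\ell)}\big\|\sum_{s^{\ast}\neq s}(\alpha_{k_{s^{\ast}}}^{(\ell)}-\alpha_{k_{s}}^{(\ell)})\big\|^{2}$, which by Assumption \ref{ass:7}(iii) grows faster than $(\log N)^{1/2}$; hence $\hat{\Gamma}_{m}^{(\ell)}/\log N\to\infty$ w.p.a.1. Finally, because $\varsigma=N^{-c}$ forces $z_{\varsigma}=O(\log N)$, this divergence ensures the algorithm never stops before reaching $m=K^{(\ell)}$, so $\hat{K}^{(\ell)}\ge K^{(\ell)}$ w.p.a.1; combined with (c) this yields $\mathbb{P}(\hat{K}^{(\ell)}\neq K^{(\ell)})\le\varsigma+o(1)$.
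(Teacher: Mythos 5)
Your proposal is correct and follows essentially the same route as the paper's proof: conditioning on $\{\hat{T}_{1}=T_{1}\}$ and establishing K-means classification consistency from the separation $C_{5}$ plus the uniform rate $O_{p}(\eta_{N,2})$ for (i)(a), substituting the true groups into the oracle statistics and invoking a Pesaran--Yamagata-type CLT with asymptotic cross-group independence (from conditional independence across $i$) for (i)(b)--(c), and combining the non-splitting property with the merged-group deviations $c_{i}^{(\ell)}$ and Assumption \ref{ass:7}(iii) against $z_{\varsigma}\asymp\log N$ for (ii). The one step you leave implicit — that the divergence in (ii) must hold for the \emph{data-dependent} partition — is handled in the paper by bounding $\hat{\Gamma}_{m}^{(\ell)}$ below by the minimum over the finite collection of $m^{K^{(\ell)}}$ non-splitting pseudo-partitions, which is immediate given that finiteness.
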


Theorem \ref{Thm3} studies the asymptotic properties of the STK algorithm.
Since we allow $\varsigma =\varsigma _{N}$ to shrink to zero at rate $N^{-c},
$ the critical value $z_{\varsigma }$ diverges to infinity at rate $\log N$
as $N\rightarrow \infty $ by virtue of the tail properties of $\chi ^{2}(1)$ random
variables. At iteration $m$ such that $m<K^{(\ell )}$, w.p.a.1, the test
statistics $\hat{\Gamma}_{m}^{(\ell )}$ diverges to infinity at a rate
faster than $\log N$, which means the iteration will continue at the $(m+1)$%
-th iteration. At iteration $m$ with $m=K^{(\ell )}$, however, we can easily
find that $z_{\varsigma }\rightarrow \infty $ while the test statistic $%
\hat{\Gamma}_{m}^{(\ell )}$ is stochastically bounded. As a result, the
iteration stops w.p.a.1 and we have $\mathbb{P}(\hat{K}^{(\ell )}=K^{(\ell
)})\rightarrow 1$. As aforementioned, Theorem \ref{Thm3} ensures the
application of K-means algorithm only for the under-fitting and just-fitting
cases and it avoids the theoretical challenge of handling the over-fitting
case in the classification.


For dynamic panels, we can focus on Assumption \ref{ass:1*}, where the error
term is an m.d.s. Under this assumption, the HAC estimator $\hat{\Omega}%
_{i,k,m}^{(\ell )}$ degenerates to $\frac{1}{\hat{T}_{\ell }}\sum_{t\in \hat{%
\mathcal{T}}_{\ell }}\hat{z}_{it}^{(\ell )}\hat{z}_{it}^{(\ell )\prime }\hat{%
e}_{it}^{2}$. For static panels, we typically allow for serially correlated
errors and employ the HAC estimator, and the results in Theorem \ref{Thm3}
continue to hold with Assumption \ref{ass:1*} replaced by Assumption \ref%
{ass:1}. For the kernel function and bandwidth, we can follow \cite%
{andrews1991heteroskedasticity} and let $k\left( \cdot \right) $ belong to
the following class of kernels 
\begin{align*}
\mathcal{K}=\bigg\{& k(\cdot ):\mathbb{R}\mapsto \lbrack
-1,1]~|~k(0)=1,~k(u)=k(-u),~\int \left\vert k\left( u\right) \right\vert
du<\infty , \\
& k(\cdot )\text{ is continuous at 0 and at all but a finite number of other
points}\bigg\}.
\end{align*}%
See, e.g., \cite{andrews1991heteroskedasticity} and \cite%
{white2014asymptotic} for details.

\subsection{Distribution Theory for the Group-specific Slope Estimators}

For $\ell \in \{1,2\}$, let $\{\hat{\alpha}_{k}^{\ast (\ell )}\}_{k\in
K^{(\ell )}}$ be the oracle estimators of the group-specific slope
coefficients before and after the break point by using the true break and
membership information for all individuals. To study the asymptotic
distribution theory for $\{\hat{\alpha}_{k}^{(\ell )}\}_{k\in K^{(\ell
)}},~\ell \in \{1,2\}$, we only need to show that for the oracle estimators $%
\{\hat{\alpha}_{k}^{\ast (\ell )}\}_{k\in K^{(\ell )}}$ based on Theorems %
\ref{Thm2} and \ref{Thm3} by extending the result of \cite{bai2009panel} and 
\cite{moon2017dynamic}.

To proceed, we add some notation. For $\ell \in \{1,2\}$, we first define
the matrix notation for each subgroup. For $j\in \lbrack p]$, let $%
X_{j,i}^{(1)}=\left( X_{j,i1},\cdots ,X_{j,iT_{1}}\right) ^{\prime }$, $%
X_{j,i}^{(2)}=\left( X_{j,i(T_{1}+1)},\cdots ,X_{j,iT}\right) ^{\prime }$, $%
e_{i}^{(1)}=\left( e_{i1},\cdots ,e_{iT_{1}}\right) ^{\prime }$ and $%
e_{i}^{(2)}=\left( e_{i(T_{1}+1)},\cdots ,e_{iT}\right) ^{\prime }$. Then we
use $\mathbb{X}_{j,k}^{(\ell )}\in \mathbb{R}^{N_{k}^{(\ell )}\times T_{\ell
}}$ and $E_{k}^{(\ell )}\in \mathbb{R}^{N_{k}^{(\ell )}\times T_{\ell }}$ to
denote the regressor and error matrix for subgroup $k\in \lbrack K^{(\ell )}]
$ with each row being $X_{j,i}^{(\ell )}$ and $e_{i}^{(\ell )}$,
respectively. Let $\mathcal{X}_{j,k}^{(\ell )}=M_{\Lambda _{k}^{0,(\ell )}}%
\mathbb{X}_{j,k}^{(\ell )}M_{F^{0,(\ell )}}\in \mathbb{R}^{N_{k}^{(\ell
)}\times T_{\ell }}$ with the $\left( i,t\right) $-th entry given by $%
\mathcal{X}_{j,k,it}^{(\ell )}.$ Let $\mathcal{X}_{k,it}^{(\ell )}=(\mathcal{%
X}_{1,k,it}^{(\ell )},\cdots ,\mathcal{X}_{p,k,it}^{(\ell )})^{\prime }$.
Further define 
\begin{align*}
& \mathbb{B}_{NT,1,j,k}^{(\ell )}=\frac{1}{N_{k}^{(\ell )}}tr\left[
P_{F^{0,(\ell )}}\mathbb{E}\left( E_{k}^{(\ell )\prime }\mathbb{X}%
_{j,k}^{(\ell )}\big|\mathscr{D}\right) \right] , \\
& \mathbb{B}_{NT,2,j,k}^{(\ell )}=\frac{1}{T_{\ell }}tr\left[ \mathbb{E}%
\left( E_{k}^{(\ell )}E_{k}^{(\ell )\prime }\big|\mathscr{D}\right)
M_{\Lambda _{k}^{0,(\ell )}}\mathbb{X}_{j,k}^{(\ell )}F^{0,(\ell )}\left(
F^{0,(\ell )\prime }F^{0,(\ell )}\right) ^{-1}\left( \Lambda _{k}^{0,(\ell
)\prime }\Lambda _{k}^{0,(\ell )}\right) ^{-1}\Lambda _{k}^{0,(\ell )\prime }%
\right] , \\
& \mathbb{B}_{NT,3,j,k}^{(\ell )}=\frac{1}{N_{k}^{(\ell )}}tr\left[ \mathbb{E%
}\left( E_{k}^{(\ell )}E_{k}^{(\ell )\prime }\big|\mathscr{D}\right)
M_{F^{0,(\ell )}}\mathbb{X}_{j,k}^{(\ell )}\Lambda _{k}^{0,(\ell )}\left(
\Lambda _{k}^{0,(\ell )\prime }\Lambda _{k}^{0,(\ell )}\right) ^{-1}\left(
F^{0,(\ell )\prime }F^{0,(\ell )}\right) ^{-1}F^{0,(\ell )\prime }\right] ,
\\
& \mathbb{B}_{NT,m,k}^{(\ell )}=\left( \mathbb{B}_{NT,m,1,k}^{(\ell
)},\cdots ,\mathbb{B}_{NT,m,p,k}^{(\ell )}\right) ^{\prime },\quad \forall
m\in \{1,2,3\}, \\
& \Omega _{k}^{(\ell )}=\frac{1}{N_{k}^{(\ell )}T_{\ell }}\sum_{i\in
G_{k}^{(\ell )}}\sum_{t\in \mathcal{T}_{\ell }}e_{it}^{2}\mathcal{X}%
_{k,it}^{(\ell )}\mathcal{X}_{k,it}^{(\ell )\prime }.
\end{align*}%
Let $\mathbb{W}_{NT,k}^{(\ell )}$ be a $p\times p$ matrix with $(j_{1},j_{2})
$-th entry $\frac{1}{N_{k}^{(\ell )}T_{\ell }}tr\left( M_{F^{0,(\ell
)}}\mathbb{X}_{j_{1},k}^{(\ell )\prime }M_{\Lambda _{k}^{0,(\ell )}}\mathbb{X%
}_{j_{2},k}^{(\ell )}\right) $. Then we define the overall bias term for
each subgroup as 
\begin{equation*}
\mathbb{B}_{NT,k}^{(\ell )}=-\rho _{k}^{(\ell )}\mathbb{B}_{NT,1,k}^{(\ell
)}-(\rho _{k}^{(\ell )})^{-1}\mathbb{B}_{NT,2,k}^{(\ell )}-\rho _{k}^{(\ell
)}\mathbb{B}_{NT,3,k}^{(\ell )},
\end{equation*}%
where $\rho _{k}^{(\ell )}=\sqrt{\frac{N_{k}^{(\ell )}}{T_{\ell }}}$. To
state the main result in this subsection, we add the following assumption.

\begin{ass}
\label{ass:10}

\begin{itemize}
\item[(i)] As $(N,T)\to\infty$, $T(\log T)N^{-4/3}\to 0$.

\item[(ii)] $\plim_{(N,T)\rightarrow \infty }\frac{1}{N_{k}^{(\ell )}T_{\ell
}}\sum_{i\in G_{k}^{(\ell )}}\sum_{t\in \mathcal{T}_{\ell
}}X_{it}X_{it}^{\prime }>0$\ for $\ell \in \{1,2\}$ and $k\in \lbrack
K^{(\ell )}]$.

\item[(iii)] For $\ell\in\{1,2\}$ and $k\in[K^{(\ell)}]$, separate the $p$
regressors of each subgroups into $p_{1}$ ``low-rank regressors" $\mathbb{X}%
_{j,k}^{(\ell)}$ such that $rank(\mathbb{X}_{j,k}^{(\ell)})=1,~\forall
j\in\{1,\cdots,p_{1}\},$ and ``high-rank regressors" $\mathbb{X}%
_{j,k}^{(\ell)}$ such that $rank(\mathbb{X}_{j,k}^{(\ell)})>1,~\forall
j\in\{p_{1}+1,\cdots,p\}.$ Let $p_{2}:=p-p_{1}$. These two types of
regressors satisfy:

\begin{itemize}
\item[(iii.a)] Consider the linear combinations $b\cdot \mathbb{X}%
_{high,k}^{(\ell )}:=\sum_{j=p_{1}+1}^{p}b_{j}\mathbb{X}_{j,k}^{(\ell )}$
for high-rank regressors with $p_{2}$-vectors $b$ such that $\left\Vert
b\right\Vert _{2}=1$ and $b=\left( b_{p_{1}+1},\cdots ,b_{p}\right) ^{\prime
}$. There exists a positive constant $C_{b}$ such that 
\begin{equation*}
\operatornamewithlimits{\min}_{\left\{ \left\Vert b\right\Vert
_{2}=1\right\} }\sum_{n=2r_{0}+p_{1}+1}^{N}\lambda _{n}\left[ \frac{1}{%
NT_{\ell }}\left( b\cdot \mathbb{X}_{high,k}^{(\ell )}\right) \left( b\cdot 
\mathbb{X}_{high,k}^{(\ell )}\right) ^{\prime }\right] \geq C_{b}\quad
w.p.a.1.
\end{equation*}

\item[(iii.b)] For $j\in \lbrack p_{1}]$, write $\mathbb{X}_{j,k}^{(\ell
)}=w_{j,k}^{(\ell )}v_{j,k}^{(\ell )\prime }$ with $N_{k}^{(\ell )}$-vectors 
$w_{j}^{(\ell )}$ and $T_{\ell }$-vectors $v_{j}^{(\ell )}$. Let $%
w_{k}^{(\ell )}=(w_{1,k}^{(\ell )},\cdots ,w_{p_{1},k}^{(\ell )})\in \mathbb{%
R}^{N\times p_{1}}$, $v^{(\ell )}=(v_{1}^{(\ell )},\cdots ,v_{p_{1}}^{(\ell
)})\in \mathbb{R}^{T_{\ell }\times p_{1}}$, $M_{w_{k}^{(\ell
)}}=I_{N_{k}^{(\ell )}}-w_{k}^{(\ell )}(w_{k}^{(\ell )\prime }w_{k}^{(\ell
)})^{-1}w_{k}^{(\ell )\prime }$ and $M_{v^{(\ell )}}=I_{T_{\ell }}-v^{(\ell
)}\left( v^{(\ell )\prime }v^{(\ell )}\right) ^{-1}v^{(\ell )\prime }$.
There exists a positive constant $C_{B}$ such that $(N_{k}^{(\ell
)})^{-1}\Lambda _{k}^{0,(\ell )\prime }M_{w_{k}^{(\ell )}}\Lambda
_{k}^{0,(\ell )}>C_{B}I_{r_{0}}$ and $T_{\ell }^{-1}F^{0,(\ell )\prime
}M_{w^{(\ell )}}F^{0,(\ell )}>C_{B}I_{r_{0}}$ w.p.a.1.
\end{itemize}

\item[(iv)] For $\forall j\in \lbrack p]$, $\ell \in \{1,2\}$, and $k\in
K^{(\ell )}$, 
\begin{equation*}
\frac{1}{N_{k}^{(\ell )}\left( T_{\ell }\right) ^{2}}\sum_{i\in G_{k}^{(\ell
)}}\sum_{t_{1}\in \mathcal{T}_{\ell }}\sum_{t_{2}\in \mathcal{T}_{\ell
}}\sum_{s_{1}\in \mathcal{T}_{\ell }}\sum_{s_{2}\in \mathcal{T}_{\ell
}}\left\vert Cov\left( e_{it_{1}}\tilde{X}_{j,it_{2}},e_{is_{1}}\tilde{X}%
_{j,is_{2}}\right) \right\vert =O_{p}(1),
\end{equation*}%
where $\tilde{X}_{j,it}=X_{j,it}-\mathbb{E}\left( X_{j,it}|\mathscr{D}%
\right) $.
\end{itemize}
\end{ass}

Assumption \ref{ass:10} imposes some conditions to help derive the asymptotic
distribution theory for the panel model with IFEs which allows for dynamics.
Assumption \ref{ass:10}(i) slightly strengthens Assumption \ref{ass:1}(vi).
Assumption \ref{ass:10}(ii) is the standard non-collinearity condition for
regressors, which is analogous to Assumption 4(i) in \cite{moon2017dynamic}.
Assumption \ref{ass:10}(iii) is the identification assumption which is
comparable to Assumption 4 in \cite{moon2017dynamic}. With the conditional
strong mixing condition in Assumption \ref{ass:1}(iii), we can verify
Assumption \ref{ass:10}(iv).

The following theorem establishes the asymptotic distribution of $\{\hat{%
\alpha}_{k}^{(\ell )}\}_{k\in K^{(\ell )}}.$

\begin{theorem}
\label{Thm4} Suppose that Assumption \ref{ass:1} or \ref{ass:1*} and
Assumptions \ref{ass:2}--\ref{ass:10} hold. For $\ell \in \{1,2\}$, the
estimators $\{\hat{\alpha}_{k}^{(\ell )}\}_{k\in K^{(\ell )}}$ are
asymptotically equivalent to the oracle estimators $\{\hat{\alpha}_{k}^{\ast
(\ell )}\}_{k\in K^{(\ell )}}$, and we have 
\begin{equation*}
\mathbb{W}_{NT}^{(\ell )}\mathbb{D}_{NT}^{(\ell )}%
\begin{pmatrix}
\hat{\alpha}_{1}^{(\ell )}-\alpha _{1}^{(\ell )} \\ 
\vdots \\ 
\hat{\alpha}_{K^{(\ell )}}^{(\ell )}-\alpha _{K^{(\ell )}}^{(\ell )}%
\end{pmatrix}%
-\mathbb{B}_{NT}^{(\ell )}\rightsquigarrow \mathbb{N}\left( 0,\Omega ^{(\ell
)}\right) ,
\end{equation*}%
such that $\mathbb{D}_{NT}^{(\ell )}=\text{diag}\left( \sqrt{N_{1}^{(\ell
)}T_{\ell }},\cdots ,\sqrt{N_{K^{(\ell )}}^{(\ell )}T_{\ell }}\right) $, $%
\mathbb{W}_{NT}^{(\ell )}=\text{diag}\left( \mathbb{W}_{NT,1}^{(\ell
)},\cdots ,\mathbb{W}_{NT,K^{(\ell )}}^{(\ell )}\right) $, $\mathbb{B}%
_{NT}^{(\ell )}=\text{diag}\left( \mathbb{B}_{NT,1}^{(\ell )},\cdots ,%
\mathbb{B}_{NT,K^{(\ell )}}^{(\ell )}\right) $ and $\Omega ^{(\ell )}=\text{%
diag}\left( \Omega _{1}^{(\ell )},\cdots ,\Omega _{K^{(\ell )}}^{(\ell
)}\right) $.
\end{theorem}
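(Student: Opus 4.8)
The plan is to prove the theorem in two stages: first reduce the feasible estimator $\{\hat{\alpha}_k^{(\ell)}\}$ to the infeasible oracle estimator $\{\hat{\alpha}_k^{\ast(\ell)}\}$, and then derive the limiting distribution of the oracle estimator by extending the interactive-fixed-effects analysis of \cite{bai2009panel} and \cite{moon2017dynamic} to the present group structure. For the reduction I would introduce the event
\[
\mathcal{A}_{NT}=\{\hat{T}_1=T_1\}\cap\bigcap_{\ell=1}^{2}\Big(\{\hat{K}^{(\ell)}=K^{(\ell)}\}\cap\{\max_{i\in[N]}\mathbf{1}\{\hat{g}_{i,K^{(\ell)}}^{(\ell)}\neq g_i^{(\ell)}\}=0\}\Big),
\]
on which the estimated break point and the recovered memberships coincide exactly with their true counterparts. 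By Theorems \ref{Thm2} and \ref{Thm3}, $\mathbb{P}(\mathcal{A}_{NT})\to1$. On $\mathcal{A}_{NT}$ the objective in \eqref{Post_estimate} is identical, group by group, to the one defining $\{\hat{\alpha}_k^{\ast(\ell)}\}$, so the two estimators are equal; hence it suffices to establish the stated limit for the oracle estimator, and the claimed asymptotic equivalence follows at once.

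For the distribution theory I would work regime by regime and group by group. Concentrating out $\Lambda$ and the common factor $F$ from the least-squares criterion, the profiled first-order condition for $a_k^{(\ell)}$ can be written through the doubly projected regressors $\mathcal{X}_{j,k}^{(\ell)}=M_{\Lambda_k^{0,(\ell)}}\mathbb{X}_{j,k}^{(\ell)}M_{F^{0,(\ell)}}$. Expanding about the truth, this condition takes the form
\[
\mathbb{W}_{NT,k}^{(\ell)}\sqrt{N_k^{(\ell)}T_\ell}\big(\hat{\alpha}_k^{\ast(\ell)}-\alpha_k^{(\ell)}\big)=\frac{1}{\sqrt{N_k^{(\ell)}T_\ell}}\sum_{i\in G_k^{(\ell)}}\sum_{t\in\mathcal{T}_\ell}e_{it}\mathcal{X}_{k,it}^{(\ell)}+\mathbb{B}_{NT,k}^{(\ell)}+o_p(1),
\]
where the first term on the right is a normalized score whose conditional variance converges to $\Omega_k^{(\ell)}$. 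A central limit theorem for the conditional martingale-difference (under Assumption \ref{ass:1*}) or conditional strong-mixing (under Assumption \ref{ass:1}) score then delivers the stated normal limit. The three deterministic corrections collected in $\mathbb{B}_{NT,k}^{(\ell)}$ arise respectively from the endogeneity/predeterminedness of the regressors (the $\mathbb{E}(E_k^{(\ell)\prime}\mathbb{X}_{j,k}^{(\ell)}|\mathscr{D})$ term $\mathbb{B}_{NT,1,k}^{(\ell)}$) and from the two incidental-parameter biases $\mathbb{B}_{NT,2,k}^{(\ell)}$ and $\mathbb{B}_{NT,3,k}^{(\ell)}$ generated by estimating $F$ and $\Lambda$, with scalings $\rho_k^{(\ell)}$ and $(\rho_k^{(\ell)})^{-1}$, exactly as in the homogeneous IFE case but now evaluated within group $k$.

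The hard part will be the distribution theory for the oracle estimator, where two linked difficulties stand out. First, because $F$ is estimated jointly across all groups in a regime while each group carries its own loadings $\Lambda_k^{0,(\ell)}$, I must show that cross-group contamination in the estimated factor space does not disturb the leading expansion for any single group; this requires controlling the interaction between the shared factor-estimation error and the group-specific projections $M_{\Lambda_k^{0,(\ell)}}$. Second, identifying $\mathbb{W}_{NT,k}^{(\ell)}$ as an invertible limit is delicate precisely because some regressors are rank one and can be absorbed into the factor structure, so that $\mathcal{X}_{j,k}^{(\ell)}$ could otherwise degenerate. Here I would invoke Assumption \ref{ass:10}(iii.a) to guarantee sufficient spectral mass in the high-rank regressors orthogonal to the $(2r_0+p_1)$-dimensional space spanned by the factors, loadings, and low-rank directions, and Assumption \ref{ass:10}(iii.b) to ensure the loadings retain variation orthogonal to the low-rank regressors, so that $\mathbb{W}_{NT,k}^{(\ell)}$ stays bounded away from singularity. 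The remaining verifications—that each $\mathbb{B}_{NT,m,k}^{(\ell)}$ is of the stated stochastic order and that Assumption \ref{ass:10}(iv) bounds the long-run covariance entering $\Omega_k^{(\ell)}$—are routine given the moment and mixing conditions in Assumptions \ref{ass:1}/\ref{ass:1*} and the rate restriction in Assumption \ref{ass:10}(i).
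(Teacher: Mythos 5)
Your proposal is correct and follows essentially the same route as the paper: condition on the event that Theorems \ref{Thm2} and \ref{Thm3} make hold w.p.a.1 so that the feasible estimator coincides with the oracle estimator, then derive the oracle limit group by group via the Moon--Weidner-type expansion $\sqrt{N_k^{(\ell)}T_\ell}\,(\hat{\alpha}_k^{(\ell)}-\alpha_k^{(\ell)})=\mathbb{W}_{NT,k}^{(\ell)-1}\mathbb{C}_{NT,k}^{(\ell)}+o_p(1)$ (the paper's Lemma \ref{Lem:consis_homo}), with the remainder terms shown negligible under serial correlation by Lemma \ref{Lem:distri_homo} and the joint statement assembled by the Cram\'er--Wold device. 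Your score-plus-three-bias decomposition with scalings $\rho_k^{(\ell)}$ and $(\rho_k^{(\ell)})^{-1}$, the use of Assumption \ref{ass:10}(iii) to keep $\mathbb{W}_{NT,k}^{(\ell)}$ nonsingular despite low-rank regressors, and the CLT under Assumption \ref{ass:1} or \ref{ass:1*} all match the paper's (sketched) argument.
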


\noindent Theorem \ref{Thm4} establishes the asymptotic distribution for the
estimators of the group-specific slope coefficients before and after the
break. It shows that the parameter estimators from our algorithm enjoy the
oracle property given the results in Theorems \ref{Thm2} and \ref{Thm3}. The
proof of the above theorem can be done by following \cite{moon2017dynamic}
and \cite{lu2016shrinkage}.

\section{Alternatives and Extensions}

\label{sec:extension} This section first considers an alternative
method to estimate the break point and then discusses several possible
extensions.

\subsection{Alternative for Break Point Detection}

The algorithm proposed in Section \ref{sec:estimation} uses low-rank
estimates of $\Theta _{j}^{0}$ to find the break point estimates. However,
by Lemma \ref{Lem:idfct}(ii), we observe that the right singular vector
matrix of $\Theta _{j}^{0}$, i.e., $V_{j}^{0}$, contains the structural
break information when $r_{j}=2$. For this reason, we can propose an
alternative way to estimate the break point under the case where the maximum
rank of the slope matrix in the model is 2. Let $\dot{v}_{t,j}^{\ast }:=%
\frac{\dot{v}_{t,j}}{\left\Vert \dot{v}_{t,j}\right\Vert }$ and $\dot{v}%
_{t}^{\ast }:=\left( \dot{v}_{t,1}^{\ast \prime },\cdots ,\dot{v}%
_{t,p}^{\ast \prime }\right) ^{\prime },$ with the true values being $%
v_{t,j}^{\ast }:=\frac{O_{j}v_{t,j}^{0}}{\left\Vert
O_{j}v_{t,j}^{0}\right\Vert }$ and $v_{t}^{\ast }:=\left( v_{t,1}^{\ast
\prime },\cdots ,v_{t,p}^{\ast \prime }\right) ^{\prime }$, respectively.
Then Step 3 can be replaced by Step 3* below:

\begin{description}
\item[\textbf{Step 3*:}] \textbf{Break Point Estimation by Singular Vectors.}
We estimate the break point as follows: 
\begin{equation}
\tilde{T}_{1}=\argmin_{s\in \{2,\cdots ,T-1\}}\frac{1}{T}\left\{
\sum_{t=1}^{s}\left\Vert \dot{v}_{t}^{\ast }-\bar{\dot{v}}^{\ast
(1)s}\right\Vert ^{2}+\sum_{t=s+1}^{T}\left\Vert \dot{v}_{t}^{\ast }-\bar{%
\dot{v}}^{\ast (2)s}\right\Vert ^{2}\right\} ,  \label{BP alternative}
\end{equation}%
where $\bar{\dot{v}}^{\ast (1)s}=\frac{1}{s}\sum_{t=1}^{s}{\dot{v}_{t}}%
^{\ast }$ and $\bar{\dot{v}}^{\ast (2)s}=\frac{1}{T-s}\sum_{t=s+1}^{T}{\dot{v%
}}_{t}^{\ast }$.
\end{description}

The following two theorems establish the consistency of $\dot{v}_{t}^{\ast }$
and $\tilde{T}_{1},$ respectively.

\begin{theorem}
\label{Thm5} Suppose that Assumptions \ref{ass:1}--\ref{ass:5} hold. Then $%
\max_{t}\left\Vert \dot{v}_{t}^{\ast }-v_{t}^{\ast }\right\Vert =O_{p}(\eta
_{N,2}).$
\end{theorem}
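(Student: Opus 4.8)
The plan is to derive the result from the uniform rate for the \emph{un-normalized} singular-vector estimators already established in Theorem \ref{Thm1}(ii), namely $\max_{t}\Vert\dot{v}_{t,j}-O_{j}v_{t,j}^{0}\Vert=O_{p}(\eta_{N,2})$ for each $j\in[p]$, and to transfer this to the normalized quantities $\dot{v}_{t,j}^{\ast}$ and $v_{t,j}^{\ast}$ by exploiting the Lipschitz continuity of the map $x\mapsto x/\Vert x\Vert$ away from the origin. The only extra ingredient required is a lower bound on $\Vert v_{t,j}^{0}\Vert$ that is uniform in $t$, which controls the denominators created by the normalization.

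First I would show that $\min_{t\in[T]}\Vert v_{t,j}^{0}\Vert\geq 1$ for every $j\in[p]$, arguing according to $\mathrm{rank}(\Theta_{j}^{0})$. When $\Theta_{j}^{0}$ falls in Case 1 or Case 2, it has rank one with $V_{j}^{0}=\iota_{T}$, so $\Vert v_{t,j}^{0}\Vert=1$ for all $t$. When $\Theta_{j}^{0}$ lies in Case 3 with rank two, Lemma \ref{Lem:idfct}(i) gives $V_{j}^{0}=D_{j}R_{j}$ with $R_{j}^{\prime}R_{j}=I_{2}$; since $R_{j}$ is then a $2\times2$ orthogonal matrix, its rows are orthonormal, and reading off the rows of $D_{j}$ shows that $\Vert v_{t,j}^{0}\Vert=\tau_{T}^{-1/2}$ for $t\leq T_{1}$ and $\Vert v_{t,j}^{0}\Vert=(1-\tau_{T})^{-1/2}$ for $t>T_{1}$, both of which are at least $1$ because $\tau_{T}\in(0,1)$. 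As $O_{j}$ is orthogonal, $\Vert O_{j}v_{t,j}^{0}\Vert=\Vert v_{t,j}^{0}\Vert\geq1$ as well, so the normalizers defining $v_{t,j}^{\ast}$ are bounded away from zero.

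Combining this lower bound with Theorem \ref{Thm1}(ii) and $\eta_{N,2}=o(1)$, the reverse triangle inequality yields $\min_{t}\Vert\dot{v}_{t,j}\Vert\geq1-\max_{t}\Vert\dot{v}_{t,j}-O_{j}v_{t,j}^{0}\Vert\geq\tfrac{1}{2}$ w.p.a.1. I would then apply, with $a=\dot{v}_{t,j}$ and $b=O_{j}v_{t,j}^{0}$ (so that $\dot{v}_{t,j}^{\ast}=a/\Vert a\Vert$ and $v_{t,j}^{\ast}=b/\Vert b\Vert$), the elementary bound
\begin{equation*}
\left\Vert \frac{a}{\Vert a\Vert}-\frac{b}{\Vert b\Vert}\right\Vert \leq \frac{2\Vert a-b\Vert}{\Vert a\Vert},
\end{equation*}
which gives $\max_{t}\Vert\dot{v}_{t,j}^{\ast}-v_{t,j}^{\ast}\Vert\leq4\max_{t}\Vert\dot{v}_{t,j}-O_{j}v_{t,j}^{0}\Vert=O_{p}(\eta_{N,2})$ w.p.a.1 for each fixed $j$. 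Finally, because $p$ is fixed and $\Vert\dot{v}_{t}^{\ast}-v_{t}^{\ast}\Vert^{2}=\sum_{j=1}^{p}\Vert\dot{v}_{t,j}^{\ast}-v_{t,j}^{\ast}\Vert^{2}$, I would conclude $\max_{t}\Vert\dot{v}_{t}^{\ast}-v_{t}^{\ast}\Vert\leq\big(\sum_{j=1}^{p}\max_{t}\Vert\dot{v}_{t,j}^{\ast}-v_{t,j}^{\ast}\Vert^{2}\big)^{1/2}=O_{p}(\eta_{N,2})$. The one step carrying real content is the uniform lower bound on $\Vert v_{t,j}^{0}\Vert$; once the explicit SVD structure of Lemma \ref{Lem:idfct}(i) is invoked for the rank-two regressors and the constant right singular vector is used for the rank-one regressors, the remainder is a routine perturbation argument, so I anticipate no serious obstacle.
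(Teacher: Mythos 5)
Your proof is correct and takes essentially the same route as the paper: the paper reduces $\max_{t}\left\Vert \dot{v}_{t}^{\ast }-v_{t}^{\ast }\right\Vert$ to $2p\max_{j\in \lbrack p],t\in \lbrack T]}\left\Vert \dot{v}_{t,j}-O_{j}v_{t,j}^{0}\right\Vert /\left\Vert \dot{v}_{t,j}\right\Vert$ via the same normalization-perturbation identity you use and then invokes Theorem \ref{Thm1}(ii). Your explicit uniform lower bound on $\left\Vert v_{t,j}^{0}\right\Vert$ (from $V_{j}=\iota _{T}$ in Cases 1--2 and $V_{j}=D_{j}R_{j}$ in the rank-two Case 3) in fact makes rigorous the denominator control that the paper leaves implicit behind its citation of Lemma \ref{Lem:bounded u&v}(i), which on its own supplies only upper bounds.
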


\begin{theorem}
\label{Thm6} Suppose that Assumptions \ref{ass:1}--\ref{ass:6} hold. Then $%
\mathbb{P(}\tilde{T}_{1}=T_{1})\rightarrow 1$ as $\left( N,T\right)
\rightarrow \infty .$
\end{theorem}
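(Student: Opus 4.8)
The plan is to show that the criterion minimized in \eqref{BP alternative}, written $Q(s)$, attains its minimum exactly at $s=T_1$ w.p.a.1, by verifying that $Q(s)>Q(T_1)$ for every $s\neq T_1$. Two facts drive the argument. First, by Lemma \ref{Lem:idfct}(ii) and the orthogonal invariance of $O_j$, for every slope matrix of rank $2$ the true normalized singular vectors are piecewise constant in $t$, taking one value for $t\le T_1$ and another for $t>T_1$ at distance exactly $\sqrt2$; stacking over $j$, the true vector $v_t^{\ast}$ equals a constant $c^{(1)}$ for $t\le T_1$ and a constant $c^{(2)}$ for $t>T_1$ with $\delta:=\|c^{(1)}-c^{(2)}\|\ge\sqrt2$ a fixed, non-shrinking quantity. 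Second, Theorem \ref{Thm5} gives $\epsilon_{NT}:=\max_{t}\|\dot v_t^{\ast}-v_t^{\ast}\|=O_p(\eta_{N,2})=o_p(1)$. In contrast to Theorem \ref{Thm2}, the signal here is of fixed order one rather than shrinking, and this is what will pin down the break date exactly.

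Write $\mathrm{SS}(A):=\sum_{t\in A}\|\dot v_t^{\ast}-\bar v_A\|^2$ for the within-block sum of squares over $A\subseteq[T]$, with block mean $\bar v_A:=|A|^{-1}\sum_{t\in A}\dot v_t^{\ast}$, so that $T\,Q(s)=\mathrm{SS}(\{1,\dots,s\})+\mathrm{SS}(\{s+1,\dots,T\})$. Take $s<T_1$ and set $M:=\{s+1,\dots,T_1\}$. The key device is the ANOVA identity $\mathrm{SS}(A\cup B)=\mathrm{SS}(A)+\mathrm{SS}(B)+\tfrac{|A||B|}{|A|+|B|}\|\bar v_A-\bar v_B\|^2$, applied to the splits $\{1,\dots,T_1\}=\{1,\dots,s\}\cup M$ and $\{s+1,\dots,T\}=M\cup\{T_1+1,\dots,T\}$. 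Substituting both expansions into $T[Q(s)-Q(T_1)]$, the $\mathrm{SS}(\{1,\dots,s\})$, $\mathrm{SS}(M)$ and $\mathrm{SS}(\{T_1+1,\dots,T\})$ terms all cancel, leaving
\begin{equation*}
T[Q(s)-Q(T_1)]=\frac{|M|(T-T_1)}{|M|+T-T_1}\big\|\bar v_M-\bar v_{\{T_1+1,\dots,T\}}\big\|^2-\frac{s|M|}{s+|M|}\big\|\bar v_{\{1,\dots,s\}}-\bar v_M\big\|^2 .
\end{equation*}
This exact cancellation is the crux: it isolates the signal in the first term and confines the estimation error to the second.

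To finish, bound the two terms. Because $M$ and $\{1,\dots,s\}$ both lie in the pre-break regime, their block means are each within $\epsilon_{NT}$ of $c^{(1)}$, so $\|\bar v_{\{1,\dots,s\}}-\bar v_M\|=O_p(\epsilon_{NT})$ and, using $\tfrac{s|M|}{s+|M|}\le|M|=T_1-s$, the subtracted term is at most $(T_1-s)\,O_p(\epsilon_{NT}^2)$. For the leading term, $\{T_1+1,\dots,T\}$ lies in the post-break regime, whence $\|\bar v_M-\bar v_{\{T_1+1,\dots,T\}}\|\ge\delta-O_p(\epsilon_{NT})\ge\delta/2$ w.p.a.1; combined with $\tfrac{|M|(T-T_1)}{|M|+T-T_1}\ge c(T_1-s)$, which holds since Assumption \ref{ass:6}(ii) makes both regimes a non-vanishing fraction of $T$, the leading term is at least $c(T_1-s)\delta^2/4$. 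Hence $T[Q(s)-Q(T_1)]\ge(T_1-s)\big(c\delta^2/4-O_p(\epsilon_{NT}^2)\big)$, strictly positive uniformly over $s<T_1$ w.p.a.1 because $\delta^2\ge2$ is fixed while $\epsilon_{NT}=o_p(1)$; the symmetric decomposition (moving $\{T_1+1,\dots,s\}$ into the first block) handles $s>T_1$. Therefore $\tilde T_1=\argmin_s Q(s)=T_1$ w.p.a.1.

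The main obstacle is securing exact recovery rather than mere consistency of the break fraction. A naive uniform bound $\sup_s|Q(s)-Q^0(s)|=O_p(\epsilon_{NT})$ against a per-split population gain of order $1/T$ only delivers $|\tilde T_1-T_1|=O_p(T\epsilon_{NT}^2)$, which need not vanish. The decisive step is the ANOVA cancellation above, which rewrites a misplacement of a block of $|s-T_1|$ time points as that many copies of the fixed order-one singular-vector jump $\delta^2$ against only that many copies of the vanishing error $O_p(\epsilon_{NT}^2)$; uniformity in $s$ is automatic since every $O_p$ term is controlled by $\max_t\|\dot v_t^{\ast}-v_t^{\ast}\|$. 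The remaining point requiring care is that the rank-$2$ (Case 3) structure underlying Lemma \ref{Lem:idfct}(ii) must hold for at least one slope matrix, so that $\delta\ge\sqrt2$ genuinely applies; this is ensured under the maintained premise of the alternative procedure together with Assumptions \ref{ass:3} and \ref{ass:6}.
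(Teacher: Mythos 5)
Your proof is correct. It is worth noting that the paper never actually writes this proof out: it only remarks that Theorem \ref{Thm6} follows from Theorem \ref{Thm5} and Lemma \ref{Lem:idfct}(ii) ``with arguments analogous to those used in the proof of Theorem \ref{Thm2}.'' That template would expand the criterion difference term by term, as in the proof of Theorem \ref{Thm2}, into a signal piece, two differences of within-block error sums of squares, and two cross terms ($A_{1}$ through $A_{5}$ there), each bounded separately at rate $\kappa_{s}O_{p}(\eta_{N,2}^{2})$ or $\kappa_{s}o_{p}(\zeta_{NT}^{2})$. You replace this five-term bookkeeping with the exact ANOVA identity, which collapses $T[Q(s)-Q(T_{1})]$ into precisely one between-means signal term and one between-means noise term, with no cross terms or residual sums of squares left to control, and uniformity in $s$ is then automatic from the single quantity $\max_{t}\|\dot{v}_{t}^{\ast}-v_{t}^{\ast}\|$. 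Beyond brevity, your route makes visible something the analogy to Theorem \ref{Thm2} obscures: because the population normalized singular vectors are exactly piecewise constant with a fixed jump $\delta\geq\sqrt{2}$ (Lemma \ref{Lem:idfct}(ii)), no shrinking-break-size condition in the spirit of Assumption \ref{ass:6}(i) is needed for the separation itself --- Assumption \ref{ass:6} enters your argument only through $\tau\in(0,1)$, which bounds the harmonic-mean weights $\frac{|M|(T-T_{1})}{|M|+T-T_{1}}$ below by $c(T_{1}-s)$, and through guaranteeing that a break is present at all. You are also right to flag the one hypothesis beyond Assumptions \ref{ass:1}--\ref{ass:6} that the theorem statement leaves implicit: at least one slope matrix must be a genuinely rank-2 Case-3 matrix, since in the degenerate rank-1 proportional-break case noted in the proof of Lemma \ref{Lem:idfct} (same memberships, $\alpha_{k,j}^{(1)}=\mu\alpha_{k,j}^{(2)}$ with $\mu>0$) the normalized singular vector is constant in $t$, $\delta=0$, and the criterion \eqref{BP alternative} carries no signal; this is exactly the maintained premise of Section \ref{sec:extension} that the maximum rank of the slope matrices equals 2.
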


\noindent Since the singular vectors of the slope matrices contain the structural
change information, Theorem \ref{Thm5} indicates that we can consistently
estimate the break point by using the factor estimates instead of the slope
matrix estimates in (\ref{BP_estimates}). Given Theorem \ref{Thm5} and Lemma %
\ref{Lem:idfct}(iii), we can prove Theorem \ref{Thm6} with arguments
analogous to those used in the proof of Theorem \ref{Thm2}.

\subsection{Test for the Presence of a Structural Break}

\label{sec:extension_test} Section \ref{sec:model} considers
time-varying latent group structures with one break point. In this
subsection, we propose a test for the null that the slope coefficients are
time-invariant against the alternative that there's one structural break as
assumed in Section \ref{sec:model}.

Since various scenarios can occur once we allow for the presence of a
structural break in the latent group structures, the number of groups may
or may not change under the alternative and so may some of the
group-specific coefficients. First, information on the latent group structures may
be ignored and  the slope coefficients can be tested for possible time-variation. In this case, we can rewrite $\Theta
_{it}^{0}$ as  
\begin{equation*}
\Theta _{it}^{0}=\Theta _{i}^{0}+c_{it},
\end{equation*}%
where $\Theta _{i}^{0}:=\frac{1}{T}\sum_{t\in \lbrack T]}\Theta _{it}^{0}$.
The null and alternative hypotheses can then be specified as 
\begin{eqnarray}
H_{0} &:&c_{it}=0~~\text{for all }i\in \lbrack N],\quad \text{and}  \notag
\label{BP_test} \\
H_{1} &:&c_{it}\neq 0~~\text{for some }i\in \lbrack N].
\end{eqnarray}%
To construct the test statistics we follow \cite%
{bai1998estimating} and consider a sup-$F$ test. Let $\mathcal{T}_{\epsilon
}:=\left\{ T_{1}:\epsilon T\leq T_{1}\leq (1-\epsilon )T\right\} ,$ where $%
\epsilon >0$ is a tuning parameter that avoids breaks at the end of the
sample. Define 
\begin{equation*}
F_{NT}(1|0):=\operatornamewithlimits{\max}\limits_{i\in \lbrack N]}%
\operatornamewithlimits{\sup}\limits_{T_{1}\in \mathcal{T}_{\epsilon
}}F_{i}(T_{1}),
\end{equation*}%
where 
\begin{equation*}
F_{i}(T_{1})=\frac{T-2p}{p}\left[ \tilde{\beta}_{i}^{\left( 1\right)
}(T_{1})-\tilde{\beta}_{i}^{\left( 2\right) }\left( T_{1}\right) \right]
^{\prime }\left[ \hat{\Sigma}_{i}(T_{1})\right] ^{-1}\left[ \tilde{\beta}%
_{i}^{\left( 1\right) }(T_{1})-\tilde{\beta}_{i}^{\left( 2\right) }\left(
T_{1}\right) \right] ,
\end{equation*}%
$\tilde{\beta}_{i}^{\left( 1\right) }\left( T_{1}\right) $ and $\tilde{\beta}%
_{i}^{\left( 2\right) }(T_{1})$ are the PCA slope estimators of $\Theta
_{i}^{0}$ in the linear panels with IFEs for each individual $i$ with the
prior-break observations $\left\{ (i,t):i\in \lbrack N],t\in \lbrack
T_{+}]\right\} $ and post-break observations $\left\{ (i,t):i\in \lbrack
N],t\in \lbrack T]\backslash \lbrack T_{+}]\right\} $, respectively,%
\footnote{%
See Section \ref{sec:panel_IFE} in the appendix for the detail of the PCA
estimation in linear panels with IFEs.} and $\hat{\Sigma}_{i}(T_{+})$ is the
consistent estimator for the asymptotic variance of $\tilde{\beta}%
_{i}^{\left( 1\right) }(T_{1})-\tilde{\beta}_{i}^{\left( 2\right) }\left(
T_{1}\right) $. Following \cite{bai1998estimating}, we conjecture that the
asymptotic distribution of $\operatornamewithlimits{\sup}\limits_{T_{1}\in 
\mathcal{T}_{\epsilon }}F_{i}(T_{1})$ depends on a $p$-vector of
Wiener processes on $\left[ 0,1\right] ,$ based on which the
corresponding distribution of $F_{NT}(1|0)$ can be obtained.

Alternatively, we can estimate the model with latent group structures by
assuming the presence of a break point at $T_{1}.$ Then we obtain the
estimates of the group-specific parameters $\{\alpha _{j}^{\left( 1\right)
}\left( T_{1}\right) \}_{j\in K^{\left( 1\right) }}$ prior to the potential
break point $T_{1}$ and those of the group-specific parameters $\{\alpha
_{j}^{\left( 2\right) }\left( T_{1}\right) \}_{j\in K^{\left( 2\right) }}$
after the potential break point $T_{1}.$ It is possible to construct a test
statistic based on the contrast of these two sets of estimates or the
corresponding residual sum of squares (RSS) and then take the supremum over $%
T_{1}\in \mathcal{T}_{\epsilon }.$ Evidently, this approach is also
quite involved as it is necessary to determine the number of groups before and after
the break, $K^{\left( 1\right) }$ and $K^{\left( 1\right) },$ at each $%
T_{1}. $ It is not clear how the estimation errors from these estimates and
those of the factors and factor loadings with slow convergence rates affect
the asymptotic properties of the estimators of the group-specific parameters.

Last, it is also possible to estimate the model with latent group structures
under the case of no structural changes to obtain the restricted residuals.
If there exists a structural change in the latent group structure, it should
be reflected in properties of the restricted residuals obtained under the null. We can then
consider the regression of the restricted residuals on the regressors
and construct an LM-type test statistic to check the goodness of fit of 
such an auxiliary regression model as in \cite{su2013testing} and \cite%
{su2020testing}. This analysis is left for future research.

\subsection{The Case of Multiple Breaks}

Section \ref{sec:model} considers only a one-time structural break in
the latent group structures. In practice it is possible to have multiple
breaks especially if $T$ is large. Here we generalize the model in Section %
\ref{sec:model} to allow for multiple breaks. In this case, we have 
\begin{equation*}
\alpha_{kt}=\left\{ \begin{aligned} &\alpha_{k}^{(1)}, \quad\text{for}\quad
t=1,\dots,T_{1},\\ & \alpha_{k}^{(2)}, \quad\text{for}\quad
t=T_{1}+1,\dots,T_{2},\\ &\vdots\\ & \alpha_{k}^{(b+1)},\quad\text{for}\quad
t=T_{b}+1,\dots,T,\\ \end{aligned}\right.
\end{equation*}%
where $b\geq 1$ denotes the number of breaks.

To estimate the number of breaks and the break points $T_{1},\cdots ,T_{b}$,
in principle we can follow the sequential method proposed by \cite%
{bai1998estimating}. First, using the full-sample data, we can construct $%
F_{NT}(1|0)$ defined in the previous subsection and estimate the break point
as in \eqref{BP_estimates}. Second, for each regime before and after the
estimated break point, we test the hypothesis in \eqref{BP_test} and
estimate the break point for each regime separately. This sequential method is repeated
until the null can not be rejected for all subsamples, leading to the break point
estimates $\{\hat{T}_{a}\}_{a\in\lbrack \hat{b}]}$ where $\hat{b}$ is the estimated number of breaks. We
conjecture that we can establish the consistency of $\hat{b}$ and $\{\hat{T}%
_{a}\}.$

After the estimated number of breaks and break points are obtained for each
subsample 
\begin{equation*}
\left\{ (i,t):i\in [N],t\in \{\hat{T}_{a-1}+1,\cdots ,\hat{T}_{a}\}\right\},
\end{equation*}
where $a\in [\hat{b}+1]$ with $\hat{T}_{0}:=0$ and $\hat{T}_{\hat{b}+1}:=T$, we
can continue Step 4 in the estimation algorithm in Section \ref%
{sec:estimation} to obtain the estimated group structure for each subsample.

\section{Monte Carlo Simulations}

\label{sec:simul} In this section we report simulation results for the
low-rank estimates, break point estimates, group membership estimates and
the group number estimates based on 1,000 replications, and tuning
parameter $\nu _{j}$ chosen by a procedure similar to that described in \cite%
{chernozhukov2019inference}. We focus on the linear panel model $%
Y_{it}=\lambda _{i}^{\prime }f_{t}+X_{it}^{\prime }\Theta _{it}+e_{it}$,
where $X_{it}=(X_{1,it},X_{2,it})^{\prime }$ and $\Theta _{it}=(\Theta
_{1,it},\Theta _{2,it})^{\prime }$.

\subsection{Data Generating Processes (DGPs)}

The following four main DGPs are employed.

\begin{description}
\item[\textbf{DGP 1:}] [\textbf{Static panel with homoskedasticity}] $%
X_{1,it}\sim $ $i.i.d.$ $U(-2,2)$, $X_{2,it}\sim i.i.d.$ $U(-2,2)$, and errors 
$e_{it}\sim i.i.d.$ $\mathbb{N}(0,1)$. For $\Theta _{1}$, we
randomly choose the break point $T_{1}$ from $0.4T$ to $0.6T$.

\item[\textbf{DGP 2:}] [\textbf{Static panel with heteroscedasticity}]
Compared to DGP 1, the errors $e_{it}\sim i.i.d.$ $\mathbb{N}%
(0,\sigma _{it}^{2})$ with $\sigma _{it}^{2}\sim i.i.d.$ $U(0.5,1)$.
The settings for the regressors and break point are the same as those in DGP
1.

\item[\textbf{DGP 3:}] [\textbf{Serially correlated error}] Compared to 
DGP 2, the errors $e_{it}=0.2e_{i,t-1}+\eta _{it},$ where $\eta
_{it}\sim i.i.d.$ $\mathbb{N}(0,1)$ and all other settings are the same as
in DGP 2.

\item[\textbf{DGP 4:}] [\textbf{Dynamic panel}] In this case, $%
X_{1,it}=Y_{i,t-1}$ with $Y_{i,0}\sim i.i.d.~\mathbb{N}(0,1)$. $X_{2,it}\sim
i.i.d.~U(-2,2)$, and $e_{it}\sim i.i.d.~\mathbb{N}(0,0.5)$.
\end{description}

For each DGP above, we set $r_{0}=0$ and draw $\lambda _{i}$ and $f_{t}$
from $\mathbb{N}(0,1)$ independently. We define the slope coefficient based
on three subcases below.

\begin{description}
\item[\textbf{DGP X.1:}] In this case, the group membership and the number
of groups do not change after the break point and only the value of the slope
coefficient changes. We set the number of groups to be 2, the ratio of
individuals among the two groups is $N_{1}:N_{2}=0.5:0.5$, and the group
membership $G_{1}$ is obtained by a random draw from $[N]$ without
replacement. For DGPs 1.1, 2.1, and 3.1, 
\begin{equation*}
\Theta _{1,it}=\Theta _{2,it}=\left\{ \begin{aligned} & 0.1, \quad i\in
G_{1},~t\in\{1,\cdots,T_{1} \},\\ & 0.9, \quad i\in
G_{2},~t\in\{1,\cdots,T_{1} \},\\ & 0.05, \quad i\in
G_{1},~t\in\{T_{1}+1,\cdots,T \},\\ & 0.45, \quad i\in
G_{2},~t\in\{T_{1}+1,\cdots,T \}. \end{aligned}\right.
\end{equation*}%
For DGP 4.1, $\Theta _{2,it}$ is same as other DGPs X.1 for X$\in \{1,2,3\}$%
, and 
\begin{equation*}
\Theta _{1,it}=\left\{ \begin{aligned} & 0.1, \quad i\in
G_{1},~t\in\{1,\cdots,T_{1} \},\\ & 0.7, \quad i\in
G_{2},~t\in\{1,\cdots,T_{1} \},\\ & 0.05, \quad i\in
G_{1},~t\in\{T_{1}+1,\cdots,T \},\\ & 0.35, \quad i\in
G_{2},~t\in\{T_{1}+1,\cdots,T \}. \end{aligned}\right.
\end{equation*}

\item[\textbf{DGP X.2:}] Compared to DGP X.1, the values of the slope
coefficients for different groups do not change after the break point, but
the group membership changes. The number of groups is 2, the ratio of
individuals among the groups is still $N_{1}:N_{2}=0.5:0.5$.
Nevertheless, $\{G_{1}^{(1)},G_{2}^{(1)}\}$ is different from $%
\{G_{1}^{(2)},G_{2}^{(2)}\}$ so that elements in both $G_{1}^{(1)}$ and $%
G_{1}^{(2)}$ are independent draws from $[N]$ without replacement. In
addition, for DGPs 1.2, 2.2, and 3.2, 
\begin{equation*}
\Theta _{1,it}=\Theta _{2,it}=\left\{ \begin{aligned} & 0.1, \quad i\in
G_{1}^{(1)},~t\in\{1,\cdots,T_{1} \},\\ & 0.9, \quad i\in
G_{2}^{(1)},~t\in\{1,\cdots,T_{1} \},\\ & 0.1, \quad i\in
G_{1}^{(2)},~t\in\{T_{1}+1,\cdots,T \},\\ & 0.9, \quad i\in
G_{2}^{(2)},~t\in\{T_{1}+1,\cdots,T \}. \end{aligned}\right.
\end{equation*}%
For DGP 4.2, $\Theta _{2,it}$ is defined same as other DGPs X.2 for $X\in
\{1,2,3\}$, and 
\begin{equation*}
\Theta _{1,it}=\left\{ \begin{aligned} & 0.1, \quad i\in
G_{1}^{(1)},~t\in\{1,\cdots,T_{1} \},\\ & 0.7, \quad i\in
G_{2}^{(1)},~t\in\{1,\cdots,T_{1} \},\\ & 0.1, \quad i\in
G_{1}^{(2)},~t\in\{T_{1}+1,\cdots,T \},\\ & 0.7, \quad i\in
G_{2}^{(2)},~t\in\{T_{1}+1,\cdots,T \}. \end{aligned}\right.
\end{equation*}

\item[\textbf{DGP X.3:}] Under this scenario, the number of groups changes
after the break. We set $N_{1}^{(1)}:N_{2}^{(1)}=0.5:0.5$ and $%
N_{1}^{(2)}:N_{2}^{(2)}:N_{3}^{(2)}=0.4:0.3:0.3$ before and after the break,
respectively. Specifically, for DGPs 1.3, 2.3, and 3.3, we have 
\begin{equation*}
\Theta _{1,it}=\Theta _{2,it}=\left\{ \begin{aligned} & 0.1, \quad i\in
G_{1}^{(1)},~t\in\{1,\cdots,T_{1} \},\\ & 0.9, \quad i\in
G_{2}^{(1)},~t\in\{1,\cdots,T_{1} \},\\ & 0.1, \quad i\in
G_{1}^{(2)},~t\in\{T_{1}+1,\cdots,T \},\\ & 0.5, \quad i\in
G_{2}^{(2)},~t\in\{T_{1}+1,\cdots,T \},\\ & 0.9, \quad i\in
G_{3}^{(2)},~t\in\{T_{1}+1,\cdots,T \}. \end{aligned}\right.
\end{equation*}%
For DGP 4.3, $\Theta _{2,it}$ is defined as in DGP X.3 for X$\in \{1,2,3\}$,
and 
\begin{equation*}
\Theta _{1,it}=\left\{ \begin{aligned} & 0.1, \quad i\in
G_{1}^{(1)},~t\in\{1,\cdots,T_{1} \},\\ & 0.7, \quad i\in
G_{2}^{(1)},~t\in\{1,\cdots,T_{1} \},\\ & 0.1, \quad i\in
G_{1}^{(2)},~t\in\{T_{1}+1,\cdots,T \},\\ & 0.4, \quad i\in
G_{2}^{(2)},~t\in\{T_{1}+1,\cdots,T \},\\ & 0.7, \quad i\in
G_{3}^{(2)},~t\in\{T_{1}+1,\cdots,T \}. \end{aligned}\right.
\end{equation*}
\end{description}

\subsection{Results}

Table \ref{tab:rank} reports the proportion of correct rank estimation for
the intercept (IFE) and slope matrices based on the SVT in Section 3.3. Note
that $r_{0}$ denotes the true rank of the intercept matrix and $r_{1}$ and $%
r_{2}$ denote those of the two slope matrices. From Table \ref{tab:rank}, we
notice that the true ranks of both the intercept and slope matrices can be
almost perfectly estimated for the sample sizes under investigation.

\begin{table}[p]
\caption{Frequency of correct rank estimation} \label{tab:rank} \scriptsize \centering
\begin{tabular}{cccccccccccc}
\toprule\toprule\multicolumn{2}{c}{N} & \multicolumn{2}{c}{100} & \multicolumn{2}{c}{200} & 
\multicolumn{2}{c}{N} & \multicolumn{2}{c}{100} & \multicolumn{2}{c}{200} \\ 
\multicolumn{2}{c}{T} & 100 & 200 & 100 & 200 & \multicolumn{2}{c}{T} & 100
& 200 & 100 & 200 \\ 
\midrule \multirow{3}[2]{*}{DGP 1.1} & $r_{0}=1$ & 1.00 & 1.00 & 1.00 & 1.00
& \multirow{3}[2]{*}{DGP 3.1} & $r_{0}=1$ & 1.00 & 1.00 & 1.00 & 1.00 \\ 
& $r_{1}=1$ & 1.00 & 1.00 & 1.00 & 1.00 &  & $r_{1}=1$ & 1.00 & 1.00 & 1.00
& 1.00 \\ 
& $r_{2}=1$ & 1.00 & 1.00 & 1.00 & 1.00 &  & $r_{2}=1$ & 1.00 & 1.00 & 1.00
& 1.00 \\ 
\midrule \multirow{3}[2]{*}{DGP 1.2} & $r_{0}=1$ & 1.00 & 1.00 & 1.00 & 1.00
& \multirow{3}[2]{*}{DGP 3.2} & $r_{0}=1$ & 1.00 & 1.00 & 1.00 & 1.00 \\ 
& $r_{1}=2$ & 1.00 & 1.00 & 1.00 & 1.00 &  & $r_{1}=2$ & 1.00 & 1.00 & 1.00
& 1.00 \\ 
& $r_{2}=2$ & 1.00 & 1.00 & 1.00 & 1.00 &  & $r_{2}=2$ & 1.00 & 1.00 & 1.00
& 1.00 \\ 
\midrule \multirow{3}[2]{*}{DGP 1.3} & $r_{0}=1$ & 1.00 & 1.00 & 1.00 & 1.00
& \multirow{3}[2]{*}{DGP 3.3} & $r_{0}=1$ & 1.00 & 1.00 & 1.00 & 1.00 \\ 
& $r_{1}=2$ & 1.00 & 1.00 & 1.00 & 1.00 &  & $r_{1}=2$ & 1.00 & 1.00 & 1.00
& 1.00 \\ 
& $r_{2}=2$ & 1.00 & 1.00 & 1.00 & 1.00 &  & $r_{2}=2$ & 0.998 & 1.00 & 1.00
& 1.00 \\ 
\midrule \multirow{3}[2]{*}{DGP 2.1} & $r_{0}=1$ & 1.00 & 1.00 & 1.00 & 1.00
& \multirow{3}[2]{*}{DGP 4.1} & $r_{0}=1$ & 1.00 & 1.00 & 1.00 & 1.00 \\ 
& $r_{1}=1$ & 1.00 & 1.00 & 1.00 & 1.00 &  & $r_{1}=1$ & 1.00 & 1.00 & 1.00
& 1.00 \\ 
& $r_{2}=1$ & 1.00 & 1.00 & 1.00 & 1.00 &  & $r_{2}=1$ & 1.00 & 1.00 & 1.00
& 1.00 \\ 
\midrule \multirow{3}[2]{*}{DGP 2.2} & $r_{0}=1$ & 1.00 & 1.00 & 1.00 & 1.00
& \multirow{3}[2]{*}{DGP 4.2} & $r_{0}=1$ & 1.00 & 1.00 & 1.00 & 1.00 \\ 
& $r_{1}=2$ & 1.00 & 1.00 & 1.00 & 1.00 &  & $r_{1}=1$ & 1.00 & 1.00 & 1.00
& 1.00 \\ 
& $r_{2}=2$ & 1.00 & 1.00 & 1.00 & 1.00 &  & $r_{2}=2$ & 1.00 & 1.00 & 1.00
& 1.00 \\ 
\midrule \multirow{3}[2]{*}{DGP 2.3} & $r_{0}=1$ & 1.00 & 1.00 & 1.00 & 1.00
& \multirow{3}[2]{*}{DGP 4.3} & $r_{0}=1$ & 1.00 & 1.00 & 1.00 & 1.00 \\ 
& $r_{1}=2$ & 1.00 & 1.00 & 1.00 & 1.00 &  & $r_{1}=1$ & 1.00 & 1.00 & 1.00
& 1.00 \\ 
& $r_{2}=2$ & 1.00 & 1.00 & 1.00 & 1.00 &  & $r_{2}=2$ & 1.00 & 1.00 & 1.00
& 1.00 \\ 
\bottomrule &  &  &  &  &  &  &  &  &  &  & 
\end{tabular}%
\end{table}

Table \ref{tab:BP_largeT} reports the results for the break point estimation
in Step 3 based on different $(N,T)$ combinations. We summarize some
important findings from Table \ref{tab:BP_largeT}. First, when the group
membership and the number of groups do not change as in DGP X.1 for X$\in %
\left[ 3\right] ,$ the frequency of correct break point estimation may not
be 1 especially if $N$ is not large. This suggests that the binary
segmentation does not work perfectly in such a scenario. Second, the change
of group membership or the number of groups help to identify the break point
as reflected in the simulation results for DGP X.2 and X.3 for X$\in \left[ 4%
\right] .$ In general, the binary segmentation works well in our setting.

\begin{table}[tbph]
\caption{Frequency of correct break point estimation}
\label{tab:BP_largeT}\scriptsize \centering%
\begin{tabular}{cccccccccc}
\toprule\toprule N & \multicolumn{2}{c}{100} & \multicolumn{2}{c}{200} & N & 
\multicolumn{2}{c}{100} & \multicolumn{2}{c}{200} \\ 
\midrule T & 100 & 200 & 100 & 200 & T & 100 & 200 & 100 & 200 \\ 
\midrule DGP 1.1 & 0.980 & 0.993 & 1.00 & 1.00 & DGP 3.1 & 0.985 & 0.972 & 
1.00 & 0.999 \\ 
DGP 1.2 & 0.999 & 1.00 & 1.00 & 1.00 & DGP 3.2 & 1.00 & 1.00 & 1.00 & 1.00
\\ 
DGP 1.3 & 1.00 & 1.00 & 1.00 & 1.00 & DGP 3.3 & 1.00 & 1.00 & 1.00 & 1.00 \\ 
\midrule DGP 2.1 & 0.998 & 0.999 & 1.00 & 1.00 & DGP 4.1 & 1.00 & 1.00 & 1.00
& 1.00 \\ 
DGP 2.2 & 1.00 & 1.00 & 1.00 & 1.00 & DGP 4.2 & 1.00 & 1.00 & 1.00 & 1.00 \\ 
DGP 2.3 & 1.00 & 1.00 & 1.00 & 1.00 & DGP 4.3 & 1.00 & 1.00 & 1.00 & 1.00 \\ 
\bottomrule &  &  &  &  &  &  &  &  & 
\end{tabular}
\end{table}

Table \ref{tab:group} reports the results for the group membership
estimation when the number of groups are either known (infeasible in
practice) or estimated from the data (feasible). With known number of
groups, the STK algorithm degenerates to the traditional K-means algorithm.
The \textquotedblleft Infeasible\textquotedblright\ part of Table \ref%
{tab:group} reports the frequency of correct group membership estimation
before and after the estimated break point, $G_{B}$ and $G_{A}$, based on
the known true number of groups and K-means algorithm. Evidently, the
K-means classification exhibits excellent performance in this case.
Nevertheless, without prior information on the true number of groups, the STK
algorithm is able to estimate the group membership and the number of groups
simultaneously. In this case, the frequency of correct estimation of the
group membership and that of the number of groups are shown in the
\textquotedblleft Feasible\textquotedblright\ part in Table \ref{tab:group}
and in Table \ref{tab:K_1}, respectively. To implement the STK algorithm
with unknown number of groups, we set $\varsigma _{N}=N^{-2}$ to ensure the
consistency of the group number estimators. As expected, the performance of
the STK algorithm is slightly worse than that of the K-means algorithm with knowledge of the
true number of groups. But the performance improves when both $N$ and $T$
increase. Table \ref{tab:K_1} suggests that the number of groups can be
nearly perfectly estimated in DGPs 1.1, 1.2, 1.3 and 2.1. For the more
complicated DGPs (e.g., the dynamic case in DGPs 4.1, 4.2, and 4.3 or the
static panel with serially correlated errors in DGPs 3.1, 3.2, and 3.3), the
performance is not as good as that in the simple DGPs.

\begin{table}[p]
\caption{Frequency of correct group membership estimation}
\label{tab:group}\scriptsize \centering
\begin{tabular}{cccccccccccccc}
\toprule \toprule \multirow{26}[12]{*}{Infeasible} & \multicolumn{2}{c}{N} & 
\multicolumn{2}{c}{100} & \multicolumn{2}{c}{200} & %
\multirow{26}[12]{*}{Feasible} & \multicolumn{2}{c}{N} & \multicolumn{2}{c}{
100} & \multicolumn{2}{c}{200} \\ 
\cmidrule{2-7}\cmidrule{9-14} & \multicolumn{2}{c}{T} & 100 & 200 & 100 & 200
&  & \multicolumn{2}{c}{T} & 100 & 200 & 100 & 200 \\ 
\cmidrule{2-7}\cmidrule{9-14} & \multirow{2}[1]{*}{DGP 1.1} & $G_{B}$ & 1.00
& 1.00 & 1.00 & 1.00 &  & \multirow{2}[1]{*}{DGP 1.1} & $G_{B}$ & 1.00 & 1.00
& 1.00 & 1.00 \\ 
&  & $G_{A}$ & 1.00 & 1.00 & 1.00 & 1.00 &  &  & $G_{A}$ & 1.00 & 1.00 & 1.00
& 1.00 \\ 
& \multirow{2}[0]{*}{DGP 1.2} & $G_{B}$ & 1.00 & 1.00 & 1.00 & 1.00 &  & %
\multirow{2}[0]{*}{DGP 1.2} & $G_{B}$ & 1.00 & 1.00 & 1.00 & 1.00 \\ 
&  & $G_{A}$ & 1.00 & 1.00 & 1.00 & 1.00 &  &  & $G_{A}$ & 1.00 & 1.00 & 1.00
& 1.00 \\ 
& \multirow{2}[1]{*}{DGP 1.3} & $G_{B}$ & 1.00 & 1.00 & 1.00 & 1.00 &  & %
\multirow{2}[1]{*}{DGP 1.3} & $G_{B}$ & 1.00 & 1.00 & 1.00 & 1.00 \\ 
&  & $G_{A}$ & 0.989 & 0.999 & 0.978 & 0.999 &  &  & $G_{A}$ & 0.989 & 0.999
& 0.978 & 0.999 \\ 
\cmidrule{2-7}\cmidrule{9-14} & \multirow{2}[1]{*}{DGP 2.1} & $G_{B}$ & 1.00
& 1.00 & 1.00 & 1.00 &  & \multirow{2}[1]{*}{DGP 2.1} & $G_{B}$ & 1.00 & 1.00
& 1.00 & 1.00 \\ 
&  & $G_{A}$ & 1.00 & 1.00 & 1.00 & 1.00 &  &  & $G_{A}$ & 1.00 & 1.00 & 1.00
& 1.00 \\ 
& \multirow{2}[0]{*}{DGP 2.2} & $G_{B}$ & 1.00 & 1.00 & 1.00 & 1.00 &  & %
\multirow{2}[0]{*}{DGP 2.2} & $G_{B}$ & 0.989 & 0.999 & 0.992 & 0.999 \\ 
&  & $G_{A}$ & 1.00 & 1.00 & 1.00 & 1.00 &  &  & $G_{A}$ & 0.992 & 0.999 & 
0.977 & 0.998 \\ 
& \multirow{2}[1]{*}{DGP 2.3} & $G_{B}$ & 1.00 & 1.00 & 1.00 & 1.00 &  & %
\multirow{2}[1]{*}{DGP 2.3} & $G_{B}$ & 0.992 & 0.999 & 0.961 & 0.999 \\ 
&  & $G_{A}$ & 0.998 & 1.00 & 0.999 & 1.00 &  &  & $G_{A}$ & 0.989 & 0.999 & 
0.992 & 0.999 \\ 
\cmidrule{2-7}\cmidrule{9-14} & \multirow{2}[1]{*}{DGP 3.1} & $G_{B}$ & 1.00
& 1.00 & 1.00 & 1.00 &  & \multirow{2}[1]{*}{DGP 3.1} & $G_{B}$ & 0.981 & 
0.999 & 0.949 & 0.999 \\ 
&  & $G_{A}$ & 1.00 & 1.00 & 1.00 & 1.00 &  &  & $G_{A}$ & 0.981 & 0.993 & 
0.979 & 0.996 \\ 
& \multirow{2}[0]{*}{DGP 3.2} & $G_{B}$ & 1.00 & 1.00 & 1.00 & 1.00 &  & %
\multirow{2}[0]{*}{DGP 3.2} & $G_{B}$ & 0.985 & 0.996 & 0.962 & 0.993 \\ 
&  & $G_{A}$ & 1.00 & 1.00 & 1.00 & 1.00 &  &  & $G_{A}$ & 0.985 & 0.994 & 
0.973 & 0.998 \\ 
& \multirow{2}[1]{*}{DGP 3.3} & $G_{B}$ & 1.00 & 1.00 & 1.00 & 1.00 &  & %
\multirow{2}[1]{*}{DGP 3.3} & $G_{B}$ & 0.985 & 0.998 & 0.973 & 0.995 \\ 
&  & $G_{A}$ & 0.981 & 0.997 & 0.982 & 0.999 &  &  & $G_{A}$ & 0.971 & 0.994
& 0.968 & 0.998 \\ 
\cmidrule{2-7}\cmidrule{9-14} & \multirow{2}[1]{*}{DGP 4.1} & $G_{B}$ & 1.00
& 1.00 & 1.00 & 1.00 &  & \multirow{2}[1]{*}{DGP 4.1} & $G_{B}$ & 0.975 & 
0.999 & 0.984 & 0.999 \\ 
&  & $G_{A}$ & 1.00 & 1.00 & 1.00 & 1.00 &  &  & $G_{A}$ & 0.985 & 0.998 & 
0.949 & 0.997 \\ 
& \multirow{2}[0]{*}{DGP 4.2} & $G_{B}$ & 1.00 & 1.00 & 1.00 & 1.00 &  & %
\multirow{2}[0]{*}{DGP 4.2} & $G_{B}$ & 0.994 & 0.998 & 0.952 & 0.997 \\ 
&  & $G_{A}$ & 1.00 & 1.00 & 1.00 & 1.00 &  &  & $G_{A}$ & 0.977 & 0.999 & 
0.985 & 0.999 \\ 
& \multirow{2}[1]{*}{DGP 4.3} & $G_{B}$ & 1.00 & 1.00 & 1.00 & 1.00 &  & %
\multirow{2}[1]{*}{DGP 4.3} & $G_{B}$ & 0.983 & 0.998 & 0.948 & 0.999 \\ 
&  & $G_{A}$ & 1.00 & 1.00 & 1.00 & 1.00 &  &  & $G_{A}$ & 0.982 & 0.998 & 
0.983 & 0.998 \\ 
\bottomrule &  &  &  &  &  &  &  &  &  &  &  &  & 
\end{tabular}
\end{table}

\begin{table}[p]
\caption{Frequency of correct estimation of the number of groups}
\label{tab:K_1}\scriptsize \centering
\begin{tabular}{cccccccccccc}
	\toprule \toprule 
\multicolumn{2}{c}{N} & \multicolumn{2}{c}{100} & \multicolumn{2}{c}{200} & 
\multicolumn{2}{c}{N} & \multicolumn{2}{c}{100} & \multicolumn{2}{c}{200} \\ 
\multicolumn{2}{c}{T} & 100 & 200 & 100 & 200 & \multicolumn{2}{c}{T} & 100
& 200 & 100 & 200 \\ 
\midrule \multirow{2}[1]{*}{DGP 1.1} & $K^{(1)}=2$ & 1.00 & 1.00 & 0.999 & 
1.00 & \multirow{2}[1]{*}{DGP 3.1} & $K^{(1)}=2$ & 0.880 & 0.993 & 0.675 & 
0.985 \\ 
& $K^{(2)}=2$ & 1.00 & 1.00 & 1.00 & 1.00 &  & $K^{(2)}=2$ & 0.890 & 0.960 & 
0.873 & 0.971 \\ 
\multirow{2}[0]{*}{DGP 1.2} & $K^{(1)}=2$ & 1.00 & 1.00 & 1.00 & 1.00 & %
\multirow{2}[0]{*}{DGP 3.2} & $K^{(1)}=2$ & 0.868 & 0.985 & 0.759 & 0.940 \\ 
& $K^{(2)}=2$ & 1.00 & 1.00 & 1.00 & 0.999 &  & $K^{(2)}=2$ & 0.897 & 0.971
& 0.829 & 0.987 \\ 
\multirow{2}[1]{*}{DGP 1.3} & $K^{(1)}=2$ & 0.999 & 1.00 & 1.00 & 1.00 & %
\multirow{2}[1]{*}{DGP 3.3} & $K^{(1)}=2$ & 0.889 & 0.988 & 0.802 & 0.965 \\ 
& $K^{(2)}=3$ & 1.00 & 0.999 & 1.00 & 1.00 &  & $K^{(2)}=3$ & 0.932 & 0.977
& 0.907 & 0.988 \\ 
\midrule \multirow{2}[1]{*}{DGP 2.1} & $K^{(1)}=2$ & 1.00 & 1.00 & 1.00 & 
1.00 & \multirow{2}[1]{*}{DGP 4.1} & $K^{(1)}=2$ & 0.807 & 0.981 & 0.825 & 
0.982 \\ 
& $K^{(2)}=2$ & 1.00 & 1.00 & 1.00 & 1.00 &  & $K^{(2)}=2$ & 0.919 & 0.988 & 
0.714 & 0.980 \\ 
\multirow{2}[0]{*}{DGP 2.2} & $K^{(1)}=2$ & 0.919 & 0.995 & 0.940 & 0.994 & %
\multirow{2}[0]{*}{DGP 4.2} & $K^{(1)}=2$ & 0.933 & 0.988 & 0.630 & 0.975 \\ 
& $K^{(2)}=2$ & 0.930 & 0.993 & 0.809 & 0.982 &  & $K^{(2)}=2$ & 0.758 & 
0.988 & 0.870 & 0.989 \\ 
\multirow{2}[1]{*}{DGP 2.3} & $K^{(1)}=2$ & 0.940 & 0.989 & 0.724 & 0.990 & %
\multirow{2}[1]{*}{DGP 4.3} & $K^{(1)}=2$ & 0.877 & 0.991 & 0.657 & 0.991 \\ 
& $K^{(2)}=3$ & 0.946 & 0.995 & 0.952 & 0.992 &  & $K^{(2)}=3$ & 0.900 & 
0.987 & 0.874 & 0.980 \\ 
\bottomrule &  &  &  &  &  &  &  &  &  &  & 
\end{tabular}%
\end{table}

Table \ref{tab:K_2} presents more detailed results for the estimation of the
number of groups. For DGPs 1.X and DGP 2.X where we have static panels with
independent errors, the results show that the group membership and the
number of groups can be well estimated with nearly 100\% accuracy under
different $(N,T)$ combinations. For DGPs 3.X and 4.X where we have static
panels with serially correlated errors and dynamic panels, respectively, the
frequency of correct estimation of both the group membership and the number of
groups is not great when $T$ is small, but gradually
approaches unity as $T$ increases. One reason for this is that we need to use
HAC estimates of certain long-run variance objects in the STK algorithm and
it is well known that a relatively large value of $T$ is required in order
for the HAC estimates to be reasonably well behaved in finite samples.

\begin{table}[p]
\caption{Determination of the number of groups}
\label{tab:K_2}\scriptsize \centering
\begin{tabular}{ccccccccccc}
\toprule \toprule \multirow{2}[4]{*}{DGP } & \multirow{2}[4]{*}{N} & %
\multirow{2}[4]{*}{T} & \multicolumn{4}{c}{$\hat{K}^{(1)}$} &  & $\hat{K}%
^{(2)}$ &  &  \\ 
\cmidrule{4-11} &  &  & 2 & 3 & 4 & $\geq$5 & 2 & 3 & 4 & $\geq$5 \\ 
\midrule \multirow{4}[1]{*}{DGP 1.1} & \multirow{2}[1]{*}{100} & 100 & 
\textbf{1.00} & 0 & 0 & 0 & \textbf{1.00} & 0 & 0 & 0 \\ 
&  & 200 & \textbf{1.00} & 0 & 0 & 0 & \textbf{1.00} & 0 & 0 & 0 \\ 
& \multirow{2}[0]{*}{200} & 100 & \textbf{0.999} & 0.001 & 0 & 0 & \textbf{%
1.00} & 0 & 0 & 0 \\ 
&  & 200 & \textbf{1.00} & 0 & 0 & 0 & \textbf{1.00} & 0 & 0 & 0 \\ 
\multirow{4}[0]{*}{DGP 1.2} & \multirow{2}[0]{*}{100} & 100 & \textbf{1.00}
& 0.00 & 0.00 & 0.00 & \textbf{1.00} & 0.00 & 0.00 & 0.00 \\ 
&  & 200 & \textbf{1.00} & 0.00 & 0.00 & 0.00 & \textbf{1.00} & 0.00 & 0.00
& 0.00 \\ 
& \multirow{2}[0]{*}{200} & 100 & \textbf{1.00} & 0.00 & 0.00 & 0.00 & 
\textbf{1.00} & 0.00 & 0.00 & 0.00 \\ 
&  & 200 & \textbf{1.00} & 0.00 & 0.00 & 0.00 & \textbf{1.00} & 0.00 & 0.00
& 0.00 \\ 
\multirow{4}[1]{*}{DGP 1.3} & \multirow{2}[0]{*}{100} & 100 & \textbf{0.999}
& 0.001 & 0.00 & 0.00 & 0.00 & \textbf{1.00} & 0.00 & 0.00 \\ 
&  & 200 & \textbf{1.00} & 0.00 & 0.00 & 0.00 & 0.00 & \textbf{0.999} & 0.001
& 0.00 \\ 
& \multirow{2}[1]{*}{200} & 100 & \textbf{1.00} & 0.00 & 0.00 & 0.00 & 0.00
& \textbf{1.00} & 0.00 & 0.00 \\ 
&  & 200 & \textbf{1.00} & 0.00 & 0.00 & 0.00 & 0.00 & \textbf{1.00} & 0.00
& 0.00 \\ 
\midrule \multirow{4}[1]{*}{DGP 2.1} & \multirow{2}[1]{*}{100} & 100 & 
\textbf{0.933} & 0.058 & 0.009 & 0.00 & \textbf{0.936} & 0.060 & 0.003 & 
0.001 \\ 
&  & 200 & \textbf{0.990} & 0.010 & 0.00 & 0.00 & \textbf{0.987} & 0.013 & 
0.00 & 0.00 \\ 
& \multirow{2}[0]{*}{200} & 100 & \textbf{0.864} & 0.126 & 0.010 & 0.00 & 
\textbf{0.901} & 0.090 & 0.009 & 0.00 \\ 
&  & 200 & \textbf{0.989} & 0.011 & 0.00 & 0.00 & \textbf{0.990} & 0.010 & 
0.000 & 0.00 \\ 
\multirow{4}[0]{*}{DGP 2.2} & \multirow{2}[0]{*}{100} & 100 & \textbf{0.919}
& 0.074 & 0.007 & 0.00 & \textbf{0.930} & 0.067 & 0.003 & 0.00 \\ 
&  & 200 & \textbf{0.995} & 0.003 & 0.00 & 0.002 & \textbf{0.993} & 0.006 & 
0.00 & 0.001 \\ 
& \multirow{2}[0]{*}{200} & 100 & \textbf{0.940} & 0.056 & 0.004 & 0.00 & 
\textbf{0.809} & 0.164 & 0.027 & 0.00 \\ 
&  & 200 & \textbf{0.994} & 0.006 & 0.00 & 0.00 & \textbf{0.982} & 0.018 & 
0.00 & 0.00 \\ 
\multirow{4}[1]{*}{DGP 2.3} & \multirow{2}[0]{*}{100} & 100 & \textbf{0.940}
& 0.055 & 0.005 & 0.00 & 0.00 & \textbf{0.946} & 0.039 & 0.015 \\ 
&  & 200 & \textbf{0.989} & 0.011 & 0.00 & 0.00 & 0.00 & \textbf{0.995} & 
0.002 & 0.003 \\ 
& \multirow{2}[1]{*}{200} & 100 & \textbf{0.724} & 0.230 & 0.046 & 0.00 & 
0.00 & \textbf{0.952} & 0.031 & 0.017 \\ 
&  & 200 & \textbf{0.990} & 0.010 & 0.00 & 0.00 & 0.00 & \textbf{0.992} & 
0.006 & 0.002 \\ 
\midrule \multirow{4}[1]{*}{DGP 3.1} & \multirow{2}[1]{*}{100} & 100 & 
\textbf{0.880} & 0.097 & 0.022 & 0.001 & \textbf{0.890} & 0.062 & 0.031 & 
0.017 \\ 
&  & 200 & \textbf{0.993} & 0.007 & 0 & 0 & \textbf{0.960} & 0.019 & 0.012 & 
0.009 \\ 
& \multirow{2}[0]{*}{200} & 100 & \textbf{0.675} & 0.224 & 0.099 & 0.002 & 
\textbf{0.873} & 0.081 & 0.041 & 0.005 \\ 
&  & 200 & \textbf{0.985} & 0.015 & 0 & 0 & \textbf{0.971} & 0.023 & 0.005 & 
0.001 \\ 
\multirow{4}[0]{*}{DGP 3.2} & \multirow{2}[0]{*}{100} & 100 & \textbf{0.868}
& 0.109 & 0.023 & 0.00 & \textbf{0.897} & 0.099 & 0.004 & 0.00 \\ 
&  & 200 & \textbf{0.985} & 0.008 & 0.003 & 0.004 & \textbf{0.971} & 0.021 & 
0.006 & 0.002 \\ 
& \multirow{2}[0]{*}{200} & 100 & \textbf{0.759} & 0.198 & 0.042 & 0.001 & 
\textbf{0.829} & 0.147 & 0.024 & 0.00 \\ 
&  & 200 & \textbf{0.940} & 0.055 & 0.005 & 0.00 & \textbf{0.987} & 0.013 & 
0.000 & 0.00 \\ 
\multirow{4}[1]{*}{DGP 3.3} & \multirow{2}[0]{*}{100} & 100 & \textbf{0.889}
& 0.100 & 0.011 & 0.00 & 0.00 & \textbf{0.932} & 0.055 & 0.013 \\ 
&  & 200 & \textbf{0.988} & 0.009 & 0.003 & 0.00 & 0.00 & \textbf{0.977} & 
0.013 & 0.010 \\ 
& \multirow{2}[1]{*}{200} & 100 & \textbf{0.802} & 0.175 & 0.023 & 0.00 & 
0.00 & \textbf{0.907} & 0.073 & 0.020 \\ 
&  & 200 & \textbf{0.965} & 0.035 & 0.000 & 0.000 & 0.000 & \textbf{0.988} & 
0.010 & 0.002 \\ 
\midrule \multirow{4}[1]{*}{DGP 4.1} & \multirow{2}[1]{*}{100} & 100 & 
\textbf{0.807} & 0.084 & 0.089 & 0.02 & \textbf{0.919} & 0.041 & 0.019 & 
0.021 \\ 
&  & 200 & \textbf{0.981} & 0.013 & 0.004 & 0.002 & \textbf{0.988} & 0.004 & 
0.005 & 0.003 \\ 
& \multirow{2}[0]{*}{200} & 100 & \textbf{0.825} & 0.107 & 0.061 & 0.007 & 
\textbf{0.714} & 0.118 & 0.084 & 0.084 \\ 
&  & 200 & \textbf{0.982} & 0.011 & 0.006 & 0.001 & \textbf{0.98} & 0.010 & 
0.004 & 0.006 \\ 
\multirow{4}[0]{*}{DGP 4.2} & \multirow{2}[0]{*}{100} & 100 & \textbf{0.933}
& 0.051 & 0.012 & 0.004 & \textbf{0.758} & 0.141 & 0.089 & 0.012 \\ 
&  & 200 & \textbf{0.988} & 0.006 & 0.004 & 0.002 & \textbf{0.988} & 0.005 & 
0.006 & 0.001 \\ 
& \multirow{2}[0]{*}{200} & 100 & \textbf{0.630} & 0.158 & 0.196 & 0.016 & 
\textbf{0.870} & 0.080 & 0.048 & 0.002 \\ 
&  & 200 & \textbf{0.975} & 0.013 & 0.012 & 0.000 & \textbf{0.989} & 0.009 & 
0.002 & 0.000 \\ 
\multirow{4}[1]{*}{DGP 4.3} & \multirow{2}[0]{*}{100} & 100 & \textbf{0.877}
& 0.076 & 0.042 & 0.005 & 0.000 & \textbf{0.900} & 0.055 & 0.045 \\ 
&  & 200 & \textbf{0.991} & 0.006 & 0.002 & 0.001 & 0.000 & \textbf{0.987} & 
0.010 & 0.003 \\ 
& \multirow{2}[1]{*}{200} & 100 & \textbf{0.657} & 0.191 & 0.129 & 0.023 & 
0.000 & \textbf{0.874} & 0.072 & 0.054 \\ 
&  & 200 & \textbf{0.991} & 0.005 & 0.004 & 0.000 & 0.000 & \textbf{0.980} & 
0.012 & 0.008 \\ 
\bottomrule &  &  &  &  &  &  &  &  &  & 
\end{tabular}%
\end{table}

Table \ref{tab:infer} shows results for the post-classification estimator
for the first slope coefficient. We follow \cite{su2016identifying} to
define the evaluation criteria as bias and coverage. Specifically, we define
the bias to be the weighted versions of bias for slope estimator from all
estimated groups, i.e. $\text{Bias}=\sum_{k=1}^{K^{(1)}}\text{Bias}(\alpha
_{k,1}^{(\ell )})$ for $\ell \in \{1,2\}$. Similarly, we define the weighted
version of the coverage ratio of the 95\% confidence interval estimators. The
\textquotedblleft Infeasible" panel shows the result assuming the number of
groups information is known, and the \textquotedblleft Feasible" panel shows
the result without knowing the number of groups information by the STK
algorithm. From Table \ref{tab:infer}, we notice that the coverage ratio for
DGP 1 and 2 is close to 95\% under different combinations of $N$ and $T$ for
both the \textquotedblleft Infeasible" and \textquotedblleft Feasible"
panels, which is due to the higher correct classification ratio. For DGPs 3
and 4, by using the STK algorithm, the coverage ratio is a bit lower for $%
T=100$, which is due to the inaccuracy of the group number and membership
estimators; but coverage approaches 95\% quickly when $T$ doubles.

\begin{table}[p]
\caption{Point estimation of $\protect\alpha _{\cdot ,1}^{(1)}$ and $\protect%
\alpha _{\cdot ,1}^{(2)}$}
\label{tab:infer}\scriptsize \centering
\scalebox{0.9}{\begin{tabular}{ccc|cccc|cccc}
    \toprule
    \toprule
    \multirow{3}[6]{*}{DGP} & \multirow{3}[6]{*}{N} & \multicolumn{1}{c}{\multirow{3}[6]{*}{T}} & \multicolumn{4}{c}{Infeasible} & \multicolumn{4}{c}{Feasible} \\
\cmidrule{4-11}          &       & \multicolumn{1}{c}{} & \multicolumn{2}{c}{Before the break} & \multicolumn{2}{c}{After the break} & \multicolumn{2}{c}{Before the break} & \multicolumn{2}{c}{After the break} \\
\cmidrule{4-11}          &       & \multicolumn{1}{c}{} & Bias($\times 10^{-6}$) & Coverage & Bias($\times 10^{-6}$) & \multicolumn{1}{c}{Coverage} & Bias($\times 10^{-6}$) & Coverage & Bias($\times 10^{-6}$) & Coverage \\
    \midrule
    \multirow{4}[2]{*}{1.1} & \multirow{2}[1]{*}{100} & 100   & 2.585 & 0.951 & -2.869 & 0.946 & 2.585 & 0.951 & -2.869 & 0.946 \\
          &       & 200   & -1.944 & 0.944 & -8.920 & 0.945 & -1.958 & 0.944 & -8.920 & 0.945 \\
          & \multirow{2}[1]{*}{200} & 100   & -1.096 & 0.943 & 1.407 & 0.947 & -1.096 & 0.943 & 1.407 & 0.947 \\
          &       & 200   & -1.910 & 0.945 & 0.960 & 0.947 & -1.910 & 0.945 & 0.960 & 0.947 \\
    \midrule
    \multirow{4}[2]{*}{1.2} & \multirow{2}[1]{*}{100} & 100   & -1.050 & 0.949 & -27.398 & 0.941 & -1.050 & 0.949 & -27.398 & 0.941 \\
          &       & 200   & -5.449 & 0.930 & 7.616 & 0.953 & -5.449 & 0.930 & 7.655 & 0.953 \\
          & \multirow{2}[1]{*}{200} & 100   & 4.770 & 0.949 & 1.866 & 0.951 & 4.770 & 0.949 & 1.866 & 0.951 \\
          &       & 200   & 1.317 & 0.941 & 1.874 & 0.945 & 1.317 & 0.941 & 1.874 & 0.945 \\
    \midrule
    \multirow{4}[2]{*}{1.3} & \multirow{2}[1]{*}{100} & 100   & -0.961 & 0.943 & 11.417 & 0.944 & -1.050 & 0.949 & -27.398 & 0.941 \\
          &       & 200   & -4.213 & 0.951 & -5.002 & 0.941 & -5.449 & 0.930 & 7.655 & 0.953 \\
          & \multirow{2}[1]{*}{200} & 100   & -1.571 & 0.938 & -3.756 & 0.938 & 4.770 & 0.949 & 1.866 & 0.951 \\
          &       & 200   & 0.403 & 0.941 & -4.159 & 0.945 & 1.317 & 0.941 & 1.874 & 0.945 \\
    \midrule
    \multirow{4}[2]{*}{2.1} & \multirow{2}[1]{*}{100} & 100   & 14.840 & 0.944 & 9.410 & 0.950 & 14.816 & 0.943 & 9.406 & 0.950 \\
          &       & 200   & -7.222 & 0.951 & 1.795 & 0.951 & -7.222 & 0.951 & 1.795 & 0.951 \\
          & \multirow{2}[1]{*}{200} & 100   & 0.916 & 0.940 & 3.575 & 0.948 & 0.916 & 0.940 & 3.575 & 0.948 \\
          &       & 200   & 0.452 & 0.948 & -0.797 & 0.947 & 0.452 & 0.948 & -0.797 & 0.947 \\
    \midrule
    \multirow{4}[2]{*}{2.2} & \multirow{2}[1]{*}{100} & 100   & -21.379 & 0.946 & 0.234 & 0.937 & -21.379 & 0.946 & 0.234 & 0.937 \\
          &       & 200   & 0.264 & 0.942 & -15.542 & 0.953 & 0.264 & 0.942 & -15.542 & 0.953 \\
          & \multirow{2}[1]{*}{200} & 100   & -1.379 & 0.945 & -1.489 & 0.951 & -1.379 & 0.944 & -1.489 & 0.951 \\
          &       & 200   & -1.101 & 0.950 & 1.127 & 0.949 & -1.101 & 0.950 & 1.127 & 0.949 \\
    \midrule
    \multirow{4}[2]{*}{2.3} & \multirow{2}[1]{*}{100} & 100   & -8.610 & 0.945 & 5.254 & 0.952 & -8.610 & 0.945 & 5.261 & 0.952 \\
          &       & 200   & 0.927 & 0.949 & 5.840 & 0.949 & 0.927 & 0.949 & 5.840 & 0.949 \\
          & \multirow{2}[1]{*}{200} & 100   & -1.560 & 0.943 & -2.569 & 0.941 & -1.560 & 0.943 & -2.569 & 0.941 \\
          &       & 200   & -0.775 & 0.947 & 4.408 & 0.947 & -0.775 & 0.947 & 4.386 & 0.947 \\
    \midrule
    \multirow{4}[2]{*}{3.1} & \multirow{2}[1]{*}{100} & 100   & -20.928 & 0.955 & -73.947 & 0.945 & -26.250 & 0.927 & -77.613 & 0.920 \\
          &       & 200   & 3.066 & 0.949 & -12.443 & 0.937 & 2.884 & 0.940 & -13.116 & 0.934 \\
          & \multirow{2}[1]{*}{200} & 100   & -2.663 & 0.951 & -8.742 & 0.944 & -3.517 & 0.857 & -7.730 & 0.888 \\
          &       & 200   & -3.747 & 0.949 & -2.107 & 0.945 & -3.642 & 0.939 & -1.971 & 0.938 \\
    \midrule
    \multirow{4}[2]{*}{3.2} & \multirow{2}[1]{*}{100} & 100   & -55.980 & 0.952 & -10.846 & 0.943 & -58.714 & 0.926 & -15.109 & 0.863 \\
          &       & 200   & -2.774 & 0.950 & 4.690 & 0.946 & -3.218 & 0.945 & 4.913 & 0.942 \\
          & \multirow{2}[1]{*}{200} & 100   & 6.979 & 0.951 & 8.879 & 0.945 & 6.287 & 0.858 & 6.894 & 0.848 \\
          &       & 200   & -1.704 & 0.947 & 0.438 & 0.945 & -2.122 & 0.928 & 0.381 & 0.940 \\
    \midrule
    \multirow{4}[2]{*}{3.3} & \multirow{2}[1]{*}{100} & 100   & -25.340 & 0.950 & 37.639 & 0.907 & -29.905 & 0.924 & 37.016 & 0.890 \\
          &       & 200   & 2.042 & 0.947 & -6.431 & 0.960 & 1.667 & 0.940 & -6.245 & 0.960 \\
          & \multirow{2}[1]{*}{200} & 100   & -2.391 & 0.946 & 14.364 & 0.892 & -2.735 & 0.891 & 13.680 & 0.840 \\
          &       & 200   & 4.339 & 0.943 & 4.890 & 0.942 & 4.493 & 0.932 & 5.113 & 0.938 \\
    \midrule
    \multirow{4}[2]{*}{4.1} & \multirow{2}[1]{*}{100} & 100   & 800.620 & 0.930 & -466.590 & 0.929 & 777.650 & 0.928 & -454.980 & 0.924 \\
          &       & 200   & 126.760 & 0.931 & 550.210 & 0.942 & 126.220 & 0.942 & 548.160 & 0.943 \\
          & \multirow{2}[1]{*}{200} & 100   & -224.960 & 0.931 & -339.020 & 0.939 & -214.900 & 0.904 & -313.430 & 0.876 \\
          &       & 200   & 417.320 & 0.938 & 412.500 & 0.947 & 415.110 & 0.941 & 410.700 & 0.944 \\
    \midrule
    \multirow{4}[2]{*}{4.2} & \multirow{2}[1]{*}{100} & 100   & 1246.000 & 0.921 & 726.940 & 0.943 & 1205.600 & 0.918 & 709.260 & 0.903 \\
          &       & 200   & -440.880 & 0.943 & 83.433 & 0.944 & -436.670 & 0.943 & 81.585 & 0.951 \\
          & \multirow{2}[1]{*}{200} & 100   & -1600.500 & 0.930 & 1937.500 & 0.927 & -1538.300 & 0.901 & 1781.600 & 0.819 \\
          &       & 200   & -1513.300 & 0.950 & -272.500 & 0.946 & -1502.000 & 0.935 & -271.420 & 0.953 \\
    \midrule
    \multirow{4}[2]{*}{4.3} & \multirow{2}[1]{*}{100} & 100   & -2067.900 & 0.931 & -505.100 & 0.940 & -1951.700 & 0.866 & -491.200 & 0.929 \\
          &       & 200   & 317.360 & 0.946 & 411.770 & 0.945 & 316.550 & 0.951 & 407.820 & 0.935 \\
          & \multirow{2}[1]{*}{200} & 100   & 1279.900 & 0.930 & 3660.100 & 0.888 & 1246.900 & 0.906 & 3355.100 & 0.874 \\
          &       & 200   & -772.380 & 0.940 & -335.250 & 0.948 & -768.870 & 0.932 & -334.190 & 0.940 \\
    \bottomrule
    \end{tabular}} 
\end{table}

\section{Empirical Study}

\label{sec:emp} The estimation methods were applied to
analyze the time-varying latent group structure of real house price changes
in Metropolitan Statistical Areas (MSAs) in the United States. Studies
of U.S. house price changes are plentiful in the literature. \cite%
{malpezzi1999simple}, \cite{capozza2002determinants}, \cite{gallin2006long},
and \cite{ortalo2006housing} all show that the house price changes are closely
correlated with real income in the long run. \cite{su2023identifying}
consider a heterogenous spatial panel and show that real income growth
affects the U.S. house prices in different ways for different MSAs. In this
application, we examine whether there exist latent group structures for the
real income growth elasticity of house price changes and whether these structures change
over the time dimension.

\subsection{Model}

We consider the following panel data model with IFEs and two-way slope heterogeneity
\begin{equation}
\pi _{it}=\lambda _{i}^{\prime }f_{t}+\Theta _{1,it}ginc_{it}+\Theta
_{2,it}ginc_{i,t-1}+e_{it},  \label{emp:1}
\end{equation}%
where the dependent variable $\pi _{it}$ measures the percentage of real
house price growth for MSA $i$ at time period $t$. The $\lambda _{i}$ and $f_{t}$
are the individual fixed effects and time fixed effects, the
covariate $ginc_{it}$ denotes the percentage of income growth for MSA $i$ at
time period $t$, and $ginc_{i,t-1}$ is the lagged value of $ginc_{it}$.
Unlike \cite{aquaro2021estimation} and \cite{su2023identifying} who consider
individual fixed effects and additive two-way fixed effects, respectively,
we allow the model to have IFEs. In the above model, we allow the slope
parameters $\left( \Theta _{1,it},\Theta _{2,it}\right) $ to exhibit
latent group structures along the cross-sectional dimension and an unknown
break along the time dimension.

\subsection{Data}

The dataset we use is obtained from \cite{aquaro2021estimation}, which is the
quarterly data for 377 MSAs over 1975 to 2014. To construct the growth rate
and the lagged term, we lose two periods of observations, which yields $%
T=158 $. Similar to \cite{su2023identifying}, we deseasonalize the growth
rate of real house price and real income. We do not de-factor the variables
since our model contains IFEs to control the common shocks.

\subsection{Empirical Results}

We first apply the singular value thresholding to estimate the ranks of $%
\Theta _{0}=\{\lambda _{i}^{\prime }f_{t}\},$ $\Theta _{1}=\{\Theta _{1,it}\}
$ and $\Theta _{2}=\{\Theta _{2,it}\}.$ The estimates are: $\hat{r}_{0}=1,$ $%
\hat{r}_{1}=2,$ and $\hat{r}_{2}=2$. Before applying the proposed estimation
algorithm in Section \ref{sec:estimation}, we first test the presence of a
structural break as in Section \ref{sec:extension_test}. As given in 
\cite{bai2003critical}, for each individual $i$, the critical value of test
statistic $\sup_{T_{1}\in \mathcal{T}_{\epsilon }}F_{i}(T_{1})$ is 15.37. We
then construct the sup-F test statistic for each MSA. Results show that $%
\operatornamewithlimits{\text{min}}\limits_{i\in \lbrack N]}\sup_{T_{1}\in 
\mathcal{T}_{\epsilon }}F_{i}(T_{1})=0.0195$ and the final test statistic is 
$F_{NT}(1|0)=\operatornamewithlimits{\text{max}}\limits_{i\in \lbrack
N]}\sup_{T_{1}\in \mathcal{T}_{\epsilon }}F_{i}(T_{1})=2161.65$. Based on
this outcome, we reject the null that there is no structural break for slope
coefficients $\Theta _{1,it}$ and $\Theta _{2,it}$ at the 1\% significance level.

With the presence of a structural break, we apply the proposed multi-stage
estimation result in Section \ref{sec:estimation} to estimate the break date
and numbers of groups before and after the break. The estimated break date
is given by $\hat{T}_{1}=51,$ which suggests that the structural break
happens at the first quarter in 1988. We conjecture that this break may be related to
the catastrophic stock market crash that occurred on October 1987, which is
considered to be the first contemporary global financial crisis event. 

\begin{figure}[p]
\caption{Group classification result 1975Q3-1987Q4}
\label{fig:before}\centering
\includegraphics[width=0.9\textwidth]{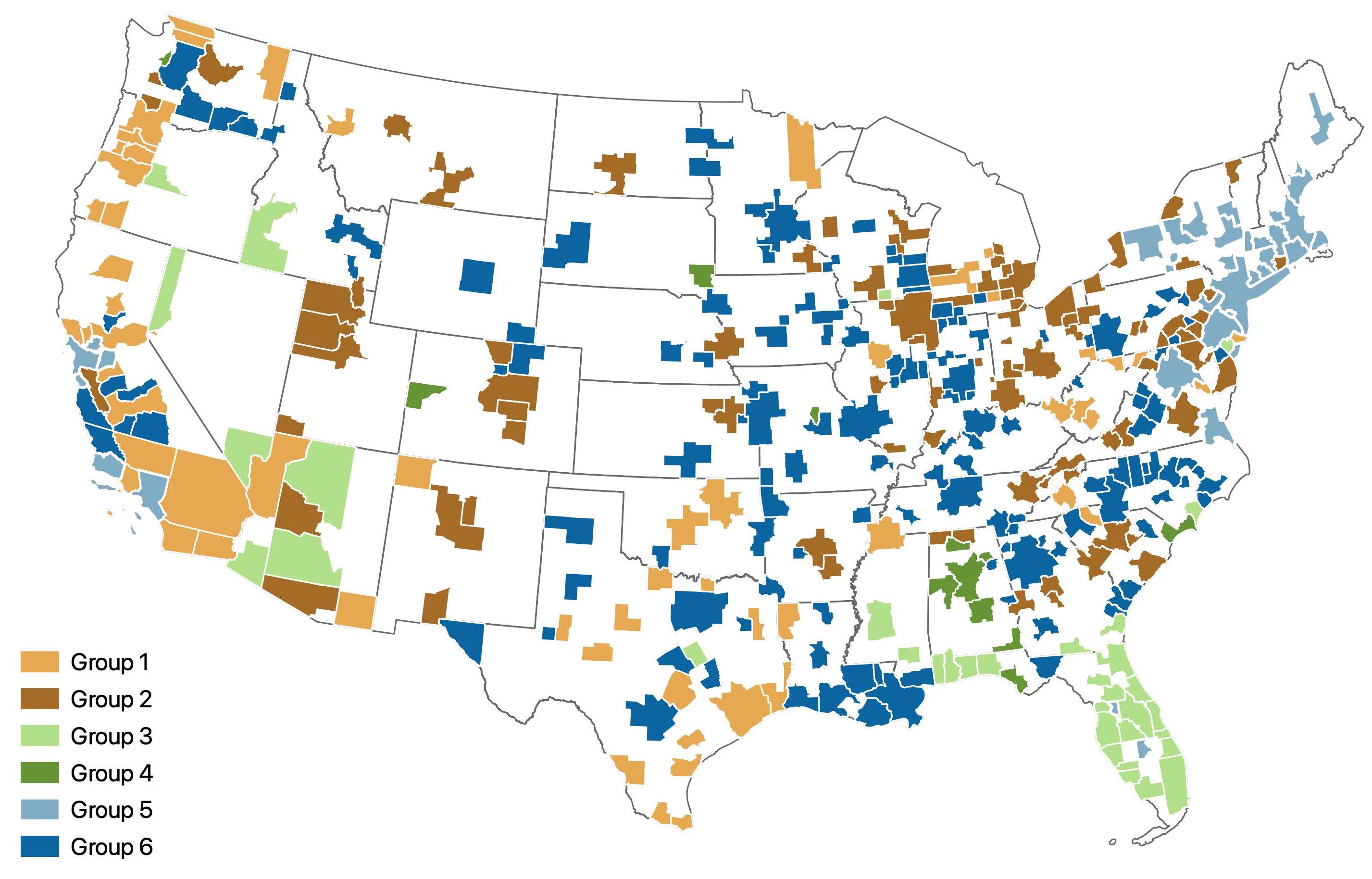} %
\includegraphics[width=0.9\textwidth]{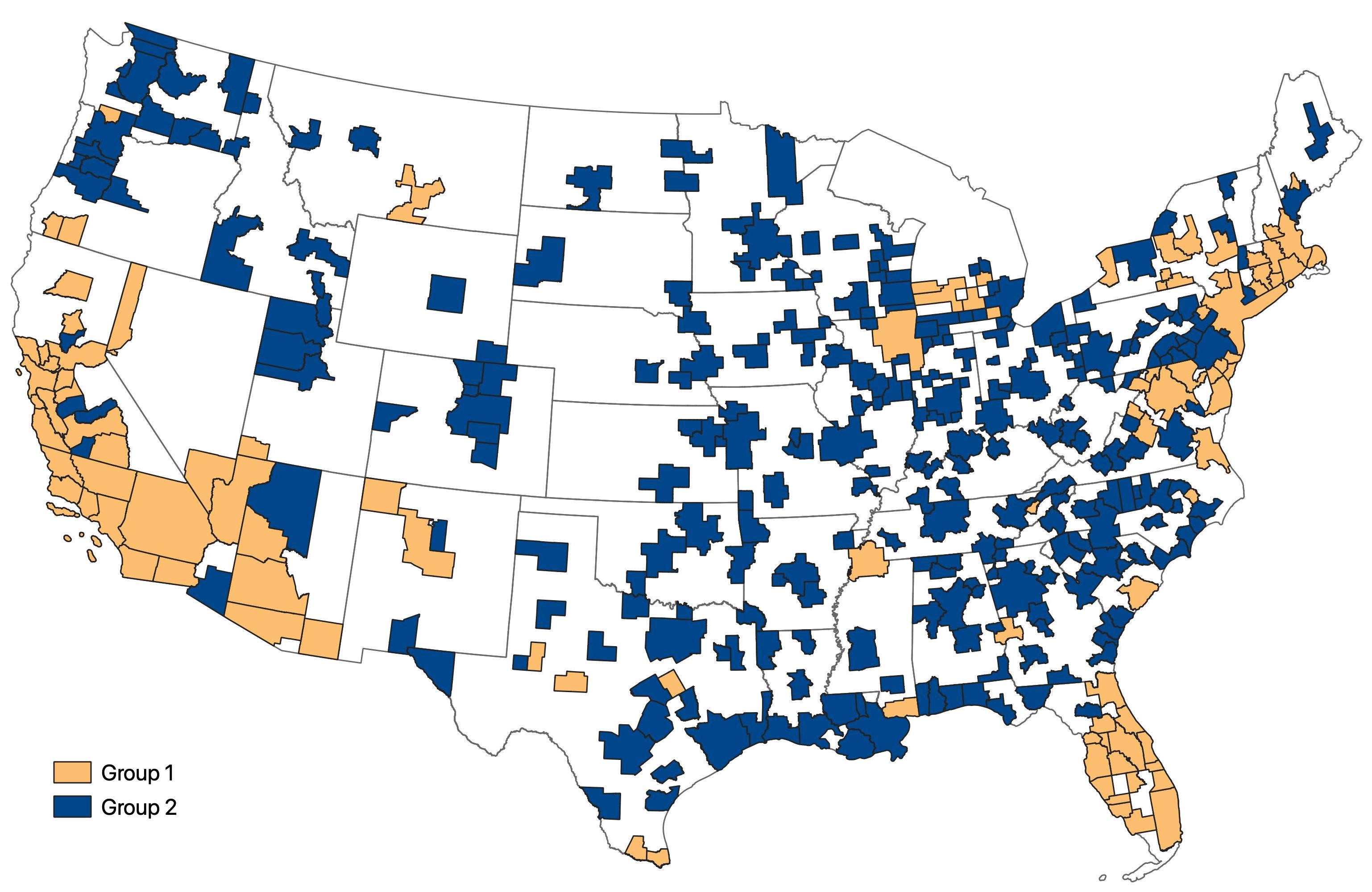}
\caption{Group classification result 1988Q1-2014Q4}
\label{fig:after}
\end{figure}

By setting $\varsigma _{N}=N^{-2}$ for the STK algorithm as in the
simulations, we obtain the estimated prior- and post-break numbers of groups
given by $\hat{K}^{(1)}=6$ and $\hat{K}^{(2)}=2,$ respectively. As for the
group structure, Figures \ref{fig:before} and \ref{fig:after} use six and
two colors to show the classification results for the 377 MSAs during 1975Q3
to 1987Q4 and 1988Q1 to 2014Q4, respectively. Table \ref{tab:emp_pool}
reports the pooled regression results for the full sample in column (1), the
subsample before the estimated break point in column (2), and the subsample
after the estimated break point in column (3). All the slope estimators are
bias-corrected. The pooled regression results in Table \ref{tab:emp_pool}
show that real income growth has positive and significant effect on 
house prices. Comparing the two subsamples before and after the estimated
break, we observe that, with a 1 percentage increase in real income
growth, the real house price growth rate will increase 0.09 percentage
before 1988, which is 0.02 percentage points higher than that after 1988.
The slope estimates for the lagged term are similar for the two subsamples.

\begin{table}[h]
\caption{Results for the pooled regressions}
\label{tab:emp_pool}\centering
\begin{tabular}{cccc}
\toprule \toprule & Pooled (full sample) & Pooled ($1975Q3-1987Q4$) & Pooled
($1988Q1-2014Q4$) \\ 
& $(1)$ & $(2)$ & $(3)$ \\ 
\midrule $ginc_{it}$ & $0.1021^{***}$ & $0.0904^{***}$ & $0.0702^{***}$ \\ 
& (0.0067) & (0.0119) & (0.0065) \\ 
$ginc_{i,t-1}$ & $0.0590^{***}$ & $0.0392^{***}$ & $0.0401^{***}$ \\ 
& (0.0067) & (0.0117) & (0.0066) \\ 
\#individuals & 377 & 377 & 377 \\ 
 \bottomrule
\multicolumn{4}{p{16cm}}{$Note$: Column (1) reports the pooled regression
results for the full sample. Columns (2) and (3) report the pooled
regression results for the subsamples before and after the estimated break
point, respectively. Slope estimators are all bias-corrected. Values in
parentheses are standard errors and $^{***}$ indicates significance at $1\%$
level.}%
\end{tabular}%
\end{table}

To examine the difference for each of the 6 estimated groups before the
break, Table \ref{tab:emp_post_before} reports the post-classification
regression results for each estimated group before the estimated break. Even
though the effects of real income growth are positive for all estimated
groups, they differ vastly across groups. The effect of real income
growth for Group 2 is the highest, followed by Groups 5 and 3, and the
effects of real income growth on real house prices in all of these three
groups are higher than 0.15 percent. In contrast, the effects of real
income growth for the remaining three groups, viz., Groups 1, 4, and 6, are
less than 0.07 percent. Similarly, Table \ref{tab:emp_post_after} reports
the post-classification regression results for each estimated group after
the estimated break. The estimated slope coefficients for both groups are
statistically significant. Especially, during 1988Q1-2014Q1, the slope
estimator for the lagged term in the first group is much higher than that
for the second group.

We also applied the C-Lasso algorithm of \cite{su2016identifying} to estimate
the group structure before and after the estimated break point. The C-Lasso
approach in conjuction with IC detects 2 groups both before and after the
break. In view of the six groups detected by our present algorithm, we conjecture that the difference may
due to the smaller time periods before the break.

\begin{table}[h]
\caption{Results for the post-classification regressions before the break}
\label{tab:emp_post_before}\centering
\begin{tabular}{ccccccc}
\toprule \toprule & Group 1 & Group 2 & Group 3 & Group 4 & Group 5 & Group 6
\\ 
& $(1)$ & $(2)$ & $(3)$ & $(4)$ & $(5)$ & $(6)$ \\ 
\midrule $ginc_{it}$ & $0.0301$ & $0.3169^{***}$ & $0.1522^{***}$ & $0.0168$
& $0.1877^{**}$ & $0.0661^{***}$ \\ 
& (0.0345) & (0.0292) & (0.0408) & (0.0561) & (0.0775) & (0.0153) \\ 
$ginc_{i,t-1}$ & $0.1217^{***}$ & $-0.0191$ & $-0.0089$ & $-0.0298$ & $%
-0.1269^{*}$ & $0.0331^{***}$ \\ 
& (0.0348) & (0.0288) & (0.0407) & (0.0560) & (0.0754) & (0.0151) \\ 
\#individuals & 60 & 92 & 35 & 12 & 36 & 142 \\ 
 \bottomrule
\multicolumn{7}{p{14cm}}{$Note$: Each column reports the regression results
for each estimated group during 1977Q3-1987Q4. Slope estimators are all
bias-corrected. Values in parentheses are standard errors. $^{***}$, $^{**}$%
, and $^{*}$ indicate significance at $1\%$ level, $5\%$ level, $10\%$
level, respectively.}%
\end{tabular}%
\end{table}

\begin{table}[h]
\caption{Results for the post-classification regressions after the break}
\label{tab:emp_post_after}\centering
\begin{tabular}{ccc}
\toprule \toprule & Group 1 & Group 2 \\ 
& $(1)$ & $(2)$ \\ 
\midrule $ginc_{it}$ & $0.0670^{***}$ & $0.0714^{***}$ \\ 
& (0.0130) & (0.0079) \\ 
$ginc_{i,t-1}$ & $0.0870^{***}$ & $0.0275^{***}$ \\ 
& (0.0135) & (0.0079) \\ 
\#individuals & 103 & 274 \\ 
\bottomrule
\multicolumn{3}{p{8cm}}{$Note$: Each column reports the regression results
for each estimated group during 1988Q1-2014Q4. Slope estimators are all
bias-corrected. Values in parentheses are standard errors. $^{***}$
indicates significance at $1\%$ level.}%
\end{tabular}%
\end{table}

\section{Conclusion}

\label{sec:concl} This paper considers a linear panel data model with interactive fixed effects
and two-way heterogeneity such that the heterogeneity across individuals is
captured by latent group structures and the heterogeneity across time is
captured by an unknown structural break. We allow the model to have
different group numbers, or different group memberships, or just changes in
the slope coefficients for some specific groups before and after the break.
To estimate the unknown structural break, the number of groups and group
memberships before and after the break point, we propose an estimation
algorithm with initial nuclear-norm-regularized estimates, followed by row-
and column-wise linear regressions. Then, the break point estimator is
obtained by binary segmentation and the group structure together with the
number of groups are estimated simultaneously using a sequential testing K-means
algorithm. We show that the structural break estimator, the group number
estimators, and the group membership estimators before and after the break
point are all consistent, and the final post-classification slope
coefficient estimators enjoy the oracle property.

There are several interesting topics for further research. First, even
though we discuss a possible test for the existence of a break in the panel
data models with latent group structures, we have not fully worked out the
asymptotic theory, a challenge that deserves separate treatment. Second, 
we assume the presence of a single break in the
data and it is interesting to extend our theory to allow for multiple
breaks. Third, the present treatment rules out both unit-root-type nonstationarity and
nonstochastic trending nonstationarity and it is interesting to extend our
theory to allow for nonstationarity. We will pursue these topics in future
research.

{\small {\ 
\bibliographystyle{apalike}
\bibliography{chapter2}
}} \newpage

{\small 
\appendix%
\linespread{1.2}%
\small%
}

{\small \setcounter{footnote}{0}\setcounter{page}{1}\setlength{%
\baselineskip}{16pt}{} }

\begin{center}
{\small {\Large Online Supplement for} }

{\small {\Large \textquotedblleft Panel Data Models with Time-Varying Latent
Group Structures\textquotedblright } }

{\small $\medskip $ }

{\small Yiren Wang$^{a}$, Peter C.B. Phillips$^{b}$ and Liangjun Su$^{c}$ }

{\small $^{a}$School of Economics, Singapore Management University,
Singapore }

{\small $^{b}$Yale University, University of Auckland,  Singapore Management University}

{\small $^{c}$School of Economics and Management, Tsinghua University, China 
}

{\small \ \ \ \ \ \ \ \ \ }
\end{center}

{\small \noindent This online supplement has five sections.
Section \ref{sec:proof_main} contains the proofs of the main results by
calling upon some technical lemmas in Sections \ref{sec:lem} and \ref%
{sec:proof_IFE}. Section \ref{sec:lem} states and proves the technical
lemmas used in Section \ref{sec:proof_main}. Section \ref{sec:panel_IFE}
contains the estimation procedure for the panel model with interactive fixed
effects (IFEs) and slope heterogeneity, and proposes the test statistics for
the slope homogeneity. Section \ref{sec:proof_IFE} shows the uniform
asymptotic theories for the slope estimators and test statistics proposed in
Section \ref{sec:panel_IFE}. Section \ref{sec:algorihm} provides details
of the algorithm for the nuclear norm regularized regressions.}

\section{\protect\small Proofs of the Main Results}

{\small \label{sec:proof_main} }

\subsection{{\protect\small Proof of Lemma \protect\ref{Lem:idfct}}}

{\small (i) Recall that $\mathcal{G}_{j}^{(\ell )}=\{G_{1,j}^{(\ell
)},\cdots ,G_{K_{\ell },j}^{(\ell )}\}$. For the special case when $\mathcal{%
G}_{j}^{(1)}=\mathcal{G}_{j}^{(2)}$ and $\alpha _{k,j}^{(1)}=\mu \alpha
_{k,j}^{(2)}$ such that $\mu $ is a constant, the group structure does not
change, the relative break size is the same for all groups, and $r_{j}=1$.
Except for this case, below we will show that $r_{j}=2$. }

{\small Let $A_{j,i}^{\left( \ell \right) }=\sum_{k=1}^{K_{\ell }}\alpha
_{k,j}^{(\ell )}\mathbf{1}\{i\in G_{k,j}^{(\ell )}\},$ $%
A_{j,i}=(A_{j,i}^{(1)},A_{j,i}^{(2)})^{\prime }$ and $A_{j}=\left(
A_{j,1},\cdots ,A_{j,N}\right) ^{\prime }\in \mathbb{R}^{N\times 2}$. Define
the $2\times 2$ symmetric matrix $B_{j}=A_{j}^{\prime }A_{j}.$ Let $B_{j}^{%
\frac{1}{2}}$ be the symmetric square root of $B_{j}.$ By the singular value
decomposition (SVD), $B_{j}^{\frac{1}{2}}%
\begin{bmatrix}
\sqrt{\tau _{T}} & 0 \\ 
0 & \sqrt{1-\tau _{T}}%
\end{bmatrix}%
$ $=L_{j}S_{j}R_{j}^{\prime },$ where $L_{j}^{\prime }L_{j}=R_{j}^{\prime
}R_{j}=I_{2}$ and $S_{j}$ is diagonal. Then 
\begin{align*}
\Theta _{j}^{0}& =\left[ 
\begin{array}{cc}
A_{j,1}^{\left( 1\right) }\iota _{T_{1}}^{\prime } & A_{j,1}^{\left(
2\right) }\iota _{T-T_{1}}^{\prime } \\ 
\vdots  & \vdots  \\ 
A_{j,N}^{\left( 1\right) }\iota _{T_{1}}^{\prime } & A_{j,N}^{\left(
2\right) }\iota _{T-T_{1}}^{\prime }%
\end{array}%
\right] =A_{j}%
\begin{bmatrix}
\iota _{T_{1}} & \mathbf{0}_{T_{1}} \\ 
\mathbf{0}_{T-T_{1}} & \iota _{T-T_{1}}%
\end{bmatrix}%
^{\prime } \\
& =A_{j}B_{j}^{-1/2}L_{j}S_{j}R_{j}^{\prime }%
\begin{bmatrix}
\frac{1}{\sqrt{\tau _{T}}} & 0 \\ 
0 & \frac{1}{\sqrt{1-\tau _{T}}}%
\end{bmatrix}%
\begin{bmatrix}
\iota _{T_{1}} & \mathbf{0}_{T_{1}} \\ 
\mathbf{0}_{T-T_{1}} & \iota _{T-T_{1}}%
\end{bmatrix}%
^{\prime } \\
& =A_{j}B_{j}^{-\frac{1}{2}}L_{j}S_{j}R_{j}^{\prime }%
\begin{bmatrix}
\frac{1}{\sqrt{\tau _{T}}}\iota _{T_{1}} & \mathbf{0}_{T_{1}} \\ 
\mathbf{0}_{T-T_{1}} & \frac{1}{\sqrt{1-\tau _{T}}}\iota _{T-T_{1}}%
\end{bmatrix}%
^{\prime } \\
& =\left( A_{j}B_{j}^{-\frac{1}{2}}L_{j}\right) \left( \sqrt{T}S_{j}\right)
\left\{ \frac{1}{\sqrt{T}}R_{j}^{\prime }%
\begin{bmatrix}
\frac{1}{\sqrt{\tau _{T}}}\iota _{T_{1}} & \mathbf{0}_{T_{1}} \\ 
\mathbf{0}_{T-T_{1}} & \frac{1}{\sqrt{1-\tau _{T}}}\iota _{T-T_{1}}%
\end{bmatrix}%
^{\prime }\right\} :=\mathcal{U}_{j}\Sigma _{j}\mathcal{V}_{j}^{\prime },
\end{align*}%
where $\mathcal{U}_{j}=A_{j}B_{j}^{-\frac{1}{2}}L_{j}\in \mathbb{R}^{N\times
2}$, $\mathcal{V}_{j}=\frac{1}{\sqrt{T}}%
\begin{bmatrix}
\frac{1}{\sqrt{\tau _{T}}}\iota _{T_{1}} & \mathbf{0}_{T_{1}} \\ 
\mathbf{0}_{T-T_{1}} & \frac{1}{\sqrt{1-\tau _{T}}}\iota _{T-T_{1}}%
\end{bmatrix}%
R_{j}\in \mathbb{R}^{T\times 2}$, and $\Sigma _{j}=\sqrt{T}S_{j}\in \mathbb{R%
}^{2\times 2}.$ It is easy to verify that 
\begin{eqnarray*}
\mathcal{U}_{j}^{\prime }\mathcal{U}_{j} &=&L_{j}^{\prime }B_{j}^{-\frac{1}{2%
}}A_{j}^{\prime }A_{j}B_{j}^{-\frac{1}{2}}L_{j}=L_{j}^{\prime }B_{j}^{-\frac{%
1}{2}}B_{j}B_{j}^{-\frac{1}{2}}L_{j}=L_{j}^{\prime }L_{j}=I_{2}\quad \text{%
and} \\
\mathcal{V}_{j}^{\prime }\mathcal{V}_{j} &=&R_{j}^{\prime }R_{j}=I_{2}.
\end{eqnarray*}%
Now, let $U_{j}=\mathcal{U}_{j}\Sigma _{j}/\sqrt{T}$ and $V_{j}=\sqrt{T}%
\mathcal{V}_{j}.$ We have $\Theta _{j}^{0}=U_{j}V_{j}^{\top }$ and $V_{j}=%
\begin{bmatrix}
\frac{1}{\sqrt{\tau _{T}}}\iota _{T_{1}} & \mathbf{0}_{T_{1}} \\ 
\mathbf{0}_{T-T_{1}} & \frac{1}{\sqrt{1-\tau _{T}}}\iota _{T-T_{1}}%
\end{bmatrix}%
R_{j}=D_{j}R_{j}.$ This proves (i). }

{\small (ii) Given $R_{j}$ is an orthonormal matrix, this follows from (i)
automatically. $\blacksquare $ }

\subsection{{\protect\small Proof of Theorem \protect\ref{Thm1}}}

\subsubsection{\protect\small Proof of Statement (i).}

{\small Let $\mathcal{R}(C_{1}):=\left\{ \{\Delta _{\Theta _{j}}\}_{j\in
\lbrack p]\cup \{0\}}\in \mathbb{R}^{N\times T}:\sum_{j\in \lbrack p]\cup
\{0\}}\left\Vert \mathcal{P}_{j}^{\bot }(\Delta _{\Theta _{j}})\right\Vert
_{\ast }\leq C_{1}\sum_{j\in \lbrack p]\cup \{0\}}\left\Vert \mathcal{P}%
_{j}(\Delta _{\Theta _{j}})\right\Vert _{\ast }\right\} $. By Lemma \ref%
{Lem:RS}, we notice that 
\begin{equation*}
\mathbb{P}\left\{ \left\{ \tilde{\Delta}_{\Theta _{j}}\right\} _{j\in
\lbrack p]\cup \{0\}}\in \mathcal{R}(3)\right\} \rightarrow 1.
\end{equation*}%
Recall from (\ref{RS}) that 
\begin{equation*}
\mathcal{R}(C_{1},C_{2}):=\bigg\{(\{\Delta _{\Theta _{j}}\}_{j\in \lbrack
p]\cup \{0\}}):\sum_{j\in \lbrack p]\cup \{0\}}\left\Vert \mathcal{P}%
_{j}^{\bot }(\Delta _{\Theta _{j}})\right\Vert _{\ast }\leq C_{1}\sum_{j\in
\lbrack p]\cup \{0\}}\left\Vert \mathcal{P}_{j}(\Delta _{\Theta
_{j}})\right\Vert _{\ast }, \sum_{j\in \lbrack p]\cup \{0\}}\left\Vert
\Theta _{j}\right\Vert _{{}}^{2}\geq C_{2}\sqrt{NT}\bigg\}.
\end{equation*}%
When $\{\tilde{\Delta}_{\Theta _{j}}\}_{j\in \lbrack p]\cup \{0\}}\in 
\mathcal{R}(3)$ and $\{\tilde{\Delta}_{\Theta _{j}}\}_{j\in \lbrack p]\cup
\{0\}}\notin \mathcal{R}(3,C_{2})$, we have $\sum_{j\in \lbrack p]\cup
\{0\}}\left\Vert \tilde{\Delta}_{\Theta _{j}}\right\Vert ^{2}<C_{2}\sqrt{NT}%
, $ which gives 
\begin{equation*}
\frac{1}{\sqrt{NT}}\left\Vert \tilde{\Delta}_{\Theta _{j}}\right\Vert <\frac{%
C_{2}}{\sqrt{N\wedge T}},\quad \forall j\in \lbrack p]\cup \{0\}.
\end{equation*}%
So it suffices to focus on the case that $\{\tilde{\Delta}_{\Theta
_{j}}\}_{j\in \lbrack p]\cup \{0\}}\in \mathcal{R}\left( 3,C_{2}\right) $. }

{\small Define the event 
\begin{equation*}
\mathscr{A}_{1,N}(c_{3})=\left\{ \left\Vert E\right\Vert _{op}\leq c_{3}(%
\sqrt{N}\vee \sqrt{T\log T}),\left\Vert X_{j}\odot E\right\Vert _{op}\leq
c_{3}(\sqrt{N}\vee \sqrt{T\log T}),\forall j\in \lbrack p]\right\} 
\end{equation*}%
for some positive constant $c_{3}$. By Lemma \ref{Lem:score op}, we have $%
\mathbb{P}(\mathscr{A}_{1,N}^{c}(c_{3}))\leq \epsilon $ for any $\epsilon >0$%
. By the definition of $\{\tilde{\Theta}_{j}\}_{j\in \lbrack p]\cup \{0\}}$
in (\ref{obj}), we have 
\begin{align*}
\sum_{j\in \lbrack p]\cup \{0\}}\nu _{j}\left( \left\Vert \Theta
_{j}^{0}\right\Vert _{\ast }-\left\Vert \tilde{\Theta}_{j}\right\Vert _{\ast
}\right) & \geq \frac{1}{NT}\left\Vert Y-\tilde{\Theta}_{0}-\sum_{j\in
\lbrack p]}X_{j}\odot \tilde{\Theta}_{j}\right\Vert ^{2}-\frac{1}{NT}%
\left\Vert Y-\Theta _{0}^{0}-\sum_{j\in \lbrack p]}X_{j}\odot \Theta
_{j}^{0}\right\Vert ^{2} \\
& =\frac{1}{NT}\left\Vert \tilde{\Delta}_{\Theta _{0}}+\sum_{j\in \lbrack p]}%
\tilde{\Delta}_{\Theta _{j}}\odot X_{j}\right\Vert ^{2}-\frac{2}{NT}tr(%
\tilde{\Delta}_{\Theta _{0}}^{\prime }E)-\frac{2}{NT}\sum_{j\in \lbrack p]}tr%
\left[ \tilde{\Delta}_{\Theta _{0}}^{\prime }\left( E\odot X_{j}\right) %
\right] .
\end{align*}%
Then conditioning on the event $\mathscr{A}_{1,N}(c_{3})$, we have 
\begin{align}
& \frac{1}{NT}\left\Vert \tilde{\Delta}_{\Theta _{0}}+\sum_{j\in \lbrack p]}%
\tilde{\Delta}_{\Theta _{j}}\odot X_{j}\right\Vert ^{2}  \notag  \label{A.1}
\\
& \leq \frac{2}{NT}tr\left( \tilde{\Delta}_{\Theta _{0}}^{\prime }E\right) +%
\frac{2}{NT}\sum_{j\in \lbrack p]}tr\left[ \tilde{\Delta}_{\Theta
_{0}}^{\prime }\left( E\odot X_{j}\right) \right] +\sum_{j\in \lbrack p]\cup
\{0\}}\nu _{j}\left( \left\Vert \Theta _{j}^{0}\right\Vert _{\ast
}-\left\Vert \tilde{\Theta}_{j}\right\Vert _{\ast }\right)   \notag \\
& \leq 2c_{3}\frac{\sqrt{N}\vee \sqrt{T\log T}}{NT}\sum_{j\in \lbrack p]\cup
\{0\}}\left\Vert \tilde{\Delta}_{\Theta _{j}}\right\Vert _{\ast }+\sum_{j\in
\lbrack p]\cup \{0\}}\nu _{j}\left( \left\Vert \mathcal{P}_{j}(\tilde{\Delta}%
_{\Theta _{j}})\right\Vert _{\ast }-\left\Vert \mathcal{P}_{j}^{\bot }(%
\tilde{\Delta}_{\Theta _{j}})\right\Vert _{\ast }\right)   \notag \\
& =2c_{3}\frac{\sqrt{N}\vee \sqrt{T\log T}}{NT}\sum_{j\in \lbrack p]\cup
\{0\}}\left( \left\Vert \mathcal{P}_{j}(\tilde{\Delta}_{\Theta
_{j}})\right\Vert _{\ast }+\left\Vert \mathcal{P}_{j}^{\bot }(\tilde{\Delta}%
_{\Theta _{j}})\right\Vert _{\ast }\right)   \notag \\
& +4c_{3}\frac{\sqrt{N}\vee \sqrt{T\log T}}{NT}\sum_{j\in \lbrack p]\cup
\{0\}}\left( \left\Vert \mathcal{P}_{j}(\tilde{\Delta}_{\Theta
_{j}})\right\Vert _{\ast }-\left\Vert \mathcal{P}_{j}^{\bot }(\tilde{\Delta}%
_{\Theta _{j}})\right\Vert _{\ast }\right)   \notag \\
& =6c_{3}\frac{\sqrt{N}\vee \sqrt{T\log T}}{NT}\sum_{j\in \lbrack p]\cup
\{0\}}\left\Vert \mathcal{P}_{j}(\tilde{\Delta}_{\Theta _{j}})\right\Vert
_{\ast }-2c_{3}\frac{\sqrt{N}\vee \sqrt{T\log T}}{NT}\sum_{j\in \lbrack
p]\cup \{0\}}\left\Vert \mathcal{P}_{j}^{\bot }(\tilde{\Delta}_{\Theta
_{j}})\right\Vert _{\ast }  \notag \\
& \leq 6c_{3}\frac{\sqrt{N}\vee \sqrt{T\log T}}{NT}\sum_{j\in \lbrack p]\cup
\{0\}}\left\Vert \mathcal{P}_{j}(\tilde{\Delta}_{\Theta _{j}})\right\Vert
_{\ast },
\end{align}%
where the second inequality holds by the definition of event $\mathscr{A}%
_{1}(c_{3}),$ the fact that $\left\vert \text{tr}\left( AB\right)
\right\vert \leq \left\Vert A\right\Vert _{\ast }\left\Vert B\right\Vert
_{op},$ and (\ref{L1.4}), the first equality holds by the fact that $%
\left\Vert \tilde{\Delta}_{\Theta _{j}}\right\Vert _{\ast }=\left\Vert 
\mathcal{P}_{j}(\tilde{\Delta}_{\Theta _{j}})\right\Vert _{\ast }+\left\Vert 
\mathcal{P}_{j}^{\bot }(\tilde{\Delta}_{\Theta _{j}})\right\Vert _{\ast }$
(see, e.g., Lemma D.2(i) in \cite{chernozhukov2019inference}) and that $\nu
_{j}=\frac{4c_{3}\left( \sqrt{N}\vee \sqrt{T\log T}\right) }{NT}$. It
follows that 
\begin{align*}
C_{3}\sum_{j\in \lbrack p]\cup \{0\}}\left\Vert \tilde{\Delta}_{\Theta
_{j}}\right\Vert _{{}}^{2}& \leq \left\Vert \tilde{\Delta}_{\Theta
_{0}}+\sum_{j\in \lbrack p]}\tilde{\Delta}_{\Theta _{j}}\odot
X_{j}\right\Vert ^{2}+C_{4}\left( N+T\right)  \\
& \leq 6c_{3}\left( \sqrt{N}\vee \sqrt{T\log T}\right) \sum_{j\in \lbrack
p]\cup \{0\}}\left\Vert \mathcal{P}_{j}(\tilde{\Delta}_{\Theta
_{j}})\right\Vert _{\ast }+C_{4}\left( N+T\right)  \\
& \leq 12\bar{r}c_{3}\left( \sqrt{N}\vee \sqrt{T\log T}\right) \sum_{j\in
\lbrack p]\cup \{0\}}\left\Vert \mathcal{P}_{j}(\tilde{\Delta}_{\Theta
_{j}})\right\Vert +C_{4}\left( N+T\right)  \\
& \leq 12\bar{r}c_{3}\left( \sqrt{N}\vee \sqrt{T\log T}\right) \sum_{j\in
\lbrack p]\cup \{0\}}\left\Vert \tilde{\Delta}_{\Theta _{j}}\right\Vert
+C_{4}\left( N+T\right)  \\
& \leq 12\bar{r}c_{3}\left( \sqrt{N}\vee \sqrt{T\log T}\right) \sqrt{p+1}%
\sqrt{\sum_{j\in \lbrack p]\cup \{0\}}\left\Vert \tilde{\Delta}_{\Theta
_{j}}\right\Vert ^{2}}+C_{4}\left( N+T\right) ,
\end{align*}%
where the first inequality holds by Assumption \ref{ass:4}, the second
inequality follows by (\ref{A.1}), the third inequality is by the fact that $%
\left\Vert \mathcal{P}_{j}(\tilde{\Delta}_{\Theta _{j}})\right\Vert _{\ast
}\leq rank(\mathcal{P}_{j}(\tilde{\Delta}_{\Theta _{j}}))\left\Vert \mathcal{%
P}_{j}(\tilde{\Delta}_{\Theta _{j}})\right\Vert $ with $rank(\mathcal{P}_{j}(%
\tilde{\Delta}_{\Theta _{j}}))\leq 2\bar{r}$ by Lemma D.2(ii) in \cite%
{chernozhukov2019inference}, the fourth inequality is by the fact that $%
\left\Vert \tilde{\Delta}_{\Theta _{j}}\right\Vert =\left\Vert \mathcal{P}%
_{j}(\tilde{\Delta}_{\Theta _{j}})\right\Vert +\left\Vert \mathcal{P}%
_{j}^{\bot }(\tilde{\Delta}_{\Theta _{j}})\right\Vert $ (see, e.g., Lemma
D.2(ii) in \cite{chernozhukov2019inference}), and the last inequality holds
by Jensen inequality. Consequently, we can conclude that 
\begin{equation*}
\sum_{j\in \lbrack p]\cup \{0\}}\left\Vert \tilde{\Delta}_{\Theta
_{j}}\right\Vert ^{2}=O_{p}\left( N\vee \left( T\log T\right) \right) .
\end{equation*}%
In addition, $\max_{k\in \lbrack r_{j}]}\left\vert \tilde{\sigma}%
_{k,j}-\sigma _{k,j}\right\vert =O_{p}(\eta _{N,1})$ by the Weyl's
inequality with $\eta _{N,1}=\frac{\sqrt{\log T}}{\sqrt{N\wedge T}}$. }

{\small Now, we show the convergence rate for the singular vector estimates.
For $\forall j\in \lbrack p]\cup \{0\}$, let $\tilde{D}_{j}=\frac{1}{NT}%
\tilde{\Theta}_{j}^{\prime }\tilde{\Theta}_{j}=\hat{\tilde{\mathcal{V}}}_{j}%
\hat{\tilde{\Sigma}}_{j}\hat{\tilde{\mathcal{V}}}_{j}^{\prime }$, and $%
D_{j}^{0}=\frac{1}{NT}\Theta _{j}^{0\prime }\Theta _{j}^{0}=\mathcal{V}%
_{j}^{0}\Sigma _{j}^{0}\mathcal{V}_{j}^{0\prime }$. Define the event 
\begin{equation*}
\mathscr{A}_{2,N}(M)=\left\{ \frac{1}{\sqrt{NT}}\left\Vert \tilde{\Theta}%
_{j}-\Theta _{j}^{0}\right\Vert _{{}}\leq M\eta _{N,1},\quad \forall j\in
\left\{ 0,\cdots ,p\right\} \right\}
\end{equation*}%
for a large enough constant $M$. By the above analyses, we have $\mathbb{P}%
\left( \mathscr{A}_{2,N}^{c}(M)\right) \leq \epsilon $ for any $\epsilon >0$%
. On the event $\mathscr{A}_{2,N}(M)$, we observe that 
\begin{equation*}
\left\Vert \tilde{D}_{j}-D_{j}^{0}\right\Vert \leq \frac{1}{NT}\left(
\left\Vert \tilde{\Theta}_{j}\right\Vert +\left\Vert \Theta
_{j}^{0}\right\Vert \right) \left\Vert \tilde{\Theta}_{j}-\Theta
_{j}^{0}\right\Vert \leq 2M^{2}\eta _{N,1}.
\end{equation*}%
With Lemma C.1 of \cite{su2020strong} and Davis-Kahan sin$\Theta $ theorem
in \cite{yu2015useful}, we are ready to show that for some orthogonal matrix 
$O_{j},$ 
\begin{align}
\left\Vert \mathcal{V}_{j}^{0}-\hat{\tilde{\mathcal{V}}}_{j}O_{j}\right\Vert
& \leq \sqrt{r_{j}}\left\Vert \mathcal{V}_{j}^{0}-\hat{\tilde{\mathcal{V}}}%
_{j}O_{j}\right\Vert _{op}\leq \sqrt{r_{j}}\frac{2\sqrt{2}M^{2}\eta _{N,1}}{%
\sigma _{K_{j},1}^{2}-2M^{2}\eta _{N,1}}  \notag  \label{A.2} \\
& \leq \sqrt{r_{j}}\frac{2\sqrt{2}M^{2}\eta _{N,1}}{c_{\sigma
}^{2}-2M^{2}\eta _{N,1}}\leq \sqrt{r_{j}}\frac{2\sqrt{2}M^{2}\eta _{N,1}}{%
C_{6}c_{\sigma }^{2}}\leq C_{7}\eta _{N,1}
\end{align}%
for $C_{7}=\frac{2\sqrt{2}M^{2}\sqrt{\bar{r}}}{C_{6}c_{\sigma }^{2}}$, where
the second inequality in line two is due to the fact that $\eta _{N,1}$ is
sufficiently small and $C_{6}$ is some positive constant. }

{\small Then $\left\Vert V_{j}^{0}-\tilde{V}_{j}O_{j}\right\Vert \leq C_{7}%
\sqrt{T}\eta _{N,1}$ by the definition of $\tilde{V}_{j}$ and $V_{j}$.
Together with the fact that $\mathbb{P}(\mathscr{A}_{2,N}^{c}(M))\rightarrow
0$ by Theorem \ref{Thm1}(i), it implies $\frac{1}{\sqrt{T}}\left\Vert
V_{j}^{0}-\tilde{V}_{j}O_{j}\right\Vert =O_{p}\left( T\eta _{N,1}\right) .$ }

\subsubsection{\protect\small Proof of Statement (ii).}

{\small Define 
\begin{align*}
& u_{i}^{0}=[u_{i,0}^{0\prime },\cdots ,u_{i,p}^{0\prime }]^{\prime },\quad 
\dot{\Delta}_{i,j}=O_{j}^{\prime }\dot{u}_{i,j}-u_{i,j}^{0},\quad \dot{\Delta%
}_{i,u}=[\dot{\Delta}_{i,0}^{\prime },\cdots ,\dot{\Delta}_{i,p}^{\prime
}]^{\prime }, \\
& \tilde{\phi}_{it}=\left[ \left( O_{0}^{\prime }\tilde{v}_{t,0}\right)
^{\prime },\left( O_{1}^{\prime }\tilde{v}_{t,1}X_{1,it}\right) ^{\prime
},\cdots ,\left( O_{p}^{\prime }\tilde{v}_{t,p}X_{p,it}\right) ^{\prime }%
\right] ^{\prime },\quad \text{and}\quad \tilde{\Phi}_{i}=\frac{1}{T}%
\sum_{t=1}^{T}\tilde{\phi}_{it}\tilde{\phi}_{it}^{\prime }.
\end{align*}%
Let $\tilde{Y}_{it}:=Y_{it}-\left( O_{0}u_{i,0}^{0}\right) ^{\prime }\tilde{v%
}_{t,0}-\sum_{j=1}^{p}\left( O_{j}u_{i,j}^{0}\right) ^{\prime }\tilde{v}%
_{t,j}X_{j,it}.$ By the definition of $\left\{ \dot{u}_{i,j}\right\} $ in (%
\ref{obj2}), we have 
\begin{align*}
0& \geq \frac{1}{T}\sum_{t\in \lbrack T]}\left( Y_{it}-\dot{u}_{i,0}^{\prime
}\tilde{v}_{t,0}-\sum_{j=1}^{p}\dot{u}_{i,j}^{\prime }\tilde{v}%
_{t,j}X_{j,it}\right) ^{2}-\frac{1}{T}\sum_{t\in \lbrack T]}\tilde{Y}%
_{it}^{2} \\
& =\frac{1}{T}\sum_{t\in \lbrack T]}\bigg(\tilde{Y}_{it}-\left( \dot{u}%
_{i,0}-O_{0}u_{i,0}^{0}\right) ^{\prime }\tilde{v}_{t,0}-\sum_{j\in \lbrack
p]}\left( \dot{u}_{i,j}-O_{j}u_{i,j}^{0}\right) ^{\prime }\tilde{v}%
_{t,j}X_{j,it}\bigg)^{2}-\frac{1}{T}\sum_{t\in \lbrack T]}\tilde{Y}_{it}^{2}
\\
& =\frac{1}{T}\sum_{t\in \lbrack T]}\left[ \left( \dot{\Delta}_{i,u}^{\prime
}\tilde{\phi}_{it}\right) ^{2}-2\left( \dot{\Delta}_{i,u}^{\prime }\tilde{%
\phi}_{it}\right) \left( Y_{it}-u_{i}^{0\prime }\tilde{\phi}_{it}\right) %
\right] ,
\end{align*}%
which implies 
\begin{align}
& \left\Vert \dot{\Delta}_{i,u}\right\Vert _{2}^{2}\lambda _{\min }\left( 
\frac{1}{T}\sum_{t\in \lbrack T]}\tilde{\phi}_{it}\tilde{\phi}_{it}^{\prime
}\right) \leq \frac{1}{T}\sum_{t\in \lbrack T]}\left( \dot{\Delta}%
_{i,u}^{\prime }\tilde{\phi}_{it}\right) ^{2}\leq \frac{2}{T}\sum_{t\in
\lbrack T]}\dot{\Delta}_{i,u}^{\prime }\tilde{\phi}_{it}\left(
Y_{it}-u_{i}^{0\prime }\tilde{\phi}_{it}\right)  \notag  \label{A.3} \\
& =\frac{2}{T}\sum_{t\in \lbrack T]}\dot{\Delta}_{i,u}^{\prime }\tilde{\phi}%
_{it}\left[ e_{it}-u_{i}^{0\prime }\left( \tilde{\phi}_{it}-\phi
_{it}^{0}\right) \right]  \notag \\
& =2\left\{ \frac{1}{T}\sum_{t\in \lbrack T]}\phi _{it}^{0}e_{it}\right\}
^{\prime }\dot{\Delta}_{i,u}+2\left\{ \frac{1}{T}\sum_{t\in \lbrack
T]}\left( \tilde{\phi}_{it}-\phi _{it}^{0}\right) e_{it}\right\} ^{\prime }%
\dot{\Delta}_{i,u}-\frac{2}{T}\sum_{t\in \lbrack T]}\tilde{\phi}%
_{it}^{\prime }\dot{\Delta}_{i,u}\left[ u_{i}^{0\prime }\left( \tilde{\phi}%
_{it}-\phi _{it}^{0}\right) \right]  \notag \\
& =:2G_{i,1}^{\prime }\dot{\Delta}_{i,u}+2G_{i,2}-2G_{i,3}.
\end{align}%
}

{\small We first deal with $G_{1,i}$. Conditional on $\mathscr{D}$, the
randomness in $G_{1,i}$ comes from $\left\{ e_{it},X_{it}\right\} _{t\in
\lbrack T]}$, which is the\ (conditional) strong mixing sequence by
Assumption \ref{ass:1}(iii). Besides, we observe that 
\begin{equation*}
\max_{i],t\in \lbrack T]}\left\Vert Var\left( \phi _{it}^{0}e_{it}\big|%
\mathscr{D}\right) \right\Vert _{{}}\lesssim \max_{i,t}\left[ \mathbb{E}%
\left( e_{it}^{2}\bigg|\mathscr{D}\right) +\sum_{j\in \lbrack p]}\mathbb{E}%
\left( X_{j,it}^{2}e_{it}^{2}\bigg|\mathscr{D}\right) \right] =O_{a.s.}(1)~
\end{equation*}%
by Lemma \ref{Lem:bounded u&v}(ii) and Assumption \ref{ass:1}(v). Following
similar arguments, we have 
\begin{align*}
& \max_{i,t}\sum_{s=t+1}^{T}\left\Vert Cov\left( \phi _{it}^{0}e_{it},\phi
_{is}^{0}e_{is}\big|\mathscr{D}\right) \right\Vert  \\
& \leq 8\max_{t}\sum_{s=t+1}^{T}\left[ \mathbb{E}\left( \left\Vert \phi
_{it}^{0}e_{it}\right\Vert _{2}^{q}\big|\mathscr{D}\right) \right] ^{1/q}%
\left[ \left\Vert \mathbb{E}\left( \phi _{is}^{0}e_{is}\right\Vert _{2}^{q}%
\big|\mathscr{D}\right) \right] ^{1/q}\left( \alpha (t-s)\right)
^{1-2/q}=O_{a.s.}(1),
\end{align*}%
where the first inequality is by the conditional Davydov's inequality that
says 
\begin{equation*}
\left\vert Cov\left[ a\left( x_{t}\right) ,a\left( x_{s}\right) |\mathscr{D}%
\right] \right\vert \leq 8\left[ \mathbb{E}[\left\Vert a\left( x_{t}\right)
\right\Vert ^{p}|\mathscr{D}]\right] ^{\frac{1}{p}}\left[ \mathbb{E}%
[\left\Vert a\left( x_{s}\right) \right\Vert ^{q}|\mathscr{D}]\right] ^{%
\frac{1}{q}}\alpha (t-s)^{\frac{1}{r}}
\end{equation*}%
for any conditional strong mixing sequence $\left( x_{t},t\in \lbrack
T]\right) $ with mixing coefficient $\alpha (\cdot )$ and $\frac{1}{p}+\frac{%
1}{q}+\frac{1}{r}=1$. See Lemma A.4 in \cite{su2013testing}. }

{\small Following this, for some constant $C_{8}$, we have 
\begin{equation*}
\max_{i,t}\left[ \left\Vert Var\left( \phi _{it}^{0}e_{it}\big|\mathscr{D}%
\right) \right\Vert +2\sum_{s=t+1}^{T}\left\Vert Cov\left( \phi
_{it}^{0}e_{it},\phi _{is}^{0}e_{is}\big|\mathscr{D}\right) \right\Vert %
\right] \leq C_{8}\text{ a.s.},
\end{equation*}%
and $\max_{i,t}\left\Vert \phi _{it}^{0}e_{it}\right\Vert _{\max }\leq
C_{8}(NT)^{1/q}$ by Lemma \ref{Lem:bounded u&v}(i) and Assumption \ref{ass:1}%
(iv). Define $\mathscr{A}_{3,N}(M)=\{\max_{i,t}\left\Vert \phi
_{it}^{0}e_{it}\right\Vert \leq M(NT)^{1/q}\}$ and $\mathscr{A}%
_{3,N,i}(M)=\{\max_{t}\left\Vert \phi _{it}^{0}e_{it}\right\Vert \leq
M(NT)^{1/q}\}$ for a large enough constant $M$. For a positive constant $%
C_{9}$, it yields that 
\begin{align*}
& \mathbb{P}\left( \max_{i}\frac{1}{T}\left\Vert \sum_{t\in \lbrack T]}\phi
_{it}^{0}e_{it}\right\Vert >C_{9}\sqrt{\frac{\log N}{T}}(NT)^{\frac{1}{q}%
}\right)  \\
& \leq \mathbb{P}\left( \max_{i}\frac{1}{T}\left\Vert \sum_{t\in \lbrack
T]}\phi _{it}^{0}e_{it}\right\Vert >C_{9}\sqrt{\frac{\log N}{T}}(NT)^{\frac{1%
}{q}},\mathscr{A}_{3,N}(M)\right) +\mathbb{P}\left( \mathscr{A}%
_{3,N}^{c}(M)\right)  \\
& \leq \sum_{i\in \lbrack N]}\mathbb{P}\left( \frac{1}{T}\left\Vert
\sum_{t\in \lbrack T]}\phi _{it}^{0}e_{it}\right\Vert >C_{9}\sqrt{\frac{\log
N}{T}}(NT)^{\frac{1}{q}},\mathscr{A}_{3,N}(M)\right) +\mathbb{P}\left( %
\mathscr{A}_{3,N}^{c}(M)\right)  \\
& \leq \sum_{i\in \lbrack N]}\mathbb{P}\left( \frac{1}{T}\left\Vert
\sum_{t\in \lbrack T]}\phi _{it}^{0}e_{it}\right\Vert >C_{9}\sqrt{\frac{\log
N}{T}}(NT)^{\frac{1}{q}},\mathscr{A}_{3,N,i}(M)\right) +\mathbb{P}\left( %
\mathscr{A}_{3,N}^{c}(M)\right)  \\
& \leq \sum_{i\in \lbrack N]}\exp \left( -\frac{c_{4}C_{9}^{2}T\log
N(NT)^{2/q}}{TC_{8}+C_{8}^{2}(NT)^{2/q}+C_{8}C_{9}(NT)^{2/q}\sqrt{T\log N}%
(\log T)^{2}}\right) +o(1)=o(1)
\end{align*}%
where the last inequality holds by Bernstein's inequality in Lemma \ref%
{Lem:Bern_mixing}(ii) and the fact that $\mathbb{P}\left( \mathscr{A}%
_{3,N}^{c}(M)\right) =o(1)$. It follows that 
\begin{equation}
\max_{i}\frac{\left\vert G_{i,1}^{\prime }\dot{\Delta}_{i,u}\right\vert }{%
\left\Vert \dot{\Delta}_{i,u}\right\Vert }\leq \max_{i}\left\Vert
G_{i,1}\right\Vert =O_{p}(\sqrt{(\log N)/T}(NT)^{\frac{1}{q}}).  \label{G1}
\end{equation}%
}

{\small For $G_{i,2}$, we notice that 
\begin{align}
\max_{i}\frac{\left\vert G_{i,2}\right\vert }{\left\Vert \dot{\Delta}%
_{i,u}\right\Vert }& =\max_{i}\frac{\left\vert \left\{ \frac{1}{T}\sum_{t\in
\lbrack T]}\left( \tilde{\phi}_{it}-\phi _{it}^{0}\right) e_{it}\right\}
^{\prime }\dot{\Delta}_{i,u}\right\vert }{\left\Vert \dot{\Delta}%
_{i,u}\right\Vert }\leq \max_{i}\left\Vert \frac{1}{T}\sum_{t\in \lbrack
T]}\left( \tilde{\phi}_{it}-\phi _{it}^{0}\right) e_{it}\right\Vert _{2} 
\notag \\
& \leq \max_{i}\sqrt{\frac{1}{T}\sum_{t\in \lbrack T]}\left\Vert \tilde{\phi}%
_{it}-\phi _{it}^{0}\right\Vert _{2}^{2}}\max_{i}\sqrt{\frac{1}{T}\sum_{t\in
\lbrack T]}\left\vert e_{it}\right\vert ^{2}}=O_{p}(\eta _{N,1}(NT)^{1/q}),
\label{G2}
\end{align}%
where the second inequality holds by Cauchy's inequality and the last
equality is by Lemma \ref{Lem:bounded u&v}(iv) and Assumption \ref{ass:1}%
(iv) }

{\small For $G_{i,3}$, we have 
\begin{align}
\max_{i}\frac{\left\vert G_{i,3}\right\vert }{\left\Vert \dot{\Delta}%
_{i,u}\right\Vert }& =\max_{i}\frac{\left\vert \frac{1}{T}\sum_{t\in \lbrack
T]}\tilde{\phi}_{it}^{\prime }\dot{\Delta}_{i,u}\left[ u_{i}^{0\prime
}\left( \tilde{\phi}_{it}-\phi _{it}^{0}\right) \right] \right\vert }{%
\left\Vert \dot{\Delta}_{i,u}\right\Vert }  \notag \\
& \leq \max_{i}\sqrt{\frac{1}{T}\sum_{t\in \lbrack T]}\left\Vert \tilde{\phi}%
_{it}\right\Vert ^{2}}\max_{i}\left\Vert u_{i}^{0}\right\Vert \max_{i,t}%
\sqrt{\frac{1}{T}\sum_{t\in \lbrack T]}\left\Vert \tilde{\phi}_{it}-\phi
_{it}^{0}\right\Vert ^{2}}=O_{p}(\eta _{N,1}(NT)^{1/q}),  \label{G3}
\end{align}%
where the inequality holds by Cauchy's inequality and the last line is by
Lemma \ref{Lem:bounded u&v}(i) and (iv). }

{\small Combining (\ref{A.3})-(\ref{G3}) and Lemma \ref{Lem:lambda_phi}
yields 
\begin{equation*}
\max_{i}\left\Vert \dot{u}_{i,j}-O_{i,j}^{(1)}u_{i,j}^{0}\right\Vert \leq
\max_{i}\left\Vert \dot{\Delta}_{i,u}\right\Vert =O_{p}\left( \sqrt{\frac{%
\log N\vee T}{N\wedge T}}(NT)^{1/q}\right) .
\end{equation*}%
}

{\small The union bound of $\dot{v}_{t,j}$ can be obtained in the same
manner and we sketch the proof here. Define 
\begin{align*}
& v_{t}^{0}=[v_{t,0}^{0\prime },\cdots ,v_{t,p}^{0\prime }]^{\prime },\quad 
\dot{\Delta}_{t,j}=O_{j}^{\prime }\dot{v}_{t,j}-v_{t,j}^{0},\quad \dot{\Delta%
}_{t,v}=[\dot{\Delta}_{t,0}^{\prime },\cdots ,\dot{\Delta}_{t,p}^{\prime
}]^{\prime }, \\
& \dot{\psi}_{it}=\left[ \left( O_{0}^{\prime }\dot{u}_{i,0}\right) ^{\prime
},\left( O_{1}^{\prime }\dot{u}_{i,1}X_{1,it}\right) ^{\prime },\cdots
,\left( O_{p}^{\prime }\dot{u}_{i,p}X_{p,it}\right) ^{\prime }\right]
^{\prime },\quad \text{and}\quad \dot{\Psi}_{t}=\frac{1}{N}\sum_{i\in
\lbrack N]}\dot{\psi}_{it}\dot{\psi}_{it}^{\prime }.
\end{align*}%
Following the steps to derive (\ref{A.3}), we can also obtain 
\begin{align}
& \left\Vert \dot{\Delta}_{t,v}\right\Vert ^{2}\lambda _{\min }\left( \frac{1%
}{N}\sum_{i\in \lbrack N]}\dot{\psi}_{it}\dot{\psi}_{it}^{\prime }\right) 
\notag \\
& =2\left\{ \frac{1}{N}\sum_{i\in \lbrack N]}\psi _{it}^{0}e_{it}\right\}
^{\prime }\dot{\Delta}_{t,v}+\frac{2}{N}\sum_{i\in \lbrack N]}\left( \dot{%
\psi}_{it}-\psi _{it}^{0}\right) ^{\prime }\dot{\Delta}_{t,v}e_{it}-\frac{2}{%
N}\sum_{i\in \lbrack N]}\dot{\psi}_{it}^{\prime }\dot{\Delta}_{t,v}\left[
v_{t}^{0\prime }\left( \dot{\psi}_{it}-\psi _{it}^{0}\right) \right] .
\label{A.6}
\end{align}%
By the fact that 
\begin{align*}
\max_{t}\frac{1}{N}\sum_{i\in \lbrack N]}\left\Vert \dot{\psi}%
_{it}\right\Vert ^{2}& =\max_{t}\frac{1}{N}\sum_{i\in \lbrack N]}\left(
\left\Vert \dot{u}_{i,0}\right\Vert ^{2}+\sum_{j\in \lbrack p]}\left\Vert 
\dot{u}_{i,j}\right\Vert ^{2}X_{j,it}^{2}\right) \\
& \leq \max_{i}\left\Vert \dot{u}_{i,0}\right\Vert ^{2}+\max_{i\in \lbrack
N],j\in \lbrack p]}\left\Vert \dot{u}_{i,j}\right\Vert ^{2}\sum_{j\in
\lbrack p]}\max_{t}\frac{1}{N}\sum_{i\in \lbrack N]}X_{j,it}^{2}=O_{p}(1), \\
\max_{t}\frac{1}{N}\sum_{i\in \lbrack N]}\left\Vert \dot{\psi}_{it}-\psi
_{it}^{0}\right\Vert ^{2}& \leq \max_{i}\left\Vert \dot{u}%
_{i,0}-u_{i,0}^{0}\right\Vert ^{2}+\max_{i\in \lbrack N],j\in \lbrack
p]}\left\Vert \dot{u}_{i,j}-u_{i,j}^{0}\right\Vert ^{2}\sum_{j\in \lbrack
p]}\max_{t}\frac{1}{N}\sum_{i\in \lbrack N]}X_{j,it}^{2}=O_{p}(\eta _{N,2}),
\end{align*}%
where $\eta _{N,2}=\sqrt{\frac{\log N\vee T}{N\wedge T}}(NT)^{1/q}$ and the
first inequality holds by Lemma \ref{Lem:bounded u&v}(i), we obtain that 
\begin{align*}
& \max_{t}\frac{\left\vert \left\{ \frac{1}{N}\sum_{i\in \lbrack N]}\psi
_{it}^{0}e_{it}\right\} ^{\prime }\dot{\Delta}_{t,v}\right\vert }{\left\Vert 
\dot{\Delta}_{t,v}\right\Vert }=O_{p}\left( \sqrt{\frac{\log T}{N}}(NT)^{%
\frac{1}{q}}\right) , \\
& \max_{t}\frac{\left\vert \frac{1}{N}\sum_{i\in \lbrack N]}\dot{\psi}%
_{it}^{\prime }\dot{\Delta}_{t,v}\left[ v_{t}^{0\prime }\left( \dot{\psi}%
_{it}-\psi _{it}^{0}\right) \right] \right\vert }{\left\Vert \dot{\Delta}%
_{t,v}\right\Vert }=O_{p}(\eta _{N,2}),\text{ and} \\
& \max_{t}\frac{\left\vert \frac{1}{N}\sum_{i\in \lbrack N]}\left( \dot{\psi}%
_{it}-\psi _{it}^{0}\right) ^{\prime }\dot{\Delta}_{t,v}e_{it}\right\vert }{%
\left\Vert \dot{\Delta}_{t,v}\right\Vert }=O_{p}(\eta _{N,2}),
\end{align*}%
where the first line is by conditional Bernstein's inequality for i.i.d.
sequence and the last two lines are by the analogous arguments in (\ref{G2})
and (\ref{G3}). It follows that 
\begin{equation*}
\max_{t}\left\Vert \dot{v}_{t,j}-O_{j}v_{t,j}^{0}\right\Vert =O_{p}\left( 
\sqrt{\frac{\log N\vee T}{N\wedge T}}(NT)^{1/q}\right) .\quad \blacksquare
\end{equation*}%
}

\subsubsection{\protect\small Proof of Statement (iii).}

{\small For $\forall j\in \lbrack p]\cup \{0\}$, $i\in \lbrack N]$ and $t\in
\lbrack T]$, we can show that 
\begin{align*}
& \dot{\Theta}_{j,it}-\Theta _{j,it}^{0}=\dot{u}_{i,j}^{\prime }\dot{v}%
_{t,j}-u_{i,j}^{0\prime }v_{t,j}^{0} \\
& =(\dot{u}_{i,j}-O_{j}u_{i,j}^{0})^{\prime }(\dot{v}%
_{t,j}-O_{j}v_{t,j}^{0})+O_{j}u_{i,j}^{0\prime }(\dot{v}%
_{t,j}-O_{j}v_{t,j}^{0})+O_{j}v_{t,j}^{0\prime }(\dot{u}%
_{i,j}-O_{j}u_{i,j}^{0}),
\end{align*}%
which implies 
\begin{align*}
\max_{i,t}\left\vert \dot{\Theta}_{j,it}-\Theta _{j,it}^{0}\right\vert &
\leq \max_{i}\left\Vert \dot{u}_{i,j}-O_{j}u_{i,j}^{0}\right\Vert
\max_{t}\left\Vert \dot{v}_{t,j}-O_{j}v_{t,j}^{0}\right\Vert \\
& +\max_{i}\left\Vert O_{j}u_{i,j}^{0}\right\Vert \max_{t}\left\Vert \dot{v}%
_{t,j}-O_{j}v_{t,j}^{0}\right\Vert +\max_{i}\left\Vert \dot{u}%
_{i,j}-O_{j}u_{i,j}^{0}\right\Vert \max_{t}\left\Vert
O_{j}v_{t,j}^{0}\right\Vert =O_{p}(\eta _{N,2}),
\end{align*}%
where the last equality combines results from Theorem \ref{Thm1}(ii) and
Lemma \ref{Lem:bounded u&v}(i). $\blacksquare $ }

\subsection{{\protect\small Proof of Theorem \protect\ref{Thm2}}}

{\small 
}

{\small To prove $\mathbb{P(}\hat{T}_{1}=T_{1})\rightarrow 1$, it suffices
to show: (i) $\mathbb{P(}\hat{T}_{1}<T_{1})\rightarrow 0$ and (ii) $\mathbb{%
P(}\hat{T}_{1}>T_{1})\rightarrow 0$. }

{\small First, we focus on (i). Let $\Delta _{it}(j)=\dot{\Theta}%
_{j,it}-\Theta _{j,it}^{0}$, $\bar{\Delta}_{s,i}(j)=\frac{1}{s}%
\sum_{t=1}^{s}(\dot{\Theta}_{j,it}-\Theta _{j,it}^{0})$ and $\bar{\Delta}%
_{s_{+},i}(j)=\frac{1}{T-s}\sum_{t=s+1}^{T}(\dot{\Theta}_{j,it}-\Theta
_{j,it}^{0})$. When $s<T_{1}$, we have 
\begin{align*}
\bar{\dot{\Theta}}_{j,i}^{(1s)}& =\frac{1}{s}\sum_{t=1}^{s}\dot{\Theta}%
_{j,it}=\frac{1}{s}\sum_{t=1}^{s}\left[ \Theta _{j,it}^{0}+(\dot{\Theta}%
_{j,it}-\Theta _{j,it}^{0})\right] =\alpha _{g_{i}^{(1)},j}^{(1)}+\bar{\Delta%
}_{s,i}(j), \\
\bar{\dot{\Theta}}_{j,i}^{(2s)}& =\frac{1}{T-s}\sum_{t=s+1}^{T}\dot{\Theta}%
_{j,it}=\frac{1}{T-s}\sum_{t=s+1}^{T}\left[ \Theta _{j,it}^{0}+(\dot{\Theta}%
_{j,it}-\Theta _{j,it}^{0})\right]  \\
& =\frac{T_{1}-s}{T-s}\alpha _{g_{i}^{(1)},j}^{(1)}+\frac{T-T_{1}}{T-s}%
\alpha _{g_{i}^{(2)},j}^{(2)}+\bar{\Delta}_{s_{+},i}(j),
\end{align*}%
with $\alpha _{g_{i}^{(1)},j}^{(1)}$ and $\alpha _{g_{i}^{(2)},j}^{(2)}$
being the $j$-th element of $\alpha _{g_{i}^{(1)}}^{(1)}$ and $\alpha
_{g_{i}^{(2)}}^{(2)}$, respectively. It yields 
\begin{align*}
\dot{\Theta}_{j,it}-\bar{\dot{\Theta}}_{j,i}^{(1s)}=& \dot{\Theta}%
_{j,it}-\alpha _{g_{i}^{(1)},j}^{(1)}-\bar{\Delta}_{s,i}(j)=\Delta _{it}(j)-%
\bar{\Delta}_{s,i}(j),\quad t\leq s,\text{ and} \\
\dot{\Theta}_{j,it}-\bar{\dot{\Theta}}_{j,i}^{(2s)}=& \dot{\Theta}_{j,it}-%
\frac{T_{1}-s}{T-s}\alpha _{g_{i}^{(1)},j}^{(1)}-\frac{T-T_{1}}{T-s}\alpha
_{g_{i}^{(2)},j}^{(2)}-\bar{\Delta}_{s_{+},i}(j) \\
=& \left\{ 
\begin{array}{ll}
\frac{T-T_{1}}{T-s}(\alpha _{g_{i}^{(1)},j}^{(1)}-\alpha
_{g_{i}^{(2)},j}^{(2)})+\Delta _{it}(j)-\bar{\Delta}_{s_{+},i}(j)\text{ } & 
\text{if }s+1\leq t\leq T_{1} \\ 
\frac{T_{1}-s}{T-s}(\alpha _{g_{i}^{(2)},j}^{(2)}-\alpha
_{g_{i}^{(1)},j}^{(1)})+\Delta _{it}(j)-\bar{\Delta}_{s_{+},i}(j) & \text{if 
}T_{1}+1\leq t\leq T%
\end{array}%
\right. .
\end{align*}%
Then, we have $\sum_{t=1}^{s}\left[ \dot{\Theta}_{j,it}-\bar{\dot{\Theta}}%
_{j,i}^{(1s)}\right] ^{2}=\sum_{t=1}^{s}\left[ \Delta _{it}(j)-\bar{\Delta}%
_{s,i}(j)\right] ^{2},$ and 
\begin{align}
\sum_{t=s+1}^{T}\left[ \dot{\Theta}_{j,it}-\bar{\dot{\Theta}}_{j,i}^{(2s)}%
\right] ^{2}& =\sum_{t=s+1}^{T_{1}}\left[ \frac{T-T_{1}}{T-s}(\alpha
_{g_{i}^{(1)},j}^{(1)}-\alpha _{g_{i}^{(2)},j}^{(2)})+\Delta _{it}(j)-\bar{%
\Delta}_{s_{+},i}(j)\right] ^{2}  \notag  \label{B.2} \\
& +\sum_{t=T_{1}+1}^{T}\left[ \frac{T_{1}-s}{T-s}(\alpha
_{g_{i}^{(2)},j}^{(2)}-\alpha _{g_{i}^{(1)},j}^{(1)})+\Delta _{it}(j)-\bar{%
\Delta}_{s_{+},i}(j)\right] ^{2}  \notag \\
& =\frac{\left( T_{1}-s\right) \left( T-T_{1}\right) ^{2}}{\left( T-s\right)
^{2}}(\alpha _{g_{i}^{(1)},j}^{(1)}-\alpha
_{g_{i}^{(2)},j}^{(2)})^{2}+\sum_{t=s+1}^{T_{1}}\left[ \Delta _{it}(j)-\bar{%
\Delta}_{s_{+},i}(j)\right] ^{2}  \notag \\
& +2\frac{T-T_{1}}{T-s}(\alpha _{g_{i}^{(1)},j}^{(1)}-\alpha
_{g_{i}^{(2)},j}^{(2)})\sum_{t=s+1}^{T_{1}}\left[ \Delta _{it}(j)-\bar{\Delta%
}_{s_{+},i}(j)\right]   \notag \\
& +\frac{\left( T-T_{1}\right) \left( T_{1}-s\right) ^{2}}{\left( T-s\right)
^{2}}(\alpha _{g_{i}^{(2)},j}^{(2)}-\alpha
_{g_{i}^{(1)},j}^{(1)})^{2}+\sum_{t=T_{1}+1}^{T}\left[ \Delta _{it}(j)-\bar{%
\Delta}_{s_{+},i}(j)\right] ^{2}  \notag \\
& +2\frac{T_{1}-s}{T-s}(\alpha _{g_{i}^{(2)},j}^{(2)}-\alpha
_{g_{i}^{(1)},j}^{(1)})\sum_{t=T_{1}+1}^{T}\left[ \Delta _{it}(j)-\bar{\Delta%
}_{s_{+},i}(j)\right]   \notag \\
& =\frac{\left( T_{1}-s\right) \left( T-T_{1}\right) }{T-s}(\alpha
_{g_{i}^{(1)},j}^{(1)}-\alpha _{g_{i}^{(2)},j}^{(2)})^{2}+\sum_{t=s+1}^{T}%
\left[ \Delta _{it}(j)-\bar{\Delta}_{s_{+},i}(j)\right] ^{2}  \notag \\
& +2\frac{T-T_{1}}{T-s}(\alpha _{g_{i}^{(1)},j}^{(1)}-\alpha
_{g_{i}^{(2)},j}^{(2)})\sum_{t=s+1}^{T_{1}}\left[ \Delta _{it}(j)-\bar{\Delta%
}_{s_{+},i}(j)\right]   \notag \\
& +2\frac{T_{1}-s}{T-s}(\alpha _{g_{i}^{(2)},j}^{(2)}-\alpha
_{g_{i}^{(1)},j}^{(1)})\sum_{t=T_{1}+1}^{T}\left[ \Delta _{it}(j)-\bar{\Delta%
}_{s_{+},i}(j)\right] .
\end{align}%
Define $L(s)=\frac{1}{pNT}\sum_{j\in \lbrack p]}\sum_{i\in \lbrack
N]}\{\sum_{t=1}^{s}[\dot{\Theta}_{j,it}-\bar{\dot{\Theta}}%
_{j,i}^{(1s)}]^{2}+\sum_{t=s+1}^{T}[\dot{\Theta}_{j,it}-\bar{\dot{\Theta}}%
_{j,i}^{(2s)}]^{2}\}$. Then we have 
\begin{align}
L(s)& =\frac{1}{pNT}\sum_{j\in \lbrack p]}\sum_{i\in \lbrack N]}\frac{\left(
T_{1}-s\right) \left( T-T_{1}\right) }{T-s}(\alpha
_{g_{i}^{(1)},j}^{(1)}-\alpha _{g_{i}^{(2)},j}^{(2)})^{2}  \notag \\
& +\frac{1}{pNT}\sum_{j\in \lbrack p]}\sum_{i\in \lbrack N]}\sum_{t=1}^{s}%
\left[ \Delta _{it}(j)-\bar{\Delta}_{s,i}(j)\right] ^{2}+\frac{1}{pNT}%
\sum_{j\in \lbrack p]}\sum_{i\in \lbrack N]}\sum_{t=s+1}^{T}\left[ \Delta
_{it}(j)-\bar{\Delta}_{s_{+},i}(j)\right] ^{2}  \notag \\
& +\frac{2}{pNT}\sum_{j\in \lbrack p]}\sum_{i\in \lbrack N]}\frac{T-T_{1}}{%
T-s}(\alpha _{g_{i}^{(1)},j}^{(1)}-\alpha
_{g_{i}^{(2)},j}^{(2)})\sum_{t=s+1}^{T_{1}}\left[ \Delta _{it}(j)-\bar{\Delta%
}_{s_{+},i}(j)\right]   \notag \\
& +\frac{2}{pNT}\sum_{j\in \lbrack p]}\sum_{i\in \lbrack N]}\frac{T_{1}-s}{%
T-s}(\alpha _{g_{i}^{(2)},j}^{(2)}-\alpha
_{g_{i}^{(1)},j}^{(1)})\sum_{t=T_{1}+1}^{T}\left[ \Delta _{it}(j)-\bar{\Delta%
}_{s_{+},i}(j)\right] :=\sum_{\ell =1}^{5}L_{\ell }(s).  \label{B.3}
\end{align}%
Obviously, 
\begin{equation}
L(T_{1})=L_{2}(T_{1})+L_{3}(T_{1}).  \label{B.4a}
\end{equation}%
Note that the event $\hat{T}_{1}<T$ implies that there exists an $s<T_{1}$
such that $L(s)-L(T_{1})<0$, which means we can prove (i) by showing that $%
\mathbb{P}\left( \exists s<T_{1},L(s)-L(T_{1})<0\right) \rightarrow 0$. By (%
\ref{B.3}) and (\ref{B.4a}), we observe that 
\begin{align}
L(s)-L(T_{1})& =L_{1}(s)+\left[ L_{2}(s)-L_{2}(T_{1})\right] +\left[
L_{3}(s)-L_{3}(T_{1})\right] +L_{4}(s)+L_{5}(s)  \notag \\
& :=A_{1}\left( s\right) +A_{2}\left( s\right) +A_{3}\left( s\right)
+A_{4}\left( s\right) +A_{5}\left( s\right) .  \label{B.5}
\end{align}%
Recall that $\eta _{N,2}=\sqrt{\frac{\log N\vee T}{N\wedge T}}(NT)^{1/q}$.
Let $\frac{T_{1}-s}{T}=\kappa _{s}$ and note that $0<\frac{1}{T}\leq \kappa
_{s}\leq \frac{T_{1}-2}{T}\asymp 1$. We analyze the five terms in (\ref{B.5}%
) in turn. }

{\small For $A_{1}\left( s\right) $, we have 
\begin{align}
A_{1}\left( s\right) & =\frac{1}{pNT}\sum_{j\in \lbrack p]}\sum_{i\in
\lbrack N]}\frac{\left( T_{1}-s\right) \left( T-T_{1}\right) }{T-s}(\alpha
_{g_{i}^{(1)},j}^{(1)}-\alpha _{g_{i}^{(2)},j}^{(2)})^{2}  \notag \\
& =\frac{T_{1}-s}{T-s}\left( 1-\tau _{T}\right) \frac{1}{pN}\sum_{j\in
\lbrack p]}\sum_{i\in \lbrack N]}(\alpha _{g_{i}^{(1)},j}^{(1)}-\alpha
_{g_{i}^{(2)},j}^{(2)})^{2}  \notag \\
& =\frac{T_{1}-s}{T-s}\left( 1-\tau _{T}\right) \frac{1}{pN}\sum_{i\in
\lbrack N]}\left\Vert \alpha _{g_{i}^{(1)}}^{(1)}-\alpha
_{g_{i}^{(2)}}^{(2)}\right\Vert ^{2}  \notag \\
& =\kappa _{s}\frac{\left( 1-\tau _{T}\right) }{1-\frac{s}{T}}D_{N\alpha
}=\kappa _{s}O(\zeta _{NT}^{2}),  \label{B.6}
\end{align}%
where $D_{N\alpha }:=\frac{1}{pN}\sum_{i\in \lbrack N]}\left\Vert \alpha
_{g_{i}^{(1)}}^{(1)}-\alpha _{g_{i}^{(2)}}^{(2)}\right\Vert ^{2}$ and the
last equality holds by Assumption \ref{ass:6}. }

{\small For $A_{2}\left( s\right) $, noting that 
\begin{equation*}
\bar{\Delta}_{T_{1},i}\left( j\right) -\bar{\Delta}_{s,i}(j)=\frac{1}{T_{1}}%
\sum_{t=1}^{T_{1}}\Delta_{it}(j)-\frac{1}{s}\left[ \sum_{t=1}^{T_{1}}%
\Delta_{it}(j)-\sum_{t=s+1}^{T_{1}}\Delta_{it}(j)\right] =\frac{s-T_{1}}{%
T_{1}s}\sum_{t=1}^{T_{1}}\Delta_{it}(j)+\frac{1}{s}\sum_{t=s+1}^{T_{1}}%
\Delta_{it}(j),
\end{equation*}%
we have 
\begin{align*}
T_{1}\bar{\Delta}_{T_{1},i}^{2}\left( j\right) -s\bar{\Delta}%
_{s,i}^{2}\left( j\right) & =\left( T_{1}-s\right) \bar{\Delta}%
_{T_{1},i}^{2}\left( j\right) +s\left[ \bar{\Delta}_{T_{1},i}^{2}\left(j%
\right) -\bar{\Delta}_{s,i}^{2}\left( j\right) \right] \\
& =\left( T_{1}-s\right) \bar{\Delta}_{T_{1},i}^{2}\left( j\right) +s\left[ 
\bar{\Delta}_{T_{1},i}\left( j\right) +\bar{\Delta}_{s,i}(j)\right] \left[%
\bar{\Delta}_{T_{1},i}\left( j\right) -\bar{\Delta}_{s,i}(j)\right] \\
& =\left( T_{1}-s\right) \bar{\Delta}_{T_{1},i}^{2}\left( j\right) +\left[%
\bar{\Delta}_{T_{1},i}\left( j\right) +\bar{\Delta}_{s,i}(j)\right] \left[%
\sum_{t=s+1}^{T_{1}}\Delta_{it}(j)-\frac{T_{1}-s}{T_{1}}\sum_{t=1}^{T_{1}}%
\Delta_{it}(j)\right] .
\end{align*}%
It follows that 
\begin{align}
A_{2}\left( s\right) & =\frac{1}{pNT}\sum_{j\in [p]}\sum_{i\in[N]%
}\left\{\sum_{t=1}^{s}\left[ \Delta_{it}(j)-\bar{\Delta}_{s,i}(j)\right]%
^{2}-\sum_{t=1}^{T_{1}}\left[ \Delta_{it}(j)-\bar{\Delta}_{T_{1},i}\left(j%
\right) \right] ^{2}\right\}  \notag \\
& =\frac{1}{pNT}\sum_{j\in [p]}\sum_{i\in
[N]}\left\{\sum_{t=1}^{s}\Delta_{it}^{2}\left( j\right) -s\bar{\Delta}%
_{s,i}^{2}\left(j\right)-\sum_{t=1}^{T_{1}}\Delta_{it}^{2}\left( j\right)
+T_{1}\bar{\Delta}_{T_{1},i}^{2}\left( j\right) \right\}  \notag \\
& =\frac{1}{pNT}\sum_{j\in [p]}\sum_{i\in[N]}\left\{-\sum_{t=s+1}^{T_{1}}%
\Delta_{it}^{2}\left( j\right) +T_{1}\bar{\Delta}_{T_{1},i}^{2}\left(
j\right) -s\bar{\Delta}_{s,i}^{2}\left(j\right)\right\}  \notag \\
& =-\kappa_{s}\frac{1}{pN\left( T_{1}-s\right) }\sum_{j\in [p]}\sum_{i\in[N]%
}\sum_{t=s+1}^{T_{1}}\Delta_{it}^{2}\left( j\right)+\kappa_{s}\frac{1}{pN}%
\sum_{j\in [p]}\sum_{i\in [N]}\bar{\Delta}_{T_{1},i}^{2}\left( j\right) 
\notag \\
& +\kappa_{s}\frac{1}{pN}\sum_{j\in [p]}\sum_{i\in [N]}\left[\bar{\Delta}%
_{T_{1},i}\left( j\right) +\bar{\Delta}_{s,i}(j)\right] \frac{1}{T_{1}-s}%
\sum_{t=s+1}^{T_{1}}\Delta_{it}(j)  \notag \\
& -\kappa_{s}\frac{1}{pN}\sum_{j\in [p]}\sum_{i\in [N]}\left[\bar{\Delta}%
_{T_{1},i}\left( j\right) +\bar{\Delta}_{s,i}(j)\right] \frac{1}{T_{1}}%
\sum_{t=1}^{T_{1}}\Delta_{it}(j)  \notag \\
& =\kappa_{s}O_{p}\left( \eta_{N,2}^{2}\right)
=\kappa_{s}o_{p}(\zeta_{NT}^{2}),  \label{B.7}
\end{align}%
where the second last equality holds by the fact that 
\begin{equation}
\max_{i\in [N],t\in [T],j\in [p]}\left\vert
\Delta_{it}(j)\right\vert=O_{p}\left( \eta_{N,2}\right)  \label{B.8}
\end{equation}%
from Theorem \ref{Thm1}(iii) and 
\begin{align}
& \max_{i\in [N],j\in [p]}\left\vert \bar{\Delta}_{s,i}(j)\right\vert=%
\max_{i\in [N],j\in [p]}\left\vert \frac{1}{s}\sum_{t=1}^{s}\Delta_{it}(j)%
\right\vert \leq \max_{i\in [N],t\in [T],j\in [p]}\left\vert
\Delta_{it}(j)\right\vert=O_{p}\left( \eta_{N,2}\right),  \notag \\
& \max_{i\in [N],j\in [p]}\left\vert \bar{\Delta}_{T_{1},i}\left(
j\right)\right\vert =\max_{i\in [N],j\in [p]}\left\vert \frac{1}{T_{1}}%
\sum_{t=1}^{T_{1}}\Delta_{it}(j)\right\vert=O_{p}\left( \eta_{N,2}\right).
\label{B.9}
\end{align}
}

{\small Similarly, noting that 
\begin{align*}
\bar{\Delta}_{T_{1+},i}\left( j\right) -\bar{\Delta}_{s_{+},i}(j)& =\frac{1}{%
T-T_{1}}\sum_{t=T_{1}+1}^{T}\Delta _{it}(j)-\frac{1}{T-s}\sum_{t=s+1}^{T}%
\Delta _{it}(j) \\
& =(\frac{1}{T-T_{1}}-\frac{1}{T-s})\sum_{t=T_{1}+1}^{T}\Delta _{it}(j)+%
\frac{1}{T-s}\left[ \sum_{t=T_{1}+1}^{T}\Delta
_{it}(j)-\sum_{t=s+1}^{T}\Delta _{it}(j)\right] \\
& =\frac{T_{1}-s}{\left( T-T_{1}\right) \left( T-s\right) }%
\sum_{t=T_{1}+1}^{T}\Delta _{it}(j)-\frac{1}{T-s}\sum_{t=s+1}^{T_{1}}\Delta
_{it}(j)
\end{align*}%
and 
\begin{align*}
& \left( T-T_{1}\right) \bar{\Delta}_{T_{1+},i}^{2}\left( j\right) -\left(
T-s\right) \bar{\Delta}_{s_{+},i}^{2}\left( j\right) \\
& =\left( s-T_{1}\right) \bar{\Delta}_{T_{1+},i}^{2}\left( j\right) +\left(
T-s\right) [\bar{\Delta}_{T_{1+},i}^{2}\left( j\right) -\bar{\Delta}%
_{s_{+},i}^{2}\left( j\right) ] \\
& =\left( s-T_{1}\right) \bar{\Delta}_{T_{1+},i}^{2}\left( j\right) +\left(
T-s\right) \left[ \bar{\Delta}_{T_{1+},i}\left( j\right) +\bar{\Delta}%
_{s_{+},i}(j)\right] \left[ \bar{\Delta}_{T_{1+},i}\left( j\right) -\bar{%
\Delta}_{s_{+},i}(j)\right] \\
& =\left( s-T_{1}\right) \bar{\Delta}_{T_{1+},i}^{2}\left( j\right) +\left[ 
\bar{\Delta}_{T_{1+},i}\left( j\right) +\bar{\Delta}_{s_{+},i}(j)\right] %
\left[ \frac{T_{1}-s}{T-T_{1}}\sum_{t=T_{1}+1}^{T}\Delta
_{it}(j)-\sum_{t=s+1}^{T_{1}}\Delta _{it}(j)\right] ,
\end{align*}%
we have%
\begin{align}
A_{3}\left( s\right) & =\frac{1}{pNT}\sum_{j\in \lbrack p]}\sum_{i\in
\lbrack N]}\left\{ \sum_{t=s+1}^{T}\left[ \Delta _{it}(j)-\bar{\Delta}%
_{s_{+},i}(j)\right] ^{2}-\sum_{t=T_{1}+1}^{T}\left[ \Delta _{it}(j)-\bar{%
\Delta}_{T_{1+},i}\left( j\right) \right] ^{2}\right\}  \notag \\
& =\frac{1}{pNT}\sum_{j\in \lbrack p]}\sum_{i\in \lbrack N]}\left\{
\sum_{t=s+1}^{T}\Delta _{it}^{2}\left( j\right) -\left( T-s\right) \bar{%
\Delta}_{s_{+},i}^{2}(j)-\sum_{t=T_{1}+1}^{T}\Delta _{it}^{2}\left( j\right)
+\left( T-T_{1}\right) \bar{\Delta}_{T_{1+},i}\left( j\right) \right\} 
\notag \\
& =\frac{1}{pNT}\sum_{j\in \lbrack p]}\sum_{i\in \lbrack N]}\left\{
\sum_{t=s+1}^{T_{1}}\Delta _{it}^{2}\left( j\right) +\left( T-T_{1}\right) 
\bar{\Delta}_{T_{1+},i}^{2}\left( j\right) -\left( T-s\right) \bar{\Delta}%
_{s_{+},i}(j)\right\}  \notag \\
& =\kappa _{s}\frac{1}{pN\left( T_{1}-s\right) }\sum_{j\in \lbrack
p]}\sum_{i\in \lbrack N]}\sum_{t=s+1}^{T_{1}}\Delta _{it}^{2}\left( j\right)
-\kappa _{s}\frac{1}{pN}\sum_{j\in \lbrack p]}\sum_{i\in \lbrack N]}\bar{%
\Delta}_{T_{1+},i}^{2}\left( j\right)  \notag \\
& +\kappa _{s}\frac{1}{pN}\sum_{j\in \lbrack p]}\sum_{i\in \lbrack N]}\left[ 
\bar{\Delta}_{T_{1+},i}\left( j\right) +\bar{\Delta}_{s_{+},i}(j)\right] 
\frac{1}{T-T_{1}}\sum_{t=T_{1}+1}^{T}\Delta _{it}(j)  \notag \\
& -\kappa _{s}\frac{1}{pN}\sum_{j\in \lbrack p]}\sum_{i\in \lbrack N]}\left[ 
\bar{\Delta}_{T_{1+},i}\left( j\right) +\bar{\Delta}_{s_{+},i}(j)\right] 
\frac{1}{T_{1}-s}\sum_{t=s+1}^{T_{1}}\Delta _{it}(j)  \notag \\
& =\kappa _{s}O_{p}\left( \eta _{N,2}^{2}\right) =\kappa _{s}o_{p}(\zeta
_{NT}^{2}),  \label{B.10}
\end{align}%
where the second last equality holds by (\ref{B.8}) and the fact that 
\begin{align}
& \max_{i\in \lbrack N],j\in \lbrack p]}\left\vert \bar{\Delta}%
_{s_{+},i}(j)\right\vert =\max_{i\in \lbrack N],j\in \lbrack p]}\left\vert 
\frac{1}{T-s}\sum_{t=s+1}^{T}\Delta _{it}(j)\right\vert =O_{p}\left( \eta
_{N,2}\right) ,  \notag \\
& \max_{i\in \lbrack N],j\in \lbrack p]}\left\vert \bar{\Delta}%
_{T_{1+},i}\left( j\right) \right\vert =\max_{i\in \lbrack N],j\in \lbrack
p]}\left\vert \frac{1}{T-T_{1}}\sum_{t=T_{1}+1}^{T}\Delta
_{it}(j)\right\vert =O_{p}\left( \eta _{N,2}\right) .  \label{B.11}
\end{align}%
}

{\small Finally, we observe that 
\begin{align}
& A_{4}\left( s\right) +A_{5}\left( s\right)  \notag \\
& =\frac{2}{pNT}\sum_{j\in \lbrack p]}\sum_{i\in \lbrack N]}(\alpha
_{g_{i}^{(1)},j}^{(1)}-\alpha _{g_{i}^{(2)},j}^{(2)})\left\{ \frac{T-T_{1}}{%
T-s}\sum_{t=s+1}^{T_{1}}\left[ \Delta _{it}(j)-\bar{\Delta}_{s_{+},i}(j)%
\right] -\frac{T_{1}-s}{T-s}\sum_{t=T_{1}+1}^{T}\left[ \Delta _{it}(j)-\bar{%
\Delta}_{s_{+},i}(j)\right] \right\}  \notag \\
& =\frac{2}{pNT}\sum_{j\in \lbrack p]}\sum_{i\in \lbrack N]}(\alpha
_{g_{i}^{(1)},j}^{(1)}-\alpha _{g_{i}^{(2)},j}^{(2)})\left[ \frac{T-T_{1}}{%
T-s}\sum_{t=s+1}^{T_{1}}\Delta _{it}(j)-\frac{T_{1}-s}{T-s}%
\sum_{t=T_{1}+1}^{T}\Delta _{it}(j)\right]  \notag \\
& =2\kappa _{s}\frac{1-\tau _{T}}{1-\frac{s}{T}}\frac{1}{pN}\sum_{j\in
\lbrack p]}\sum_{i\in \lbrack N]}(\alpha _{g_{i}^{(1)},j}^{(1)}-\alpha
_{g_{i}^{(2)},j}^{(2)})\frac{1}{T_{1}-s}\sum_{t=s+1}^{T_{1}}\Delta _{it}(j) 
\notag \\
& -2\kappa _{s}\frac{1-\tau _{T}}{1-\frac{s}{T}}\frac{1}{pN}\sum_{j\in
\lbrack p]}\sum_{i\in \lbrack N]}(\alpha _{g_{i}^{(1)},j}^{(1)}-\alpha
_{g_{i}^{(2)},j}^{(2)})\frac{1}{T-T_{1}}\sum_{t=T_{1}+1}^{T}\Delta _{it}(j) 
\notag \\
& \leq 2\kappa _{s}\frac{1-\tau _{T}}{1-\frac{s}{T}}\sqrt{\frac{1}{N}%
\sum_{i\in \lbrack N]}\left\Vert \alpha _{g_{i}^{(1)}}^{(1)}-\alpha
_{g_{i}^{(2)}}^{(2)}\right\Vert ^{2}}\sqrt{\frac{1}{N}\sum_{i\in \lbrack N]}%
\left[ \frac{1}{p(T_{1}-s)}\sum_{j\in \lbrack p]}\sum_{t=s+1}^{T_{1}}\Delta
_{it}(j)\right] ^{2}}  \notag \\
& +2\kappa _{s}\frac{1-\tau _{T}}{1-\frac{s}{T}}\sqrt{\frac{1}{N}\sum_{i\in
\lbrack N]}\left\Vert \alpha _{g_{i}^{(1)}}^{(1)}-\alpha
_{g_{i}^{(2)}}^{(2)}\right\Vert ^{2}}\sqrt{\frac{1}{N}\sum_{i\in \lbrack N]}%
\left[ \frac{1}{p(T-T_{1})}\sum_{j\in \lbrack p]}\sum_{t=T_{1}+1}^{T}\Delta
_{it}(j)\right] ^{2}}  \notag \\
& =\kappa _{s}\frac{1-\tau _{T}}{1-\frac{s}{T}}\zeta _{NT}O_{p}\left( \eta
_{N,2}\right) =\kappa _{s}o_{p}(\zeta _{NT}^{2}),  \label{B.12}
\end{align}%
where the first inequality holds by Cauchy-Schwarz inequality. }

{\small Combining (\ref{B.5}), (\ref{B.6}), (\ref{B.7}), (\ref{B.10}), (\ref%
{B.12}) and Assumption \ref{ass:6}(i) yields that 
\begin{equation*}
L(s)-L(T_{1})=\kappa_{s}\frac{\left( 1-\tau_{T}\right) }{1-\frac{s}{T}}%
D_{N\alpha }+\kappa_{s}o_{p}(\zeta_{NT}^{2}).
\end{equation*}%
Then for any $s<T_{1},$ 
\begin{equation*}
\text{plim}_{\left( N,T\right) \rightarrow \infty }\frac{1}{%
\kappa_{s}\zeta_{NT}^{2}}\left[ L(s)-L(T_{1})\right] =\text{plim}%
_{\left(N,T\right)\rightarrow \infty }\frac{1-\tau_{T}}{1-\frac{s}{T}}\frac{1%
}{\zeta_{NT}^{2}}D_{N\alpha }\geq \left( 1-\tau \right) D_{\alpha }>0,
\end{equation*}%
where $D_{\alpha }:=$plim$_{\left( N,T\right) \rightarrow \infty }\frac{1}{%
\zeta_{NT}^{2}}D_{N\alpha }>0$ by Assumption \ref{ass:6}(i). This implies
that 
\begin{equation}
\mathbb{P}\left( \hat{T}_{1}<T_{1}\right) \leq \mathbb{P}\left( \exists
s<T_{1},L(s)-L(T_{1})<0\right) \rightarrow 0.  \label{B.13}
\end{equation}
}

{\small By analogous arguments, we prove (ii) in the following part. When $%
s>T_{1}$, we have 
\begin{align*}
\bar{\dot{\Theta}}_{j,i}^{(1s)}& =\frac{1}{s}\sum_{t=1}^{s}\dot{\Theta}%
_{j,it}=\frac{1}{s}\sum_{t=1}^{s}\left[ \Theta _{j,it}^{0}+\left( \dot{\Theta%
}_{j,it}-\Theta _{j,it}^{0}\right) \right] =\frac{T_{1}}{s}\alpha
_{g_{i}^{(1)},j}^{(1)}+\frac{s-T_{1}}{s}\alpha _{g_{i}^{(2)},j}^{(2)}+\bar{%
\Delta}_{s,i}(j), \\
\bar{\dot{\Theta}}_{j,i}^{(2s)}& =\frac{1}{T-s}\sum_{t=s+1}^{T}\dot{\Theta}%
_{j,it}=\frac{1}{T-s}\sum_{t=s+1}^{T}\left[ \Theta _{j,it}^{0}+\left( \dot{%
\Theta}_{j,it}-\Theta _{j,it}^{0}\right) \right] =\alpha
_{g_{i}^{(2)},j}^{(2)}+\bar{\Delta}_{s_{+},i}(j).
\end{align*}%
It follows that 
\begin{align*}
\dot{\Theta}_{j,it}-\bar{\dot{\Theta}}_{j,i}^{(1s)}=& \left\{ 
\begin{array}{ll}
\frac{s-T_{1}}{s}(\alpha _{g_{i}^{(1)},j}^{(1)}-\alpha
_{g_{i}^{(2)},j}^{(2)})+\Delta _{it}(j)-\bar{\Delta}_{s,i}(j)\text{ } & 
\text{if }1\leq t\leq T_{1} \\ 
\frac{T_{1}}{s}(\alpha _{g_{i}^{(2)},j}^{(2)}-\alpha
_{g_{i}^{(1)},j}^{(1)})+\Delta _{it}(j)-\bar{\Delta}_{s,i}(j) & \text{if }%
T_{1}+1\leq t\leq s%
\end{array}%
\right. ,\text{ and} \\
\dot{\Theta}_{j,it}-\bar{\dot{\Theta}}_{j,i}^{(2s)}=& \Delta _{it}(j)-\bar{%
\Delta}_{s_{+},i}(j),\quad s<t\leq T.
\end{align*}%
As in (\ref{B.2}), we obtain that 
\begin{align*}
\sum_{t=1}^{s}\left[ \dot{\Theta}_{j,it}-\bar{\dot{\Theta}}_{j,i}^{(1s)}%
\right] ^{2}& =\sum_{t=1}^{T_{1}}\left[ \frac{s-T_{1}}{s}(\alpha
_{g_{i}^{(1)},j}^{(1)}-\alpha _{g_{i}^{(2)},j}^{(2)})+\Delta _{it}(j)-\bar{%
\Delta}_{s,i}(j)\right] ^{2} \\
& +\sum_{t=T_{1}+1}^{s}\left[ \frac{T_{1}}{s}(\alpha
_{g_{i}^{(2)},j}^{(2)}-\alpha _{g_{i}^{(1)},j}^{(1)})+\Delta _{it}(j)-\bar{%
\Delta}_{s,i}(j)\right] ^{2} \\
& =\frac{T_{1}\left( s-T_{1}\right) ^{2}}{s^{2}}(\alpha
_{g_{i}^{(1)},j}^{(1)}-\alpha _{g_{i}^{(2)},j}^{(2)})^{2}+\sum_{t=1}^{T_{1}}%
\left[ \Delta _{it}(j)-\bar{\Delta}_{s,i}(j)\right] ^{2} \\
& +2\frac{s-T_{1}}{s}(\alpha _{g_{i}^{(1)},j}^{(1)}-\alpha
_{g_{i}^{(2)},j}^{(2)})\sum_{t=1}^{T_{1}}\left[ \Delta _{it}(j)-\bar{\Delta}%
_{s,i}(j)\right]  \\
& +\frac{\left( s-T_{1}\right) T_{1}^{2}}{s^{2}}(\alpha
_{g_{i}^{(2)},j}^{(2)}-\alpha
_{g_{i}^{(1)},j}^{(1)})^{2}+\sum_{t=T_{1}+1}^{s}\left[ \Delta _{it}(j)-\bar{%
\Delta}_{s,i}(j)\right] ^{2} \\
& +2\frac{T_{1}}{s}(\alpha _{g_{i}^{(2)},j}^{(2)}-\alpha
_{g_{i}^{(1)},j}^{(1)})\sum_{t=T_{1}+1}^{s}\left[ \Delta _{it}(j)-\bar{\Delta%
}_{s,i}(j)\right]  \\
& =\frac{\left( s-T_{1}\right) T_{1}}{s}(\alpha
_{g_{i}^{(1)},j}^{(1)}-\alpha _{g_{i}^{(2)},j}^{(2)})^{2}+\sum_{t=1}^{s}%
\left[ \Delta _{it}(j)-\bar{\Delta}_{s,i}(j)\right] ^{2} \\
& +2\frac{T_{1}\left( s-T_{1}\right) }{s}(\alpha
_{g_{i}^{(1)},j}^{(1)}-\alpha _{g_{i}^{(2)},j}^{(2)})\left[ \frac{1}{T_{1}}%
\sum_{t=s+1}^{T_{1}}\Delta _{it}(j)-\frac{1}{s-T_{1}}\sum_{t=T_{1}+1}^{T}%
\Delta _{it}(j)\right] 
\end{align*}%
and $\sum_{t=s+1}^{T}[\dot{\Theta}_{j,it}-\bar{\dot{\Theta}}%
_{j,i}^{(2s)}]^{2}=\sum_{t=s+1}^{T}\left[ \Delta _{it}(j)-\bar{\Delta}%
_{s_{+},i}(j)\right] ^{2}.$ It follows that 
\begin{align*}
L(s)-L(T_{1})& =\frac{T_{1}\left( s-T_{1}\right) }{sT}\frac{1}{pN}\sum_{i\in
\lbrack N]}\left\Vert \alpha _{g_{i}^{(1)}}^{(1)}-\alpha
_{g_{i}^{(2)}}^{(2)}\right\Vert ^{2} \\
& +\frac{1}{pNT}\sum_{j\in \lbrack p]}\sum_{i\in \lbrack N]}\left\{
\sum_{t=1}^{s}\left[ \Delta _{it}(j)-\bar{\Delta}_{s,i}(j)\right]
^{2}-\sum_{t=1}^{T_{1}}\left[ \Delta _{it}(j)-\bar{\Delta}_{T_{1},i}\left(
j\right) \right] ^{2}\right\}  \\
& +\frac{1}{pNT}\sum_{j\in \lbrack p]}\sum_{i\in \lbrack N]}\left\{
\sum_{t=s+1}^{T}\left[ \Delta _{it}(j)-\bar{\Delta}_{s_{+},i}(j)\right]
^{2}-\sum_{t=T_{1}+1}^{T}\left[ \Delta _{it}(j)-\bar{\Delta}%
_{T_{1+},i}\left( j\right) \right] ^{2}\right\}  \\
& +2\frac{T_{1}\left( s-T_{1}\right) }{sT}\frac{1}{pN}\sum_{j\in \lbrack
p]}\sum_{i\in \lbrack N]}(\alpha _{g_{i}^{(1)},j}^{(1)}-\alpha
_{g_{i}^{(2)},j}^{(2)})\left[ \frac{1}{T_{1}}\sum_{t=s+1}^{T_{1}}\Delta
_{it}(j)-\frac{1}{s-T_{1}}\sum_{t=T_{1}+1}^{T}\Delta _{it}(j)\right]  \\
& :=B_{1}\left( s\right) +B_{2}\left( s\right) +B_{3}\left( s\right)
+B_{4}\left( s\right) ,
\end{align*}%
where $B_{4}\left( s\right) $ parallels $A_{4}\left( s\right) +A_{5}\left(
s\right) $ in (\ref{B.5}). Let $\bar{\kappa}_{s}=\frac{s-T_{1}}{T}\in \left[ 
\frac{1}{T},1-\tau _{T}\right] .$ Following the analyses of $A_{\ell }\left(
s\right) $'s, we can readily show that 
\begin{equation*}
B_{1}\left( s\right) =\bar{\kappa}_{s}\frac{T}{s}\frac{T_{1}}{T}D_{N\alpha }=%
\bar{\kappa}_{s}O_{p}(\zeta _{NT}^{2}),\text{ }B_{\ell }\left( s\right) =%
\bar{\kappa}_{s}O_{p}\left( \eta _{N,2}^{2}\right) =\bar{\kappa}%
_{s}o_{p}(\zeta _{NT}^{2})\text{ for }\ell =2,3,\text{ }
\end{equation*}%
and $B_{4}\left( s\right) =\bar{\kappa}_{s}O_{p}\left( \eta _{N,2}\zeta
_{NT}\right) =\bar{\kappa}_{s}o_{p}(\zeta _{NT}^{2}).$ It follows that for
any $s>T_{1},$ 
\begin{equation*}
\text{plim}_{\left( N,T\right) \rightarrow \infty }\frac{1}{\bar{\kappa}%
_{s}\zeta _{NT}^{2}}\left[ L(s)-L(T_{1})\right] =\text{plim}_{\left(
N,T\right) \rightarrow \infty }\frac{T_{1}T}{Ts}\frac{1}{\zeta _{NT}^{2}}%
D_{N\alpha }\geq \tau D_{\alpha }>0\text{.}
\end{equation*}%
This implies that 
\begin{equation}
\mathbb{P(}\hat{T}_{1}>T_{1})\leq \mathbb{P}\left( \exists
s>T_{1},L(s)-L(T_{1})<0\right) \rightarrow 0.  \label{B.14}
\end{equation}%
Combining (\ref{B.13}) and (\ref{B.14}), we conclude that $\mathbb{P(}\hat{T}%
_{1}=T_{1})\rightarrow 1$. $\blacksquare $ }

\subsection{{\protect\small Proof of Theorem \protect\ref{Thm3}}}

{\small 
By Theorem \ref{Thm2}, $\mathbb{P(}\hat{T}_{1}=T_{1})\rightarrow 1$. It
follows that we can prove Theorem \ref{Thm3} by conditioning on the event
that }${\small \{}${\small $\hat{T}_{1}=T_{1}\}$. Below we prove the theorem
under the event that $\{\hat{T}_{1}=T_{1}\}.$ }

{\small Define $\dot{\Theta}_{j,i}^{0,(1)}=(\dot{\Theta}_{j,i1},\cdots ,\dot{%
\Theta}_{j,iT_{1}})^{\prime }$, $\dot{\Theta}_{j,i}^{0,(2)}=(\dot{\Theta}%
_{j,i,T_{1}+1},\cdots ,\dot{\Theta}_{j,iT})^{\prime }$, $\dot{\beta}%
_{i}^{0,(1)}=\frac{1}{\sqrt{T_{1}}}(\dot{\Theta}_{1,i}^{0,(1)\prime },\cdots
,\dot{\Theta}_{p,i}^{0,(1)\prime })^{\prime },$ and $\dot{\beta}_{i}^{0,(2)}=%
\frac{1}{\sqrt{T_{2}}}(\dot{\Theta}_{1,i}^{0,(2)\prime },\cdots ,\dot{\Theta}%
_{p,i}^{0,(2)\prime })^{\prime }$. Noted that in the definitions of $\dot{%
\beta}_{i}^{0,(1)}$ and $\dot{\beta}_{i}^{0,(2)}$ we use the true break date 
$T_{1}$ rather than the estimated one compared to $\dot{\beta}_{i}^{(1)}$
and $\dot{\beta}_{i}^{(2)}$ defined in Step 4. As in \eqref{kmeans_obj} and %
\eqref{group estimates}, we further define 
\begin{align}
& \left\{ \dot{a}_{k,m}^{0,(\ell )}\right\} _{k\in \lbrack m]}=\argmin%
_{\left\{ a_{k}^{(\ell )}\right\} _{k\in \lbrack m]}}\frac{1}{N}\sum_{i\in
\lbrack N]}\min_{k\in \lbrack m]}\left\Vert \dot{\beta}_{i}^{0,(\ell
)}-a_{k}^{(\ell )}\right\Vert ^{2},  \label{group estimates_truebreak} \\
& \hat{g}_{i,m}^{0,(\ell )}=\argmin_{k\in \lbrack m]}\left\Vert \dot{\beta}%
_{i}^{(\ell )}-\dot{a}_{k,m}^{0,(\ell )}\right\Vert ,\quad \forall i\in
\lbrack N].
\end{align}%
}

{\small (i) In the case of $m=K^{(\ell )}$, Theorem \ref{Thm3}(i.a) is from
the combination of Lemma \ref{Lem:Kmeans_null} for the consistency of the
membership estimates via K-means algorithm and Theorem \ref{Thm2} for the
consistency of the break point estimator. }

{\small Next, we show (i.c). Recall that $z_{\varsigma }$ is the critical
value at significance level $\varsigma $ calculated from the maximum of $m$
independent $\chi ^{2}(1)$ random variables. By the definition of the STK
algorithm, we observe that 
\begin{equation*}
\mathbb{P(}\hat{K}^{(\ell )}\leq K^{(\ell )})\geq \mathbb{P}(\hat{\Gamma}%
_{K^{(\ell )}}^{(\ell )}\leq z_{\varsigma }),
\end{equation*}%
which leads to the fact that (i.c) holds as long as we can show (i.b). This
is because, under (i.b), we have 
\begin{equation*}
\mathbb{P}(\hat{\Gamma}_{K^{(\ell )}}^{(\ell )}\leq z_{\varsigma })\geq
1-\varsigma +o(1).
\end{equation*}%
}

{\small Now, we focus on (i.b). Notice that $\hat{\Gamma}_{k,K^{(\ell
)}}^{(\ell )}$ depends on the K-means classification result, i.e., the
estimated group membership $\hat{G}_{k,K^{(\ell )}}^{(\ell )}$ for $k\in
\lbrack K^{(\ell )}]$. From Theorem \ref{Thm3}(i.1), we notice that we can
change the estimated group membership $\hat{G}_{k,K^{(\ell )}}^{(\ell )}$ to
the true group membership $G_{k}^{(\ell )}$, and the replacement has only an 
asymptotically negligible effect. Recall that $\mathcal{T}_{1}=[T_{1}]$ and $%
\mathcal{T}_{2}=[T]\backslash \lbrack T_{1}]$. Define $\mathcal{T}_{1,-1}=%
\mathcal{T}_{1}\backslash \{T_{1}\}$, $\mathcal{T}_{2,-1}=\mathcal{T}%
_{2}\backslash \{T\}$, $\mathcal{T}_{1,j}=\left\{ 1+j,\cdots ,T_{1}\right\} $
and $\mathcal{T}_{2,j}=\left\{ T_{1}+1+j,\cdots ,T\right\} $ for some
specific $j\in \mathcal{T}_{\ell ,-1}$. Let 
\begin{equation*}
\left( \left\{ \hat{\theta}_{i,k,K^{(\ell )}}^{0,(\ell )}\right\} _{i\in
G_{k}^{(\ell )}},\hat{\Lambda}_{k,K^{(\ell )}}^{0,(\ell )},\left\{ \hat{f}%
_{t,k,K^{(\ell )}}^{0,(\ell )}\right\} _{t\in \mathcal{T}_{\ell }}\right) =%
\argmin_{\left\{ \theta _{i},\lambda _{i}\right\} _{i\in G_{k}^{(\ell
)}},\left\{ f_{t}\right\} _{t\in \mathcal{T}_{\ell }}}\sum_{i\in
G_{k}^{(\ell )}}\sum_{t\in \mathcal{T}_{\ell }}\left( Y_{it}-X_{it}^{\prime
}\theta _{i}-\lambda _{i}^{\prime }f_{t}\right) ^{2},
\end{equation*}%
$\hat{F}_{k,K^{(1)}}^{0,(1)}=(\hat{f}_{1,k,K^{(1)}},\cdots ,\hat{f}%
_{T_{1},k,K^{(1)}})^{\prime },$ $\hat{F}_{k,K^{(2)}}^{0,(2)}=(\hat{f}%
_{T_{1}+1,k,K^{(1)}},\cdots ,\hat{f}_{T,k,K^{(2)}})^{\prime }$, $\hat{\Lambda%
}_{k,K^{(\ell )}}^{(\ell )}=\{\hat{\lambda}_{i,k,K^{(\ell )}}^{(\ell
)}\}_{i\in G_{k}^{(\ell )}}$, and $(\hat{z}_{it}^{0,(\ell )})^{\prime }$
denote the $t$-th row of $M_{\hat{F}_{k,K^{(\ell )}}^{0,(\ell
)}}X_{i}^{(\ell )}.$ Further define 
\begin{align*}
& \hat{\bar{\theta}}_{k,K^{(\ell )}}^{0,(\ell )}=\frac{1}{|G_{k}^{(\ell )}|}%
\sum_{i\in G_{k}^{(\ell )}}\hat{\theta}_{i,k,K^{(\ell )}}^{0,(\ell )},\quad
\quad \hat{S}_{ii,k,K^{(\ell )}}^{0,(\ell )}=\frac{1}{T_{\ell }}%
(X_{i}^{(\ell )})^{\prime }M_{\hat{F}_{k,K^{(\ell )}}^{0,(\ell
)}}X_{i}^{(\ell )}, \\
& \hat{\Omega}_{i,k,K^{(\ell )}}^{0,(\ell )}=\frac{1}{T_{\ell }}\sum_{t\in 
\mathcal{T}_{\ell }}\hat{z}_{it}^{0,(\ell )}\hat{z}_{it}^{0,(\ell )\prime }%
\hat{e}_{it}^{2}+\frac{1}{T_{\ell }}\sum_{j\in \mathcal{T}_{\ell
,-1}}k(j,L)\sum_{t\in \mathcal{T}_{\ell ,j}}[\hat{z}_{it}^{0,(\ell )}\hat{z}%
_{i,t+j}^{0,(\ell )\prime }\hat{e}_{it}\hat{e}_{i,t+j}+\hat{z}%
_{i,t-j}^{0,(\ell )}\hat{z}_{it}^{0,(\ell )\prime }\hat{e}_{i,t-j}\hat{e}%
_{it}], \\
& \hat{a}_{ii,k,K^{(\ell )}}^{0,(\ell )}=\hat{\lambda}_{i,k,K^{(\ell
)}}^{(\ell )\prime }\left( \frac{1}{|G_{k}^{(\ell )}|}\hat{\Lambda}%
_{k,K^{(\ell )}}^{(\ell )\prime }\hat{\Lambda}_{k,K^{(\ell )}}^{(\ell
)}\right) ^{-1}\hat{\lambda}_{i,k,K^{(\ell )}}^{(\ell )}.
\end{align*}%
Then $\forall k\in \lbrack K^{(\ell )}]$, we can define 
\begin{equation*}
\hat{\Gamma}_{k,K^{(\ell )}}^{0,(\ell )}=\sqrt{\left\vert G_{k}^{(\ell
)}\right\vert }\frac{\frac{1}{\left\vert G_{k}^{(\ell )}\right\vert }%
\sum_{i\in G_{k}^{(\ell )}}\hat{\mathbb{S}}_{i,k,K^{(\ell )}}^{0,(\ell )}-p}{%
\sqrt{2p}},
\end{equation*}%
where 
\begin{equation*}
\hat{\mathbb{S}}_{i,k,K^{(\ell )}}^{0,(\ell )}=T_{\ell }(\hat{\theta}%
_{i,k,K^{(\ell )}}^{0,(\ell )}-\hat{\bar{\theta}}_{k,K^{(\ell )}}^{0,(\ell
)})^{\prime }\hat{S}_{ii,k,K^{(\ell )}}^{0,(\ell )}(\hat{\Omega}%
_{i,k,K^{(\ell )}}^{0,(\ell )})^{-1}\hat{S}_{ii,k,K^{(\ell )}}^{0,(\ell )}(%
\hat{\theta}_{i,k,K^{(\ell )}}^{0,(\ell )}-\hat{\bar{\theta}}_{k,K^{(\ell
)}}^{0,(\ell )})\left( 1-\hat{a}_{ii,k,K^{(\ell )}}^{0,(\ell
)}/|G_{k}^{(\ell )}|\right) ^{2}.
\end{equation*}%
By Lemma \ref{Lem:size}, we notice that $\hat{\Gamma}_{k,K^{(\ell
)}}^{0,(\ell )}\rightsquigarrow \mathbb{N}(0,1)$ owing to the fact that the
slope coefficient $\alpha _{k}^{(\ell )}$ is homogeneous across $i\in
G_{k}^{(\ell )}$ $\forall k\in \lbrack K^{(\ell )}]$. Furthermore, $\{\hat{%
\Gamma}_{k,K^{(\ell )}}^{0,(\ell )},$ $k\in \lbrack K^{(\ell )}]\}$ are
asymptotically independent under Assumption \ref{ass:1}(i). It follows that 
\begin{equation*}
\hat{\Gamma}_{K^{(\ell )}}^{(\ell )}=\max_{k\in \lbrack K^{(\ell )}]}\left( 
\hat{\Gamma}_{k,K^{(\ell )}}^{(\ell )}\right) ^{2}=\max_{k\in \lbrack
m]}\left( \hat{\Gamma}_{k,K^{(\ell )}}^{0,(\ell )}\right)
^{2}+o_{p}(1)\rightsquigarrow \mathcal{Z},
\end{equation*}%
where $\mathcal{Z}$ is the maximum of $m$ independent $\chi ^{2}\left(
1\right) $ random variables. Then Theorem \ref{Thm3}(i.b) follows. }

{\small (ii) When $m<K^{(\ell )},$ Theorem \ref{Thm3}(i.1) does not hold and
we can not change the estimated group membership $\hat{\mathcal{G}}%
_{K^{(\ell )}}^{(\ell )}$ to the true group membership $\mathcal{G}^{(\ell
)} $. To get around of this issue, we define the \textquotedblleft pseudo
groups\textquotedblright . For $m<K^{(\ell )}$, let $\mathbb{G}_{m}^{(\ell
)}:=\{G_{1,m}^{(\ell )},\cdots ,G_{m,m}^{(\ell )}\}$ such that $%
[N]=G_{1,m}^{(\ell )}\cup \cdots \cup G_{m,m}^{(\ell )}$, which indicates
one possible partition of the set }${\small [N]}${\small . We further define 
$\mathcal{G}_{m}^{(\ell )}$ to be the collection of all possible $\mathbb{G}%
_{m}^{(\ell )}$. }

{\small By Theorem \ref{Thm3}(i.c), we can conclude that $\mathbb{P}\left( 
\hat{K}^{(\ell )}\neq K^{(\ell )}\right) \leq \varsigma +o(1)$ provided we
can show that $\hat{\Gamma}_{m}^{(\ell )}\rightarrow \infty $ when $%
m<K^{(\ell )}$. By Lemma \ref{Lem:NSP}, we notice that $\hat{\mathcal{G}}%
_{m}^{(\ell )}\in \mathbb{G}_{m}^{(\ell )}$ w.p.a.1. Conditioning on the
event $\{\hat{\mathcal{G}}_{m}^{(\ell )}\in \mathbb{G}_{m}^{(\ell )}\}\cap \{%
\hat{T}_{1}=T_{1}\}$, we have 
\begin{equation*}
\hat{\Gamma}_{m}^{(\ell )}>\min_{\mathcal{G}_{m}^{(\ell )}\in \mathbb{G}%
_{m}^{(\ell )}}\hat{\Gamma}_{m}^{0,(\ell )}(\mathcal{G}_{m}^{(\ell
)}):=\min_{\mathcal{G}_{m}^{(\ell )}\in \mathbb{G}_{m}^{(\ell )}}\left\{
\max_{k\in \lbrack m]}\left[ \hat{\Gamma}_{k,m}^{0,(\ell )}(G_{k,m}^{(\ell
)})\right] ^{2}\right\} ,
\end{equation*}%
where 
\begin{equation*}
\hat{\Gamma}_{k,m}^{0,(\ell )}(G_{k,m}^{(\ell )})=\sqrt{\left\vert
G_{k,m}^{(\ell )}\right\vert }\frac{\frac{1}{\left\vert G_{k,m}^{(\ell
)}\right\vert }\sum_{i\in G_{k,m}^{(\ell )}}\hat{\mathbb{S}}%
_{i,k,m}^{0,(\ell )}-p}{\sqrt{2p}},
\end{equation*}%
and $\hat{\mathbb{S}}_{i,k,m}^{0,(\ell )}$ is defined similarly to $\hat{%
\mathbb{S}}_{i,k,K^{(\ell )}}^{0,(\ell )}$ in the proof of (i). }

{\small Owing to the fact that $|\mathbb{G}_{m}^{(\ell )}|=m^{K^{(\ell )}}$
which is a constant since $K^{(\ell )}$ is a constant, we can show that $%
\hat{\Gamma}_{m}^{(\ell )}\rightarrow \infty $ by showing that $\hat{\Gamma}%
_{m}^{0,(\ell )}(\mathcal{G}_{m}^{(\ell )})\rightarrow \infty $ for any
possible realization $\mathcal{G}_{m}^{(\ell )}$. Under the case when $%
m<K^{(\ell )}$, there exists at least one $k\in \lbrack m]$ such that the
slope coefficient is not homogeneous across $i\in G_{k,m}^{(\ell )}$. Assume
that $G_{k,m}^{(\ell )}$ contains $n$ true groups, i.e., $G_{k,m}^{(\ell
)}=G_{k_{1}}^{(\ell )}\cup \cdots \cup G_{k_{n}}^{(\ell )}$ for $%
k_{1},\cdots ,k_{n}\in \lbrack K^{(\ell )}]$ and $k_{1}\neq \cdots \neq k_{n}
$. Then for $i\in G_{k,m}^{(\ell )}$, we have 
\begin{align*}
\theta _{i}^{0,(\ell )}& =\sum_{s=1}^{n}\alpha _{k_{s}}^{(\ell )}\mathbf{1}%
\{i\in G_{k_{s}}^{(\ell )}\}=\frac{1}{n}\sum_{s^{\ast }=1}^{n}\alpha
_{k_{s^{\ast }}}^{(\ell )}+\sum_{s=1}^{n}\left( \frac{n-1}{n}\alpha
_{k_{s}}^{(\ell )}-\frac{1}{n}\sum_{s^{\ast }\in \left[ n\right] ,s^{\ast
}\neq s}\alpha _{k_{s^{\ast }}}^{(\ell )}\right) \mathbf{1}\{i\in
G_{k_{s}}^{(\ell )}\} \\
& =\frac{1}{n}\sum_{s^{\ast }=1}^{n}\alpha _{k_{s^{\ast }}}^{(\ell
)}+\sum_{s=1}^{n}\frac{1}{n}\sum_{s^{\ast }\in \left[ n\right] ,s^{\ast
}\neq s}(\alpha _{k_{s}}^{(\ell )}-\alpha _{k_{s^{\ast }}}^{(\ell )})\mathbf{%
1}\{i\in G_{k_{s}}^{(\ell )}\}:=\bar{\theta}_{n}^{0,(\ell )}+c_{i}^{(\ell )}
\end{align*}%
such that 
\begin{eqnarray*}
\frac{T_{\ell }}{\sqrt{N}}\sum_{i\in \lbrack N]}\left\Vert c_{i}^{(\ell
)}\right\Vert ^{2}/(\log N)^{1/2} &=&\frac{T_{\ell }}{\sqrt{N}}\sum_{s=1}^{n}%
\frac{N_{k_{s}}^{(\ell )}}{n}\left\Vert \sum_{s^{\ast }\in \left[ n\right]
,s^{\ast }\neq s}(\alpha _{k_{s}}^{(\ell )}-\alpha _{k_{s^{\ast }}}^{(\ell
)})\right\Vert ^{2}/(\log N)^{1/2} \\
&=&\frac{T_{\ell }}{\sqrt{N}n}\sum_{s=1}^{n}N_{k_{s}}^{(\ell )}\left\Vert 
\frac{\sum_{s^{\ast }\in \left[ n\right] ,s^{\ast }\neq s}\alpha
_{k_{s^{\ast }}}^{(\ell )}}{n-1}-\alpha _{k_{s}}^{(\ell )}\right\Vert
^{2}/(\log N)^{1/2}\rightarrow \infty, 
\end{eqnarray*}%
by Assumption \ref{ass:7}(iii). Then }${\small |}${\small $\hat{\Gamma}%
_{k,m}^{0,(\ell )}(G_{k,m}^{(\ell )})|/(\log N)^{1/2}\rightarrow \infty $
for some $k\in \lbrack m]$ by Lemma \ref{Lem:power}. By the definition of $%
\hat{\Gamma}_{m}^{0,(\ell )}(\mathcal{G}_{m}^{(\ell )})$, we have $\hat{%
\Gamma}_{m}^{0,(\ell )}(\mathcal{G}_{m}^{(\ell )})/(\log N)^{1/2}\rightarrow
\infty $, which yields $\hat{\Gamma}_{m}^{(\ell )}/\log N\rightarrow \infty $
w.p.a.1 for $m<K^{(\ell )}$ and $\mathbb{P(}\hat{K}^{(\ell )}\neq K^{(\ell
)})\leq \varsigma +o(1)$ as $z_{\varsigma }$ diverges to infinity at rate $%
\log N$ as }${\small \varsigma =\varsigma }_{N}{\small \rightarrow 0}$ 
{\small at some rate }$N^{-c}${\small \ for some }${\small c>0}.${\small $%
\quad \blacksquare $ }

\subsection{{\protect\small Proof of Theorem \protect\ref{Thm4}}}

{\small To show Theorem \ref{Thm4}, we can directly derive the asymptotic
distribution for the oracle estimator $\hat{\alpha}_{k}^{*(\ell)}$ by
combining Theorems \ref{Thm2} and \ref{Thm3}. }

{\small The asymptotic distribution theory for the linear panel model with
IFEs has already been studied in the literature; see \cite{bai2009panel}, 
\cite{moon2017dynamic} and \cite{lu2016shrinkage} for instance. However, 
\cite{bai2009panel} rules out dynamic panels. \cite{moon2017dynamic} allow
dynamic panels and assume the independence over both $i$ and $t$ for the
error term. For the dynamic linear panel model under Assumptions \ref{ass:1*} and \ref{ass:2}--\ref{ass:10}, 
Theorem \ref{Thm4} extends Theorem 4.3 in \cite{moon2017dynamic} to allow for multiple groups. }

{\small Below, we follow the arguments in \cite{moon2017dynamic} and sketch
the proof to allow the serial correlation of error terms in non-dynamic
panels.\footnote{%
It is well known that one cannot allow for serially correlated errors in
dynamic panels in general in order to avoid issues with endogeneity.} To proceed, let $%
\mathbb{C}_{NT,k}^{(\ell )}$ be the $p$-vector with $j$-th entry being $%
\mathbb{C}_{NT,k,j}^{(\ell )}=\mathbb{C}_{1}(\Lambda _{k}^{0,(\ell
)},F^{0,(\ell )},\mathbb{X}_{j,k}^{(\ell )},E_{k}^{(\ell )})+\mathbb{C}%
_{2}(\Lambda _{k}^{0,(\ell )},F^{0,(\ell )},\mathbb{X}_{j,k}^{(\ell
)},E_{k}^{(\ell )})$, where 
\begin{align*}
& \mathbb{C}_{1}(\Lambda _{k}^{0,(\ell )},F^{0,(\ell )},\mathbb{X}%
_{j,k}^{(\ell )},E_{k}^{(\ell )})=\frac{1}{\sqrt{N_{k}^{(\ell )}T_{\ell }}}%
tr\left( M_{F^{0,(\ell )}}E_{k}^{(\ell )\prime }M_{\Lambda _{k}^{0,(\ell )}}%
\mathbb{X}_{j,k}^{(\ell )}\right) ,\text{ and} \\
& \mathbb{C}_{2}(\Lambda _{k}^{0,(\ell )},F^{0,(\ell )},\mathbb{X}%
_{j,k}^{(\ell )},E_{k}^{(\ell )}) \\
& =-\frac{1}{\sqrt{N_{k}^{(\ell )}T_{\ell }}}tr\left( E_{k}^{(\ell
)}M_{F^{0,(\ell )}}E_{k}^{(\ell )\prime }M_{\Lambda _{k}^{0,(\ell )}}\mathbb{%
X}_{j,k}^{(\ell )}F^{0,(\ell )}\left( F^{0,(\ell )\prime }F^{0,(\ell
)}\right) ^{-1}\left( \Lambda _{k}^{0,(\ell )\prime }\Lambda _{k}^{0,(\ell
)}\right) ^{-1}\Lambda _{k}^{0,(\ell )\prime }\right) \\
& -\frac{1}{\sqrt{N_{k}^{(\ell )}T_{\ell }}}tr\left( E_{k}^{(\ell )\prime
}M_{\Lambda _{k}^{0,(\ell )}}E_{k}^{(\ell )}M_{F^{0,(\ell )}}\mathbb{X}%
_{j,k}^{(\ell )\prime }\Lambda _{k}^{0,(\ell )}\left( \Lambda _{k}^{0,(\ell
)\prime }\Lambda _{k}^{0,(\ell )}\right) ^{-1}\left( F^{0,(\ell )\prime
}F^{0,(\ell )}\right) ^{-1}\left( \Lambda _{k}^{0,(\ell )\prime }\Lambda
_{k}^{0,(\ell )}\right) ^{-1}F^{0,(\ell )\prime }\right) \\
& -\frac{1}{\sqrt{N_{k}^{(\ell )}T_{\ell }}}tr\left( E_{k}^{(\ell )\prime
}M_{\Lambda _{k}^{0,(\ell )}}\mathbb{X}_{j,k}^{(\ell )}M_{F^{0,(\ell
)}}E_{k}^{(\ell )\prime }\Lambda _{k}^{0,(\ell )}\left( \Lambda
_{k}^{0,(\ell )\prime }\Lambda _{k}^{0,(\ell )}\right) ^{-1}\left(
F^{0,(\ell )\prime }F^{0,(\ell )}\right) ^{-1}\left( \Lambda _{k}^{0,(\ell
)\prime }\Lambda _{k}^{0,(\ell )}\right) ^{-1}F^{0,(\ell )\prime }\right) .
\end{align*}%
By Lemma \ref{Lem:consis_homo}, we have $\sqrt{N_{k}^{(\ell )}T_{\ell }}(%
\hat{\alpha}_{k}^{(\ell )}-\alpha _{k}^{(\ell )})=\mathbb{W}_{NT,k}^{(\ell
)-1}\mathbb{C}_{NT,k}^{(\ell )}+o_{p}(1)$. }

{\small In \cite{moon2017dynamic}, the asymptotic distribution is derived
mainly relying on their Lemmas B.1 and B.2. Lemma B.2 is the standard
central limit theorem, which also holds under our Assumption \ref{ass:1}.
For Lemma B.1, we need to extend it to allow for serially correlated errors
in non-dynamic panels in Lemma \ref{Lem:distri_homo}. Hence, by Lemma \ref%
{Lem:distri_homo} and following arguments analogous to those in the proof of
Theorem 4.3 (\cite{moon2017dynamic}), for a specific $\ell \in \{1,2\}$ and $%
k\in \lbrack K^{(\ell )}]$, we can readily show that 
\begin{equation*}
\mathbb{W}_{NT,k}^{(\ell )}\sqrt{N_{k}^{(\ell )}T_{\ell }}(\hat{\alpha}%
_{k}^{(\ell )}-\alpha _{k}^{(\ell )})-\mathbb{B}_{NT,k}^{(\ell
)}\rightsquigarrow \mathcal{N(}0,\Omega _{k}^{(\ell )}),
\end{equation*}%
which yields the final distributional results in Theorem \ref{Thm4} by
stacking all subgroups of parameter estimators into a large vector and
resorting to the Cram\'er-Wold device. }

\subsection{{\protect\small Proof of Theorem \protect\ref{Thm5}}}

{\small Recall that $\dot{v}_{t,j}^{\ast }:=\frac{\dot{v}_{t,j}}{\left\Vert 
\dot{v}_{t,j}\right\Vert }$, $\dot{v}_{t}^{\ast }=\left( \dot{v}_{t,1}^{\ast
\prime },\cdots ,\dot{v}_{t,p}^{\ast \prime }\right) ^{\prime }$, $%
v_{t,j}^{\ast }=\frac{O_{j}v_{t,j}^{0}}{\left\Vert
O_{j}v_{t,j}^{0}\right\Vert }$ and $v_{t}^{\ast }=\left( v_{t,1}^{\ast
\prime },\cdots ,v_{t,p}^{\ast \prime }\right) ^{\prime }$. With the fact
that 
\begin{align*}
\frac{\dot{v}_{t,j}}{\left\Vert \dot{v}_{t,j}\right\Vert }-\frac{%
O_{j}v_{t,j}^{0}}{\left\Vert O_{j}v_{t,j}^{0}\right\Vert }& =\frac{\dot{v}%
_{t,j}\left\Vert O_{j}v_{t,j}^{0}\right\Vert -O_{j}v_{t,j}^{0}\left\Vert 
\dot{v}_{t,j}\right\Vert }{\left\Vert \dot{v}_{t,j}\right\Vert \left\Vert
O_{j}v_{t,j}^{0}\right\Vert } \\
& =\frac{\left( \dot{v}_{t,j}-O_{j}v_{t,j}^{0}\right) \left\Vert
O_{j}v_{t,j}^{0}\right\Vert +O_{j}v_{t,j}^{0}\left( \left\Vert
O_{j}v_{t,j}^{0}\right\Vert -\left\Vert \dot{v}_{t,j}\right\Vert \right) }{%
\left\Vert \dot{v}_{t,j}\right\Vert \left\Vert O_{j}v_{t,j}^{0}\right\Vert },
\end{align*}%
it follows that 
\begin{equation*}
\max_{t}\left\Vert \dot{v}_{t}^{\ast }-v_{t}^{\ast }\right\Vert \leq
p\max_{j\in \lbrack p],t\in \lbrack T]}\left\Vert \dot{v}_{t,j}^{\ast
}-v_{t,j}^{\ast }\right\Vert \leq 2p\max_{j\in \lbrack p],t\in \lbrack T]}%
\frac{\left\Vert \dot{v}_{t,j}-O_{j}v_{t,j}^{0}\right\Vert }{\left\Vert \dot{%
v}_{t,j}\right\Vert }=O_{p}(\eta _{N,2}),
\end{equation*}%
where the last line is by Lemma \ref{Lem:bounded u&v}(i) and Theorem \ref%
{Thm1}(ii). }

\section{\protect\small Technical Lemmas}

{\small \label{sec:lem} }

\begin{lemma}
{\small \label{Lem:matrix Bern} Consider a matrix sequence $\left\{
A_{i},i\in \lbrack N]\right\} $ whose elements are symmetric matrices with
dimension $d$. Suppose $\left\{ A_{i},i\in \lbrack N]\right\} $ is
independent with $\mathbb{E}\left( A_{i}\right) =0$ and $\left\Vert
A_{i}\right\Vert _{op}\leq M$ a.s.. Let $\sigma ^{2}=\left\Vert \sum_{i\in
\lbrack N]}\mathbb{E}\left( A_{i}^{2}\right) \right\Vert _{op}.$ Then for
all $t>0$, we have 
\begin{equation*}
\mathbb{P}\left( \left\Vert \sum_{i\in \lbrack N]}A_{i}\right\Vert
_{op}>t\right) \leq d\cdot \exp \left\{ -\frac{t^{2}/2}{\sigma ^{2}+Mt/3}%
\right\} .
\end{equation*}%
}
\end{lemma}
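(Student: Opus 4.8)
The plan is to establish this through the Laplace-transform (matrix Chernoff) method for sums of independent random matrices, whose single nontrivial analytic ingredient is Lieb's concavity theorem. First I would reduce the two-sided operator-norm event to a one-sided eigenvalue event: since each $A_{i}$ is symmetric, $\|\sum_{i\in[N]}A_{i}\|_{op}=\max(\lambda_{\max}(\sum_{i}A_{i}),\lambda_{\max}(-\sum_{i}A_{i}))$, so it suffices to bound $\mathbb{P}(\lambda_{\max}(\sum_{i}A_{i})>t)$ and then apply the identical argument to the sequence $\{-A_{i}\}$. For any fixed $\theta>0$, the exponential Markov inequality together with the elementary bound $e^{\theta\lambda_{\max}(S)}\leq\operatorname{tr}e^{\theta S}$ for symmetric $S$ gives $\mathbb{P}(\lambda_{\max}(\sum_{i}A_{i})\geq t)\leq e^{-\theta t}\,\mathbb{E}[\operatorname{tr}\exp(\theta\sum_{i}A_{i})]$.

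Second, I would control the trace moment generating function through the subadditivity of the matrix cumulant generating function, namely $\mathbb{E}[\operatorname{tr}\exp(\theta\sum_{i}A_{i})]\leq\operatorname{tr}\exp(\sum_{i}\log\mathbb{E}\,e^{\theta A_{i}})$. This is proved by peeling off the summands one at a time, conditioning on $A_{1},\dots,A_{i-1}$, and invoking Lieb's theorem, which asserts that $X\mapsto\operatorname{tr}\exp(H+\log X)$ is concave on the cone of positive-definite matrices; Jensen's inequality then moves each conditional expectation inside the logarithm. I expect this to be the main obstacle, not because the calculation is long but because it rests on Lieb's concavity, the one genuinely deep fact in the argument. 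Its use is essential for the sharp variance proxy: replacing it by the more elementary Golden--Thompson inequality would only yield $\sum_{i}\|\mathbb{E}A_{i}^{2}\|_{op}$ in place of the tighter $\sigma^{2}=\|\sum_{i}\mathbb{E}A_{i}^{2}\|_{op}$ that appears in the statement.

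Third, I would bound each per-summand matrix MGF. Using the scalar estimate $e^{\theta a}\leq1+\theta a+g(\theta)a^{2}$, valid for $|a|\leq M$ and $0<\theta<3/M$ with $g(\theta)=\tfrac{\theta^{2}/2}{1-\theta M/3}$, applied in the eigenbasis of $A_{i}$, yields the semidefinite inequality $\mathbb{E}\,e^{\theta A_{i}}\preceq I+g(\theta)\mathbb{E}A_{i}^{2}\preceq\exp(g(\theta)\mathbb{E}A_{i}^{2})$, where the last step uses $I+B\preceq e^{B}$ together with $\mathbb{E}A_{i}=0$. Taking the (operator-monotone) logarithm, summing over $i$, and applying $\operatorname{tr}\exp(B)\leq d\,e^{\lambda_{\max}(B)}$ gives $\mathbb{E}[\operatorname{tr}\exp(\theta\sum_{i}A_{i})]\leq d\exp(g(\theta)\sigma^{2})$.

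Finally I would assemble and optimize. Combining the pieces, $\mathbb{P}(\lambda_{\max}(\sum_{i}A_{i})\geq t)\leq d\inf_{0<\theta<3/M}\exp(-\theta t+g(\theta)\sigma^{2})$, and the explicit choice $\theta=t/(\sigma^{2}+Mt/3)$ produces the bound $d\exp(-\tfrac{t^{2}/2}{\sigma^{2}+Mt/3})$. Running the same argument on $\{-A_{i}\}$ and combining the two tails delivers the stated inequality; the two-sided union bound nominally carries a factor $2d$, which I would either absorb into the reported dimensional constant or track explicitly, depending on the precision required in the downstream applications of the lemma.
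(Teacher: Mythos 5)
Your proposal is correct and takes essentially the same route as the paper: the paper's ``proof'' is simply a citation to Theorem 1.3 of \cite{tropp2011user}, whose argument is precisely the Laplace-transform method you reconstruct — reduction to $\lambda_{\max}$, subadditivity of the matrix cumulant generating function via Lieb's concavity theorem, the per-summand bound $\mathbb{E}\,e^{\theta A_{i}}\preceq\exp\bigl(g(\theta)\,\mathbb{E}A_{i}^{2}\bigr)$ with $g(\theta)=\tfrac{\theta^{2}/2}{1-\theta M/3}$, and the choice $\theta=t/(\sigma^{2}+Mt/3)$. Your remark that the two-sided operator-norm version nominally carries a factor $2d$ rather than the stated $d$ is an accurate observation about the lemma as written (Tropp's theorem is stated for $\lambda_{\max}$), and it is immaterial for the downstream applications in the paper.
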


\begin{proof}
{\small Lemma \ref{Lem:matrix Bern} states a matrix Bernstein inequality;
see Theorem 1.3 in \cite{tropp2011user}. }
\end{proof}

\begin{lemma}
{\small \label{Lem:matrix op} Consider a specific matrix $A\in \mathbb{R}%
^{N\times T}$ whose rows (denoted as $A_{i}^{\prime }$) are independent
random vectors in $\mathbb{R}^{T}$ with $\mathbb{E}A_{i}=0$ and $\Sigma _{i}=%
\mathbb{E}\left( A_{i}A_{i}^{\prime }\right) $. Suppose $\max_{i}\left\Vert
A_{i}\right\Vert \leq \sqrt{m}$ almost surely and $\max_{i}\left\Vert \Sigma
_{i}\right\Vert _{op}\leq M$ for some positive constant $M$. Then for every $%
t>0$, with probability $1-2T\exp \left( -c_{1}t^{2}\right) $, we have 
\begin{equation*}
\left\Vert A\right\Vert _{op}\leq \sqrt{NM}+t\sqrt{m+M},
\end{equation*}%
where $c_{1}$ is an absolute constant. }
\end{lemma}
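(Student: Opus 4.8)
The plan is to invoke the matrix Bernstein inequality (Lemma \ref{Lem:matrix Bern}) applied to the centered Gram matrix, rather than attempting an $\varepsilon$-net argument over the unit sphere in $\mathbb{R}^{T}$; the polynomial prefactor $2T$ in the target probability, as opposed to an exponential covering-number factor, is the tell-tale sign that Bernstein is the intended route. The first step is to reduce the operator-norm bound to a statement about $A'A$. Writing $A=(A_{1},\dots ,A_{N})'$ with columns $A_{i}\in\mathbb{R}^{T}$, one has $A'A=\sum_{i\in[N]}A_{i}A_{i}'$, a sum of independent, positive semidefinite, rank-one $T\times T$ matrices, with $\|A\|_{op}^{2}=\|A'A\|_{op}$ and $\mathbb{E}(A'A)=\sum_{i}\Sigma_{i}$, so that $\|\mathbb{E}(A'A)\|_{op}\le\sum_{i}\|\Sigma_{i}\|_{op}\le NM$. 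This identifies $\sqrt{NM}$ as the deterministic mean contribution to the bound.

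Next I would center and apply Lemma \ref{Lem:matrix Bern}. Set $X_{i}:=A_{i}A_{i}'-\Sigma_{i}$, which are independent, symmetric, mean-zero $T\times T$ matrices. The almost-sure bound is immediate: $\|X_{i}\|_{op}\le\|A_{i}\|^{2}+\|\Sigma_{i}\|_{op}\le m+M$. The variance proxy rests on the computation $(A_{i}A_{i}')^{2}=\|A_{i}\|^{2}A_{i}A_{i}'\preceq m\,A_{i}A_{i}'$, whence $\mathbb{E}(X_{i}^{2})=\mathbb{E}[(A_{i}A_{i}')^{2}]-\Sigma_{i}^{2}\preceq m\Sigma_{i}$ in the Loewner order; summing gives $\sigma^{2}=\|\sum_{i}\mathbb{E}(X_{i}^{2})\|_{op}\le m\|\sum_{i}\Sigma_{i}\|_{op}\le mNM$. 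Lemma \ref{Lem:matrix Bern} with $d=T$ then yields, for each $s>0$,
\begin{equation*}
\mathbb{P}\left(\Big\|\sum_{i\in[N]}X_{i}\Big\|_{op}>s\right)\le T\exp\left(-\frac{s^{2}/2}{mNM+(m+M)s/3}\right).
\end{equation*}

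Finally I would choose $s:=2t\sqrt{NM(m+M)}+t^{2}(m+M)$ and convert. On the complement of the above event, $\|A\|_{op}^{2}\le NM+s=(\sqrt{NM}+t\sqrt{m+M})^{2}$, which is exactly the claimed bound $\|A\|_{op}\le\sqrt{NM}+t\sqrt{m+M}$. It then remains to check that the Bernstein exponent is at least $c_{1}t^{2}$ for this $s$. Using $x/(b+c)\ge\tfrac12\min(x/b,x/c)$, the exponent is bounded below by $\tfrac12\min\left(\frac{s^{2}}{2mNM},\frac{3s}{2(m+M)}\right)$; feeding $s\ge 2t\sqrt{NM(m+M)}$ into the first term gives $\ge 2t^{2}(m+M)/m\ge 2t^{2}$, and $s\ge t^{2}(m+M)$ into the second gives $\ge\tfrac32 t^{2}$, so the exponent is at least $\tfrac34 t^{2}$. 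Taking $c_{1}=3/4$ delivers failure probability $\le T\exp(-c_{1}t^{2})\le 2T\exp(-c_{1}t^{2})$, the extra factor of two being pure slack.

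I expect the only points requiring genuine care to be the variance computation, namely establishing $\mathbb{E}[(A_{i}A_{i}')^{2}]\preceq m\Sigma_{i}$ in the positive-semidefinite order and tracking that the $\Sigma_{i}^{2}$ term drops out favorably, together with the bookkeeping that converts the Bernstein tail on $\|A'A-\mathbb{E}(A'A)\|_{op}$ into the clean additive form through the square-root identity and the calibrated choice of $s$; everything else is routine.
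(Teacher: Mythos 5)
Your proposal is correct and follows essentially the same route as the paper's proof: both apply the matrix Bernstein inequality (Lemma \ref{Lem:matrix Bern}) to the centered Gram matrix $\sum_{i}(A_{i}A_{i}'-\Sigma_{i})$, using $(A_{i}A_{i}')^{2}\preceq m\,A_{i}A_{i}'$ for the variance proxy and the almost-sure bound $m+M$, exactly as in the paper's Vershynin-style argument. The only differences are bookkeeping: the paper normalizes by $1/N$ and uses the threshold $\varepsilon=\max(\sqrt{M}\delta,\delta^{2})$ with $\delta=t\sqrt{(m+M)/N}$, whereas you choose $s=2t\sqrt{NM(m+M)}+t^{2}(m+M)$ so the square $\left(\sqrt{NM}+t\sqrt{m+M}\right)^{2}$ completes exactly --- a marginally cleaner calibration of the same estimate.
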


\begin{proof}
{\small The proof follows arguments like those used in the proof of Theorem 5.41
in \cite{vershynin2010introduction}. Define $Z_{i}:=\frac{1}{N}\left(
A_{i}A_{i}^{\prime }-\Sigma _{i}\right) \in \mathbb{R}^{T\times T}$, and we
notice that $\left( Z_{1},\cdots ,Z_{N}\right) $ is an independent sequence
with $\mathbb{E}\left( Z_{i}\right) =0$. To use the matrix Bernstein's
inequality, we analyze $\left\Vert Z_{i}\right\Vert _{op}$ and $\left\Vert
\sum_{i\in \lbrack N]}\mathbb{E}\left( X_{i}^{2}\right) \right\Vert _{op}$
as follows: 
\begin{equation}
\left\Vert Z_{i}\right\Vert _{op}\leq \frac{1}{N}\left( \left\Vert
A_{i}A_{i}^{\prime }\right\Vert _{op}+\left\Vert \Sigma _{i}\right\Vert
_{op}\right) \leq \frac{1}{N}\left( \left\Vert A_{i}\right\Vert
_{2}^{2}+\left\Vert \Sigma _{i}\right\Vert _{op}\right) \leq \frac{m+M}{N}%
\quad a.s.  \label{Lem2.1}
\end{equation}%
uniformly over $i$. Moreover, note that 
\begin{equation*}
\mathbb{E}\left[ \left( A_{i}A_{i}^{\prime }\right) ^{2}\right] =\mathbb{E}%
\left[ \left\Vert A_{i}\right\Vert _{2}A_{i}A_{i}^{\prime }\right] \leq
m\Sigma _{i}\text{ }
\end{equation*}%
and 
\begin{equation*}
Z_{i}^{2}=\frac{1}{N^{2}}\left[ \left( A_{i}A_{i}^{\prime }\right)
^{2}-A_{i}A_{i}^{\prime }\Sigma _{i}-\Sigma _{i}A_{i}A_{i}^{\prime }+\Sigma
_{i}^{2}\right] .
\end{equation*}%
We then obtain that 
\begin{align*}
\left\Vert \mathbb{E}\left( Z_{i}^{2}\right) \right\Vert _{op}& =\left\Vert 
\mathbb{E}\left\{ \frac{1}{N^{2}}\left[ \left( A_{i}A_{i}^{\prime }\right)
^{2}-\Sigma _{i}^{2}\right] \right\} \right\Vert _{op}\leq \frac{1}{N^{2}}%
\left\{ \left\Vert \mathbb{E}\left[ \left( A_{i}A_{i}^{\prime }\right) ^{2}%
\right] \right\Vert _{op}+\left\Vert \Sigma _{i}\right\Vert _{op}^{2}\right\}
\\
& \leq \frac{1}{N^{2}}\left( m\left\Vert \Sigma _{i}\right\Vert
_{op}+\left\Vert \Sigma _{i}\right\Vert _{op}^{2}\right) \,\leq \frac{%
mM+M^{2}}{N^{2}}\text{ }\quad a.s.
\end{align*}%
uniformly over $i$, and 
\begin{equation}
\left\Vert \sum_{i\in \lbrack N]}\mathbb{E}\left( Z_{i}^{2}\right)
\right\Vert _{op}\leq N\max_{i}\left\Vert \mathbb{E}\left( Z_{i}^{2}\right)
\right\Vert _{op}\leq \frac{mM+M^{2}}{N}\quad a.s.  \label{Lem2.2}
\end{equation}%
Define $\varepsilon =\max (\sqrt{M}\delta ,\delta ^{2})$ with $\delta =t%
\sqrt{\frac{m+M}{N}}$. Combining (\ref{Lem2.1}) and (\ref{Lem2.2}), by
matrix Bernstein's inequality, we have 
\begin{align*}
& \mathbb{P}\left\{ \left\Vert \frac{1}{N}\left( A^{\prime }A-\sum_{i\in
\lbrack N]}\Sigma _{i}\right) \right\Vert _{op}\geq \varepsilon \right\} =%
\mathbb{P}\left( \left\Vert \sum_{i\in \lbrack N]}Z_{i}\right\Vert _{op}\geq
\varepsilon \right) \\
& \leq 2T\exp \left\{ -c\min \left( \frac{\varepsilon ^{2}}{\frac{mM+M^{2}}{N%
}},\frac{\varepsilon }{\frac{m+M}{N}}\right) \right\} \leq 2T\exp \left\{
-c\min \left( \frac{\varepsilon ^{2}}{M},\varepsilon \right) \frac{N}{m+M}%
\right\} \\
& \leq 2T\exp \left\{ -\frac{c\delta ^{2}N}{m+M}\right\} =2T\exp \left\{
-c_{1}t^{2}\right\} ,
\end{align*}%
for some positive constant $c$, where the third inequality is due to the
fact that 
\begin{align*}
\min \left( \frac{\varepsilon ^{2}}{M},\varepsilon \right) & =\min \left(
\max \left( \delta ^{2},\delta ^{4}/M\right) ,\max \left( \sqrt{M}\delta
,\delta \right) \right) \\
& =\left\{ \begin{aligned}
&\min\left(\delta^{2},\sqrt{M}\delta\right)=\delta^{2},\quad
\text{if}\quad\delta^{2}\geq \frac{\delta^{4}}{M},\\
&\min\left(\delta^{4}/M,\delta^{2}\right)=\delta^{2},\quad\text{if}\quad%
\delta^{2}< \frac{\delta^{4}}{M}. \end{aligned}\right.
\end{align*}%
It implies that 
\begin{equation}
\left\Vert \frac{1}{N}A^{\prime }A-\frac{1}{N}\sum_{i\in \lbrack N]}\Sigma
_{i}\right\Vert _{op}\leq \max \left( \sqrt{M}\delta ,\delta ^{2}\right)
\label{Lem2.3}
\end{equation}%
with probability $1-\exp \left( -ct^{2}\right) $. Combining the fact that $%
\left\Vert \Sigma _{i}\right\Vert _{op}\leq M$ uniformly over $i$ and (\ref%
{Lem2.3}), we show that 
\begin{align*}
\frac{1}{N}\left\Vert A\right\Vert _{op}^{2}& =\left\Vert \frac{1}{N}%
A^{\prime }A\right\Vert _{op}\leq \left\Vert \frac{1}{N}\sum_{i\in \lbrack
N]}\Sigma _{i}\right\Vert _{op}+\left\Vert \frac{1}{N}A^{\prime }A-\frac{1}{N%
}\sum_{i\in \lbrack N]}\Sigma _{i}\right\Vert _{op} \\
& \leq \max_{i}\left\Vert \Sigma _{i}\right\Vert _{op}+\sqrt{M}\delta
+\delta ^{2}\leq M+\sqrt{M}t\sqrt{\frac{m+M}{N}}+t^{2}\frac{m+M}{N} \\
& \leq \left( \sqrt{M}+t\sqrt{\frac{m+M}{N}}\right) ^{2},
\end{align*}%
and the result follows: $\left\Vert A\right\Vert _{op}\leq \sqrt{NM}+t\sqrt{%
m+M}.$ }
\end{proof}

\begin{lemma}
{\small \label{Lem:score op} Recall that $X_{j}=\left\{ X_{j,it}\right\} $
and $E=\left\{ e_{it}\right\} $. Under Assumption \ref{ass:1}, $\forall j\in
\lbrack p]$, we have $\left\Vert X_{j}\odot E\right\Vert _{op}=O_{p}(\sqrt{N}%
+\sqrt{T\log T})$ and $\left\Vert E\right\Vert _{op}=O_{p}(\sqrt{N}+\sqrt{%
T\log T})$. }
\end{lemma}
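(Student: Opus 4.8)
The plan is to apply the operator-norm bound of Lemma \ref{Lem:matrix op} conditional on $\mathscr{D}$, treating $E$ (and each $X_j\odot E$) as a matrix whose rows are independent. The two quantities are handled identically, so I describe $\Vert E\Vert_{op}$ in detail and indicate the trivial modification for $\Vert X_j\odot E\Vert_{op}$. First I would condition on the $\sigma$-field $\mathscr{D}$ and write $E$ as the $N\times T$ matrix with $i$-th row $e_i'=(e_{i1},\cdots,e_{iT})$. By Assumption \ref{ass:1}(i) these rows are independent across $i$ given $\mathscr{D}$, and by Assumption \ref{ass:1}(ii) each has conditional mean zero, $\mathbb{E}(e_i\mid\mathscr{D})=0$. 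To match the hypotheses of Lemma \ref{Lem:matrix op} I need (a) an almost-sure bound $\Vert e_i\Vert^2\le m$ and (b) a uniform bound $\max_i\Vert\Sigma_i\Vert_{op}\le M$ with $\Sigma_i=\mathbb{E}(e_ie_i'\mid\mathscr{D})$. For (a), Assumption \ref{ass:1}(iv) with $\xi_{it}=e_{it}$ gives $\Vert e_i\Vert^2=\sum_t e_{it}^2\le CT$ a.s. uniformly in $i$, so I take $m=CT$.

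The main work is (b). I would bound the operator norm of the $T\times T$ conditional covariance matrix $\Sigma_i$ by its maximal absolute row sum (Gershgorin's theorem for symmetric matrices), i.e. $\Vert\Sigma_i\Vert_{op}\le\max_t\sum_s|Cov(e_{it},e_{is}\mid\mathscr{D})|$. The diagonal terms are bounded by Assumption \ref{ass:1}(v), and the off-diagonal terms are controlled using the conditional strong-mixing condition of Assumption \ref{ass:1}(iii): by the conditional Davydov inequality (as in Lemma A.4 of \cite{su2013testing}), $|Cov(e_{it},e_{is}\mid\mathscr{D})|$ is dominated by $[\mathbb{E}(|e_{it}|^q\mid\mathscr{D})]^{1/q}[\mathbb{E}(|e_{is}|^q\mid\mathscr{D})]^{1/q}\alpha_i(|t-s|)^{1-2/q}$, and the geometric decay $\alpha_i(z)\le M\vartheta^z$ makes these autocovariances summable, giving $\max_i\Vert\Sigma_i\Vert_{op}\le M$ a.s. for a finite constant $M$.

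With (a) and (b) in hand, Lemma \ref{Lem:matrix op} yields, with probability at least $1-2T\exp(-c_1t^2)$,
\[
\Vert E\Vert_{op}\le\sqrt{NM}+t\sqrt{CT+M}.
\]
Taking $t=c\sqrt{\log T}$ with $c$ large enough that $c_1c^2>1$ drives the exceptional probability $2T^{1-c_1c^2}\to0$, and the bound becomes $O(\sqrt{N}+\sqrt{T\log T})$. Since the constants $C,M,c_1$ do not depend on $\mathscr{D}$ (the a.s. bounds hold for almost every realization), integrating over $\mathscr{D}$ gives the unconditional conclusion $\Vert E\Vert_{op}=O_p(\sqrt{N}+\sqrt{T\log T})$. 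For $\Vert X_j\odot E\Vert_{op}$ I would repeat the argument with the row $A_i'=(X_{j,i1}e_{i1},\cdots,X_{j,iT}e_{iT})$: the mean-zero property follows from $\mathbb{E}(X_{j,it}e_{it}\mid\mathscr{D})=\mathbb{E}(X_{j,it}\mathbb{E}(e_{it}\mid X_{it},\mathscr{D})\mid\mathscr{D})=0$, the a.s. bound from Assumption \ref{ass:1}(iv) with $\xi_{it}=X_{it}e_{it}$, and the covariance bound from the same mixing argument together with the moment bound $\max_{i,t}\mathbb{E}(\Vert X_{it}e_{it}\Vert^q\mid\mathscr{D})\le M$ in Assumption \ref{ass:1}(v); a union bound over the fixed number $p$ of regressors then delivers the statement for all $j\in[p]$.

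The step I expect to be the main obstacle is (b)---establishing the uniform constant bound on $\Vert\Sigma_i\Vert_{op}$---because it is the only place where the serial dependence enters and it requires the conditional Davydov inequality together with the geometric mixing decay to control the accumulation of autocovariances; everything else is a direct verification of the hypotheses of Lemma \ref{Lem:matrix op} and a choice of the deviation parameter $t$.
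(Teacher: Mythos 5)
Your proposal is correct and follows essentially the same route as the paper's proof: conditioning on $\mathscr{D}$, verifying the hypotheses of Lemma \ref{Lem:matrix op} via Assumption \ref{ass:1}(iv) for the row-norm bound and the conditional Davydov inequality with geometric mixing for $\max_i\Vert\Sigma_i\Vert_{op}$, then choosing $t\asymp\sqrt{\log T}$. Your max-row-sum bound on the symmetric matrix $\Sigma_i$ is equivalent to the paper's $\Vert\Sigma_i\Vert_{op}\le\sqrt{\Vert\Sigma_i\Vert_1\Vert\Sigma_i\Vert_\infty}$ step, so the arguments coincide in substance.
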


\begin{proof}
{\small We focus on $\left\Vert X_{j}\odot E\right\Vert _{op}$ as the result
for $\left\Vert E\right\Vert _{op}$ can be derived in the same manner. We
first note that, conditional on $\left\{ V_{j}^{0}\right\} _{j\in \lbrack
p]\cup \{0\}}$, the rows of $X_{j}\odot E$ are independent across $i$.
Denote the $i$-th row of $X_{j}\odot E$ as $A_{i}^{\prime }=X_{j,i}^{\prime
}\odot E_{i}^{\prime }$, where $X_{j,i}^{\prime }$ and $E_{i}^{\prime }$
being the $i$-th row of matrix $X_{j}$ and $E$, respectively. Recall that $%
\mathscr{D}$ is the minimum $\sigma $-field generated by $\left\{
V_{j}^{0}\right\} _{j\in \lbrack p]\cup \{0\}}$. In addition, for the $t$-th
element of $A_{i}$, we have 
\begin{equation*}
\mathbb{E}\left[ X_{j,it}e_{it}|\mathscr{D}\right] =\mathbb{E}\left\{
X_{j,it}\mathbb{E}\left[ e_{it}|\mathscr{D},X_{it}\}\right] \big|\mathscr{D}%
\right\} =0,
\end{equation*}%
where the second equality holds by Assumption \ref{ass:1}(ii). Therefore, to
apply Lemma \ref{Lem:matrix op} conditionally on $\mathscr{D}$, we only need
to upper bound $\left\Vert A_{i}\right\Vert $ and }${\small ||}${\small $%
\mathbb{E}[A_{i}A_{i}^{\prime }\big|\mathscr{D}]||_{op}$. }

{\small First, under Assumption \ref{ass:1}, we have $\frac{1}{T}\sum_{t\in
\lbrack T]}\left( X_{j,it}e_{it}\right) ^{2}\leq C~a.s.$ by Assumption \ref%
{ass:1}(iv), which implies 
\begin{equation}
||A_{i}||=\left\Vert X_{j,i}\odot E_{i}\right\Vert \leq C\sqrt{T}~a.s.
\label{Lem3.1}
\end{equation}%
}

{\small Second, let $\Sigma _{i}=\mathbb{E}\left\{ \left[ \left(
X_{j,i}\odot E_{i}\right) \left( X_{j,i}\odot E_{i}\right) ^{\prime }\right] %
\big|\mathscr{D}\right\} $ with $\left( t,s\right) $ element being $\mathbb{E%
}\left( X_{j,it}X_{j,is}e_{it}e_{is}\big|\mathscr{D}\right) .$ Let $%
\left\Vert \cdot \right\Vert _{1}$ and $\left\Vert \cdot \right\Vert
_{\infty }$ denote the norms induced by the 1-norms and }$\infty ${\small %
-norms, respectively: , 
\begin{equation*}
\left\Vert \Sigma _{i}\right\Vert _{1}=\max_{s\in \lbrack T]}\sum_{t\in
\lbrack T]}\left\vert \mathbb{E}\left( X_{j,it}X_{j,is}e_{it}e_{is}\big|%
\mathscr{D}\right) \right\vert \text{ and }\left\Vert \Sigma _{i}\right\Vert
_{\infty }=\max_{t}\sum_{s\in \lbrack T]}\left\vert \mathbb{E}\left(
X_{j,it}X_{j,is}e_{it}e_{is}\big|\mathscr{D}\right) \right\vert .
\end{equation*}%
By Davydov's inequality for conditional strong mixing sequence (e.g., Lemma
4.3 in \cite{su2013testing}), we can show that 
\begin{align*}
& \max_{s\in \lbrack T]}\sum_{t\in \lbrack T]}\left\vert \mathbb{E}\left(
X_{j,it}X_{j,is}e_{it}e_{is}\big|\mathscr{D}\right) \right\vert =\max_{s\in
\lbrack T]}\sum_{t\in \lbrack T]}\left\vert Cov\left(
X_{j,it}e_{it},X_{j,is}e_{is}\big|\mathscr{D}\right) \right\vert  \\
& \lesssim \max_{s\in \lbrack T]}\sum_{t\in \lbrack T]}\left\{ \mathbb{E}%
\left[ \left\vert X_{j,it}e_{it}\right\vert ^{q}\big|\mathscr{D}\right]
\right\} ^{1/q}\left\{ \mathbb{E}\left[ \left\vert
X_{j,is}e_{is}\}\right\vert ^{q}\big|\mathscr{D}\right] \right\}
^{1/q}\alpha \left( t-s\right) ^{\left( q-2\right) /q} \\
& \leq \max_{i,t}\left\{ \mathbb{E}\left[ \left\vert
X_{j,it}e_{j,it}\right\vert ^{q}\big|\mathscr{D}\right] \right\}
^{2/q}\max_{s\in \lbrack T]}\sum_{t\in \lbrack T]}\left[ \alpha \left(
t-s\right) \right] ^{\left( q-2\right) /q}\leq c_{2}~a.s.,
\end{align*}%
where $c_{2}$ is a positive constant which does not depend on $i$.
Similarly, we have 
\begin{equation*}
\max_{t}\sum_{s\in \lbrack T]}\left\vert \mathbb{E}\left(
X_{j,it}X_{j,is}e_{it}e_{is}\big|\mathscr{D}\right) \right\vert \leq
c_{2}~a.s.
\end{equation*}%
Therefore, by Corollary 2.3.2 in \cite{golub1996matrix}, we have 
\begin{equation}
\max_{i}\left\Vert \Sigma _{i}\right\Vert _{op}\leq \sqrt{\left\Vert \Sigma
_{i}\right\Vert _{1}\left\Vert \Sigma _{i}\right\Vert _{\infty }}\leq
c_{2}~a.s.  \label{Lem3.2}
\end{equation}%
Combining (\ref{Lem3.1})-(\ref{Lem3.2}) and using Lemma \ref{Lem:matrix op}
with $t=\sqrt{\log T}$, we obtain the desired result. }
\end{proof}

{\small \bigskip }

{\small Recall that $\mathcal{R}(C_{1}):=\left\{ \left\{
\Delta_{\Theta_{j}}\right\}_{j\in [p]\cup \{0\}}\in \mathbb{R}^{N\times
T\times\left( p+1\right) }:\operatornamewithlimits{\sum}\limits_{j\in
[p]\cup \{0\}}\left\Vert \mathcal{P}_{j}^{\bot
}(\Delta_{\Theta_{j}})\right\Vert_{\ast }\leq C_{1}\operatornamewithlimits{%
\sum}\limits_{j\in[p]\cup \{0\}}\left\Vert \mathcal{P}_{j}(\Delta_{%
\Theta_{j}})\right\Vert_{\ast }\right\} .$ }

\begin{lemma}
{\small \label{Lem:RS} Suppose Assumptions \ref{ass:1}--\ref{ass:3} hold.
Then $\{\tilde{\Delta}_{\Theta _{j}}\}_{j\in \lbrack p]\cup \{0\}}\in 
\mathcal{R}(3)$ w.p.a.1. }
\end{lemma}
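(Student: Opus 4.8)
The plan is to establish the cone condition directly from the first-order optimality of the NNR estimator, using the operator-norm control of the noise provided by Lemma \ref{Lem:score op} together with the decomposability of the nuclear norm. First I would invoke the defining property of $\{\tilde{\Theta}_{j}\}_{j\in\lbrack p]\cup\{0\}}$ in (\ref{obj}): since these matrices minimize the penalized objective, comparing the objective value at $\{\tilde{\Theta}_{j}\}$ with that at the truth $\{\Theta_{j}^{0}\}$ and substituting $Y=\Theta_{0}^{0}+\sum_{j\in\lbrack p]}X_{j}\odot\Theta_{j}^{0}+E$ yields the basic inequality
\begin{equation*}
\frac{1}{NT}\left\Vert \tilde{\Delta}_{\Theta_{0}}+\sum_{j\in\lbrack p]}\tilde{\Delta}_{\Theta_{j}}\odot X_{j}\right\Vert^{2}\leq\frac{2}{NT}tr(\tilde{\Delta}_{\Theta_{0}}^{\prime}E)+\frac{2}{NT}\sum_{j\in\lbrack p]}tr[\tilde{\Delta}_{\Theta_{j}}^{\prime}(E\odot X_{j})]+\sum_{j\in\lbrack p]\cup\{0\}}\nu_{j}\left(\Vert\Theta_{j}^{0}\Vert_{\ast}-\Vert\tilde{\Theta}_{j}\Vert_{\ast}\right).
\end{equation*}
Because the left-hand side is a squared Frobenius norm it is non-negative, so the entire right-hand side is bounded below by zero; this is the inequality I would carry forward. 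All of the argument would be conducted on the event $\mathscr{A}_{1,N}(c_{3})$, which by Lemma \ref{Lem:score op} has probability at least $1-\epsilon$ for arbitrary $\epsilon>0$, so that the final conclusion holds w.p.a.1.

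Next I would bound the two noise terms. Applying $|tr(A^{\prime}B)|\leq\Vert A\Vert_{\ast}\Vert B\Vert_{op}$ gives $\frac{2}{NT}|tr(\tilde{\Delta}_{\Theta_{0}}^{\prime}E)|\leq\frac{2}{NT}\Vert E\Vert_{op}\Vert\tilde{\Delta}_{\Theta_{0}}\Vert_{\ast}$ and likewise $\frac{2}{NT}|tr[\tilde{\Delta}_{\Theta_{j}}^{\prime}(E\odot X_{j})]|\leq\frac{2}{NT}\Vert X_{j}\odot E\Vert_{op}\Vert\tilde{\Delta}_{\Theta_{j}}\Vert_{\ast}$, and on $\mathscr{A}_{1,N}(c_{3})$ both operator norms are at most $c_{3}(\sqrt{N}\vee\sqrt{T\log T})$. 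The key structural step is then the decomposability of the nuclear norm relative to the low-rank projections $\mathcal{P}_{j},\mathcal{P}_{j}^{\bot}$: using Lemma D.2 of \cite{chernozhukov2019inference} I would write $\Vert\tilde{\Delta}_{\Theta_{j}}\Vert_{\ast}=\Vert\mathcal{P}_{j}(\tilde{\Delta}_{\Theta_{j}})\Vert_{\ast}+\Vert\mathcal{P}_{j}^{\bot}(\tilde{\Delta}_{\Theta_{j}})\Vert_{\ast}$ and $\Vert\Theta_{j}^{0}\Vert_{\ast}-\Vert\tilde{\Theta}_{j}\Vert_{\ast}\leq\Vert\mathcal{P}_{j}(\tilde{\Delta}_{\Theta_{j}})\Vert_{\ast}-\Vert\mathcal{P}_{j}^{\bot}(\tilde{\Delta}_{\Theta_{j}})\Vert_{\ast}$, which converts the penalty difference into a contrast between the in-space and out-of-space projections of the error. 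The low-rank and singular-value conditions in Assumptions \ref{ass:2}--\ref{ass:3} guarantee that these projections are well defined for every $j\in\lbrack p]\cup\{0\}$, including the intercept.

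Finally I would calibrate the tuning parameters. Taking $\nu_{j}=4c_{3}(\sqrt{N}\vee\sqrt{T\log T})/(NT)$ for every $j$ makes $\frac{2}{NT}\Vert X_{j}\odot E\Vert_{op}\leq\nu_{j}/2$ (and similarly for the intercept term), so the noise contribution is dominated by half the regularization. Substituting the two displays above into the right-hand side of the basic inequality and collecting terms yields $0\leq\sum_{j}\nu_{j}\big(\tfrac{3}{2}\Vert\mathcal{P}_{j}(\tilde{\Delta}_{\Theta_{j}})\Vert_{\ast}-\tfrac{1}{2}\Vert\mathcal{P}_{j}^{\bot}(\tilde{\Delta}_{\Theta_{j}})\Vert_{\ast}\big)$; since all $\nu_{j}$ are equal they cancel, giving exactly $\sum_{j}\Vert\mathcal{P}_{j}^{\bot}(\tilde{\Delta}_{\Theta_{j}})\Vert_{\ast}\leq 3\sum_{j}\Vert\mathcal{P}_{j}(\tilde{\Delta}_{\Theta_{j}})\Vert_{\ast}$, i.e. $\{\tilde{\Delta}_{\Theta_{j}}\}\in\mathcal{R}(3)$. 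I expect the main obstacle to be purely bookkeeping: ensuring the decomposability inequality is applied with the correct sign for the intercept as well as the slope matrices, and checking that the factor $4$ in $\nu_{j}$ propagates through to yield precisely the cone constant $3$ rather than a larger parameter. The probabilistic content is entirely outsourced to Lemma \ref{Lem:score op}, so no delicate concentration argument is needed within this proof itself.
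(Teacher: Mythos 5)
Your proposal is correct and follows essentially the same route as the paper's proof: the basic inequality from optimality of the NNR estimator, the bound $|tr(A^{\prime}B)|\leq\Vert A\Vert_{\ast}\Vert B\Vert_{op}$ on the event $\mathscr{A}_{1,N}(c_{3})$ from Lemma \ref{Lem:score op}, the decomposability identities from Lemma D.2 of \cite{chernozhukov2019inference}, and the calibration $\nu_{j}=4c_{3}(\sqrt{N}\vee\sqrt{T\log T})/(NT)$ yielding the cone constant $3$. The bookkeeping you flag works out exactly as in the paper, so no changes are needed.
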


\begin{proof}
{\small Let }${\small A}^{{\small c}}$ {\small denote the complement of }$%
{\small A.}$ {\small Define event 
\begin{equation*}
\mathscr{A}_{1,N}(c_{3})=\left\{ \left\Vert E\right\Vert _{op}\leq c_{3}(%
\sqrt{N}\vee \sqrt{T\log T}),\left\Vert X_{j}\odot E\right\Vert _{op}\leq
c_{3}(\sqrt{N}\vee \sqrt{T\log T}),\forall j\in \lbrack p]\right\} .
\end{equation*}%
Then there exists a positive constant $c_{3}$ such that $\mathbb{P}(\mathscr{A}%
_{1,N}^{c}(c_{3}))\leq \epsilon $ for any $\epsilon >0$ by Lemma \ref%
{Lem:score op}. Under event $\mathscr{A}_{1,N}(c_{3})$, by the definition of 
$\tilde{\Theta}_{j}$ in (\ref{obj}), we notice that 
\begin{equation}
0\leq \frac{1}{NT}\left\Vert Y-\Theta _{0}^{0}-\sum_{j\in \lbrack
p]}X_{j}\odot \Theta _{j}^{0}\right\Vert ^{2}-\frac{1}{NT}\left\Vert Y-%
\tilde{\Theta}_{0}-\sum_{j\in \lbrack p]}X_{j}\odot \tilde{\Theta}%
_{j}\right\Vert ^{2}+\sum_{j\in \lbrack p]\cup \{0\}}\nu _{j}\left(
\left\Vert \Theta _{j}^{0}\right\Vert _{\ast }-\left\Vert \tilde{\Theta}%
_{j}\right\Vert _{\ast }\right)  \label{L1.1}
\end{equation}%
and 
\begin{align}
& \frac{1}{NT}\left\Vert Y-\Theta _{0}^{0}-\sum_{j\in \lbrack p]}X_{j}\odot
\Theta _{j}^{0}\right\Vert ^{2}-\frac{1}{NT}\left\Vert Y-\tilde{\Theta}%
_{0}-\sum_{j\in \lbrack p]}X_{j}\odot \tilde{\Theta}_{j}\right\Vert ^{2} 
\notag  \label{L1.2} \\
& =\frac{1}{NT}\sum_{i\in \lbrack N]}\sum_{t\in \lbrack T]}\left\{
e_{it}^{2}-\left[ e_{it}-\left( \tilde{\Delta}_{\Theta _{0},it}+\sum_{j\in
\lbrack p]}X_{j,it}\tilde{\Delta}_{\Theta _{j},it}\right) \right]
^{2}\right\}  \notag \\
& =\frac{2}{NT}tr\left( E^{\prime }\tilde{\Delta}_{\Theta _{0}}\right)
+\sum_{j\in \lbrack p]}\frac{2}{NT}tr\left( \left( E\odot X_{j}\right)
^{\prime }\tilde{\Delta}_{\Theta _{j}}\right) -\frac{1}{NT}\sum_{i\in
\lbrack N]}\sum_{t\in \lbrack T]}\left( \tilde{\Delta}_{\Theta
_{0},it}+\sum_{j\in \lbrack p]}X_{j,it}\tilde{\Delta}_{\Theta
_{j},it}\right) ^{2}  \notag \\
& \leq \frac{2}{NT}\left\vert tr\left( E^{\prime }\tilde{\Delta}_{\Theta
_{0}}\right) \right\vert +\sum_{j\in \lbrack p]}\frac{2}{NT}\left\vert
tr\left( \left( E\odot X_{j}\right) ^{\prime }\tilde{\Delta}_{\Theta
_{j}}\right) \right\vert  \notag \\
& \leq \frac{2}{NT}\left\Vert E\right\Vert _{op}\left\Vert \tilde{\Delta}%
_{\Theta _{0}}\right\Vert _{\ast }+\sum_{j\in \lbrack p]}\frac{2}{NT}%
\left\Vert E\odot X_{j}\right\Vert _{op}\left\Vert \tilde{\Delta}_{\Theta
_{j}}\right\Vert _{\ast }  \notag \\
& \leq 2c_{3}\sum_{j\in \lbrack p]\cup \{0\}}\frac{\sqrt{N}\vee \sqrt{T\log T%
}}{NT}\left\Vert \tilde{\Delta}_{\Theta _{j}}\right\Vert _{\ast },
\end{align}%
where the second inequality holds by the fact that $tr(AB)\leq
||A||_{op}||B||_{\ast },$ and the last inequality is by the definition of
event $\mathscr{A}_{1,N}$. }

{\small Combining (\ref{L1.1}) and (\ref{L1.2}), we have 
\begin{equation}
0\leq \sum_{j\in \lbrack p]\cup \{0\}}\left\{ \frac{2c_{3}(\sqrt{N}\vee 
\sqrt{T\log T})}{NT}\left\Vert \tilde{\Delta}_{\Theta _{j}}\right\Vert
_{\ast }+\nu _{j}\left( \left\Vert \Theta _{j}^{0}\right\Vert _{\ast
}-\left\Vert \tilde{\Theta}_{j}\right\Vert _{\ast }\right) \right\} ~w.p.a.1.
\label{L1.3}
\end{equation}%
Besides, we can show that 
\begin{align}
\left\Vert \tilde{\Theta}_{j}\right\Vert _{\ast }& =\left\Vert \tilde{\Delta}%
_{\Theta _{j}}+\Theta _{j}^{0}\right\Vert _{\ast }=\left\Vert \Theta
_{j}^{0}+\mathcal{P}_{j}^{\bot }(\tilde{\Delta}_{\Theta _{j}})+\mathcal{P}%
_{j}(\tilde{\Delta}_{\Theta _{j}})\right\Vert _{\ast }  \notag  \label{L1.4}
\\
& \geq \left\Vert \Theta _{j}^{0}+\mathcal{P}_{j}^{\bot }(\tilde{\Delta}%
_{\Theta _{j}})\right\Vert _{\ast }-\left\Vert \mathcal{P}_{j}(\tilde{\Delta}%
_{\Theta _{j}})\right\Vert _{\ast }=\left\Vert \Theta _{j}^{0}\right\Vert
_{\ast }+\left\Vert \mathcal{P}_{j}^{\bot }(\tilde{\Delta}_{\Theta
_{j}})\right\Vert _{\ast }-\left\Vert \mathcal{P}_{j}(\tilde{\Delta}_{\Theta
_{j}})\right\Vert _{\ast },
\end{align}%
where the second equality holds by Lemma D.2(i) in \cite%
{chernozhukov2019inference}, the first inequality is by the triangle inequality
and the last equality is by the construction of the spaces $\mathcal{P}%
_{j}^{\bot }$ and $\mathcal{P}_{j}$. Then combining (\ref{L1.3}) and (\ref%
{L1.4}), w.p.a.1, we have 
\begin{equation*}
\sum_{j\in \lbrack p]\cup \{0\}}\nu _{j}\left\Vert \tilde{\Theta}%
_{j}\right\Vert _{\ast }\leq \sum_{j\in \lbrack p]\cup \{0\}}\left\{ \nu
_{j}\left\Vert \Theta _{j}^{0}\right\Vert _{\ast }+2c_{3}\sum_{j\in \lbrack
p]\cup \{0\}}\frac{(\sqrt{N}\vee \sqrt{T})}{NT}\left\Vert \tilde{\Delta}%
_{\Theta _{j}}\right\Vert _{\ast }\right\}
\end{equation*}%
and 
\begin{align*}
\sum_{j\in \lbrack p]\cup \{0\}}\nu _{j}\left\{ \left\Vert \mathcal{P}%
_{j}^{\bot }\left( \tilde{\Delta}_{\Theta _{j}}\right) \right\Vert _{\ast
}-\left\Vert \mathcal{P}_{j}\left( \tilde{\Delta}_{\Theta _{j}}\right)
\right\Vert _{\ast }\right\} & \leq 2c_{3}\sum_{j\in \lbrack p]\cup \{0\}}%
\frac{(\sqrt{N}\vee \sqrt{T\log T})}{NT}\left\Vert \tilde{\Delta}_{\Theta
_{j}}\right\Vert _{\ast } \\
& =2c_{3}\sum_{j\in \lbrack p]\cup \{0\}}\frac{(\sqrt{N}\vee \sqrt{T\log T})%
}{NT}\left\{ \left\Vert \mathcal{P}_{j}\left( \tilde{\Delta}_{\Theta
_{j}}\right) \right\Vert _{\ast }+\left\Vert \mathcal{P}_{j}^{\bot }\left( 
\tilde{\Delta}_{\Theta _{j}}\right) \right\Vert _{\ast }\right\} ,
\end{align*}%
If we set $\nu _{j}=\frac{4c_{3}(\sqrt{N}\vee \sqrt{T\log T})}{NT}$, we
obtain the final result $\sum_{j\in \lbrack p]\cup \{0\}}\left\Vert \mathcal{%
P}_{j}^{\bot }(\tilde{\Delta}_{\Theta _{j}})\right\Vert _{\ast }\leq
3\sum_{j\in \lbrack p]\cup \{0\}}\left\Vert \mathcal{P}_{j}(\tilde{\Delta}%
_{\Theta _{j}})\right\Vert _{\ast }.$}
\end{proof}

\begin{lemma}
{\small \label{Lem:Bern_mds} Consider a sequence of random variables $%
\left\{ B_{i},i\in \lbrack n]\right\} $. }

\begin{itemize}
\item[(i)] {\small Suppose $B_{i},$ $i\in \lbrack n],$ are independent with $%
\mathbb{E}\left( B_{i}\right) =0$ and $\max_{i\in \lbrack n]}\left\vert
B_{i}\right\vert \leq M$ a.s. Let $\sigma ^{2}=\sum_{i\in \lbrack n]}\mathbb{%
E}\left( B_{i}^{2}\right) $. Then for all $t>0$, we have 
\begin{equation*}
\mathbb{P}\left( \left\vert \sum_{i\in \lbrack n]}B_{i}\right\vert >t\right)
\leq \exp \left\{ -\frac{t^{2}/2}{\sigma ^{2}+Mt/3}\right\} .
\end{equation*}%
}

\item[(ii)] {\small Suppose $\left\{ B_{i},i\in \lbrack n]\right\} $ is an
m.d.s. with $\mathbb{E}_{i-1}\left( B_{i}\right) =0$ and $\max_{i\in \lbrack
n]}\left\vert B_{i}\right\vert \leq M$ a.s., where $\mathbb{E}_{i-1}$
denotes $\mathbb{E}\left( \cdot |\mathscr{F}_{i-1}\right) $, where $\left\{ %
\mathscr{F}_{i}:i\leq n\right\} $ denotes the filtration that is clear from
the context. Let $\left\vert \sum_{i\in \lbrack n]}\mathbb{E}_{i-1}\left(
B_{i}^{2}\right) \right\vert \leq \sigma ^{2}$. Then for all $t>0$, we have 
\begin{equation*}
\mathbb{P}\left( \left\vert \sum_{i\in \lbrack n]}B_{i}\right\vert >t\right)
\leq \exp \left\{ -\frac{t^{2}/2}{\sigma ^{2}+Mt/3}\right\} .
\end{equation*}%
}
\end{itemize}
\end{lemma}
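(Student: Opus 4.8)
The plan is to prove both inequalities by the classical Chernoff / exponential-moment method, treating (i) as the independent case and (ii) as its martingale analogue, with a single elementary one-variable moment-generating-function (MGF) bound doing the work in both. First I would record that lemma-level bound: for any scalar random variable $B$ with $\mathbb{E}(B)=0$ and $\left\vert B\right\vert \leq M$ a.s., and any $\lambda$ with $0<\lambda <3/M$, expanding $e^{\lambda B}$ in a power series, discarding the mean-zero first-order term, and using $\mathbb{E}(B^{k})\leq M^{k-2}\mathbb{E}(B^{2})$ for $k\geq 2$ together with $k!\geq 2\cdot 3^{k-2}$ gives
\begin{equation*}
\mathbb{E}(e^{\lambda B})\leq 1+\frac{\lambda ^{2}\mathbb{E}(B^{2})/2}{1-\lambda M/3}\leq \exp \left( \frac{\lambda ^{2}\mathbb{E}(B^{2})/2}{1-\lambda M/3}\right) ,
\end{equation*}
the last step using $1+x\leq e^{x}$. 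Write $c(\lambda ):=\frac{\lambda ^{2}/2}{1-\lambda M/3}$ for the exponent coefficient.

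For part (i), the Chernoff bound $\mathbb{P}(\sum_{i}B_{i}>t)\leq e^{-\lambda t}\mathbb{E}(e^{\lambda \sum_{i}B_{i}})$ together with independence factorizes the MGF as $\prod_{i}\mathbb{E}(e^{\lambda B_{i}})\leq \exp (c(\lambda )\sigma ^{2})$, so $\mathbb{P}(\sum_{i}B_{i}>t)\leq \exp (-\lambda t+c(\lambda )\sigma ^{2})$. Choosing $\lambda =t/(\sigma ^{2}+Mt/3)$, which indeed lies in $(0,3/M)$, makes $1-\lambda M/3=\sigma ^{2}/(\sigma ^{2}+Mt/3)$ and yields exactly the exponent $-\frac{t^{2}/2}{\sigma ^{2}+Mt/3}$; the two-sided tail follows by applying the same argument to $\{-B_{i}\}$. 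This part is entirely routine.

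For part (ii) I would replace the independence factorization by a supermartingale argument, since the conditional variances are random. Applying the one-variable bound conditionally on $\mathscr{F}_{i-1}$ gives $\mathbb{E}_{i-1}(e^{\lambda B_{i}})\leq \exp (c(\lambda )\mathbb{E}_{i-1}(B_{i}^{2}))$. Setting $S_{n}=\sum_{i\leq n}B_{i}$, $V_{n}=\sum_{i\leq n}\mathbb{E}_{i-1}(B_{i}^{2})$, and $L_{n}=\exp (\lambda S_{n}-c(\lambda )V_{n})$, one checks $\mathbb{E}_{i-1}(L_{i})=L_{i-1}e^{-c(\lambda )\mathbb{E}_{i-1}(B_{i}^{2})}\mathbb{E}_{i-1}(e^{\lambda B_{i}})\leq L_{i-1}$, so $L_{n}$ is a nonnegative supermartingale with $\mathbb{E}(L_{n})\leq L_{0}=1$. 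Because $V_{n}\leq \sigma ^{2}$ a.s., on the event $\{S_{n}>t\}$ we have $L_{n}\geq e^{\lambda t-c(\lambda )\sigma ^{2}}$, and Markov's inequality gives $\mathbb{P}(S_{n}>t)\leq e^{-\lambda t+c(\lambda )\sigma ^{2}}\mathbb{E}(L_{n})\leq e^{-\lambda t+c(\lambda )\sigma ^{2}}$; the same optimization over $\lambda$ as in (i) finishes it.

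The only genuine obstacle is the handling of the random conditional variances in (ii): unlike the independent case, the MGF does not factor, and the hypothesis supplies only an almost-sure bound on the \emph{sum} $V_{n}$ of conditional second moments rather than on the individual deterministic variances. The device that overcomes this is the supermartingale $L_{n}$, which absorbs the random $V_{n}$ into the exponent and lets the a.s. bound $V_{n}\leq \sigma ^{2}$ be invoked only at the final Markov step. Everything else — the per-term expansion, the factorization in (i), and the explicit minimizing $\lambda$ — is standard calculation.
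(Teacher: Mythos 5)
Your proposal is correct, but it is genuinely different from the paper in one structural sense: the paper does not prove this lemma at all. Its entire ``proof'' is a citation — part (i) is attributed to Lemma 2.2.9 of \cite{vaart1996weak} (classical Bernstein for bounded independent summands) and part (ii) to Theorem 1.1 of \cite{tropp2011user} (Freedman's inequality for martingale difference sequences). What you have done is reprove both cited results from scratch, and your argument is exactly the one sitting inside those references: the one-variable MGF bound $\mathbb{E}(e^{\lambda B})\leq \exp\bigl(c(\lambda)\,\mathbb{E}(B^{2})\bigr)$ with $c(\lambda)=\frac{\lambda^{2}/2}{1-\lambda M/3}$, obtained via $k!\geq 2\cdot 3^{k-2}$, does the work in both parts; independence lets the MGF factor in (i), and the supermartingale $L_{n}=\exp(\lambda S_{n}-c(\lambda)V_{n})$ with $\mathbb{E}(L_{n})\leq 1$ is precisely Freedman's device for absorbing the random predictable variation $V_{n}$ before the a.s.\ bound $V_{n}\leq\sigma^{2}$ is invoked at the Markov step. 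Your choice $\lambda=t/(\sigma^{2}+Mt/3)$, which makes $1-\lambda M/3=\sigma^{2}/(\sigma^{2}+Mt/3)$, recovers the stated exponent exactly, and your correct observation that the hypothesis $\bigl\vert\sum_{i}\mathbb{E}_{i-1}(B_{i}^{2})\bigr\vert\leq\sigma^{2}$ is just $V_{n}\leq\sigma^{2}$ a.s.\ (the conditional second moments being nonnegative) is the right reading of the lemma's slightly odd phrasing. What your approach buys is self-containedness and a unified treatment of (i) and (ii); what the paper's approach buys is brevity, which is appropriate since both inequalities are textbook results.

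One small discrepancy worth naming: your two-sided bound, obtained by running the one-sided argument for $\{B_{i}\}$ and $\{-B_{i}\}$ and union-bounding, yields $2\exp\bigl\{-\frac{t^{2}/2}{\sigma^{2}+Mt/3}\bigr\}$ — which is also what the cited references actually state — whereas the lemma as written in the paper omits the factor $2$ in front of the exponential for the probability of $\bigl\vert\sum_{i}B_{i}\bigr\vert>t$. So your proof establishes the statement only up to that constant, and strictly speaking the paper's version is too strong as written. This is a slip in the paper's statement rather than a gap in your argument, and it is immaterial for every use of the lemma in the paper, where the bound is multiplied by $N$ or $N^{2}$ terms and sent to zero.
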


\begin{proof}
{\small Lemma \ref{Lem:Bern_mds}(i) and (ii) are Bernstein inequality for
the partial sum of an independent sequence and the Freedman inequality for
the partial sum of an m.d.s., which are respectively stated in Lemma 2.2.9 
\cite{vaart1996weak} and Theorem 1.1 \cite{tropp2011user}. }
\end{proof}

\begin{lemma}
{\small \label{Lem:Bern_mixing} Let $\left\{ \Upsilon _{t},t\geq 1\right\} $
be a zero-mean strong mixing process, not necessarily stationary, with the
mixing coefficients satisfying $\alpha (z)\leq c_{\alpha }{\small \vartheta }%
^{z}$ for some $c_{\alpha }>0$ and }${\small \vartheta }${\small $\in (0,1)$%
. If $\sup_{t\in \lbrack T]}\left\vert \Upsilon _{t}\right\vert \leq M_{T}$,
then there exists a constant $c_{4}$ depending on $c_{\alpha }$ and }$%
{\small \vartheta }${\small \ such that for any $T\geq 2$ and $\varepsilon >0
$, }

\begin{itemize}
\item[(i)] {\small $\mathbb{P}\left\{
\left\vert\sum_{t=1}^{T}\Upsilon_{t}\right\vert >\varepsilon \right\} \leq
\exp\left\{ -\frac{c_{4}\varepsilon ^{2}}{M_{T}^{2}T+\varepsilon
M_{T}\left(\log T\right)\left( \log \log T\right) }\right\} $, }

\item[(ii)] {\small $\mathbb{P}\left\{
\left\vert\sum_{t=1}^{T}\Upsilon_{t}\right\vert >\varepsilon \right\} \leq
\exp\left\{ -\frac{c_{4}\varepsilon ^{2}}{\upsilon_{0}^{2}T+M_{T}^{2}+%
\varepsilon M_{T}\left(\log T\right) ^{2}}\right\} $, }
\end{itemize}

\noindent {\small where $\upsilon _{0}^{2}=\sup_{t\in \lbrack T]}\left[
Var(\Upsilon _{t})+2\sum_{s>t}\left\vert Cov(\Upsilon _{t},\Upsilon
_{s})\right\vert \right] $. }
\end{lemma}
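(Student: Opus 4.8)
The plan is to establish both bounds by the classical Chernoff (Laplace-transform) method combined with Bernstein's big-block/small-block decomposition, following the strategy developed by Merlev\`ede, Peligrad and Rio (2009) for geometrically strongly mixing sequences. Writing $S_{T}=\sum_{t=1}^{T}\Upsilon_{t}$, the starting point is Markov's inequality applied to the exponential moment: for any $\lambda>0$,
\begin{equation*}
\mathbb{P}\left( \left\vert S_{T}\right\vert >\varepsilon \right) \leq 2\exp(-\lambda \varepsilon )\,\mathbb{E}\left[ \exp\left( \lambda S_{T}\right) \right],
\end{equation*}
so that the problem reduces to an upper bound on the cumulant generating function of $S_{T}$ that is finally optimized over $\lambda$. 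Both displays in the lemma are obtained by the same calculation; they differ only in how the block variances are accounted for.

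The central step is to control $\mathbb{E}[\exp(\lambda S_{T})]$ despite the serial dependence. First I would partition $\{1,\dots,T\}$ into alternating ``big'' blocks of length $p$ and ``small'' blocks of length $q$, with both $p$ and $q$ taken as multiples of $\log T$. After writing $S_{T}$ as the sum of big-block sums plus a remainder supported on the small blocks, I would use a coupling argument (Bradley's coupling lemma, or iterated conditioning via the Volkonskii--Rozanov inequality) to replace the big-block sums by mutually independent copies. The approximation error is governed by the mixing coefficient across gaps of length $q$, namely $\alpha(q)\leq c_{\alpha}\vartheta^{q}$; the choice $q\asymp \log T$ renders this error negligible, of order $T^{-a}$ for any fixed $a$. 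Once the big blocks are decoupled, the exponential moment factorizes and Lemma \ref{Lem:Bern_mds}(i) applies to the resulting sum of independent, bounded block variables.

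For part (i) I would bound each big-block variance crudely by $p\cdot M_{T}^{2}$ and use that there are $\asymp T/p$ big blocks, which produces the leading denominator term $M_{T}^{2}T$; the factor $(\log T)(\log \log T)$ in the cross term then emerges from the interplay between the block length $p\asymp \log T$, the number of recursive refinements, and the geometric decay rate. For part (ii) I would instead bound the block variance by $\upsilon_{0}^{2}p$, using $\upsilon_{0}^{2}=\sup_{t}[\mathrm{Var}(\Upsilon_{t})+2\sum_{s>t}\vert \mathrm{Cov}(\Upsilon_{t},\Upsilon_{s})\vert]$, which is finite precisely because geometric mixing together with boundedness forces $\sum_{s}\vert \mathrm{Cov}(\Upsilon_{t},\Upsilon_{s})\vert<\infty$ via Davydov's inequality (cf. the bound used in Lemma \ref{Lem:score op}). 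This yields the sharper $\upsilon_{0}^{2}T$ term, with the isolated $M_{T}^{2}$ coming from the boundary block and $(\log T)^{2}$ in the cross term from a finer recursive block construction.

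The hard part will be producing exactly the stated logarithmic factors rather than a cruder power of $\log T$: a single-scale big-block/small-block split only gives a weaker inequality. Obtaining $(\log T)^{2}$ in (ii) and the milder $(\log T)(\log \log T)$ in (i) requires the recursive, Cantor-type block decomposition of Merlev\`ede, Peligrad and Rio, in which blocks are split dyadically and the coupling error is summed across scales. The delicate bookkeeping is to track how $\vartheta<1$ controls each level of the recursion and to verify that the accumulated coupling errors remain geometrically summable, so that they do not degrade the constant $c_{4}$; the remaining steps are routine once this control is in place. Since both inequalities coincide with established theorems in that reference, an acceptable alternative is to invoke their results directly after checking that geometric strong mixing with uniformly bounded summands satisfies the required hypotheses.
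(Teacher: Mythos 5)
Your proposal is correct and takes essentially the same route as the paper: the paper's entire proof consists of invoking Theorems 1 and 2 of Merlev\`ede, Peligrad and Rio (2009) after verifying that $\alpha(z)\leq c_{\alpha}\vartheta^{z}$ satisfies their geometric-mixing hypothesis $\alpha(a)\leq \exp\{-2ca\}$ for a suitable constant $c$ depending on $c_{\alpha}$ and $\vartheta$ --- exactly the fallback you describe in your final paragraph. Your sketch of the internal machinery (Chernoff bounding plus the recursive Cantor-type block decomposition with coupling across scales) is a faithful summary of that reference, but the paper does not reproduce it and neither would you need to.
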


\begin{proof}
{\small The proof is the same as that of Theorems 1 and 2 in \cite%
{merlevede2009bernstein} with the condition $\alpha (a)\leq
\exp\left\{-2ca\right\} $ for some $c>0.$ Here we can set $c=-\log \gamma $
if $c_{\alpha }\geq 1$ and $c=-\log (\gamma /c_{\alpha })$ otherwise. }
\end{proof}

\begin{lemma}
{\small \label{Lem:bounded u&v} Suppose Assumptions \ref{ass:1}--\ref{ass:4}
hold, for $j\in \left\{ 0,\cdots ,p\right\} $, we have }

\begin{itemize}
\item[(i)] {\small $\max_{i}\left\Vert u_{i,j}^{0}\right\Vert \leq M$ and $%
\max_{t}\left\Vert v_{t,j}^{0}\right\Vert \leq \frac{M}{\sigma _{K_{j},j}}%
\leq \frac{M}{c_{\sigma }}$, }

\item[(ii)] {\small $\max_{t}\left\Vert O_{j}^{\prime }\tilde{v}%
_{t,j}\right\Vert \leq \frac{2M}{\sigma _{K_{j},j}}\leq \frac{2M}{c_{\sigma }%
}$ w.p.a.1, }

\item[(iii)] {\small $\max_{i}\frac{1}{T}\sum_{t\in \lbrack T]}\left\Vert 
\tilde{\phi}_{it}\right\Vert ^{2}\leq \frac{4M^{2}}{c_{\sigma }^{2}}\left(
1+pC\right) $ w.p.a.1, }

\item[(iv)] {\small $\max_{i}\frac{1}{T}\sum_{t\in \lbrack T]}\left\Vert 
\tilde{\phi}_{it}-\phi _{it}^{0}\right\Vert ^{2}=O_{p}(\eta
_{N,1}^{2}(NT)^{2/q})$. }
\end{itemize}
\end{lemma}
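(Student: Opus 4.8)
The plan is to reduce all four bounds to two SVD normalizations, namely $\frac{1}{T}\sum_{t}v_{t,j}^{0}v_{t,j}^{0\prime}=I_{r_{j}}$ and $\frac{1}{N}U_{j}^{0\prime}U_{j}^{0}=(\Sigma_{j}^{0})^{2}$ (immediate from $V_{j}^{0}=\sqrt{T}\mathcal{V}_{j}^{0}$, $U_{j}^{0}=\sqrt{N}\mathcal{U}_{j}^{0}\Sigma_{j}^{0}$ and the orthonormality of $\mathcal{U}_{j}^{0},\mathcal{V}_{j}^{0}$), together with the Frobenius bounds of Theorem \ref{Thm1}(i). For (i), since $\Theta_{j,it}^{0}=u_{i,j}^{0\prime}v_{t,j}^{0}$, the first normalization gives $\|u_{i,j}^{0}\|^{2}=\frac{1}{T}\sum_{t}(\Theta_{j,it}^{0})^{2}\le\|\Theta_{j}^{0}\|_{\max}^{2}\le M^{2}$ by Assumption \ref{ass:2}, hence $\max_{i}\|u_{i,j}^{0}\|\le M$. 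Writing the $t$-th column of $\Theta_{j}^{0}$ as $U_{j}^{0}v_{t,j}^{0}$, the second normalization gives $\frac{1}{N}\|U_{j}^{0}v_{t,j}^{0}\|^{2}=v_{t,j}^{0\prime}(\Sigma_{j}^{0})^{2}v_{t,j}^{0}\ge\sigma_{r_{j},j}^{2}\|v_{t,j}^{0}\|^{2}$, where $\sigma_{r_{j},j}$ is the smallest nonzero singular value, while the same quantity equals $\frac{1}{N}\sum_{i}(\Theta_{j,it}^{0})^{2}\le M^{2}$; therefore $\|v_{t,j}^{0}\|\le M/\sigma_{r_{j},j}\le M/c_{\sigma}$ by Assumption \ref{ass:3}.

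For (iv) I would decompose $\tilde\phi_{it}-\phi_{it}^{0}$ blockwise, so that $\|\tilde\phi_{it}-\phi_{it}^{0}\|^{2}=\|O_{0}^{\prime}\tilde v_{t,0}-v_{t,0}^{0}\|^{2}+\sum_{j=1}^{p}\|O_{j}^{\prime}\tilde v_{t,j}-v_{t,j}^{0}\|^{2}X_{j,it}^{2}$. Averaging over $t$, the $j=0$ term carries no $i$ and equals $\frac{1}{T}\|V_{0}^{0}-\tilde V_{0}O_{0}\|^{2}=O_{p}(\eta_{N,1}^{2})$ by Theorem \ref{Thm1}(i). For $j\ge1$ I would factor out $\max_{i,t}X_{j,it}^{2}$, which is $O_{p}((NT)^{2/q})$ by Markov's inequality and a union bound under the $q$-th moment condition of Assumption \ref{ass:1}(v); the remaining factor $\frac{1}{T}\sum_{t}\|O_{j}^{\prime}\tilde v_{t,j}-v_{t,j}^{0}\|^{2}=\frac{1}{T}\|V_{j}^{0}-\tilde V_{j}O_{j}\|^{2}=O_{p}(\eta_{N,1}^{2})$ is again the Theorem \ref{Thm1}(i) bound. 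This yields $O_{p}(\eta_{N,1}^{2}(NT)^{2/q})$ uniformly in $i$, as claimed. Part (iii) then follows by combining (i), (ii) and Assumption \ref{ass:1}(iv): from $\frac{1}{T}\sum_{t}\|\tilde\phi_{it}\|^{2}=\frac{1}{T}\sum_{t}\|O_{0}^{\prime}\tilde v_{t,0}\|^{2}+\sum_{j=1}^{p}\frac{1}{T}\sum_{t}\|O_{j}^{\prime}\tilde v_{t,j}\|^{2}X_{j,it}^{2}$, bounding each $\|O_{j}^{\prime}\tilde v_{t,j}\|^{2}\le(2M/c_{\sigma})^{2}$ by (ii) and using $\max_{i}\frac{1}{T}\sum_{t}X_{j,it}^{2}\le C$ a.s. gives exactly $\frac{4M^{2}}{c_{\sigma}^{2}}(1+pC)$. (If only an $O_{p}(1)$ bound were needed, (iii) could be obtained without (ii) by splitting $\|O_{j}^{\prime}\tilde v_{t,j}\|^{2}\le2\|v_{t,j}^{0}\|^{2}+2\|O_{j}^{\prime}\tilde v_{t,j}-v_{t,j}^{0}\|^{2}$ and invoking (i) and (iv).)

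The hard part is (ii). Since $O_{j}$ is orthogonal, $\|O_{j}^{\prime}\tilde v_{t,j}\|=\|\tilde v_{t,j}\|$, so (ii) is a \emph{uniform-in-$t$} bound on the first-stage singular-vector estimate, and by the triangle inequality and (i) it suffices to show $\max_{t}\|O_{j}^{\prime}\tilde v_{t,j}-v_{t,j}^{0}\|=o_{p}(1)$, equivalently $\sqrt{T}\,\|\tilde{\mathcal{V}}_{j}O_{j}-\mathcal{V}_{j}^{0}\|_{2,\infty}=o_{p}(1)$. This is precisely where Theorem \ref{Thm1}(i) is not enough: its Frobenius bound $\|V_{j}^{0}-\tilde V_{j}O_{j}\|=O_{p}(\sqrt{T}\eta_{N,1})$ controls only the \emph{average} of $\|O_{j}^{\prime}\tilde v_{t,j}-v_{t,j}^{0}\|^{2}$ over $t$, not its maximum. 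I would upgrade it through the representation $\tilde v_{t,j}=\frac{1}{\sqrt{N}}\hat{\tilde\Sigma}_{j}^{-1}\tilde{\mathcal{U}}_{j}^{\prime}\tilde\Theta_{j,\cdot t}$ ($\tilde\Theta_{j,\cdot t}$ the $t$-th column of $\tilde\Theta_{j}$) together with the matching identity $v_{t,j}^{0}=\frac{1}{\sqrt{N}}(\Sigma_{j}^{0})^{-1}\mathcal{U}_{j}^{0\prime}\Theta_{j,\cdot t}^{0}$, reducing the error to a perturbation expansion whose leading term is $\frac{1}{\sqrt{N}}(\Sigma_{j}^{0})^{-1}\mathcal{U}_{j}^{0\prime}(\tilde\Theta_{j,\cdot t}-\Theta_{j,\cdot t}^{0})$, of norm at most $\frac{1}{\sqrt{N}c_{\sigma}}\|\tilde\Theta_{j,\cdot t}-\Theta_{j,\cdot t}^{0}\|$.

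Thus the crux is a \emph{per-column} control $\max_{t}\|\tilde\Theta_{j,\cdot t}-\Theta_{j,\cdot t}^{0}\|=o_{p}(\sqrt{N})$ of the NNR residual, which does not follow from aggregate Frobenius consistency. I expect this to be the genuine obstacle, and I would resolve it by a row-wise ($\ell_{2,\infty}$) singular-subspace perturbation argument, exploiting the spectral gap of Assumption \ref{ass:3}, the incoherence $\|\mathcal{V}_{j}^{0}\|_{2,\infty}=O(T^{-1/2})$ supplied by (i), and the operator-norm noise control $\|E\|_{op}\vee\max_{j}\|X_{j}\odot E\|_{op}=O_{p}(\sqrt{N}\vee\sqrt{T\log T})$ from Lemma \ref{Lem:score op}; once $\max_{t}\|O_{j}^{\prime}\tilde v_{t,j}-v_{t,j}^{0}\|=o_{p}(1)$ is in hand, the factor-$2$ slack in the constant $2M/c_{\sigma}$ holds for $(N,T)$ large, completing (ii).
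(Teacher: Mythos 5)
Your parts (i), (iii) and (iv) are correct and essentially identical to the paper's own arguments: (i) is the same use of the SVD normalizations ($\frac{1}{T}\sum_{t}v_{t,j}^{0}v_{t,j}^{0\prime}=I_{r_{j}}$ and the lower bound $\sigma_{r_{j},j}\geq c_{\sigma}$ from Assumption \ref{ass:3}) combined with $\Vert\Theta_{j}^{0}\Vert_{\max}\leq M$; (iv) is the same blockwise decomposition with $\max_{i,t}X_{j,it}^{2}=O_{p}((NT)^{2/q})$ from Assumption \ref{ass:1}(v) and the Frobenius bound $\frac{1}{\sqrt{T}}\Vert V_{j}^{0}-\tilde{V}_{j}O_{j}\Vert=O_{p}(\eta_{N,1})$ of Theorem \ref{Thm1}(i); and (iii) follows from (ii) and Assumption \ref{ass:1}(iv) exactly as in the paper, with the same constant.

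The gap is in (ii), and it is a wrong turn rather than a missing detail. You reduce (ii) to the uniform consistency $\max_{t}\Vert O_{j}^{\prime}\tilde{v}_{t,j}-v_{t,j}^{0}\Vert=o_{p}(1)$ and then to the per-column control $\max_{t}\Vert\tilde{\Theta}_{j,\cdot t}-\Theta_{j,\cdot t}^{0}\Vert=o_{p}(\sqrt{N})$. That target is both unattainable at this stage and unnecessary. Unattainable: the only available error control on $\tilde{\Theta}_{j}$ is the aggregate Frobenius bound $\Vert\tilde{\Theta}_{j}-\Theta_{j}^{0}\Vert=O_{p}(\sqrt{NT}\,\eta_{N,1})$, whose worst single column gives $\frac{1}{\sqrt{N}}\max_{t}\Vert\tilde{\Theta}_{j,\cdot t}-\Theta_{j,\cdot t}^{0}\Vert=O_{p}(\sqrt{T}\eta_{N,1})=O_{p}\bigl(\sqrt{T\log T/(N\wedge T)}\bigr)$, not $o_{p}(1)$ when $T\gtrsim N$; and the $\ell_{2,\infty}$ singular-subspace program you sketch for the \emph{penalized} NNR estimator (with $p+1$ interacting low-rank components) has no off-the-shelf form — indeed the paper explicitly states that the NNR estimates lack pointwise or uniform elementwise convergence, which is precisely why the second-stage row- and column-wise regressions exist. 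Unnecessary: (ii) is only a \emph{norm bound}, not a consistency statement. The paper's proof uses your own identity $\tilde{\Sigma}_{j}\tilde{V}_{j}^{\prime}=\frac{1}{\sqrt{N}}\tilde{\mathcal{U}}_{j}^{\prime}\tilde{\Theta}_{j}$ to write $\Vert O_{j}^{\prime}\tilde{v}_{t,j}\Vert=\Vert\tilde{v}_{t,j}\Vert\leq\frac{1}{\sqrt{N}\,\tilde{\sigma}_{K_{j},j}}\Vert[\tilde{\Theta}_{j}]_{\cdot t}\Vert$, then bounds the column of the \emph{estimate itself} (not of the error) by $\frac{1}{\sqrt{N}}\Vert[\tilde{\Theta}_{j}]_{\cdot t}\Vert\leq\Vert\tilde{\Theta}_{j}\Vert_{\max}\leq M$, which holds because the NNR program in (\ref{obj}) is solved under a max-norm constraint mirroring Assumption \ref{ass:2}, and finally uses the singular-value consistency $\max_{k}|\tilde{\sigma}_{k,j}-\sigma_{k,j}|=O_{p}(\eta_{N,1})$ from Theorem \ref{Thm1}(i) (Weyl) to get $\tilde{\sigma}_{K_{j},j}^{-1}\leq 2/c_{\sigma}$ w.p.a.1 — this is where the factor $2$ in $2M/c_{\sigma}$ comes from, not from a triangle-inequality slack around $v_{t,j}^{0}$. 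With that substitution your (ii), and hence your (iii), close in two lines.
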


\begin{proof}
{\small (i) Recall that $\frac{1}{\sqrt{NT}}\Theta _{j}^{0}=\mathcal{U}%
_{j}^{0}\Sigma _{j}^{0}\mathcal{V}_{j}^{0\prime },$ $U_{j}^{0}=\sqrt{N}%
\mathcal{U}_{j}^{0}\Sigma _{j}^{0}$ and $V_{j}=\sqrt{T}\mathcal{V}_{j}^{0}$.
Let }${\small [A]}_{{\small i\cdot }}${\small \ and }${\small [A]}_{{\small %
\cdot t}}$ {\small denote the }${\small i}${\small -th row and }${\small t}$%
{\small -th column of }${\small A,}$ {\small respectively.} {\small Note
that 
\begin{equation}
\frac{1}{\sqrt{T}}\Theta _{j}^{0}\mathcal{V}_{j}^{0}=\sqrt{N}\mathcal{U}%
_{j}^{0}\Sigma _{j}^{0}=U_{j}^{0},\text{ and }\frac{1}{\sqrt{N}}\mathcal{U}%
_{j}^{0\prime }\Theta _{j}^{0}=\sqrt{T}\Sigma _{j}^{0}\mathcal{V}%
_{j}^{0\prime }=\Sigma _{j}^{0}V_{j}^{0\prime }.  \label{Lem12.1a}
\end{equation}%
Hence, it's easy to see that 
\begin{equation*}
\left\Vert u_{i,j}^{0}\right\Vert =\frac{1}{\sqrt{T}}\left\Vert \left[
\Theta _{j}^{0}\mathcal{V}_{j}^{0}\right] _{i\cdot }\right\Vert \leq \frac{1%
}{\sqrt{T}}\left\Vert \left[ \Theta _{j}\right] _{i\cdot }\right\Vert \leq M,
\end{equation*}%
where the first inequality is due to the fact that $\mathcal{V}_{j}$ is the
unitary matrix and the last inequality holds by Assumption \ref{ass:2}.
Since the upper bound $M$ is not dependent on $i$, this result holds
uniformly. Analogously, we see that 
\begin{equation*}
\left\Vert v_{t,j}^{0}\right\Vert \leq \frac{1}{\sqrt{N}}c_{\sigma
}^{-1}\left\Vert \left[ \mathcal{U}_{j}^{0\prime }\Theta _{j}^{0}\right]
_{\cdot t}\right\Vert \leq \frac{1}{\sqrt{N}}c_{\sigma }^{-1}\left\Vert %
\left[ \Theta _{j}^{0}\right] _{\cdot t}\right\Vert \leq \frac{M}{c_{\sigma }%
}.
\end{equation*}%
}

{\small (ii) As in (\ref{Lem12.1a}), we have 
\begin{equation*}
\frac{1}{\sqrt{N}}\tilde{\mathcal{U}}_{j}^{(1)\prime }\tilde{\Theta}_{j}=%
\sqrt{T}\tilde{\Sigma}_{j}^{(1)}\tilde{\mathcal{V}}_{j}^{(1)\prime }=\tilde{%
\Sigma}_{j}^{(1)}\tilde{V}_{j}^{(1)\prime },
\end{equation*}%
and 
\begin{equation*}
\left\Vert O_{j}^{\prime }\tilde{v}_{t,j}\right\Vert \leq \frac{1}{\sqrt{N}}%
\frac{1}{\tilde{\sigma}_{K_{j},j}^{(1)}}\left\Vert \left[ \tilde{\mathcal{U}}%
_{j}^{(1)\prime }\tilde{\Theta}_{j}\right] _{\cdot t}\right\Vert \leq \frac{1%
}{\sqrt{N}}\frac{1}{\tilde{\sigma}_{K_{j},j}^{(1)}}\left\Vert \left[ \tilde{%
\Theta}_{j}\right] _{\cdot t}\right\Vert \leq \frac{2M}{c_{\sigma }},
\end{equation*}%
where the last inequality holds due to the constrained optimization in (\ref%
{obj}) and the fact that $\max_{k\in \lbrack K_{j}]}|\tilde{\sigma}%
_{k,j}^{-1}-\sigma _{k,j}^{-1}|\leq \sigma _{K_{j},j}^{-1}$ w.p.a.1. }

{\small (iii) Note that 
\begin{eqnarray*}
\max_{i}\frac{1}{T}\sum_{t\in \lbrack T]}\left\Vert \tilde{\phi}%
_{it}\right\Vert ^{2} &\leq &\max_{i}\left\{ \frac{1}{T}\sum_{t\in \lbrack
T]}\left\Vert O_{0}^{\prime }\tilde{v}_{t,0}\right\Vert ^{2}+\sum_{j\in
\lbrack p]}\frac{1}{T}\sum_{t\in \lbrack T]}\left\Vert O_{j}^{\prime }\tilde{%
v}_{t,j}\right\Vert ^{2}\left\vert X_{j,it}\right\vert ^{2}\right\} \\
&\leq &\max_{t\in \lbrack T],j\in \lbrack p]\cup \left\{ 0\right\}
}\left\Vert O_{j}^{\prime }\tilde{v}_{t,j}\right\Vert ^{2}\left\{
1+\max_{i}\sum_{j\in \lbrack p]}\frac{1}{T}\sum_{t\in \lbrack T]}\left\vert
X_{j,it}\right\vert ^{2}\right\} \leq \frac{4M^{2}}{c_{\sigma }^{2}}\left(
1+pC\right)
\end{eqnarray*}%
where the last inequality holds by Lemma \ref{Lem:bounded u&v}(ii). }

{\small (iv) Note that 
\begin{align*}
& \max_{i}\frac{1}{T}\sum_{t\in \lbrack T]}\left\Vert \tilde{\phi}_{it}-\phi
_{it}^{0}\right\Vert ^{2}\leq \frac{1}{T}\sum_{t\in \lbrack T]}\left\Vert 
\tilde{O}_{0}^{(1)\prime }\tilde{v}_{t,0}-v_{t,0}^{0}\right\Vert
^{2}+p\max_{t\in \lbrack T],j\in \lbrack p]}\frac{1}{T}\sum_{t\in \lbrack
T]}\left\vert X_{j,it}\right\vert ^{2}\left\Vert \tilde{O}_{j}^{(1)\prime }%
\tilde{v}_{t,j}-v_{t,j}^{0}\right\Vert ^{2} \\
& \lesssim \frac{1}{T}\sum_{t\in \lbrack T]}\left\Vert \tilde{O}%
_{0}^{(1)\prime }\tilde{v}_{t,0}-v_{t,0}^{0}\right\Vert
^{2}+p(NT)^{2/q}\max_{j\in \left[ p\right] }\frac{1}{T}\sum_{t\in \lbrack
T]}\left\Vert \tilde{O}_{j}^{(1)\prime }\tilde{v}_{t,j}-v_{t,j}^{0}\right%
\Vert ^{2}\quad \\
& =\frac{1}{T}\left\Vert O_{0}\tilde{V}_{0}-V_{0}^{0}\right\Vert
^{2}+p(NT)^{2/q}\max_{j\in \lbrack p]}\frac{1}{T}\left\Vert O_{j}\tilde{V}%
_{j}-V_{j}^{0}\right\Vert ^{2}=O_{p}(\eta _{N,1}^{2}(NT)^{2/q}),
\end{align*}%
where the second inequality is by Assumption \ref{ass:1}(v) and the last
equality holds by Theorem \ref{Thm1}(ii). }
\end{proof}

\begin{lemma}
{\small \label{Lem:lambda_phi} Under Assumptions \ref{ass:1}--\ref{ass:5},
we have $\min_{i\in \lbrack N]}\lambda _{\min }(\tilde{\Phi}_{i})\geq \frac{%
c_{\phi }}{2}~w.p.a.1$, and $\min_{t\in \lbrack T]}\lambda _{\min }(\tilde{%
\Psi}_{t})\geq \frac{c_{\phi }}{2}~w.p.a.1$. }
\end{lemma}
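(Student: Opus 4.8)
The plan is to derive both eigenvalue bounds from Weyl's inequality, after showing that the sample second-moment matrices built from the first-stage factor estimates are uniformly close to their population counterparts. I treat the assertion for $\tilde{\Phi}_i$ in detail; the one for $\tilde{\Psi}_t$ follows by the symmetric argument with the cross-section and time indices interchanged. First I would record the population lower bound: by Assumption \ref{ass:5}, $\liminf_N \min_{i\in[N]}\lambda_{\min}(\Phi_i)\geq c_\phi$, so for all $N$ large $\min_{i\in[N]}\lambda_{\min}(\Phi_i)\geq \tfrac34 c_\phi$. Since $\|\cdot\|_{op}\leq\|\cdot\|$, Weyl's inequality gives
\begin{equation*}
\min_{i\in[N]}\lambda_{\min}(\tilde{\Phi}_i)\geq \min_{i\in[N]}\lambda_{\min}(\Phi_i)-\max_{i\in[N]}\|\tilde{\Phi}_i-\Phi_i\|,
\end{equation*}
so it suffices to prove $\max_{i\in[N]}\|\tilde{\Phi}_i-\Phi_i\|=o_p(1)$, and in fact that it is eventually below $\tfrac14 c_\phi$ w.p.a.1.

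For the perturbation, I would use the identity $\tilde{\phi}_{it}\tilde{\phi}_{it}'-\phi_{it}^0\phi_{it}^{0\prime}=(\tilde{\phi}_{it}-\phi_{it}^0)(\tilde{\phi}_{it}-\phi_{it}^0)'+(\tilde{\phi}_{it}-\phi_{it}^0)\phi_{it}^{0\prime}+\phi_{it}^0(\tilde{\phi}_{it}-\phi_{it}^0)'$, take norms, average over $t$, and apply Cauchy--Schwarz to get
\begin{equation*}
\max_i\|\tilde{\Phi}_i-\Phi_i\|\leq \max_i\frac1T\sum_{t}\|\tilde{\phi}_{it}-\phi_{it}^0\|^2+2\sqrt{\max_i\frac1T\sum_t\|\tilde{\phi}_{it}-\phi_{it}^0\|^2}\sqrt{\max_i\frac1T\sum_t\|\phi_{it}^0\|^2}.
\end{equation*}
Here $\max_i\frac1T\sum_t\|\phi_{it}^0\|^2=O(1)$, because $\|\phi_{it}^0\|^2=\|v_{t,0}^0\|^2+\sum_j\|v_{t,j}^0\|^2 X_{j,it}^2$ with $\max_t\|v_{t,j}^0\|\leq M/c_\sigma$ by Lemma \ref{Lem:bounded u&v}(i) and $\max_i\frac1T\sum_t X_{j,it}^2\leq C$ by Assumption \ref{ass:1}(iv); and $\max_i\frac1T\sum_t\|\tilde{\phi}_{it}-\phi_{it}^0\|^2=O_p(\eta_{N,1}^2(NT)^{2/q})$ by Lemma \ref{Lem:bounded u&v}(iv). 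Combining, $\max_i\|\tilde{\Phi}_i-\Phi_i\|=O_p(\eta_{N,1}(NT)^{1/q})=O_p(\eta_{N,2})=o_p(1)$ under Assumption \ref{ass:1}(vi), which is the desired conclusion via the Weyl bound above.

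For $\tilde{\Psi}_t$ I would repeat the three steps with $(i,t,u,v,N,T)$ replaced by $(t,i,v,u,T,N)$: Assumption \ref{ass:5} supplies $\liminf_T\min_t\lambda_{\min}(\Psi_t)\geq c_\phi$; the counterpart bounds $\max_t\frac1N\sum_i\|\psi_{it}^0\|^2=O(1)$ (from $\max_i\|u_{i,j}^0\|\leq M$ in Lemma \ref{Lem:bounded u&v}(i) and $\max_t\frac1N\sum_i X_{j,it}^2\leq C$) and $\max_t\frac1N\sum_i\|\tilde{\psi}_{it}-\psi_{it}^0\|^2=O_p(\eta_{N,2})$ --- the column-side analogue of Lemma \ref{Lem:bounded u&v}(iv) established in the course of the proof of Theorem \ref{Thm1}(ii) --- together give $\max_t\|\tilde{\Psi}_t-\Psi_t\|=o_p(1)$, and Weyl's inequality completes the proof. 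The argument is essentially routine once Lemma \ref{Lem:bounded u&v} is in hand, so there is no deep obstacle; the only point demanding care is the second step, where the non-boundedness of the regressors $X_{j,it}$ (Assumption \ref{ass:1}(v) assumes only $q$-th moments with $q>8$) forces the averaged factor-error bound to carry the inflation factor $(NT)^{1/q}$. Consequently the whole argument hinges on Assumption \ref{ass:1}(vi) to guarantee $\eta_{N,2}=o(1)$, and one must keep the uniformity over $i$ (resp.\ over $t$) throughout so that the perturbation is controlled in the maximum rather than merely on average.
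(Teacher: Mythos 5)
Your proof is correct and takes essentially the same route as the paper's: a uniform perturbation bound on $\max_{i\in[N]}\|\tilde{\Phi}_i-\Phi_i\|$ (and its column-side analogue for $\tilde{\Psi}_t$, which, as you note, must be run after the row-wise rates for $\dot{u}_{i,j}$ are available in the proof of Theorem \ref{Thm1}(ii)) combined with the eigenvalue comparison against $\min_i\lambda_{\min}(\Phi_i)\geq c_\phi$ from Assumption \ref{ass:5}. The only difference is in sharpness, not substance: the paper bounds the perturbation linearly in $\|O_j^{\prime}\tilde{v}_{t,j}-v_{t,j}^0\|$ using the a.s.\ boundedness of both the estimated and true factor components (Lemma \ref{Lem:bounded u&v}(i)--(ii)) plus Cauchy--Schwarz against $(\frac{1}{T}\sum_t X_{j,it}^2)^{1/2}$, arriving at $O_p(\eta_{N,1})$, whereas your quadratic expansion routed through Lemma \ref{Lem:bounded u&v}(iv) inherits the moment-inflation factor $(NT)^{1/q}$ and lands on the coarser $O_p(\eta_{N,2})$ --- immaterial here, since the lemma only needs the perturbation to be $o_p(1)$ uniformly, which both rates deliver under Assumption \ref{ass:1}(vi).
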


\begin{proof}
{\small Recall that $\Phi _{i}=\frac{1}{T}\sum_{t=1}^{T}\phi _{it}^{0}\phi
_{it}^{0\prime }$ and $\tilde{\Phi}_{i}=\frac{1}{T}\sum_{t=1}^{T}\tilde{\phi}%
_{it}\tilde{\phi}_{it}^{\prime },$ where 
\begin{equation*}
\phi _{it}^{0}=[v_{t,0}^{0\prime },v_{t,1}^{0\prime }X_{1,it},\cdots
,v_{t,p}^{0\prime }X_{p,it}]^{\prime }\text{ and }\tilde{\phi}_{it}=\left[
\left( O_{0}^{\prime }\tilde{v}_{t,0}\right) ^{\prime },\left( O_{1}^{\prime
}\tilde{v}_{t,1}X_{1,it}\right) ^{\prime },\cdots ,\left( O_{p}^{\prime }%
\tilde{v}_{t,p}X_{p,it}\right) ^{\prime }\right] ^{\prime }.
\end{equation*}%
Uniformly over $i\in \lbrack N]$, it is clear that 
\begin{align*}
\left\Vert \tilde{\Phi}_{i}-\Phi _{i}\right\Vert _{{}}& \lesssim \frac{4M}{%
c_{\sigma }T}\sum_{t=1}^{T}\left\Vert O_{0}^{\prime }\tilde{v}%
_{t,0}-v_{t,0}^{0}\right\Vert +\frac{4M}{c_{\sigma }T}\sum_{j=1}^{p}%
\sum_{t=1}^{T}\left\Vert O_{j}^{\prime }\tilde{v}_{t,j}-v_{t,j}^{0}\right%
\Vert \left\vert X_{j,it}\right\vert \\
& \leq \frac{4M}{c_{\sigma }}\frac{1}{\sqrt{T}}\left\Vert O_{0}^{\prime }%
\tilde{V}_{0}-V_{0}^{0}\right\Vert +\frac{4M^{2}}{c_{\sigma }}\sum_{j=1}^{p}%
\frac{1}{\sqrt{T}}\left\Vert O_{j}^{\prime }\tilde{V}_{j}-V_{j}^{0}\right%
\Vert \left( \frac{1}{T}\sum_{t\in \lbrack T]}\left\vert X_{j,it}\right\vert
^{2}\right) ^{1/2}=O_{p}\left( \eta _{N,1}\right) ,
\end{align*}%
where the last equality holds by Lemma \ref{Lem:bounded u&v}(i) and
Assumption \ref{ass:1}(iv). It follows that 
\begin{equation*}
\min_{i\in \lbrack N]}\lambda _{\min }(\tilde{\Phi}_{i})\geq \min_{i\in
\lbrack N]}\lambda _{\min }(\Phi _{i})-O\left( \eta _{N,1}\right) \geq \frac{%
c_{\phi }}{2},\quad \text{w.p.a.1}.
\end{equation*}%
Analogously, we can establish the lower bound of $\lambda _{\min }(\tilde{%
\Psi}_{t})$. }
\end{proof}

\begin{lemma}
{\small \label{Lem:Kmeans_null} Under Assumptions \ref{ass:1}--\ref{ass:7},
we have $\max_{i}\mathbf{1}\{\hat{g}_{i,K^{(\ell )}}^{0,(\ell )}\neq
g_{i}^{(\ell )}\}=0~$w.p.a.1$,$ where $\hat{g}_{i,K^{(\ell )}}^{0,(\ell )}$
is defined in \eqref{group estimates_truebreak}. }
\end{lemma}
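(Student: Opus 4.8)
The plan is to carry out the standard two-step K-means consistency argument: first show that the population cluster centers are separated by a constant while the clustering inputs $\dot\beta_i^{0,(\ell)}$ concentrate uniformly around them, and then show that these two facts force the K-means solution to recover the true partition exactly. I fix $\ell\in\{1,2\}$; the two regimes are handled identically, so I suppress most of the $\ell$-dependence below. Note that $\dot\beta_i^{0,(\ell)}$, $\dot a_{k,K^{(\ell)}}^{0,(\ell)}$ and $\hat g_{i,K^{(\ell)}}^{0,(\ell)}$ are all built from the \emph{true} break date $T_1$, so no conditioning on $\{\hat T_1=T_1\}$ is needed here.

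First I would introduce the population centers. Since $\Theta_{j,it}^0=\alpha_{g_i^{(\ell)},j}^{(\ell)}$ is constant in $t$ within regime $\ell$, the natural target of $\dot\beta_i^{0,(\ell)}$ is the $pT_\ell\times1$ vector
$$\mu_{k}^{(\ell)}:=\frac{1}{\sqrt{T_\ell}}\left(\alpha_{k,1}^{(\ell)}\iota_{T_\ell}^{\prime},\cdots,\alpha_{k,p}^{(\ell)}\iota_{T_\ell}^{\prime}\right)^{\prime},\qquad k\in[K^{(\ell)}].$$
A direct computation gives the separation $\|\mu_{k}^{(\ell)}-\mu_{k^{\ast}}^{(\ell)}\|=\|\alpha_{k}^{(\ell)}-\alpha_{k^{\ast}}^{(\ell)}\|\ge C_5$ for $k\ne k^{\ast}$ by Assumption \ref{ass:7}(i). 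On the approximation side,
$$\max_i\left\|\dot\beta_i^{0,(\ell)}-\mu_{g_i^{(\ell)}}^{(\ell)}\right\|^2=\max_i\frac{1}{T_\ell}\sum_{j\in[p]}\sum_{t\in\mathcal{T}_\ell}\left(\dot\Theta_{j,it}-\Theta_{j,it}^0\right)^2\le p\max_{i,t,j}\left|\dot\Theta_{j,it}-\Theta_{j,it}^0\right|^2=O_p(\eta_{N,2}^2),$$
by Theorem \ref{Thm1}(iii), with $\eta_{N,2}\to0$ under the rate conditions in Assumption \ref{ass:1}(vi). Hence $\max_i\|\dot\beta_i^{0,(\ell)}-\mu_{g_i^{(\ell)}}^{(\ell)}\|=o_p(1)$.

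Next I would transfer these facts to the K-means centers $\{\dot a_{k,K^{(\ell)}}^{0,(\ell)}\}$ defined in \eqref{group estimates_truebreak}. Plugging the population centers into the objective, $\frac1N\sum_i\min_k\|\dot\beta_i^{0,(\ell)}-\mu_k^{(\ell)}\|^2\le\frac1N\sum_i\|\dot\beta_i^{0,(\ell)}-\mu_{g_i^{(\ell)}}^{(\ell)}\|^2=O_p(\eta_{N,2}^2)$, so by optimality the attained K-means objective at $\{\dot a_{k,K^{(\ell)}}^{0,(\ell)}\}$ is also $O_p(\eta_{N,2}^2)=o_p(1)$. The crux is then a matching argument: after relabeling, each estimated center $\dot a_{k,K^{(\ell)}}^{0,(\ell)}$ lies within any preassigned small constant $\delta$ (say $\delta=C_5/4$) of a distinct true center $\mu_k^{(\ell)}$, w.p.a.1. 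I would prove this by contradiction---if some true center $\mu_k^{(\ell)}$ had no estimated center within $\delta$, then every $i$ with $g_i^{(\ell)}=k$ (a fraction $\pi_k^{(\ell)}\ge\underline c>0$ of the units by Assumption \ref{ass:7}(ii)) would contribute at least $(\delta-o_p(1))^2$ to the objective, so the objective would be bounded below by $\underline c(\delta-o_p(1))^2>0$ w.p.a.1, contradicting the $o_p(1)$ bound just obtained; since there are exactly $K^{(\ell)}$ estimated and $K^{(\ell)}$ true centers, the pigeonhole principle then yields a bijective, hence injective, matching.

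Finally, with the centers matched, correct classification is immediate from the triangle inequality. For any $i$ with $g_i^{(\ell)}=k$, uniformly in $i$, $\|\dot\beta_i^{0,(\ell)}-\dot a_{k,K^{(\ell)}}^{0,(\ell)}\|\le\|\dot\beta_i^{0,(\ell)}-\mu_k^{(\ell)}\|+\|\mu_k^{(\ell)}-\dot a_{k,K^{(\ell)}}^{0,(\ell)}\|\le\delta+o_p(1)$, whereas for any $k^{\ast}\ne k$, $\|\dot\beta_i^{0,(\ell)}-\dot a_{k^{\ast},K^{(\ell)}}^{0,(\ell)}\|\ge\|\mu_k^{(\ell)}-\mu_{k^{\ast}}^{(\ell)}\|-\|\dot\beta_i^{0,(\ell)}-\mu_k^{(\ell)}\|-\|\mu_{k^{\ast}}^{(\ell)}-\dot a_{k^{\ast},K^{(\ell)}}^{0,(\ell)}\|\ge C_5-\delta-o_p(1)$. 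Since $C_5-\delta>\delta$ for $\delta=C_5/4$, the assignment rule $\hat g_{i,K^{(\ell)}}^{0,(\ell)}=\argmin_{k}\|\dot\beta_i^{0,(\ell)}-\dot a_{k,K^{(\ell)}}^{0,(\ell)}\|$ selects $k=g_i^{(\ell)}$ for every $i$ simultaneously w.p.a.1, giving $\max_i\mathbf{1}\{\hat g_{i,K^{(\ell)}}^{0,(\ell)}\ne g_i^{(\ell)}\}=0$ w.p.a.1. The main obstacle is the center-matching step, where both the uniformity of the concentration bound and the lower bound $\underline c$ on the group sizes are essential.
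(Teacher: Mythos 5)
Your proof is correct, but it takes a different route from the paper's. The paper's proof of Lemma \ref{Lem:Kmeans_null} is a verification-of-conditions argument: it checks exactly the three facts you establish at the outset --- boundedness of the centers $\frac{1}{\sqrt{T_\ell}}\alpha_k^{(\ell)}\otimes\iota_{T_\ell}$, the uniform concentration $\max_i\|\dot\beta_i^{0,(\ell)}-\beta_i^{0,(\ell)}\|=O_p(\eta_{N,2})$ via Theorem \ref{Thm1}(iii), and the separation $C_5$ from Assumption \ref{ass:7}(i) --- and then invokes Theorem 2.3 of \cite{su2020strong} as a black box to conclude exact recovery. You instead prove the exact-recovery step from scratch: bound the attained K-means objective by $O_p(\eta_{N,2}^2)$ using the population centers as a feasible choice, run the contradiction argument (using $\pi_k^{(\ell)}\ge\underline{c}$ from Assumption \ref{ass:7}(ii)) to place an estimated center within $\delta=C_5/4$ of each true center, match centers bijectively, and finish with the triangle inequality uniformly in $i$. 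Your route is more self-contained and, interestingly, slightly leaner: the boundedness condition $\max_k\|\alpha_k^{(\ell)}\|\le\sqrt{p}M$ that the paper must verify for the external theorem plays no role in your direct argument, since separation plus uniform concentration suffice when $K^{(\ell)}$ is known. The paper's route buys brevity and delegates the combinatorial bookkeeping to the cited result. Two small points you gloss over but which are easily repaired: the injectivity of your matching requires observing that no estimated center can lie within $\delta$ of two distinct true centers, which follows from $2\delta=C_5/2<C_5$ (pigeonhole alone does not give injectivity); and, as is standard, the conclusion $\hat g_{i,K^{(\ell)}}^{0,(\ell)}=g_i^{(\ell)}$ holds after the relabeling induced by your matching $\sigma$, a convention the paper leaves implicit as well.
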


\begin{proof}
{\small The above lemma holds by Theorem 2.3 in \cite{su2020strong} provided
we can verify the conditions in their Assumption 4. Let $\alpha _{k}^{(\ell
)}=(\alpha _{k,1}^{(\ell )},\cdots ,\alpha _{k,p}^{(\ell )})^{\prime }$.
Then we have 
\begin{equation*}
\beta _{i}^{0,(\ell )}=\frac{1}{\sqrt{T_{\ell }}}\sum_{k\in \lbrack K^{(\ell
)}]}\alpha _{k}^{(\ell )}\otimes \iota _{T_{\ell }}\mathbf{1}\{g_{i}^{(\ell
)}=k\}
\end{equation*}%
and 
\begin{equation}
\max_{k\in \lbrack K^{(\ell )}]}\left\Vert \frac{1}{\sqrt{T_{\ell }}}\alpha
_{k}^{(\ell )}\otimes \iota _{T_{\ell }}\right\Vert =\max_{k\in \lbrack
K^{(\ell )}]}\frac{1}{\sqrt{T_{\ell }}}\sqrt{T_{\ell }\sum_{j=1}^{p}(\alpha
_{k,j}^{(\ell )})^{2}}\leq \sqrt{p}\max_{k\in \lbrack K^{(\ell )}],j\in
\lbrack p]}|\alpha _{k,j}^{(\ell )}|\leq \sqrt{p}M,  \label{Lem:Kmeans_1}
\end{equation}%
where the last inequality is due to Assumption \ref{ass:2}. }

{\small Second, with $\Theta _{j,i}^{0,(1)}=\left( \Theta _{j,i1}^{0},\cdots
,\Theta _{j,iT_{1}}^{0}\right) ^{\prime }$ and $\Theta _{j,i}^{0,(2)}=\left(
\Theta _{j,i,T_{1}+1}^{0},\cdots ,\Theta _{j,iT}^{0}\right) ^{\prime }$, we
observe that 
\begin{align}
& \max_{i}\left\Vert \dot{\beta}_{i}^{0,(\ell )}-\beta _{i}^{0,(\ell
)}\right\Vert =\frac{1}{\sqrt{T_{\ell }}}\max_{i}\left\Vert \dot{\Theta}%
_{i}^{(\ell )}-\Theta _{i}^{0,(\ell )}\right\Vert =\frac{1}{\sqrt{T_{\ell }}}%
\max_{i}\sqrt{\sum_{j=1}^{p}\sum_{t\in \mathcal{T}_{\ell }}(\dot{\Theta}%
_{j,it}^{(\ell )}-\Theta _{j,it}^{0,(\ell )})^{2}}  \notag \\
& \leq \sqrt{p}\max_{j\in \lbrack p],i\in \lbrack N],t\in \lbrack
T]}\left\vert \dot{\Theta}_{j,it}-\Theta _{j,it}^{0}\right\vert \leq
c_{5}\eta _{N,2}\quad \text{w.p.a.1,}  \label{Lem:Kmeans_2}
\end{align}%
with $c_{5}$ being some positive large enough constant, and the last
inequality holds by Theorem \ref{Thm1}(iii). }

{\small Third, we also observe that 
\begin{equation}
\min_{1\leq k_{s}<k_{s^{\ast }}\leq K^{(\ell )}}\frac{1}{\sqrt{T_{\ell }}}%
\left\Vert \alpha _{k_{s}}^{(\ell )}\otimes \iota _{T_{\ell }}-\alpha
_{k_{s^{\ast }}}^{(\ell )}\otimes \iota _{T_{\ell }}\right\Vert =\min_{1\leq
k_{s}<k_{s^{\ast }}\leq K^{(\ell )}}\sqrt{\sum_{j=1}^{p}(\alpha
_{k_{s},j}^{(\ell )}-\alpha _{k_{s^{\ast }},j}^{(\ell )})^{2}}\geq C_{5},
\label{Lem:Kmeans_3}
\end{equation}%
where the last inequality holds by Assumption \ref{ass:7}(i). }

{\small Combining (\ref{Lem:Kmeans_1}), (\ref{Lem:Kmeans_2}) and (\ref%
{Lem:Kmeans_3}), we obtain that $\mathbb{P}\left( \max_{i}\mathbf{1}\{\hat{g}%
_{i,K^{(\ell )}}^{0,(\ell )}\neq g_{i}^{(\ell )}\}=0\right) \rightarrow 1$
once we can ensure Assumption 4.3 in \cite{su2020strong} holds with $%
c_{1n}=C_{5}$, $c_{2n}=c_{5}\eta _{N,2}$, $K=K^{(1)}$, and with their $c_{1}$
and $M$ being replaced by $\underline{c}$ and and $\sqrt{p}M$ here. Under
Assumption \ref{ass:7}, Assumption 4.3 in \cite{su2020strong} holds. This
completes the proof of the lemma. }
\end{proof}

\vspace{2mm}
{\small To study the NSP property of our group structure estimator, we
introduce some notation in the following definition. }

\begin{definition}
{\small \label{def:2} Fix $K^{(\ell )}>1$ and $1<m\leq K^{(\ell )}.$ Define
a $K^{(\ell )}\times p$ matrix $\alpha ^{(\ell )}=(\alpha _{1}^{(\ell
)},\cdots ,\alpha _{K}^{(\ell )})^{\prime }$. Let $d_{K^{(\ell )}}(\alpha
^{(\ell )})$ be the minimum pairwise distance of all $K^{(\ell )}$ rows and $%
\alpha _{k}^{(\ell )}$ and $\alpha _{l}^{(\ell )}$ be the pair that
satisfies $||\alpha _{k}^{(\ell )}-\alpha _{l}^{(\ell )}||=d_{K^{(\ell
)}}(\alpha ^{(\ell )})$ (if this holds for multiple pairs, pick the first
pair in lexicographical order). Remove row $l$ from matrix $\alpha
^{(\ell )}$ and let $d_{K^{(\ell )}-1}(\alpha ^{(\ell )})$ be the minimum
pairwise distance for the remaining $(K^{(\ell )}-1)$ rows. Repeat this step
and define $d_{K^{(\ell )}-2}(\alpha ^{(\ell )}),\cdots ,d_{2}(\alpha
^{(\ell )})$ recursively. }
\end{definition}

\begin{lemma}
{\small \label{Lem:NSP} Recall that $\hat{\mathcal{G}}_{m}^{(\ell )}$ is the
estimated group structure from K-means algorithm with $m$ groups. Under
Assumptions \ref{ass:1}-\ref{ass:7} and the event $\{\hat{T}_{1}=T_{1}\}$,
w.p.a.1, for each $1<m<K^{(\ell )}$, $\hat{\mathcal{G}}_{m}^{(\ell )}$
enjoys the NSP defined in Definition \ref{def:1}. }
\end{lemma}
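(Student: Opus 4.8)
The plan is to reduce the problem to a near-degenerate clustering configuration and then run an optimality (non-splitting) argument in the spirit of \cite{Jin2022Optimal}. First I would condition throughout on the event $\{\hat{T}_{1}=T_{1}\}$, which has probability approaching one by Theorem \ref{Thm2}; on this event the K-means inputs $\dot{\beta}_{i}^{(\ell)}$ coincide with $\dot{\beta}_{i}^{0,(\ell)}$, so the data points passed to the algorithm are, by \eqref{Lem:Kmeans_2}, uniformly within $O_{p}(\eta_{N,2})$ of the $K^{(\ell)}$ true centers $c_{k}:=\frac{1}{\sqrt{T_{\ell}}}\alpha_{k}^{(\ell)}\otimes\iota_{T_{\ell}}$. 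Assumption \ref{ass:7}(i) (equivalently \eqref{Lem:Kmeans_3}) lower-bounds every pairwise distance between these centers by $C_{5}$, and since deleting a row of $\alpha^{(\ell)}$ can only weakly increase the minimal pairwise distance, all the hierarchical gaps $d_{j}(\alpha^{(\ell)})$ of Definition \ref{def:2} are bounded below by $C_{5}$ as well; Assumption \ref{ass:7}(ii) guarantees $N_{k}^{(\ell)}\asymp N$. The decisive signal-to-noise condition is then $\eta_{N,2}=o(1)\ll C_{5}$.

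Next I would exploit that $\hat{\mathcal{G}}_{m}^{(\ell)}$ is a global minimizer of the $m$-means objective \eqref{kmeans_obj}, so its induced partition is the Voronoi partition of its centroids $\{\dot{a}_{k,m}^{(\ell)}\}$. Suppose, toward a contradiction, that some true group $k$ is split across two distinct centroids $\dot{a}_{a,m}^{(\ell)}$ and $\dot{a}_{b,m}^{(\ell)}$. Because every group-$k$ point lies within $O_{p}(\eta_{N,2})$ of the common center $c_{k}$, a split forces $c_{k}$ to lie within $O_{p}(\eta_{N,2})$ of the perpendicular bisector of the two centroids, i.e. the two centroids are nearly equidistant from $c_{k}$. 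I would then compare the objective with the partition obtained by reassigning all of group $k$ to the single nearer centroid and recomputing cluster means. Since group $k$ supplies a block of $\asymp N$ points concentrated in a ball of radius $O_{p}(\eta_{N,2})$ while the centers are separated by at least $C_{5}$, this perturbation changes the objective by a strictly negative amount of order $-N\,C_{5}^{2}$, contradicting optimality. Hence no true group is split, which is exactly the NSP of Definition \ref{def:1}.

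To make this rigorous I would carry out the cost-change computation explicitly, tracking how both affected cluster means shift when the straddling block is moved, and show the quadratic between-center term dominates the $o_{p}(1)$ within-group scatter. Because $m^{K^{(\ell)}}$ is a fixed finite number (as $K^{(\ell)}$ is fixed), the finitely many candidate merge patterns and the finitely many values $1<m<K^{(\ell)}$ can be handled by a union bound, so the single gap $\eta_{N,2}\ll C_{5}$ makes the strict decrease hold simultaneously w.p.a.1. The hard part will be precisely this optimality step: controlling the re-centering of clusters under reassignment, and ruling out the borderline case in which $c_{k}$ sits within the $O_{p}(\eta_{N,2})$ tolerance of a bisector, so that within-group noise could otherwise tip individual units to opposite sides. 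This is where the balanced-size condition (Assumption \ref{ass:7}(ii)) and the hierarchical separation $\min_{j}d_{j}(\alpha^{(\ell)})\ge C_{5}$ (Definition \ref{def:2}) are essential; alternatively, the same conclusion follows by verifying these three inputs---uniform concentration, center separation, and balanced masses---and invoking the NSP theorem of \cite{Jin2022Optimal} directly, exactly as the companion Lemma \ref{Lem:Kmeans_null} invokes \cite{su2020strong}.
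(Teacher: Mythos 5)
Your fallback route is the paper's proof, verbatim in substance: the paper's entire argument consists of setting $\hat{x}_{i}=\dot{\beta}_{i}^{0,(\ell)}$, $x_{i}=\beta_{i}^{0,(\ell)}$, $u_{k}=T_{\ell}^{-1/2}\alpha_{k}^{(\ell)}\otimes\iota_{T_{\ell}}$, noting $d_{m}(\alpha^{(\ell)})\geq d_{K^{(\ell)}}(\alpha^{(\ell)})\geq C_{5}$ by the deletion-monotonicity in Definition \ref{def:2} together with Assumption \ref{ass:7}(i), checking $\max_{i}\|\hat{x}_{i}-x_{i}\|=O_{p}(\eta_{N,2})$ from \eqref{Lem:Kmeans_2} (valid on $\{\hat{T}_{1}=T_{1}\}$, which holds w.p.a.1 by Theorem \ref{Thm2}), and invoking Theorem 4.1 of \cite{Jin2022Optimal}. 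One correction to your list of inputs: the third condition the paper verifies is boundedness of the centers, $\max_{k}\|u_{k}\|\leq\sqrt{p}M$ via Assumption \ref{ass:2}, so that $\max_{k}\|u_{k}\|\lesssim d_{m}(\alpha^{(\ell)})$; the balanced-mass condition of Assumption \ref{ass:7}(ii) is not among the hypotheses checked for this lemma.

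Your primary, self-contained exchange argument has a genuine gap at exactly the step you flagged. If true group $k$ straddles centroids $\dot{a}_{a,m}^{(\ell)}$ and $\dot{a}_{b,m}^{(\ell)}$, Voronoi optimality forces the two centroids to be nearly equidistant from $c_{k}$ (though possibly both far from it). In that configuration, reassigning all of group $k$ to the nearer centroid changes the assignment cost only by the equidistance slack times the group size, i.e.\ $O_{p}(N\eta_{N,2})$ up to re-centering corrections, not by $-NC_{5}^{2}$: the order-$C_{5}^{2}$ gain you invoke would require the two centroids to sit at genuinely different distances from $c_{k}$, which is precisely what the splitting configuration rules out. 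The strict improvement has to come from a global exchange: delete one of the two centroids serving group $k$, re-merge the group (losing only $O_{p}(N\eta_{N,2}^{2})$), and redeploy the freed centroid to split some cluster that currently merges two well-separated true groups --- such a cluster exists by pigeonhole since $m<K^{(\ell)}$ and every group has $\asymp N$ members by Assumption \ref{ass:7}(ii) --- yielding a gain of order $NC_{5}^{2}$ that dominates. Tracking the re-centering through this recount is essentially the content of Jin's Theorem 4.1, so as written your local reassignment does not close the argument, and the direct citation (your stated alternative, and the paper's actual proof) is the correct resolution.
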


\begin{proof}
{\small By Theorem 4.1 in \cite{Jin2022Optimal}, Lemma \ref{Lem:NSP} is
proved if we ensure all conditions in their Theorem 4.1 hold. We now apply
their Theorem 4.1 with $\hat{x}_{i}=\dot{\beta}_{i}^{0,(\ell )}$, $%
x_{i}=\beta _{i}^{0,(\ell )}$ and $u_{k}=\frac{1}{\sqrt{T_{\ell }}}\alpha
_{k}^{(\ell )}\otimes \iota _{T_{\ell }}$ for $k\in \lbrack K^{(\ell )}]$.
By the definition of $d_{m}\left( \alpha ^{(\ell )}\right) $ in Definition %
\ref{def:2}, we notice that $d_{m}\left( \alpha ^{(\ell )}\right) \geq
d_{K^{(\ell )}}\left( \alpha ^{(\ell )}\right) $ such that $d_{K^{(\ell
)}}\left( \alpha ^{(\ell )}\right) \geq C_{5}$ by Assumption \ref{ass:7}(i).
With \eqref{Lem:Kmeans_2} shown above and Assumption \ref{ass:2}, we have 
\begin{equation*}
\max_{k\in \lbrack K^{(\ell )}]}\left\Vert u_{k}\right\Vert \leq M,\quad
\max_{i}\left\Vert \hat{x}_{i}-x_{i}\right\Vert =O_{p}(\eta _{N,2}),
\end{equation*}%
which satisfy the Theorem 4.1 in \cite{Jin2022Optimal}, i.e., $\max_{k\in
\lbrack K^{(\ell )}]}\left\Vert u_{k}\right\Vert \lesssim d_{m}\left( \alpha
^{(\ell )}\right) $ and $\max_{i}\left\Vert \hat{x}_{i}-x_{i}\right\Vert $ $%
\lesssim d_{m}\left( \alpha ^{(\ell )}\right) $. Consequently, it leads to
the NSP of $\hat{\mathcal{G}}_{m}^{(\ell )}$ for $1<m<K^{(\ell )}$ w.p.a.1
under the event }${\small \{}${\small $\hat{T}_{1}=T_{1}\}$. }
\end{proof}

\begin{lemma}
{\small \label{Lem:consis_homo} Under Assumptions \ref{ass:1}, \ref{ass:6}%
(ii), \ref{ass:7}(ii), \ref{ass:8} and \ref{ass:10}(i)-(iii), for $\ell
=\{1,2\}$ and $k\in \lbrack K^{(\ell )}]$, we have $\hat{\alpha}_{k}^{(\ell
)}\overset{p}{\longrightarrow }\alpha _{k}^{(\ell )}$ and $\sqrt{%
N_{k}^{(\ell )}T_{\ell }}(\hat{\alpha}_{k}^{(\ell )}-\alpha _{k}^{(\ell )})=%
\mathbb{W}_{NT,k}^{(\ell )-1}\mathbb{C}_{NT,k}^{(\ell )}+o_{p}(1)$. }
\end{lemma}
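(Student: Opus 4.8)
The plan is to treat the estimation restricted to a single group $k$ in regime $\ell$ as a standard homogeneous-slope panel with interactive fixed effects (IFEs) and to adapt the profile least-squares machinery of \cite{bai2009panel} and \cite{moon2017dynamic}. Restricting the oracle version of \eqref{Post_estimate} to the observations $\{(i,t):i\in G_{k}^{(\ell)},\,t\in\mathcal{T}_{\ell}\}$, the pair $(\hat{\alpha}_{k}^{(\ell)},\hat{F}^{(\ell)},\hat{\Lambda}_{k}^{(\ell)})$ minimizes $\sum_{i\in G_{k}^{(\ell)}}\sum_{t\in\mathcal{T}_{\ell}}(Y_{it}-\lambda_{i}'f_{t}-X_{it}'a)^{2}$. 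First I would concentrate out the factors and loadings by PCA, so that $\hat{\alpha}_{k}^{(\ell)}$ is the minimizer of a scalar concentrated criterion $L_{k}^{(\ell)}(a)$ given by the sum of the smallest eigenvalues of the residual second-moment matrix built from $E_{k}^{(\ell)}$ and the $\mathbb{X}_{j,k}^{(\ell)}$.

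The first step is to establish consistency, $\hat{\alpha}_{k}^{(\ell)}\convP\alpha_{k}^{(\ell)}$. Following \cite{moon2017dynamic}, I would show that $L_{k}^{(\ell)}(a)-L_{k}^{(\ell)}(\alpha_{k}^{(\ell)})$ is, up to uniformly negligible terms, a nonnegative quadratic form in $a-\alpha_{k}^{(\ell)}$ whose curvature is bounded below by the identification conditions. Here the operator-norm bound $\|E_{k}^{(\ell)}\|_{op}=O_{p}(\sqrt{N}+\sqrt{T\log T})$ from Lemma \ref{Lem:score op} controls the cross terms between regressors and errors, while Assumption \ref{ass:10}(ii)--(iii) guarantees strict positive definiteness. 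The delicate point is the separation of the $p_{1}$ rank-one regressors from the $p_{2}$ high-rank regressors in Assumption \ref{ass:10}(iii): because a rank-one regressor can be absorbed into the factor space, identifying its coefficient requires the residual spectrum after projecting out $2r_{0}+p_{1}$ directions to remain nondegenerate (part (iii.a)) and the loadings to stay non-collinear with the low-rank regressor spans (part (iii.b)).

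Given consistency, the second step is the asymptotic linearization. I would write out the first-order condition for $\hat{\alpha}_{k}^{(\ell)}$ and expand the projection matrix $M_{\hat{F}^{(\ell)}}$ around $M_{F^{0,(\ell)}}$ using the standard perturbation expansion for IFE models. The leading quadratic term reproduces the weight matrix $\mathbb{W}_{NT,k}^{(\ell)}$, and the leading stochastic term reproduces the score $\mathbb{C}_{NT,k}^{(\ell)}=\mathbb{C}_{1}+\mathbb{C}_{2}$, where $\mathbb{C}_{1}$ is the direct projection of the error onto the regressor and $\mathbb{C}_{2}$ collects the three correction terms arising from estimation of the factors and loadings. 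The remaining task is to verify that every higher-order remainder, once multiplied by $\sqrt{N_{k}^{(\ell)}T_{\ell}}$, is $o_{p}(1)$; this is where the rate condition in Assumption \ref{ass:10}(i), $T(\log T)N^{-4/3}\to 0$, together with the moment bounds in Assumption \ref{ass:1}(iv)--(v), enters.

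The main obstacle, and the point where this lemma departs from \cite{moon2017dynamic}, is controlling the remainders and the score under serially correlated errors in the static case (Assumption \ref{ass:1} rather than \ref{ass:1*}), since the bounds of \cite{moon2017dynamic} rely on independence of $e_{it}$ over both $i$ and $t$. Under conditional strong mixing across $t$, I would instead bound the relevant quadratic forms in $E_{k}^{(\ell)}$ via the conditional Davydov inequality already used in the proof of Lemma \ref{Lem:score op}, deferring the exact variance accounting to Lemma \ref{Lem:distri_homo}. Because the present lemma asserts only the expansion up to $o_{p}(1)$ and not the limiting distribution, it suffices to obtain stochastic-order bounds rather than exact limits, which keeps the serial-correlation arguments at the level of the moment and mixing inequalities developed in Section \ref{sec:lem}.
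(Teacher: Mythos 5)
Your proposal is correct and takes essentially the same route as the paper: the paper proves this lemma by directly invoking Theorem 4.1 and Corollary 4.2 of \cite{moon2017dynamic} and verifying their Assumptions 2 and $3^{*}$ under the present setting — namely showing $\frac{1}{N_{k}^{(\ell)}T_{\ell}}\sum_{i\in G_{k}^{(\ell)}}\sum_{t\in\mathcal{T}_{\ell}}X_{j,it}e_{it}=O_{p}((NT)^{-1/2})$ via conditional independence across $i$ plus Davydov's inequality for conditional strong mixing over $t$, and $\Vert E_{k}^{(\ell)}\Vert_{op}=O_{p}(\sqrt{N}+\sqrt{T\log T})$ (by the argument of Lemma \ref{Lem:score op} applied to the subgroup error matrix) combined with the rate condition in Assumption \ref{ass:10}(i). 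Your sketch re-derives the Moon--Weidner profile-QMLE machinery internally (concentrated criterion, consistency, first-order-condition linearization) rather than citing it, but the mathematical content — including the role of the operator-norm bound, the identification conditions in Assumption \ref{ass:10}(ii)--(iii), the use of conditional Davydov bounds for serially correlated errors, and the deferral of exact variance limits to Lemma \ref{Lem:distri_homo} — coincides with what the paper verifies.
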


\begin{proof}
{\small The result in the lemma combines those in Theorem 4.1 and Corollary
4.2 in \cite{moon2017dynamic} under their Assumptions 1-4. Hence, we only
need to verify the conditions in their Assumptions 2 and 3 since our
Assumptions \ref{ass:8} and \ref{ass:10}(ii)-(iii) are the same as their
Assumptions 1 and 4. }

{\small Notice that the Assumption 2 in \cite{moon2017dynamic} holds if we
can show that 
\begin{equation*}
\frac{1}{N_{k}^{(\ell )}T_{\ell }}\sum_{i\in G_{k}^{(\ell )}}\sum_{t\in 
\mathcal{T}_{\ell }}X_{j,it}e_{it}\overset{p}{\longrightarrow }0,~\forall
k\in \lbrack K^{(\ell )}],~\ell \in \{1,2\}.
\end{equation*}%
Fix a specific $k$ and $\ell $. We can show that 
\begin{align}
& \mathbb{E}\left( \frac{1}{N_{k}^{(\ell )}T_{\ell }}\sum_{i\in G_{k}^{(\ell
)}}\sum_{t\in \mathcal{T}_{\ell }}X_{j,it}e_{it}\bigg|\mathscr{D}\right) ^{2}
\notag  \label{Lem:B12_1} \\
& =\frac{1}{\left( N_{k}^{(\ell )}T_{\ell }\right) ^{2}}\sum_{i_{1}\in
G_{k}^{(\ell )}}\sum_{i_{2}\in G_{k}^{(\ell )}}\sum_{t_{1}\in \mathcal{T}%
_{\ell }}\sum_{t_{2}\in \mathcal{T}_{\ell }}\mathbb{E}\left(
X_{j,i_{1}t_{1}}X_{j,i_{2}t_{2}}e_{i_{1}t_{1}}e_{i_{2}t_{2}}|\mathscr{D}%
\right)   \notag \\
& =\frac{1}{\left( N_{k}^{(\ell )}T_{\ell }\right) ^{2}}\sum_{i\in
G_{k}^{(\ell )}}\sum_{t_{1}\in \mathcal{T}_{\ell }}\sum_{t_{2}\in \mathcal{T}%
_{\ell }}\mathbb{E}\left( X_{j,it_{1}}X_{j,it_{2}}e_{it_{1}}e_{it_{2}}|%
\mathscr{D}\right)   \notag \\
& =\frac{1}{\left( N_{k}^{(\ell )}T_{\ell }\right) ^{2}}\sum_{i\in
G_{k}^{(\ell )}}\sum_{t\in \mathcal{T}_{\ell }}\mathbb{E}\left(
X_{j,it}^{2}e_{it}^{2}|\mathscr{D}\right) +\frac{2}{\left( N_{k}^{(\ell
)}T_{\ell }\right) ^{2}}\sum_{i\in G_{k}^{(\ell )}}\sum_{t_{1}\in \mathcal{T}%
_{\ell }}\sum_{t_{2}\in \mathcal{T}_{\ell },t_{2}>t_{1}}\mathbb{E}\left(
X_{j,it_{1}}X_{j,it_{2}}e_{it_{1}}e_{it_{2}}|\mathscr{D}\right)   \notag \\
& \leq \frac{M}{N_{k}^{(\ell )}T_{\ell }}+\frac{16}{\left( N_{k}^{(\ell
)}T_{\ell }\right) ^{2}}\max_{i\in G_{k}^{(\ell )}}\max_{t\in \mathcal{T}%
_{L}}\left( \mathbb{E}\left\vert X_{j,it}e_{it}\right\vert ^{q}\right)
^{2/q}\sum_{i\in G_{k}^{(\ell )}}\sum_{t_{1}\in \mathcal{T}_{\ell
}}\sum_{t_{2}\in \mathcal{T}_{\ell },t_{2}>t_{1}}\left[ \alpha (t_{2}-t_{1})%
\right] ^{1-2/q}  \notag \\
& =O\left( (NT)^{-1}\right) ,
\end{align}%
where the second equality holds by Assumption \ref{ass:1}(i) with the
conditional independence sequence for $i_{1}\neq i_{2}$, the first
inequality combines Assumption \ref{ass:1}(ii), (iii), (v), and the
Davydov's inequality for strong mixing sequence in Lemma 4.3, \cite%
{su2013testing}, and the last equality is by Assumption \ref{ass:1}(iii),
(v), Assumption \ref{ass:6}(ii) and Assumption \ref{ass:7}(ii). Following
this, it yields that 
\begin{equation*}
\frac{1}{N_{k}^{(\ell )}T_{\ell }}\sum_{i\in G_{k}^{(\ell )}}\sum_{t\in 
\mathcal{T}_{\ell }}X_{j,it}e_{it}=O_{p}((NT)^{-1/2}).
\end{equation*}%
}

{\small By similar arguments to those used in the proof of Lemma \ref{Lem:score op}%
, we can show that 
\begin{equation}
\left\Vert E_{k}^{(\ell )}\right\Vert _{op}=O_{p}(\sqrt{N}+\sqrt{T\log T}),
\label{Lem:B12_2}
\end{equation}%
which, in conjunction with Assumption \ref{ass:10}(i), implies that
Assumption $3^{\ast }$ in \cite{moon2017dynamic} is satisfied. }
\end{proof}

{\small For $j\in \lbrack p]$, recall that $X_{j,i}^{(1)}=\left(
X_{j,i1},\cdots ,X_{j,iT_{1}}\right) ^{\prime }$, $X_{j,i}^{(2)}=\left(
X_{j,i(T_{1}+1)},\cdots ,X_{j,iT}\right) ^{\prime }$, $e_{i}^{(1)}=\left(
e_{i1},\cdots ,e_{iT_{1}}\right) ^{\prime }$, $e_{i}^{(2)}=\left(
e_{i(T_{1}+1)},\cdots ,e_{iT}\right) ^{\prime }$, $\tilde{X}_{j,it}=X_{j,it}-%
\mathbb{E}\left( X_{j,it}|\mathscr{D}\right) $. Besides, let $\mathbb{X}%
_{j,k}^{(\ell )}\in \mathbb{R}^{N_{k}^{(\ell )}\times T_{\ell }}$ and $%
E_{k}^{(\ell )}\in \mathbb{R}^{N_{k}^{(\ell )}\times T_{\ell }}$ denote the
regressor and error matrix for subgroup $k\in \lbrack K^{(\ell )}]$ with a
typical row being $X_{j,i}^{(\ell )}$ and $e_{i}^{(\ell )}$, respectively.
For $\ell \in \{1,2\}$ and $k\in \lbrack K^{(\ell )}]$, we also define 
\begin{equation*}
\bar{\mathbb{X}}_{j,k}^{(\ell )}=\mathbb{E}(\mathbb{X}_{j,k}^{(\ell )}\big|%
\mathscr{D}),\quad \tilde{\mathbb{X}}_{j,k}^{(\ell )}=\mathbb{X}%
_{j,k}^{(\ell )}-\bar{\mathbb{X}}_{j,k}^{(\ell )},\quad \mathfrak{X}%
_{j,k}^{(\ell )}=M_{\Lambda _{k}^{0,(\ell )}}\bar{\mathbb{X}}_{j,k}^{(\ell
)}M_{F^{0,(\ell )}}+\tilde{\mathbb{X}}_{j,k}^{(\ell )},
\end{equation*}%
with $\mathfrak{X}_{j,k,it}^{(\ell )}$ being each entry of $\mathfrak{X}%
_{j,k}^{(\ell )}$. Further let $\mathfrak{X}_{k,it}^{(\ell )}=(\mathfrak{X}%
_{1,k,it}^{(\ell )},\cdots ,\mathfrak{X}_{p,k,it}^{(\ell )})^{\prime }$. }

\begin{lemma}
{\small \label{Lem:distri_homo} Under Assumptions \ref{ass:1}, \ref{ass:2}, %
\ref{ass:6}(ii), \ref{ass:7}(ii), \ref{ass:8} and \ref{ass:10}, for $j\in[p]$%
, $\ell\in\{1,2\}$ and $k\in[K^{(\ell)}]$, we have }

\begin{itemize}
\item[(i)] {\small $\frac{1}{\sqrt{N_{k}^{(\ell)}T_{\ell}}}%
tr\left(P_{F^{0,(\ell)}}E_{k}^{(\ell)\prime}P_{\Lambda_{k}^{0,(\ell)}}\tilde{%
\mathbb{X}}_{j,k}^{(\ell)}\right)=o_{p}(1)$, }

\item[(ii)] {\small $\frac{1}{\sqrt{N_{k}^{(\ell)}T_{\ell}}}%
tr\left(P_{\Lambda_{k}^{0,(\ell)}} E_{k}^{(\ell)\prime}\tilde{\mathbb{X}}%
_{j,k}^{(\ell)}\right)=o_{p}(1)$, }

\item[(iii)] {\small $\frac{1}{\sqrt{N_{k}^{(\ell)}T_{\ell}}}%
tr\left\{P_{F^{0,(\ell)}}\left[ E_{k}^{(\ell)\prime}\tilde{\mathbb{X}}%
_{j,k}^{(\ell)}-\mathbb{E}\left(E_{k}^{(\ell)\prime}\tilde{\mathbb{X}}%
_{j,k}^{(\ell)} \big|\mathscr{D}\right)\right]\right\}=o_{p}(1)$, }

\item[(iv)] {\small $\frac{1}{\sqrt{N_{k}^{(\ell)}T_{\ell}}}tr\left[%
E_{k}^{(\ell)}P_{F^{0,(\ell)}}E_{k}^{(\ell)\prime}M_{\Lambda_{k}^{0,(\ell)}}%
\mathbb{X}_{j,k}^{(\ell)}
F^{0,(\ell)}\left(F^{0,(\ell)\prime}F^{0,(\ell)}\right)^{-1}\left(%
\Lambda_{k}^{0,(\ell)\prime}\Lambda_{k}^{0,(\ell)}\right)^{-1}
\Lambda_{k}^{0,(\ell)\prime} \right]=o_{p}(1)$, }

\item[(v)] {\small $\frac{1}{\sqrt{N_{k}^{(\ell)}T_{\ell}}}tr\left[%
E_{k}^{(\ell)\prime} P_{\Lambda_{k}^{0,(\ell)}}E_{k}^{(\ell)}M_{F^{0,(\ell)}}%
\mathbb{X}_{j,k}^{(\ell)\prime}\Lambda_{k}^{0,(\ell)}\left(\Lambda_{k}^{0,(%
\ell)\prime}\Lambda_{k}^{0,(\ell)}
\right)^{-1}\left(F^{0,(\ell)\prime}F^{0,(\ell)}
\right)^{-1}F^{0,(\ell)\prime} \right]=o_{p}(1)$, }

\item[(vi)] {\small $\frac{1}{\sqrt{N_{k}^{(\ell)}T_{\ell}}}tr\left[%
E_{k}^{(\ell)\prime}M_{\Lambda_{k}^{0,(\ell)}}\mathbb{X}_{j,k}^{(%
\ell)}M_{F^{0,(\ell)}}E_{k}^{(\ell)\prime}\Lambda_{k}^{0,(\ell)}\left(%
\Lambda_{k}^{0,(\ell)\prime}\Lambda_{k}^{0,(\ell)}
\right)^{-1}\left(F^{0,(\ell)\prime}F^{0,(\ell)}
\right)^{-1}F^{0,(\ell)\prime} \right]=o_{p}(1)$, }

\item[(vii)] {\small $\frac{1}{\sqrt{N_{k}^{(\ell)}T_{\ell}}}tr\left\{\left[%
E_{k}^{(\ell)}E_{k}^{(\ell)\prime}-\mathbb{E}\left(E_{k}^{(\ell)}E_{k}^{(%
\ell)\prime} \big|\mathscr{D}\right) \right]M_{\Lambda_{k}^{0,(\ell)}}%
\mathbb{X}_{j,k}^{(\ell)}
F^{0,(\ell)}\left(F^{0,(\ell)\prime}F^{0,(\ell)}\right)^{-1}\left(%
\Lambda_{k}^{0,(\ell)\prime}\Lambda_{k}^{0,(\ell)}\right)^{-1}
\Lambda_{k}^{0,(\ell)\prime} \right\}=o_{p}(1)$, }

\item[(viii)] {\small $\frac{1}{\sqrt{N_{k}^{(\ell)}T_{\ell}}}tr\left\{\left[%
E_{k}^{(\ell)\prime}E_{k}^{(\ell)}-\mathbb{E}\left(E_{k}^{(\ell)%
\prime}E_{k}^{(\ell)} \big|\mathscr{D}\right) \right]M_{\Lambda_{k}^{0,(%
\ell)}}\mathbb{X}_{j,k}^{(\ell)\prime}\Lambda_{k}^{0,(\ell)}\left(%
\Lambda_{k}^{0,(\ell)\prime}\Lambda_{k}^{0,(\ell)}
\right)^{-1}\left(F^{0,(\ell)\prime}F^{0,(\ell)}
\right)^{-1}F^{0,(\ell)\prime}\right\}=o_{p}(1)$, }

\item[(ix)] {\small $\frac{1}{N_{k}^{(\ell)}T_{\ell}}\sum_{i\in
G_{k}^{(\ell)}}\sum_{t\in\mathcal{T}_{\ell}}\left[e_{it}^{2}\mathfrak{X}%
_{k,it}^{(\ell)}\mathfrak{X}_{k,it}^{(\ell)\prime}-\mathbb{E}\left(e_{it}^{2}%
\mathfrak{X}_{k,it}^{(\ell)}\mathfrak{X}_{k,it}^{(\ell)\prime}\big|%
\mathscr{D}\right)\right]=o_{p}(1)$, }

\item[(x)] {\small $\frac{1}{N_{k}^{(\ell)}T_{\ell}}\sum_{i\in
G_{k}^{(\ell)}}\sum_{t\in\mathcal{T}_{\ell}}e_{it}^{2}\left(\mathfrak{X}%
_{k,it}^{(\ell)}\mathfrak{X}_{k,it}^{(\ell)\prime}-\mathcal{X}_{it}\mathcal{X%
}_{it}^{\prime}\right)=o_{p}(1)$. }
\end{itemize}
\end{lemma}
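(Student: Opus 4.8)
The plan is to treat the ten statements (i)--(x) as an extension of Lemma B.1 in \cite{moon2017dynamic} from errors that are independent across both $i$ and $t$ to serially correlated errors in non-dynamic panels. The argument is carried out separately for each fixed $\ell\in\{1,2\}$ and $k\in[K^{(\ell)}]$, so I would fix such a pair throughout. The common engine is the low-rank structure of the two projectors: both $P_{F^{0,(\ell)}}$ and $P_{\Lambda_{k}^{0,(\ell)}}$ have the fixed rank $r_{0}$, so every trace can be rewritten as the trace of an $r_{0}\times r_{0}$ (or $2r_{0}\times 2r_{0}$) matrix in which the normalized blocks $\frac{1}{T_{\ell}}F^{0,(\ell)\prime}F^{0,(\ell)}$ and $\frac{1}{N_{k}^{(\ell)}}\Lambda_{k}^{0,(\ell)\prime}\Lambda_{k}^{0,(\ell)}$ -- both bounded away from zero and infinity by Assumption \ref{ass:8}(i) -- sandwich sums in the errors and regressors. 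The other recurring inputs are the operator-norm bound $\Vert E_{k}^{(\ell)}\Vert_{op}=O_{p}(\sqrt{N}+\sqrt{T\log T})$ from \eqref{Lem:B12_2} and the matrix inequalities $|tr(AB)|\le\Vert A\Vert_{op}\Vert B\Vert_{\ast}$ and $\Vert B\Vert_{\ast}\le\mathrm{rank}(B)\Vert B\Vert_{op}$.

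For the ``projection-sandwich'' statements (i) and (ii), I would first factor out the normalizing matrices via Assumption \ref{ass:8}(i), reducing each claim to a bound on a central object such as $\frac{1}{\sqrt{N_{k}^{(\ell)}T_{\ell}}}F^{0,(\ell)\prime}E_{k}^{(\ell)\prime}\Lambda_{k}^{0,(\ell)}$ or $\frac{1}{\sqrt{N_{k}^{(\ell)}T_{\ell}}}\Lambda_{k}^{0,(\ell)\prime}\tilde{\mathbb{X}}_{j,k}^{(\ell)}F^{0,(\ell)}$. Each such object has vanishing conditional mean by Assumption \ref{ass:1}(ii), and its conditional second moment is $O_{p}(1)$ once the cross-$i$ terms are dropped by Assumption \ref{ass:1}(i) and the within-unit cross-$t$ covariances are summed using the conditional Davydov inequality (Lemma 4.3 of \cite{su2013testing}) under the exponential mixing decay $\max_{i}\alpha_{i}(z)\le M\vartheta^{z}$ of Assumption \ref{ass:1}(iii); the residual factor $\frac{1}{\sqrt{N_{k}^{(\ell)}T_{\ell}}}$ then delivers the $o_{p}(1)$ rate. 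The remaining quadratic-in-$E$ terms (iv)--(vi) each contain either a rank-$r_{0}$ projector between the two error factors or an overall rank-$r_{0}$ companion, which renders them effectively low-dimensional; I would bound them through $\Vert E_{k}^{(\ell)}\Vert_{op}$, the factor and loading normalizations, and the rate condition in Assumption \ref{ass:10}(i) to turn the resulting $O_{p}(\cdot)$ bound into $o_{p}(1)$.

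The genuinely new work lies in the recentered statements (iii), (vii)--(x), where a conditional mean has been subtracted so that the object has zero conditional mean by construction; for each I would bound the conditional variance and use cross-sectional independence (Assumption \ref{ass:1}(i)) to discard all cross-$i$ covariances. For (iii) the variance reduces, after factoring out $(\frac{1}{T_{\ell}}F^{0,(\ell)\prime}F^{0,(\ell)})^{-1}$, to the within-unit covariance sum $\sum_{i}\sum_{t_{1},t_{2},s_{1},s_{2}}Cov(e_{it_{1}}\tilde{X}_{j,it_{2}},e_{is_{1}}\tilde{X}_{j,is_{2}}|\mathscr{D})$, which is exactly the quantity Assumption \ref{ass:10}(iv) bounds by $O_{p}(N_{k}^{(\ell)}T_{\ell}^{2})$, whence the normalization yields the $O_{p}(T_{\ell}^{-1/2})$ rate. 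The quadratic-in-$E$ terms (vii) and (viii) are the most delicate: under serial correlation the conditional second-moment matrix $\mathbb{E}(E_{k}^{(\ell)\prime}E_{k}^{(\ell)}|\mathscr{D})$ acquires the time-autocovariance off-diagonal entries that are absent in \cite{moon2017dynamic}, so after recentering I would control the conditional variance of $\frac{1}{\sqrt{N_{k}^{(\ell)}T_{\ell}}}tr\{[E_{k}^{(\ell)}E_{k}^{(\ell)\prime}-\mathbb{E}(E_{k}^{(\ell)}E_{k}^{(\ell)\prime}|\mathscr{D})]L\}$ for the fixed-rank companion $L$ by bounding the fourth-order-in-$e$ covariance sums, including the autocovariances of $e_{it}^{2}$, with the exponential mixing decay of Assumption \ref{ass:1}(iii) and the moment condition $q>8$ in Assumption \ref{ass:1}(v). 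Finally, (ix) is a weak law for $e_{it}^{2}\mathfrak{X}_{k,it}^{(\ell)}\mathfrak{X}_{k,it}^{(\ell)\prime}$ obtained from the same variance bookkeeping, and (x) follows from (ix) because $\mathfrak{X}_{k,it}^{(\ell)}-\mathcal{X}_{k,it}^{(\ell)}$ involves only rank-$r_{0}$ projections of $\tilde{\mathbb{X}}_{j,k}^{(\ell)}$, whose averaged contribution is negligible.

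The hardest step I expect is controlling the recentered quadratic forms (vii)--(viii): the conditional fluctuation of the $E$-quadratics absorbs all lagged autocovariances of the errors and of $e_{it}^{2}$, and the challenge is to verify that, after the $(N_{k}^{(\ell)}T_{\ell})^{-1/2}$ normalization and contraction against a rank-$r_{0}$ matrix, these sums are $o_{p}(1)$ rather than merely $O_{p}(1)$. This is where the exponential mixing rate is decisive, forcing the lag-covariance series to converge; for the dynamic case governed by Assumption \ref{ass:1*} the same terms simplify sharply, because the martingale-difference structure annihilates the lagged covariances, so the serial-correlation bookkeeping is required only for the static panel under Assumption \ref{ass:1}.
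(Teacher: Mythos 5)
Your proposal follows essentially the same route as the paper's proof: conditional second-moment computations combined with the conditional Davydov inequality (cross-sectional independence killing the cross-$i$ covariances, exponential mixing summing the within-$i$ lags) for the linear-in-$E$ terms (i)--(iii), with (iii) reducing exactly to the quadruple covariance sum that Assumption \ref{ass:10}(iv) bounds; trace/operator-norm bounding with $\Vert E_{k}^{(\ell)}\Vert_{op}=O_{p}(\sqrt{N}+\sqrt{T\log T})$ from \eqref{Lem:B12_2} for the quadratic-in-$E$ terms; and recentering-plus-variance arguments for (vii)--(x). The only substantive difference is organizational: where you propose to do the fourth-moment/lag-covariance bookkeeping for (vii)--(viii) and the weak laws (ix)--(x) directly, the paper offloads these to Lemma D.3(vi) and D.4(iii) of \cite{lu2016shrinkage} and to the proofs of Lemma B.1(i),(j) in \cite{moon2017dynamic}; your inline version is the content of those citations and is feasible under $q>8$ and the geometric mixing rate.

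One caveat on your treatment of (iv)--(vi): as literally written (``bound them through $\Vert E_{k}^{(\ell)}\Vert_{op}$, the factor and loading normalizations, and Assumption \ref{ass:10}(i)''), the argument fails for (vi). A pure operator-norm bound there produces $\frac{1}{\sqrt{N_{k}^{(\ell)}T_{\ell}}}\Vert E_{k}^{(\ell)}\Vert_{op}^{2}\Vert\mathbb{X}_{j,k}^{(\ell)}\Vert\cdot O_{p}((NT)^{-1/2})\asymp \sqrt{N/T}+\sqrt{T}\log T/\sqrt{N}$, which diverges. The fix, which the paper uses and which is available from the same Davydov toolkit you deploy for (i)--(ii), is the refined moment bound $\mathbb{E}(\Vert\Lambda_{k}^{0,(\ell)\prime}E_{k}^{(\ell)}\mathbb{X}_{j,k}^{(\ell)\prime}\Vert\,\vert\,\mathscr{D})=O_{p}(N^{2}T)$, hence $\Vert P_{\Lambda_{k}^{0,(\ell)}}E_{k}^{(\ell)}\mathbb{X}_{j,k}^{(\ell)\prime}\Vert=O_{p}(\sqrt{NT})$, together with $\Vert P_{\Lambda_{k}^{0,(\ell)}}E_{k}^{(\ell)}P_{F^{0,(\ell)}}\Vert=O_{p}(1)$ from the computation underlying (i); only then does splitting $M_{F^{0,(\ell)}}=I-P_{F^{0,(\ell)}}$ and bounding each piece yield $o_{p}(1)$ (and, as in the paper's (iv), Assumption \ref{ass:10}(i) is not actually what closes the gap there --- $\frac{\sqrt{N}+\sqrt{T\log T}}{\sqrt{NT}}\to 0$ already). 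So you should make explicit that the sandwiched objects $F^{0,(\ell)\prime}E_{k}^{(\ell)\prime}\Lambda_{k}^{0,(\ell)}$ and $\Lambda_{k}^{0,(\ell)\prime}E_{k}^{(\ell)}\mathbb{X}_{j,k}^{(\ell)\prime}$ receive their own conditional-variance/Davydov bounds rather than being handled by $\Vert E_{k}^{(\ell)}\Vert_{op}$ alone.
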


\begin{proof}
{\small (i) We first show that $\left\Vert F^{0,(\ell )\prime }E_{k}^{(\ell
)\prime }\Lambda _{k}^{0,(\ell )}\right\Vert =O_{p}(\sqrt{NT})$. Note that 
\begin{align*}
& \mathbb{E}\left[ \left( \frac{\left\Vert F^{0,(\ell )\prime }E_{k}^{(\ell
)\prime }\Lambda _{k}^{0,(\ell )}\right\Vert _{{}}}{\sqrt{N_{k}^{(\ell
)}T_{\ell }}}\right) ^{2}\Bigg|\mathscr{D}\right] =\frac{1}{N_{k}^{(\ell
)}T_{\ell }}\mathbb{E}\left[ \left( \sum_{i\in G_{k}^{(\ell )}}\sum_{t\in 
\mathcal{T}_{\ell }}e_{it}f_{t}^{0\prime }\lambda _{i}^{0}\right) ^{2}\Bigg|%
\mathscr{D}\right] \\
& =\frac{1}{N_{k}^{(\ell )}T_{\ell }}\sum_{i_{1}\in G_{k}^{(\ell
)}}\sum_{i_{2}\in G_{k}^{(\ell )}}\sum_{t_{1}\in \mathcal{T}_{\ell
}}\sum_{t_{2}\in \mathcal{T}_{\ell }}\mathbb{E}\left(
e_{i_{1}t_{1}}e_{i_{2}t_{2}}|\mathscr{D}\right) f_{t_{1}}^{0\prime }\lambda
_{i_{1}}^{0}\lambda _{i_{2}}^{0\prime }f_{t_{2}}^{0} \\
& \leq \max_{i\in G_{k}^{(\ell )}}\left\Vert \lambda _{i}^{0}\right\Vert
_{2}^{2}\max_{t\in \mathcal{T}_{\ell }}\left\Vert f_{t}^{0}\right\Vert
_{2}^{2}\frac{1}{N_{k}^{(\ell )}T_{\ell }}\sum_{i\in G_{k}^{(\ell
)}}\sum_{t_{1}\in \mathcal{T}_{\ell }}\sum_{t_{2}\in \mathcal{T}_{\ell
}}\left\vert \mathbb{E}\left( e_{it_{1}}e_{it_{2}}|\mathscr{D}\right)
\right\vert \\
& \lesssim \frac{1}{N_{k}^{(\ell )}T_{\ell }}\sum_{i\in G_{k}^{(\ell
)}}\sum_{t\in \mathcal{T}_{\ell }}\left\vert Var\left( e_{it}|\mathscr{D}%
\right) \right\vert +\frac{2}{N_{k}^{(\ell )}T_{\ell }}\sum_{i\in
G_{k}^{(\ell )}}\sum_{t_{1}\in \mathcal{T}_{\ell }}\sum_{t_{2}\in \mathcal{T}%
_{\ell },t_{2}>t_{1}}\left\vert Cov\left( e_{it_{1}},e_{it_{2}}|\mathscr{D}%
\right) \right\vert \\
& =O(1)~a.s.,
\end{align*}%
where the fourth line is by Lemma \ref{Lem:bounded u&v}(i) and the last line
combines Assumption \ref{ass:1}(v) and Davydov's inequality for conditional
strong mixing sequences, similarly as \eqref{Lem:B12_1}. It follows that 
\begin{align}
\left\Vert P_{F^{0,(\ell )}}E_{k}^{(\ell )\prime }P_{\Lambda _{k}^{0,(\ell
)}}\right\Vert & \leq \left\Vert F^{0,(\ell )}\right\Vert \left\Vert
F^{0,(\ell )\prime }F^{0,(\ell )}\right\Vert \left\Vert F^{0,(\ell )\prime
}E_{k}^{(\ell )\prime }\Lambda _{k}^{0,(\ell )}\right\Vert \left\Vert
(\Lambda _{k}^{0,(\ell )\prime }\Lambda _{k}^{0,(\ell )})^{-1}\right\Vert
\left\Vert \Lambda _{k}^{0,(\ell )\prime }\right\Vert  \notag
\label{Lem:B13_1} \\
& =O(T^{1/2})O_{p}(T^{-1})O_{p}(\sqrt{NT})O_{p}(N^{-1})O(N^{1/2})=O_{p}(1),
\end{align}%
where the first equality holds by Assumptions \ref{ass:2} and \ref{ass:8}. }

{\small Moreover, we have 
\begin{equation}
\left\Vert P_{\Lambda _{k}^{0,(\ell )}}\tilde{\mathbb{X}}_{j,k}^{(\ell
)}\right\Vert \leq \left\Vert \Lambda _{k}^{0,(\ell )}\right\Vert \left\Vert
(\Lambda _{k}^{0,(\ell )\prime }\Lambda _{k}^{0,(\ell )})^{-1}\right\Vert
\left\Vert \Lambda _{k}^{0,(\ell )\prime }\tilde{\mathbb{X}}_{j,k}^{(\ell
)}\right\Vert =O(N^{1/2})O_{p}(N^{-1})O_{p}(\sqrt{NT})=O_{p}(T^{1/2}),
\label{Lem:B13_2}
\end{equation}%
where the first equality holds by Assumptions \ref{ass:2} and \ref{ass:8}(i)
and the fact that 
\begin{align*}
\mathbb{E}\left( \left\Vert \Lambda _{k}^{0,(\ell )\prime }\tilde{\mathbb{X}}%
_{j,k}^{(\ell )}\right\Vert ^{2}\big|\mathscr{D}\right) &
=\sum_{r=1}^{r_{0}}\sum_{t\in \mathcal{T}_{\ell }}\mathbb{E}\left[ \left(
\sum_{i\in G_{k}^{(\ell )}}\lambda _{i,r}^{0}\tilde{X}_{j,it}\right) ^{2}%
\Bigg|\mathscr{D}\right] \\
& =\sum_{r=1}^{r_{0}}\sum_{t\in \mathcal{T}_{\ell }}\sum_{i\in G_{k}^{(\ell
)}}\sum_{i^{\ast }\in G_{k}^{(\ell )}}\lambda _{i,r}^{0}\lambda _{i^{\ast
},r}^{0}\mathbb{E}[\tilde{X}_{j,it}\tilde{X}_{j,i^{\ast }t}\big|\mathscr{D}]
\\
& =\sum_{r=1}^{r_{0}}\sum_{t\in \mathcal{T}_{\ell }}\sum_{i\in G_{k}^{(\ell
)}}\left( \lambda _{i,r}^{0}\right) ^{2}\mathbb{E}\left[ (\tilde{X}%
_{j,it})^{2}\big|\mathscr{D}\right] =O_{p}(NT).
\end{align*}%
Then we are ready to show that 
\begin{align*}
\left\vert \frac{1}{\sqrt{N_{k}^{(\ell )}T_{\ell }}}tr\left( P_{F^{0,(\ell
)}}E_{k}^{(\ell )\prime }P_{\Lambda _{k}^{0,(\ell )}}\tilde{\mathbb{X}}%
_{j,k}^{(\ell )}\right) \right\vert & =\left\vert \frac{1}{\sqrt{%
N_{k}^{(\ell )}T_{\ell }}}tr\left( P_{F^{0,(\ell )}}E_{k}^{(\ell )\prime
}P_{\Lambda _{k}^{0,(\ell )}}P_{\Lambda _{k}^{0,(\ell )}}\tilde{\mathbb{X}}%
_{j,k}^{(\ell )}\right) \right\vert \\
& \leq \frac{1}{\sqrt{N_{k}^{(\ell )}T_{\ell }}}\left\Vert P_{F^{0,(\ell
)}}E_{k}^{(\ell )\prime }P_{\Lambda _{k}^{0,(\ell )}}\right\Vert \left\Vert
P_{\Lambda _{k}^{0,(\ell )}}\tilde{\mathbb{X}}_{j,k}^{(\ell )}\right\Vert \\
& =\frac{1}{\sqrt{N_{k}^{(\ell )}T_{\ell }}}%
O_{p}(1)O_{p}(T^{1/2})=O(N^{-1/2})=o_{p}(1).
\end{align*}%
}

{\small (ii) Let $\left[ A\right] _{jl}$ denote the $\left( j,l\right) $-th
element of $A.$ Note that 
\begin{align*}
& \left\vert \frac{1}{\sqrt{N_{k}^{(\ell )}T_{\ell }}}tr\left( P_{\Lambda
_{k}^{0,(\ell )}}E_{k}^{(\ell )\prime }\tilde{\mathbb{X}}_{j,k}^{(\ell
)}\right) \right\vert \\
& =\left\vert \sum_{j_{1},j_{2}=1}^{r_{0}}\left[ \left( \frac{1}{%
N_{k}^{(\ell )}}\Lambda _{k}^{0,(\ell )\prime }\Lambda _{k}^{0,(\ell
)}\right) ^{-1}\right] _{j_{1}j_{2}}\frac{1}{N_{k}^{(\ell )}\sqrt{%
N_{k}^{(\ell )}T_{\ell }}}\sum_{t\in \mathcal{T}_{\ell }}\sum_{i_{1}\in
G_{k}^{(\ell )}}\sum_{i_{2}\in G_{k}^{(\ell )}}\lambda
_{i_{1},j_{1}}^{0}\lambda _{i_{2},j_{2}}^{0}e_{i_{1}t}\tilde{X}%
_{j,i_{2}t}^{(\ell )}\right\vert \\
& \lesssim \max_{j_{1},j_{2}\in \lbrack r_{0}]}\left\vert \frac{1}{%
N_{k}^{(\ell )}\sqrt{N_{k}^{(\ell )}T_{\ell }}}\sum_{t\in \mathcal{T}_{\ell
}}\sum_{i_{1}\in G_{k}^{(\ell )}}\sum_{i_{2}\in G_{k}^{(\ell )}}\lambda
_{i_{1},j_{1}}^{0}\lambda _{i_{2},j_{2}}^{0}e_{i_{1}t}\tilde{X}%
_{j,i_{2}t}^{(\ell )}\right\vert =O_{p}(N^{-1/2}),
\end{align*}%
where the last line holds by the fact that 
\begin{align*}
& \mathbb{E}\left( \left\vert \frac{1}{N_{k}^{(\ell )}\sqrt{N_{k}^{(\ell
)}T_{\ell }}}\sum_{t\in \mathcal{T}_{\ell }}\sum_{i_{1}\in G_{k}^{(\ell
)}}\sum_{i_{2}\in G_{k}^{(\ell )}}\lambda _{i_{1},j_{1}}^{0}\lambda
_{i_{2},j_{2}}^{0}e_{i_{1}t}\tilde{X}_{j,i_{2}t}^{(\ell )}\right\vert ^{2}%
\Bigg|\mathscr{D}\right) \\
& =\frac{1}{\left( N_{k}^{(\ell )}\right) ^{3}T_{\ell }}\sum_{t\in \mathcal{T%
}_{\ell }}\sum_{s\in \mathcal{T}_{\ell }}\sum_{i_{1}\in G_{k}^{(\ell
)}}\sum_{i_{2}\in G_{k}^{(\ell )}}\sum_{m_{1}\in G_{k}^{(\ell
)}}\sum_{m_{2}\in G_{k}^{(\ell )}}\lambda _{i_{1},j_{1}}^{0}\lambda
_{i_{2},j_{2}}^{0}\lambda _{m_{1},j_{1}}^{0}\lambda _{m_{2},j_{2}}^{0}%
\mathbb{E}\left( e_{i_{1}t}\tilde{X}_{j,i_{2}t}^{(\ell )}e_{m_{1}s}\tilde{X}%
_{j,m_{2}s}^{(\ell )}\big|\mathscr{D}\right) \\
& =\frac{1}{\left( N_{k}^{(\ell )}\right) ^{3}T_{\ell }}\sum_{t\in \mathcal{T%
}_{\ell }}\sum_{s\in \mathcal{T}_{\ell }}\sum_{i_{1}\in G_{k}^{(\ell
)}}\sum_{i_{2}\in G_{k}^{(\ell )}}\left( \lambda _{i_{1},j_{1}}^{0}\right)
^{2}\left( \lambda _{i_{2},j_{2}}^{0}\right) ^{2}\mathbb{E}\left(
e_{i_{1}t}e_{i_{1}s}\tilde{X}_{j,i_{2}t}^{(\ell )}\tilde{X}%
_{j,i_{2}s}^{(\ell )}\big|\mathscr{D}\right) \\
& =\frac{1}{\left( N_{k}^{(\ell )}\right) ^{3}T_{\ell }}\sum_{t\in \mathcal{T%
}_{\ell }}\sum_{i_{1}\in G_{k}^{(\ell )}}\sum_{i_{2}\in G_{k}^{(\ell
)}}\left( \lambda _{i_{1},j_{1}}^{0}\right) ^{2}\left( \lambda
_{i_{2},j_{2}}^{0}\right) ^{2}\mathbb{E}\left( e_{i_{1}t}^{2}\left( \tilde{X}%
_{j,i_{2}t}^{(\ell )}\right) ^{2}\bigg|\mathscr{D}\right) \\
& +\frac{2}{\left( N_{k}^{(\ell )}\right) ^{3}T_{\ell }}\sum_{t\in \mathcal{T%
}_{\ell }}\sum_{s\in \mathcal{T}_{\ell },s>t}\sum_{i_{1}\in G_{k}^{(\ell
)}}\sum_{i_{2}\in G_{k}^{(\ell )}}\left( \lambda _{i_{1},j_{1}}^{0}\right)
^{2}\left( \lambda _{i_{2},j_{2}}^{0}\right) ^{2}\mathbb{E}\left(
e_{i_{1}t}e_{i_{1}s}\tilde{X}_{j,i_{2}t}^{(\ell )}\tilde{X}%
_{j,i_{2}s}^{(\ell )}\big|\mathscr{D}\right) \\
& =O_{p}(N^{-1}),
\end{align*}%
where the second equality is by Assumption \ref{ass:1}(i) and the last line
holds by Assumption \ref{ass:1}(iii) and (v), and Davydov's inequality. }

{\small (iii) Define $\zeta _{j,its}^{(\ell )}:=e_{it}\tilde{\mathbb{X}}%
_{j,is}^{(\ell )}-\mathbb{E}(e_{it}\tilde{\mathbb{X}}_{j,is}^{(\ell )}\big|%
\mathscr{D})$. As above, we have 
\begin{align*}
& \mathbb{E}\left\{ \left\vert \frac{1}{T_{\ell }\sqrt{N_{k}^{(\ell
)}T_{\ell }}}\sum_{i_{1}\in G_{k}^{(\ell )}}\sum_{t_{1}\in \mathcal{T}_{\ell
}}\sum_{t_{2}\in \mathcal{T}_{\ell
}}f_{t_{1},j_{1}}^{0}f_{t_{2},j_{2}}^{0}\zeta _{j,i_{1}t_{1}t_{2}}^{(\ell
)}\right\vert ^{2}\Bigg|\mathscr{D}\right\} \\
& =\frac{1}{T_{\ell }^{3}N_{k}^{(\ell )}}\sum_{i_{1}\in G_{k}^{(\ell
)}}\sum_{i_{2}\in G_{k}^{(\ell )}}\sum_{t_{1}\in \mathcal{T}_{\ell
}}\sum_{t_{2}\in \mathcal{T}_{\ell }}\sum_{s_{1}\in \mathcal{T}_{\ell
}}\sum_{s_{2}\in \mathcal{T}_{\ell
}}f_{t_{1},j_{1}}^{0}f_{t_{2},j_{2}}^{0}f_{s_{1},j_{1}}^{0}f_{s_{2},j_{2}}^{0}%
\mathbb{E}\left( \zeta _{j,i_{1}t_{1}t_{2}}^{(\ell )}\zeta
_{j,i_{2}s_{1}s_{2}}^{(\ell )}\big|\mathscr{D}\right\} \\
& \lesssim \frac{1}{T_{\ell }^{3}N_{k}^{(\ell )}}\sum_{i_{1}\in G_{k}^{(\ell
)}}\sum_{i_{2}\in G_{k}^{(\ell )}}\sum_{t_{1}\in \mathcal{T}_{\ell
}}\sum_{t_{2}\in \mathcal{T}_{\ell }}\sum_{s_{1}\in \mathcal{T}_{\ell
}}\sum_{s_{2}\in \mathcal{T}_{\ell }}\left\vert Cov\left( e_{i_{1}t_{1}}%
\tilde{X}_{j,i_{1}t_{2}}^{(\ell )},e_{i_{2}s_{1}}\tilde{X}%
_{j,i_{2}s_{2}}^{(\ell )}\big|\mathscr{D}\right) \right\vert \\
& =\frac{1}{T_{\ell }^{3}N_{k}^{(\ell )}}\sum_{i\in G_{k}^{(\ell
)}}\sum_{t_{1}\in \mathcal{T}_{\ell }}\sum_{t_{2}\in \mathcal{T}_{\ell
}}\sum_{s_{1}\in \mathcal{T}_{\ell }}\sum_{s_{2}\in \mathcal{T}_{\ell
}}\left\vert Cov\left( e_{it_{1}}\tilde{X}_{j,it_{2}}^{(\ell )},e_{is_{1}}%
\tilde{X}_{j,is_{2}}^{(\ell )}\big|\mathscr{D}\right) \right\vert
=O_{p}(T^{-1}),
\end{align*}%
where the last equality holds by Assumption \ref{ass:10}(iv). It follows
that 
\begin{align*}
& \left\vert \frac{1}{\sqrt{N_{k}^{(\ell )}T_{\ell }}}tr\left\{
P_{F^{0,(\ell )}}\left[ E_{k}^{(\ell )\prime }\tilde{\mathbb{X}}%
_{j,k}^{(\ell )}-\mathbb{E}\left( E_{k}^{(\ell )\prime }\tilde{\mathbb{X}}%
_{j,k}^{(\ell )}\big|\mathscr{D}\right) \right] \right\} \right\vert \\
& =\left\vert \sum_{j_{1},j_{2}=1}^{r_{0}}\left[ \left( \frac{1}{T_{\ell }}%
F^{0,(\ell )\prime }F^{0,(\ell )}\right) ^{-1}\right] _{j_{1}j_{2}}\frac{1}{%
T_{\ell }\sqrt{N_{k}^{(\ell )}T_{\ell }}}\sum_{i_{1}\in G_{k}^{(\ell
)}}\sum_{t_{1}\in \mathcal{T}_{1}}\sum_{t_{2}\in \mathcal{T}%
_{1}}f_{t_{1},j_{1}}^{0}f_{t_{2},j_{2}}^{0}\zeta _{j,i_{1}t_{1}t_{2}}^{(\ell
)}\right\vert =O_{p}(T^{-1/2}).
\end{align*}%
}

{\small (iv) As in \cite{moon2017dynamic}, it is clear that 
\begin{align*}
& \left\vert \frac{1}{\sqrt{N_{k}^{(\ell )}T_{\ell }}}tr\left[ E_{k}^{(\ell
)}P_{F^{0,(\ell )}}E_{k}^{(\ell )\prime }M_{\Lambda _{k}^{0,(\ell )}}\mathbb{%
X}_{j,k}^{(\ell )}F^{0,(\ell )}\left( F^{0,(\ell )\prime }F^{0,(\ell
)}\right) ^{-1}\left( \Lambda _{k}^{0,(\ell )\prime }\Lambda _{k}^{0,(\ell
)}\right) ^{-1}\Lambda _{k}^{0,(\ell )\prime }\right] \right\vert \\
& \lesssim \frac{1}{\sqrt{N_{k}^{(\ell )}T_{\ell }}}\left\Vert P_{\Lambda
_{k}^{0,(\ell )}}E_{k}^{(\ell )}P_{F^{0,(\ell )}}\right\Vert \left\Vert
E_{k}^{(\ell )}\right\Vert _{op}\left\Vert \mathbb{X}_{j,k}^{(\ell
)}\right\Vert \left\Vert F^{0,(\ell )}\left( F^{0,(\ell )\prime }F^{0,(\ell
)}\right) ^{-1}\left( \Lambda _{k}^{0,(\ell )\prime }\Lambda _{k}^{0,(\ell
)}\right) ^{-1}\Lambda _{k}^{0,(\ell )\prime }\right\Vert _{{}} \\
& =\frac{1}{\sqrt{N_{k}^{(\ell )}T_{\ell }}}O_{p}(1)O_{p}\left( \sqrt{N}+%
\sqrt{T\log T}\right) O_{p}((NT)^{1/2})O_{p}((NT)^{-1/2})=o_{p}(1),
\end{align*}%
where the last line combines \eqref{Lem:B12_2},\eqref{Lem:B13_1}, the fact
that $||\mathbb{X}_{j,k}^{(\ell )}||=O_{p}((NT)^{1/2})$ by Assumption \ref%
{ass:8}(ii), and $||F^{0,(\ell )}(F^{0,(\ell )\prime }F^{0,(\ell
)})^{-1}(\Lambda _{k}^{0,(\ell )\prime }\Lambda _{k}^{0,(\ell
)})^{-1}\Lambda _{k}^{0,(\ell )\prime }||=O_{p}((NT)^{-1/2})$ by Assumptions %
\ref{ass:2} and \ref{ass:8}(i). }

{\small (v) The proof of (v) is analogous to that of (iv) and is omitted for
brevity. }

{\small (vi) First, we note that 
\begin{align*}
\mathbb{E}\left( \left\Vert \Lambda _{k}^{0,(\ell )\prime }E_{k}^{(\ell )}%
\mathbb{X}_{j,k}^{(\ell )\prime }\right\Vert \bigg|\mathscr{D}\right) & =%
\mathbb{E}\left[ \sum_{j_{1}=1}^{r_{0}}\sum_{m\in G_{k}^{(\ell )}}\left(
\sum_{i_{1}\in G_{k}^{(\ell )}}\sum_{t_{1}\in \mathcal{T}_{\ell }}\lambda
_{i_{1},j_{1}}^{0}e_{i_{1}t_{1}}X_{j,mt_{1}}\right) ^{2}\Bigg|\mathscr{D}%
\right] \\
& =\sum_{j_{1}=1}^{r_{0}}\sum_{m\in G_{k}^{(\ell )}}\sum_{i_{1}\in
G_{k}^{(\ell )}}\sum_{i_{2}\in G_{k}^{(\ell )}}\sum_{t_{1}\in \mathcal{T}%
_{\ell }}\sum_{t_{2}\in \mathcal{T}_{\ell }}\lambda
_{i_{1},j_{1}}^{0}\lambda _{i_{2},j_{1}}^{0}\mathbb{E}\left(
e_{i_{1}t_{1}}X_{j,mt_{1}}e_{i_{2}t_{2}}X_{j,mt_{2}}|\mathscr{D}\right) \\
& =\sum_{j_{1}=1}^{r_{0}}\sum_{m\in G_{k}^{(\ell )}}\sum_{i_{1}\in
G_{k}^{(\ell )}}\sum_{t_{1}\in \mathcal{T}_{\ell }}\sum_{t_{2}\in \mathcal{T}%
_{\ell }}\left( \lambda _{i_{1},j_{1}}^{0}\right) ^{2}\mathbb{E}\left(
e_{i_{1}t_{1}}X_{j,mt_{1}}e_{i_{1}t_{2}}X_{j,mt_{2}}|\mathscr{D}\right) \\
& \lesssim \sum_{m\in G_{k}^{(\ell )}}\sum_{i_{1}\in G_{k}^{(\ell
)}}\sum_{t\in \mathcal{T}_{\ell }}\mathbb{E}\left(
e_{i_{1}t}^{2}X_{j,mt}^{2}|\mathscr{D}\right) \\
& +2\sum_{m\in G_{k}^{(\ell )}}\sum_{i_{1}\in G_{k}^{(\ell )}}\sum_{t_{1}\in 
\mathcal{T}_{\ell }}\sum_{t_{2}\in \mathcal{T}_{\ell
},t_{2}>t_{1}}\left\vert Cov\left(
e_{i_{1}t_{1}}X_{j,mt_{1}},e_{i_{1}t_{2}}X_{j,mt_{2}}|\mathscr{D}\right)
\right\vert \\
& =O_{p}(N^{2}T),
\end{align*}%
which leads to the result that 
\begin{equation*}
\left\Vert P_{\Lambda _{k}^{0,(\ell )}}E_{k}^{(\ell )}\mathbb{X}%
_{j,k}^{(\ell )\prime }\right\Vert \leq \left\Vert \Lambda _{k}^{0,(\ell
)}\right\Vert \left\Vert (\Lambda _{k}^{0,(\ell )\prime }\Lambda
_{k}^{0,(\ell )})^{-1}\right\Vert \left\Vert \Lambda _{k}^{0,(\ell )\prime
}E_{k}^{(\ell )}\mathbb{X}_{j,k}^{(\ell )\prime }\right\Vert
=O(N^{-1/2})O_{p}(N\sqrt{T})=O_{p}(\sqrt{NT}).
\end{equation*}%
As in the proof of part (iv), it yields that 
\begin{align*}
& \left\vert \frac{1}{\sqrt{N_{k}^{(\ell )}T_{\ell }}}tr\left[ E_{k}^{(\ell
)\prime }M_{\Lambda _{k}^{0,(\ell )}}\mathbb{X}_{j,k}^{(\ell )}M_{F^{0,(\ell
)}}E_{k}^{(\ell )\prime }\Lambda _{k}^{0,(\ell )}\left( \Lambda
_{k}^{0,(\ell )\prime }\Lambda _{k}^{0,(\ell )}\right) ^{-1}\left(
F^{0,(\ell )\prime }F^{0,(\ell )}\right) ^{-1}F^{0,(\ell )\prime }\right]
\right\vert \\
& \leq \left\vert \frac{1}{\sqrt{N_{k}^{(\ell )}T_{\ell }}}tr\left[
E_{k}^{(\ell )\prime }M_{\Lambda _{k}^{0,(\ell )}}\mathbb{X}_{j,k}^{(\ell
)}E_{k}^{(\ell )\prime }P_{\Lambda _{k}^{0,(\ell )}}\Lambda _{k}^{0,(\ell
)}\left( \Lambda _{k}^{0,(\ell )\prime }\Lambda _{k}^{0,(\ell )}\right)
^{-1}\left( F^{0,(\ell )\prime }F^{0,(\ell )}\right) ^{-1}F^{0,(\ell )\prime
}\right] \right\vert \\
& +\left\vert \frac{1}{\sqrt{N_{k}^{(\ell )}T_{\ell }}}tr\left[ E_{k}^{(\ell
)\prime }M_{\Lambda _{k}^{0,(\ell )}}\mathbb{X}_{j,k}^{(\ell )}P_{F^{0,(\ell
)}}E_{k}^{(\ell )\prime }P_{\Lambda _{k}^{0,(\ell )}}\Lambda _{k}^{0,(\ell
)}\left( \Lambda _{k}^{0,(\ell )\prime }\Lambda _{k}^{0,(\ell )}\right)
^{-1}\left( F^{0,(\ell )\prime }F^{0,(\ell )}\right) ^{-1}F^{0,(\ell )\prime
}\right] \right\vert \\
& \lesssim \frac{1}{\sqrt{N_{k}^{(\ell )}T_{\ell }}}\left\Vert E_{k}^{(\ell
)}\right\Vert _{op}\left\Vert P_{\Lambda _{k}^{0,(\ell )}}E_{k}^{(\ell )}%
\mathbb{X}_{j,k}^{(\ell )\prime }\right\Vert \left\Vert \Lambda
_{k}^{0,(\ell )}\left( \Lambda _{k}^{0,(\ell )\prime }\Lambda _{k}^{0,(\ell
)}\right) ^{-1}\left( F^{0,(\ell )\prime }F^{0,(\ell )}\right)
^{-1}F^{0,(\ell )\prime }\right\Vert \\
& +\frac{1}{\sqrt{N_{k}^{(\ell )}T_{\ell }}}\left\Vert E_{k}^{(\ell
)}\right\Vert _{op}\left\Vert \mathbb{X}_{j,k}^{(\ell )}\right\Vert
\left\Vert \Lambda _{k}^{0,(\ell )}\left( \Lambda _{k}^{0,(\ell )\prime
}\Lambda _{k}^{0,(\ell )}\right) ^{-1}\left( F^{0,(\ell )\prime }F^{0,(\ell
)}\right) ^{-1}F^{0,(\ell )\prime }\right\Vert \left\Vert P_{\Lambda
_{k}^{0,(\ell )}}E_{k}^{(\ell )}P_{F^{0,(\ell )}}\right\Vert \\
& =o_{p}(1).
\end{align*}%
}

{\small (vii) For this statement, we sketch the proof because \cite%
{lu2016shrinkage} have already proved a similar result. 
\begin{align*}
& \left\vert \frac{1}{\sqrt{N_{k}^{(\ell )}T_{\ell }}}tr\left\{ \left[
E_{k}^{(\ell )}E_{k}^{(\ell )\prime }-\mathbb{E}\left( E_{k}^{(\ell
)}E_{k}^{(\ell )\prime }\big|\mathscr{D}\right) \right] M_{\Lambda
_{k}^{0,(\ell )}}\mathbb{X}_{j,k}^{(\ell )}F^{0,(\ell )}\left( F^{0,(\ell
)\prime }F^{0,(\ell )}\right) ^{-1}\left( \Lambda _{k}^{0,(\ell )\prime
}\Lambda _{k}^{0,(\ell )}\right) ^{-1}\Lambda _{k}^{0,(\ell )\prime
}\right\} \right\vert \\
& \lesssim \frac{1}{\left( N_{k}^{(\ell )}\right) ^{3/2}}\left\Vert \Lambda
_{k}^{0,(\ell )\prime }\frac{1}{T_{\ell }}\left[ E_{k}^{(\ell )}E_{k}^{(\ell
)\prime }-\mathbb{E}(E_{k}^{(\ell )}E_{k}^{(\ell )\prime }\big|\mathscr{D})%
\right] M_{\Lambda _{k}^{0,(\ell )}}\mathbb{X}_{j,k}^{(\ell )}\right\Vert
=o_{p}(1),
\end{align*}%
where the last equality holds by the fact that 
\begin{equation*}
\left( N_{k}^{(\ell )}\right) ^{-3/2}\mathbb{E}\left\{ \left\Vert \Lambda
_{k}^{0,(\ell )\prime }\frac{1}{T_{\ell }}\left[ E_{k}^{(\ell )}E_{k}^{(\ell
)\prime }-\mathbb{E}\left( E_{k}^{(\ell )}E_{k}^{(\ell )\prime }\big|%
\mathscr{D}\right) \right] M_{\Lambda _{k}^{0,(\ell )}}\mathbb{X}%
_{j,k}^{(\ell )}\right\Vert \bigg|\mathscr{D}\right\} =o_{p}(1)
\end{equation*}%
which follows by similar arguments as used in the proof of Lemma D.3(vi) in 
\cite{lu2016shrinkage}. }

{\small (viii) Analogous to the previous statement, we have 
\begin{equation*}
\left( T_{\ell }\right) ^{-3/2}\mathbb{E}\left\{ \left\Vert F^{0,(\ell
)\prime }\frac{1}{N_{k}^{(\ell )}}\left[ E_{k}^{(\ell )\prime }E_{k}^{(\ell
)}-\mathbb{E}\left( E_{k}^{(\ell )\prime }E_{k}^{(\ell )}\big|\mathscr{D}%
\right) \right] M_{\Lambda _{k}^{0,(\ell )}}\mathbb{X}_{j,k}^{(\ell )\prime
}\right\Vert _{{}}\bigg|\mathscr{D}\right\} =o_{p}(1)
\end{equation*}%
by similar arguments as used in the proof of Lemma D.4(iii) in \cite%
{lu2016shrinkage}. Then we are ready to show that 
\begin{align*}
& \left\vert \frac{1}{\sqrt{N_{k}^{(\ell )}T_{\ell }}}tr\left\{ \left[
E_{k}^{(\ell )\prime }E_{k}^{(\ell )}-\mathbb{E}\left( E_{k}^{(\ell )\prime
}E_{k}^{(\ell )}\big|\mathscr{D}\right) \right] M_{\Lambda _{k}^{0,(\ell )}}%
\mathbb{X}_{j,k}^{(\ell )\prime }\Lambda _{k}^{0,(\ell )}\left( \Lambda
_{k}^{0,(\ell )\prime }\Lambda _{k}^{0,(\ell )}\right) ^{-1}\left(
F^{0,(\ell )\prime }F^{0,(\ell )}\right) ^{-1}F^{0,(\ell )\prime }\right\}
\right\vert \\
& \lesssim \left( T_{\ell }\right) ^{-3/2}\left\Vert F^{0,(\ell )\prime }%
\frac{1}{N_{k}^{(\ell )}}\left[ E_{k}^{(\ell )\prime }E_{k}^{(\ell )}-%
\mathbb{E}\left( E_{k}^{(\ell )\prime }E_{k}^{(\ell )}\big|\mathscr{D}%
\right) \right] M_{\Lambda _{k}^{0,(\ell )}}\mathbb{X}_{j,k}^{(\ell )\prime
}\right\Vert =o_{p}(1).
\end{align*}%
}

{\small (ix) This statement can be proved owing to the fact that the second
moment of the term on the left side of the equality conditioning on $%
\mathscr{D}$ is $O_{p}(N^{-1})$. See the proof of Lemma B.1(i) for detail. }

{\small (x) Similarly to \eqref{Lem:B13_2}, we can also show that $||\tilde{%
\mathbb{X}}_{j,k}^{(\ell )}P_{F^{0,(\ell )}}||=O_{p}(N^{1/2})$. Then,
following the same arguments as used in the proof of Lemma B.1(j) in \cite%
{moon2017dynamic}, we can finish the proof. }
\end{proof}

\section{\protect\small Estimation of Panels with IFEs and Heterogeneous
Slopes}

{\small \label{sec:panel_IFE} For $\forall i\in \mathcal{N}:=\{n_{1},\cdots
,n_{n}\}$ and $t\in \lbrack T]$, consider the model 
\begin{equation}
Y_{it}=\left\{ \begin{aligned}
&\lambda_{i}^{0\prime}f_{t}^{0}+X_{it}^{\prime}\theta_{i}^{0,(1)}+e_{it},~t%
\in\{1,\cdots,T_{1}\},\\
&\lambda_{i}^{0\prime}f_{t}^{0}+X_{it}^{\prime}\theta_{i}^{0,(2)}+e_{it},~t%
\in\{T_{1}+1,\cdots,T\}. \end{aligned}\right.  \label{PIFE_model}
\end{equation}%
Here $\mathcal{N}$ is a subset of $[N]$ and $n\asymp N$. To distinguish from
the notation $\Lambda ^{0}$ in the paper, we define $\Lambda
_{n}^{0}:=\left( \lambda _{n_{1}}^{0},\cdots ,\lambda _{n_{n}}^{0}\right)
^{\prime }$. }

{\small Let $X_{i}^{(1)}=\left( X_{i1},\cdots ,X_{iT_{1}}\right) ^{\prime }$%
, $X_{i}^{(2)}=\left( X_{i(T_{1}+1)},\cdots ,X_{iT}\right) ^{\prime }$, $%
e_{i}^{(1)}=\left( e_{i1},\cdots ,e_{iT_{1}}\right) ^{\prime }$, $%
e_{i}^{(2)}=\left( e_{i(T_{1}+1)},\cdots ,e_{iT}\right) ^{\prime }$, $%
F^{0,(1)}=\left( f_{1}^{0},\cdots ,f_{T_{1}}^{0}\right) ^{\prime }$, and $%
F^{0,(2)}=\left( f_{T_{1}+1}^{0},\cdots ,f_{T}^{0}\right) ^{\prime }$. To
estimate $\theta _{i}^{0,(\ell )}$, $\lambda _{i}^{0}$ and $f_{t}^{0}$, we
follow the lead of \cite{bai2009panel} and consider the PCA for
heterogeneous panels. For $\forall \ell \in \{1,2\}$, let 
\begin{equation}
\left( \left\{ \hat{\theta}_{i}^{(\ell )}\right\} _{i\in \mathcal{N}},\hat{F}%
^{(\ell )}\right) =\argmin_{F^{(\ell )},\left\{ \theta _{i}\right\} _{i\in 
\mathcal{N}}}\frac{1}{nT_{\ell }}\sum_{_{i\in \mathcal{N}}}\left(
Y_{i}^{(\ell )}-X_{i}^{(\ell )}\theta _{i}\right) ^{\prime }M_{F^{(\ell
)}}\left( Y_{i}^{(\ell )}-X_{i}^{(\ell )}\theta _{i}\right) ,
\label{PIFE_obj}
\end{equation}%
where $T_{2}=T-T_{1}$, $W_{i}^{(1)}=\left( W_{i1},\cdots ,W_{iT_{1}}\right)
^{\prime }$, $W_{i}^{(2)}=\left( W_{i(T_{1}+1)},\cdots ,W_{iT}\right)
^{\prime }$ for $W_{i}$ denotes $Y_{i}$ or $X_{i}$, $F^{(\ell )}$ is any $%
T_{\ell }\times r_{0}$ matrix such that $\frac{F^{(\ell )\prime }F^{(\ell )}%
}{T_{\ell }}=I_{r_{0}}$ and $M_{F^{(\ell )}}=I_{T_{\ell }}-\frac{F^{(\ell
)}F^{(\ell )\prime }}{T_{\ell }}$. Note that we consider the concentrated
objective function here by concentrating out the factor loadings. The
solutions to the minimization problem in \eqref{PIFE_obj} solve the
following nonlinear system of equations: 
\begin{align}
& \hat{\theta}_{i}^{(\ell )}=\left( X_{i}^{(\ell )\prime }M_{\hat{F}^{(\ell
)}}X_{i}^{(\ell )}\right) ^{-1}X_{i}^{(\ell )\prime }M_{\hat{F}^{(\ell
)}}Y_{i}^{(\ell )},  \label{PIFE_est_theta} \\
& \left[ \frac{1}{nT_{\ell }}\sum_{_{i\in \mathcal{N}}}\left( Y_{i}^{(\ell
)}-X_{i}^{(\ell )}\hat{\theta}_{i}^{(\ell )}\right) ^{\prime }\left(
Y_{i}^{(\ell )}-X_{i}^{(\ell )}\hat{\theta}_{i}^{(\ell )}\right) \right] 
\hat{F}^{(\ell )}=\hat{F}^{(\ell )}\hat{V}_{NT}^{(\ell )},
\label{PIFE_est_F}
\end{align}%
where $\hat{V}_{NT}^{(\ell )}$ is a diagonal matrix that contains the $r_{0}$
largest eigenvalues of the matrix in the square brackets in (\ref{PIFE_est_F}%
). Let $\hat{\lambda}_{i}^{(\ell )}=\frac{1}{T}\hat{F}^{(\ell )\prime
}(Y_{i}^{(\ell )}-X_{i}^{(\ell )}\hat{\theta}_{i}^{(\ell )}),$ which are
estimates of $\lambda _{i}^{0}.$ Let $\hat{\Lambda}_{n}^{(\ell )}:=(\hat{%
\lambda}_{n_{1}}^{(\ell )},\cdots ,\hat{\lambda}_{n_{n}}^{(\ell )})^{\prime
},$ and $\hat{a}_{ii}^{(\ell )}:=\hat{\lambda}_{i}^{(\ell )\prime }\left( 
\frac{\hat{\Lambda}_{n}^{(\ell )\prime }\hat{\Lambda}_{n}^{(\ell )}}{n}%
\right) ^{-1}\hat{\lambda}_{i}^{(\ell )}$. }

{\small Let $\theta _{i}^{0,(\ell )}=\bar{\theta}^{0,(\ell )}+c_{i}^{(\ell
)},$ where $\bar{\theta}^{0,(\ell )}=\frac{1}{n}\sum_{i\in \mathcal{N}%
}\theta _{i}^{0,(\ell )}$. Here, we consider testing the slope homogeneity
for $i\in \mathcal{N}.$ The null and alternative hypotheses are respectively
given by 
\begin{equation*}
H_{0}:c_{i}^{(\ell )}=0\text{ }\forall i\in \mathcal{N}\text{ and }%
H_{1}:c_{i}^{(\ell )}\neq 0~\text{for some }i\in \mathcal{N}.
\end{equation*}%
Following \cite{pesaran2008testing} and \cite{ando2016panel}, we define 
\begin{equation}
\hat{\Gamma}^{(\ell )}=\sqrt{n}\cdot \frac{\frac{1}{n}\sum_{i\in
\{n_{1}\cdots ,n_{n}\}}\hat{\mathbb{S}}_{i}^{(\ell )}-p}{\sqrt{2p}}
\label{PIFE_test}
\end{equation}%
where 
\begin{align*}
& \hat{\mathbb{S}}_{i}^{(\ell )}=T_{\ell }(\hat{\theta}_{i}^{(\ell )}-\hat{%
\bar{\theta}}^{(\ell )})^{\prime }\hat{S}_{ii}^{(\ell )}(\hat{\Omega}%
_{i}^{(\ell )})^{-1}\hat{S}_{ii}^{(\ell )}(\hat{\theta}_{i}^{(\ell )}-\hat{%
\bar{\theta}}^{(\ell )})(1-\hat{a}_{ii}^{(\ell )}/n)^{2},\quad \hat{\bar{%
\theta}}^{(\ell )}=\frac{1}{n}\sum_{i\in \mathcal{N}}\hat{\theta}_{i}^{(\ell
)}, \\
& M_{\hat{F}^{(\ell )}}=I_{T_{\ell }}-\frac{\hat{F}^{(\ell )}\hat{F}^{(\ell
)\top }}{T_{\ell }},\quad \hat{S}_{ii}^{(\ell )}=\frac{X_{i}^{(\ell )\prime
}M_{\hat{F}^{(\ell )}}X_{i}^{(\ell )}}{T_{\ell }},\quad (\hat{\mathfrak{x}}%
_{it}^{(\ell )})^{\prime }\text{ is the $t$-th row of $M_{\hat{F}^{(\ell
)}}X_{i}^{(\ell )}$,} \\
& \hat{\Omega}_{i}^{(\ell )}=\frac{1}{T_{\ell }}\sum_{t\in \mathcal{T}_{\ell
}}\hat{\mathfrak{x}}_{it}^{(\ell )}\hat{\mathfrak{x}}_{it}^{(\ell )\prime }%
\hat{e}_{it}^{2}+\frac{1}{T_{\ell }}\sum_{j\in \mathcal{T}_{\ell
,-1}}k(j/S_{T})\sum_{t\in \mathcal{T}_{\ell ,j}}[\hat{\mathfrak{x}}%
_{it}^{(\ell )}\hat{\mathfrak{x}}_{i,t+j}^{(\ell )\prime }\hat{e}_{it}\hat{e}%
_{i,t+j}+\hat{\mathfrak{x}}_{i,t-j}^{(\ell )}\hat{\mathfrak{x}}_{it}^{(\ell
)\prime }\hat{e}_{i,t-j}\hat{e}_{it}]
\end{align*}%
and recall that $\mathcal{T}_{1}=[T_{1}]$, $\mathcal{T}_{2}=[T]\backslash
\lbrack T_{1}]$, $\mathcal{T}_{1,-1}=\mathcal{T}_{1}\backslash \{T_{1}\}$, $%
\mathcal{T}_{2,-1}=\mathcal{T}_{2}\backslash \{T\}$, $\mathcal{T}%
_{1,j}=\left\{ 1+j,\cdots ,T_{1}\right\} $, and $\mathcal{T}_{2,j}=\left\{
T_{1}+1+j,\cdots ,T\right\} $ for some specific $j\in \mathcal{T}_{\ell ,-1}$%
. }

{\small In the next section, we study the asymptotic distribution of $\hat{%
\theta}_{i}^{(\ell )},$ the uniform convergence rates for the estimators of
factors and factor loadings, and the asymptotic behavior for $\hat{\Gamma}%
^{(\ell )}$ under $H_{0}$ and $H_{1}$, respectively. }

\section{\protect\small Lemmas for Panel IFEs Model with Heterogeneous Slope}

{\small \label{sec:proof_IFE} Below we derive the asymptotic distribution
for the slope estimators in our heterogeneous panel models which allow for
dynamics. To allow the dynamic panel, we focus on Assumption \ref{ass:1*}
where the error process is an m.d.s.. If we focus on Assumption \ref{ass:1},
we can obtain similar results by using Davydov's inequality for strong
mixing errors. Here we skip the analyses for static panels with serially
correlated errors for brevity. Let $M$ be a generic large positive constant
and $\mathcal{F}^{(\ell )}:=\left\{ F^{(\ell )}\in \mathbb{R}^{T_{\ell
}\times r_{0}}:\frac{F^{(\ell )\prime }F^{(\ell )}}{T_{\ell }}%
=I_{r_{0}}\right\} $. }

\begin{lemma}
{\small \label{Lem:D1} Under Assumptions \ref{ass:1*}, \ref{ass:2} and \ref%
{ass:8}, we have }

\begin{itemize}
\item[(i)] {\small $\left\vert \frac{1}{nT_{\ell }}\sum_{i\in \mathcal{N}%
}e_{i}^{(\ell )\prime }P_{F^{0,(\ell )}}e_{i}^{(\ell )}\right\vert =o_{p}(1)$%
, }

\item[(ii)] {\small $\sup_{F^{(\ell )}\in \mathcal{F}^{(\ell )}}\left\vert 
\frac{1}{nT_{\ell }}\sum_{i\in \mathcal{N}}e_{i}^{(\ell )\prime }P_{F^{(\ell
)}}e_{i}^{(\ell )}\right\vert =o_{p}(1)$, }

\item[(iii)] {\small $\sup_{F^{(\ell )}\in \mathcal{F}^{(\ell )}}\left\vert 
\frac{1}{nT_{\ell }}\sum_{i\in \mathcal{N}}\lambda _{i}^{0\prime }F^{0,(\ell
)\prime }M_{F^{(\ell )}}e_{i}^{(\ell )}\right\vert =o_{p}(1)$, }

\item[(iv)] {\small $\sup_{\left\{ \max_{i}\left\Vert \theta _{i}\right\Vert
_{\max }\leq M\right\} ,F^{(\ell )}\in \mathcal{F}^{(\ell )}}\left\vert 
\frac{1}{n}\sum_{i\in \mathcal{N}}(\theta _{i}-\theta _{i}^{0,(\ell
)})^{\prime }\frac{X_{i}^{(\ell )\prime }M_{F^{(\ell )}}e_{i}^{(\ell )}}{%
T_{\ell }}\right\vert =o_{p}(1)$. }
\end{itemize}
\end{lemma}

\begin{proof}
{\small (i) We notice that 
\begin{equation*}
\left\vert \frac{1}{nT_{\ell }}\sum_{i\in \mathcal{N}}e_{i}^{(\ell )\prime
}P_{F^{0,(\ell )}}e_{i}^{(\ell )}\right\vert \leq \frac{1}{T_{\ell }}\left( 
\frac{1}{n}\sum_{i\in \mathcal{N}}\left\Vert \frac{1}{\sqrt{T_{\ell }}}%
\sum_{t\in \mathcal{T}_{\ell }}f_{t}^{0}e_{it}\right\Vert ^{2}\right)
\left\Vert \left( \frac{F^{0,(\ell )\prime }F^{0,(\ell )}}{T_{\ell }}\right)
^{-1}\right\Vert \lesssim \frac{1}{T_{\ell }}\left( \frac{1}{n}\sum_{i\in 
\mathcal{N}}\left\Vert \frac{1}{\sqrt{T_{\ell }}}\sum_{t\in \mathcal{T}%
_{\ell }}f_{t}^{0}e_{it}\right\Vert ^{2}\right) .
\end{equation*}%
Recall that $\mathscr{D}$ denotes the minimum $\sigma $-fields generated by $%
\left\{ V_{j}^{0}\right\} _{j\in \lbrack p]\cup \{0\}}$. Furthermore, we
observe that 
\begin{align}
\mathbb{E}\left( \frac{1}{n}\sum_{i\in \mathcal{N}}\left\Vert \frac{1}{\sqrt{%
T_{\ell }}}\sum_{t\in \mathcal{T}_{\ell }}f_{t}^{0}e_{it}\right\Vert ^{2}%
\Bigg|\mathscr{D}\right) & \leq \frac{1}{nT_{\ell }}\sum_{i\in \mathcal{N}%
}\sum_{t\in \mathcal{T}_{\ell }}\sum_{s\in \mathcal{T}_{\ell }}\left\Vert 
\mathbb{E}\left( f_{t}^{0}f_{s}^{0\prime }e_{it}e_{is}\big|\mathscr{D}%
\right) \right\Vert   \notag \\
& \leq \frac{1}{nT_{\ell }}\sum_{i\in \mathcal{N}}\sum_{t\in \mathcal{T}%
_{\ell }}\sum_{s\in \mathcal{T}_{\ell }}\left\Vert f_{t}^{0}\right\Vert
\left\Vert f_{s}^{0\prime }\right\Vert \left\vert \mathbb{E}\left(
e_{it}e_{is}\big|\mathscr{D}\right) \right\vert   \notag \\
& \lesssim \frac{1}{nT_{\ell }}\sum_{i\in \mathcal{N}}\sum_{t\in \mathcal{T}%
_{\ell }}\mathbb{E}(e_{it}^{2}\big|\mathscr{D})\leq M~a.s.,  \label{Lem:D1_1}
\end{align}%
%
%
%
%
%
%
%
%
%
%
%
%
%
where the fourth line holds by the boundedness of factors shown in Lemma \ref%
{Lem:bounded u&v}(i) and the conditional independence of $e_{it}$ under
Assumption \ref{ass:1*}(i) and (iii). It follows that $\frac{1}{n}\sum_{i\in 
\mathcal{N}}||\frac{1}{\sqrt{T_{\ell }}}\sum_{t\in \mathcal{T}_{\ell
}}f_{t}^{0}e_{it}||^{2}=O_{p}(1)$ and }${\small |}${\small $\frac{1}{%
nT_{\ell }}\sum_{i\in \mathcal{N}}e_{i}^{(\ell )\prime }P_{F^{0,(\ell
)}}e_{i}^{(\ell )}|=O_{p}(T^{-1})$. }

{\small (ii) Noting that $P_{F^{(\ell )}}=F^{(\ell )}\left( F^{(\ell )\prime
}F^{(\ell )}\right) ^{-1}F^{(\ell )\prime }=T^{-1}F^{(\ell )}F^{(\ell
)\prime }$ for $F^{(\ell )}\in \mathcal{F}^{(\ell )},$ we have $%
\sup_{F^{(\ell )}\in \mathcal{F}^{(\ell )}}|\frac{1}{nT_{\ell }}\sum_{i\in 
\mathcal{N}}$ $e_{i}^{(\ell )\prime }P_{F^{(\ell )}}e_{i}^{(\ell )}|$ $\leq
\sup_{F^{(\ell )}\in \mathcal{F}^{(\ell )}}\frac{1}{n}\sum_{i\in \mathcal{N}%
}||\frac{1}{T_{\ell }}\sum_{t\in \mathcal{T}_{\ell }}f_{t}e_{it}||^{2}.$
Next, 
\begin{align}
& \sup_{F^{(\ell )}\in \mathcal{F}^{(\ell )}}\frac{1}{n}\sum_{i\in \mathcal{N%
}}\left\Vert \frac{1}{T_{\ell }}\sum_{t\in \mathcal{T}_{\ell
}}f_{t}e_{it}\right\Vert ^{2}=\sup_{F^{(\ell )}\in \mathcal{F}^{(\ell
)}}tr\left( \frac{1}{n}\sum_{i\in \mathcal{N}}\frac{1}{T_{\ell }^{2}}%
\sum_{t\in \mathcal{T}_{\ell }}\sum_{s\in \mathcal{T}_{\ell
}}f_{t}f_{s}^{\prime }e_{it}e_{is}\right)   \notag  \label{Lem:D1_2} \\
& =\sup_{F^{(\ell )}\in \mathcal{F}^{(\ell )}}tr\left\{ \frac{1}{T_{\ell
}^{2}}\sum_{t\in \mathcal{T}_{\ell }}\sum_{s\in \mathcal{T}_{\ell
}}f_{t}f_{s}^{\prime }\frac{1}{n}\sum_{i\in \mathcal{N}}\left[ e_{it}e_{is}-%
\mathbb{E}\left( e_{it}e_{is}\big|\mathscr{D}\right) \right] \right\}
+\sup_{F^{(\ell )}\in \mathcal{F}^{(\ell )}}tr\left\{ \frac{1}{T_{\ell }^{2}}%
\sum_{t\in \mathcal{T}_{\ell }}\sum_{s\in \mathcal{T}_{\ell
}}f_{t}f_{s}^{\prime }\frac{1}{n}\sum_{i\in \mathcal{N}}\mathbb{E}\left(
e_{it}e_{is}\big|\mathscr{D}\right) \right\} .
\end{align}%
For the first term on the second line of \eqref{Lem:D1_2}, we have $%
\max_{t,s}|\frac{1}{n}\sum_{i\in \mathcal{N}}[e_{it}e_{is}-\mathbb{E}%
(e_{it}e_{is}|\mathscr{D})]|=O_{p}(\sqrt{(\log T)/N})$ by conditional
Bernstein's inequality for independent sequence combining the fact that $%
e_{it}e_{is}$ is independent across $i$ given $\mathscr{D}$ by Assumption %
\ref{ass:1*}(i). Then 
\begin{align}
& \sup_{F^{(\ell )}\in \mathcal{F}^{(\ell )}}tr\left\{ \frac{1}{T_{\ell }^{2}%
}\sum_{t\in \mathcal{T}_{\ell }}\sum_{s\in \mathcal{T}_{\ell
}}f_{t}f_{s}^{\prime }\frac{1}{n}\sum_{i\in \mathcal{N}}\left[ e_{it}e_{is}-%
\mathbb{E}\left( e_{it}e_{is}\big|\mathscr{D}\right) \right] \right\}  
\notag \\
& =O_{p}(\sqrt{(\log T)/N})\sup_{F^{(\ell )}\in \mathcal{F}^{(\ell )}}\left( 
\frac{1}{T_{\ell }}\sum_{t\in \mathcal{T}_{\ell }}\left\Vert
f_{t}\right\Vert \right) ^{2}=O_{p}(\sqrt{(\log T)/N}).  \label{Lem:D1_3}
\end{align}%
For the second term on the second line of \eqref{Lem:D1_2}, we have 
\begin{align}
& \sup_{F^{(\ell )}\in \mathcal{F}^{(\ell )}}tr\left\{ \frac{1}{T_{\ell }^{2}%
}\sum_{t\in \mathcal{T}_{\ell }}\sum_{s\in \mathcal{T}_{\ell
}}f_{t}f_{s}^{\prime }\frac{1}{n}\sum_{i\in \mathcal{N}}\mathbb{E}\left(
e_{it}e_{is}\big|\mathscr{D}\right) \right\}   \notag \\
& \leq \sup_{F^{(\ell )}\in \mathcal{F}^{(\ell )}}\frac{1}{nT_{\ell }^{2}}%
\sum_{i\in \mathcal{N}}\sum_{t\in \mathcal{T}_{\ell }}\sum_{s\in \mathcal{T}%
_{\ell }}\left\Vert f_{t}\right\Vert \left\Vert f_{s}\right\Vert \mathbb{E}%
\left( e_{it}e_{is}\big|\mathscr{D}\right)   \notag \\
& \lesssim \frac{1}{nT_{\ell }^{2}}\sum_{i\in \mathcal{N}}\sum_{t\in 
\mathcal{T}_{\ell }}\left\vert \mathbb{E}\left( e_{it}^{2}\big|\mathscr{D}%
\right) \right\vert =O_{p}(T^{-1}),  \label{Lem:D1_4}
\end{align}%
where the first inequality is by Cauchy's inequality, the third line is by
the definition of $\mathcal{F}^{(\ell )}$ and similar arguments as in %
\eqref{Lem:D1_1}. }

{\small Combining \eqref{Lem:D1_2}-\eqref{Lem:D1_4}, we have shown that $%
\sup_{F^{(\ell )}\in \mathcal{F}^{(\ell )}}|\frac{1}{nT_{\ell }}\sum_{i\in 
\mathcal{N}}e_{i}^{(\ell )\prime }P_{F^{(\ell )}}e_{i}^{(\ell )}|=O_{p}(%
\sqrt{(\log T)/N}).$ }

{\small (iii) Note that 
\begin{equation*}
\sup_{F^{(\ell )}\in \mathcal{F}^{(\ell )}}\left\vert \frac{1}{nT_{\ell }}%
\sum_{i\in \mathcal{N}}\lambda _{i}^{0\prime }F^{0,(\ell )\prime
}M_{F^{(\ell )}}e_{i}^{(\ell )}\right\vert \leq \left\vert \frac{1}{nT_{\ell
}}\sum_{i\in \mathcal{N}}\lambda _{i}^{0\prime }F^{0,(\ell )\prime
}e_{i}^{(\ell )}\right\vert +\sup_{F^{(\ell )}\in \mathcal{F}^{(\ell
)}}\left\vert \frac{1}{nT_{\ell }}\sum_{i\in \mathcal{N}}\lambda
_{i}^{0\prime }F^{0,(\ell )\prime }P_{F^{(\ell )}}e_{i}^{(\ell )}\right\vert
.
\end{equation*}%
We show the convergence rate for the two terms on the right side of above
inequality. For the first term, we note that $\mathbb{E}\left( \lambda
_{i}^{0\prime }f_{t}^{0}e_{it}\big|\mathscr{D}\right) =0$ and $e_{it}$ is
independent across $i$ and strong mixing across $t$ given $\mathscr{D}$.
Then we have 
\begin{equation*}
\left\vert \frac{1}{nT_{\ell }}\sum_{i\in \mathcal{N}}\lambda _{i}^{0\prime
}F^{0,(\ell )\prime }e_{i}^{(\ell )}\right\vert =\left\vert \frac{1}{%
nT_{\ell }}\sum_{i\in \mathcal{N}}\sum_{t\in \mathcal{T}_{\ell }}\lambda
_{i}^{0\prime }f_{t}^{0}e_{it}\right\vert =O_{p}((NT)^{-1/2}),
\end{equation*}%
by Lemma \ref{Lem:Bern_mixing}(ii). For the second term, we note that 
\begin{align*}
& \sup_{F^{(\ell )}\in \mathcal{F}^{(\ell )}}\left\vert \frac{1}{nT_{\ell }}%
\sum_{i\in \mathcal{N}}\lambda _{i}^{0\prime }F^{0,(\ell )\prime
}P_{F^{(\ell )}}e_{i}^{(\ell )}\right\vert  \\
& =\sup_{F^{(\ell )}\in \mathcal{F}^{(\ell )}}\left\vert \frac{1}{nT_{\ell }}%
\sum_{i\in \mathcal{N}}\lambda _{i}^{0\prime }F^{0,(\ell )\prime }F^{(\ell
)}\left( F^{(\ell )\prime }F^{(\ell )}\right) ^{-1}\sum_{t\in \mathcal{T}%
_{\ell }}f_{t}e_{it}\right\vert  \\
& \leq \sup_{F^{(\ell )}\in \mathcal{F}^{(\ell )}}\sqrt{\frac{1}{n}%
\sum_{i\in \mathcal{N}}\left\Vert \lambda _{i}^{0}\right\Vert ^{2}}\sqrt{%
\sup_{F^{(\ell )}\in \mathcal{F}^{(\ell )}}\frac{1}{n}\sum_{i\in \mathcal{N}%
}\left\Vert \frac{1}{T_{\ell }}\sum_{t\in \mathcal{T}_{\ell
}}f_{t}e_{it}\right\Vert ^{2}}\sup_{F^{(\ell )}\in \mathcal{F}^{(\ell
)}}\left\Vert \left( \frac{F^{(\ell )\prime }F^{(\ell )}}{T_{\ell }}\right)
^{-1}\right\Vert \left\Vert \frac{F^{0,(\ell )\prime }F^{(\ell )}}{T_{\ell }}%
\right\Vert  \\
& \lesssim \sqrt{\sup_{F^{(\ell )}\in \mathcal{F}^{(\ell )}}\left\Vert \frac{%
1}{T_{\ell }}\sum_{t\in \mathcal{T}_{\ell }}f_{t}e_{it}\right\Vert ^{2}}%
=O_{p}[\left( (\log T)/N\right) ^{1/4}]=o_{p}(1),
\end{align*}%
where the third line is by Cauchy's inequality and the last line is by
arguments in \eqref{Lem:D1_3} and \eqref{Lem:D1_4}. Combining the above
results completes the proof. }

{\small (iv) We first observe that 
\begin{align}
& \mathbb{E}\left( \frac{1}{n}\sum_{i\in \mathcal{N}}\left\Vert \frac{1}{%
\sqrt{T_{\ell }}}\sum_{t\in \mathcal{T}_{\ell }}X_{it}e_{it}\right\Vert
^{2}\right) =\frac{1}{n}\sum_{i\in \mathcal{N}}\mathbb{E}\left( \left\Vert 
\frac{1}{\sqrt{T_{\ell }}}\sum_{t\in \mathcal{T}_{\ell
}}X_{it}e_{it}\right\Vert ^{2}\right)   \notag  \label{Lem:D1_5} \\
& \leq \frac{1}{nT_{\ell }}\sum_{i\in \mathcal{N}}\sum_{t\in \mathcal{T}%
_{\ell }}\sum_{s\in \mathcal{T}_{\ell }}\left\Vert \mathbb{E}\left(
X_{it}X_{is}^{\prime }e_{it}e_{is}\right) \right\Vert =\frac{1}{nT_{\ell }}%
\sum_{i\in \mathcal{N}}\sum_{t\in \mathcal{T}_{\ell }}\left\Vert \mathbb{E}%
\left( X_{it}X_{it}^{\prime }e_{it}^{2}\right) \right\Vert \leq M~a.s.,
\end{align}%
where the second equality is by Assumption \ref{ass:1*}(ii) and the law of
iterated expectations, and the last inequality is by Assumption \ref{ass:1*}%
(v). It follows that 
\begin{align*}
& \sup_{\left\{ \max_{i}\left\Vert \theta _{i}\right\Vert _{\max }\leq
M\right\} ,F^{(\ell )}\in \mathcal{F}^{(\ell )}}\left\Vert \frac{1}{n}%
\sum_{i\in \mathcal{N}}\left( \theta _{i}-\theta _{i}^{0,(\ell )}\right)
^{\prime }\frac{X_{i}^{(\ell )\prime }M_{F^{(\ell )}}e_{i}^{(\ell )}}{%
T_{\ell }}\right\Vert _{{}} \\
& \leq \sup_{\left\{ \max_{i}\left\Vert \theta _{i}\right\Vert _{\max }\leq
M\right\} }\left\vert \frac{1}{n}\sum_{i\in \mathcal{N}}\left( \theta
_{i}-\theta _{i}^{0,(\ell )}\right) ^{\prime }\frac{X_{i}^{\prime
}e_{i}^{(\ell )}}{T_{\ell }}\right\vert +\sup_{\left\{ \max_{i}\left\Vert
\theta _{i}\right\Vert _{\max }\leq M\right\} ,F^{(\ell )}\in \mathcal{F}%
^{(\ell )}}\left\vert \frac{1}{n}\sum_{i\in \mathcal{N}}\left( \theta
_{i}-\theta _{i}^{0,(\ell )}\right) ^{\prime }\frac{X_{i}^{(\ell )\prime
}P_{F^{(\ell )}}e_{i}^{(\ell )}}{T_{\ell }}\right\vert  \\
& \leq \sup_{\left\{ \max_{i}\left\Vert \theta _{i}\right\Vert _{\max }\leq
M\right\} }\frac{1}{\sqrt{T_{\ell }}}\left( \frac{1}{n}\sum_{i\in \mathcal{N}%
}\left\Vert \theta _{i}-\theta _{i}^{0,(\ell )}\right\Vert ^{2}\right)
^{1/2}\left( \frac{1}{n}\sum_{i\in \mathcal{N}}\left\Vert \frac{1}{\sqrt{%
T_{\ell }}}\sum_{t\in \mathcal{T}_{\ell }}X_{it}e_{it}\right\Vert
^{2}\right) ^{1/2} \\
& +\sup_{\left\{ \max_{i}\left\Vert \theta _{i}\right\Vert _{\max }\leq
M\right\} ,F^{(\ell )}\in \mathcal{F}^{(\ell )}}\frac{1}{n}\sum_{i\in 
\mathcal{N}}\left\Vert \theta _{i}-\theta _{i}^{0,(\ell )}\right\Vert
\left\Vert \frac{X_{i}^{(\ell )\prime }F^{(\ell )}}{T_{\ell }}\right\Vert
\left\Vert \frac{1}{T_{\ell }}\sum_{t\in \mathcal{T}_{\ell
}}f_{t}e_{it}\right\Vert \left\Vert \left( \frac{F^{(\ell )\prime }F^{(\ell
)}}{T_{\ell }}\right) ^{-1}\right\Vert  \\
& \lesssim O_{p}(T^{-1/2})+\sup_{F^{(\ell )}\in \mathcal{F}^{(\ell )}}\left( 
\frac{1}{n}\sum_{i\in \mathcal{N}}\left\Vert \frac{X_{i}^{(\ell )\prime
}F^{(\ell )}}{T_{\ell }}\right\Vert ^{2}\right) ^{1/2}\sup_{F^{(\ell )}\in 
\mathcal{F}^{(\ell )}}\left( \frac{1}{n}\sum_{i\in \mathcal{N}}\left\Vert 
\frac{1}{T_{\ell }}\sum_{t\in \mathcal{T}_{\ell }}f_{t}e_{it}\right\Vert
^{2}\right) ^{1/2} \\
& =O_{p}(T^{-1/2})+O_{p}[\left( (\log T)/N\right) ^{1/4}]=o_{p}(1),
\end{align*}%
where the second inequality is by Cauchy's inequality, the fifth line holds
by the fact that both $\theta _{i}$ and $\theta _{i}^{0,(\ell )}$ are
bounded and $\frac{1}{n}\sum_{i\in \mathcal{N}}||\frac{1}{\sqrt{T_{\ell }}}%
\sum_{t\in \mathcal{T}_{\ell }}X_{it}e_{it}||^{2}=O_{p}(1)$ by %
\eqref{Lem:D1_5}, and the last line is due to \eqref{Lem:D1_2} and the fact
that 
\begin{equation*}
\sup_{F^{(\ell )}\in \mathcal{F}^{(\ell )}}\left( \frac{1}{n}\sum_{i\in 
\mathcal{N}}\left\Vert \frac{X_{i}^{(\ell )\prime }F^{(\ell )}}{T_{\ell }}%
\right\Vert ^{2}\right) ^{1/2}\lesssim \max_{i\in \mathcal{N}}\frac{1}{%
T_{\ell }}\sum_{t\in \mathcal{T}_{\ell }}\left\Vert X_{it}\right\Vert
=O_{p}(1).
\end{equation*}%
by Assumption \ref{ass:8}(ii). }
\end{proof}

\begin{lemma}
{\small \label{Lem:theta_const} Under Assumptions \ref{ass:1*}, \ref{ass:2}
and \ref{ass:8}, we have $\hat{\theta}_{i}^{(\ell )}-\theta _{i}^{0,(\ell )}%
\overset{p}{\longrightarrow }0$ and $||P_{\hat{F}^{(\ell )}}-P_{F^{0,(\ell
)}}||\overset{p}{\longrightarrow }0$. }
\end{lemma}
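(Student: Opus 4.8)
The plan is to adapt the consistency argument for heterogeneous interactive-fixed-effect panels of \cite{bai2009panel} and \cite{moon2017dynamic}, feeding in the four auxiliary bounds already established in Lemma \ref{Lem:D1}. Write $\hat{\delta}_i = \theta_i^{0,(\ell)} - \hat{\theta}_i^{(\ell)}$, and note that Assumption \ref{ass:2} lets us confine the minimization in \eqref{PIFE_obj} to a compact parameter set, so that $\max_{i}\Vert\hat{\theta}_i^{(\ell)}\Vert \leq M$ w.p.a.1. Since $(\hat{F}^{(\ell)},\{\hat{\theta}_i^{(\ell)}\})$ minimizes the objective and a normalized version of $F^{0,(\ell)}$ is feasible, substituting $Y_i^{(\ell)} = X_i^{(\ell)}\theta_i^{0,(\ell)} + F^{0,(\ell)}\lambda_i^0 + e_i^{(\ell)}$ into the minimization inequality and expanding $M_{\hat F^{(\ell)}}(Y_i^{(\ell)}-X_i^{(\ell)}\hat\theta_i^{(\ell)}) = M_{\hat F^{(\ell)}}(X_i^{(\ell)}\hat\delta_i + F^{0,(\ell)}\lambda_i^0 + e_i^{(\ell)})$ gives
\begin{equation*}
A + B + 2D_1 + 2D_2 + 2D_3 + \frac{1}{nT_\ell}\sum_{i\in\mathcal N} e_i^{(\ell)\prime}\big(P_{F^{0,(\ell)}}-P_{\hat F^{(\ell)}}\big)e_i^{(\ell)} \leq 0,
\end{equation*}
where $A = \frac{1}{nT_\ell}\sum_i \hat\delta_i' X_i^{(\ell)\prime}M_{\hat F^{(\ell)}}X_i^{(\ell)}\hat\delta_i \geq 0$, $B = \frac{1}{nT_\ell}\sum_i \lambda_i^{0\prime}F^{0,(\ell)\prime}M_{\hat F^{(\ell)}}F^{0,(\ell)}\lambda_i^0 \geq 0$, and $D_1,D_2,D_3$ are the three cross terms (slope$\times$factor, slope$\times$noise, factor$\times$noise).

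The next step is to dispatch the noise and cross terms using Lemma \ref{Lem:D1}. By parts (i) and (ii) both $\frac{1}{nT_\ell}\sum_i e_i^{(\ell)\prime}P_{F^{0,(\ell)}}e_i^{(\ell)}$ and its supremum over $F$ are $o_p(1)$, so the last term on the left is $o_p(1)$; $D_3 = o_p(1)$ by part (iii); and $D_2 = o_p(1)$ by part (iv), invoked with the uniform bound on $\hat\theta_i^{(\ell)}$. Hence the display collapses to $A + B + 2D_1 \leq o_p(1)$ with $A,B\geq 0$.

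The delicate point is that the naive Cauchy--Schwarz bound $|D_1|\leq \sqrt{A}\sqrt{B}$ only yields $(\sqrt A - \sqrt B)^2 = o_p(1)$ and does not separate the two nonnegative forms. This is where the factor-space identification argument of \cite{bai2009panel} enters: writing $A+B+2D_1 = \frac{1}{nT_\ell}\sum_i (X_i^{(\ell)}\hat\delta_i + F^{0,(\ell)}\lambda_i^0)'M_{\hat F^{(\ell)}}(X_i^{(\ell)}\hat\delta_i + F^{0,(\ell)}\lambda_i^0)$, one uses that the slope signal $X_i^{(\ell)}\hat\delta_i$ spans only a fixed $p$-dimensional nuisance space while $\frac{1}{n}\sum_i\lambda_i^0\lambda_i^{0\prime}\overset{p}{\longrightarrow}\Sigma_{\Lambda}^{(\ell)}>0$ (Assumption \ref{ass:8}(i)) forces $F^{0,(\ell)}\lambda_i^0$ to exhaust the full $r_0$-dimensional factor space as $i$ varies, so that the combined form being $o_p(1)$ pins down $B=o_p(1)$ on its own. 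From $B = \frac{1}{T_\ell}tr\big(F^{0,(\ell)\prime}M_{\hat F^{(\ell)}}F^{0,(\ell)}\cdot\frac{1}{n}\sum_i\lambda_i^0\lambda_i^{0\prime}\big) = o_p(1)$ and $\Sigma_\Lambda^{(\ell)}>0$ we obtain $\frac{1}{T_\ell}\Vert M_{\hat F^{(\ell)}}F^{0,(\ell)}\Vert^2 = o_p(1)$, and combining this with $\frac{1}{T_\ell}F^{0,(\ell)\prime}F^{0,(\ell)}\overset{p}{\longrightarrow}\Sigma_F^{(\ell)}>0$ yields the projection consistency $\Vert P_{\hat F^{(\ell)}} - P_{F^{0,(\ell)}}\Vert = o_p(1)$.

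Given projection consistency, $A=o_p(1)$ follows from $A+B+2D_1\leq o_p(1)$, and the individual slope consistency comes from the closed form \eqref{PIFE_est_theta}, namely $\hat\theta_i^{(\ell)}-\theta_i^{0,(\ell)} = \big(\frac{1}{T_\ell}X_i^{(\ell)\prime}M_{\hat F^{(\ell)}}X_i^{(\ell)}\big)^{-1}\big[\frac{1}{T_\ell}X_i^{(\ell)\prime}(P_{F^{0,(\ell)}}-P_{\hat F^{(\ell)}})F^{0,(\ell)}\lambda_i^0 + \frac{1}{T_\ell}X_i^{(\ell)\prime}M_{\hat F^{(\ell)}}e_i^{(\ell)}\big]$, where I have used $M_{\hat F^{(\ell)}}F^{0,(\ell)} = (P_{F^{0,(\ell)}}-P_{\hat F^{(\ell)}})F^{0,(\ell)}$; the bracket is $o_p(1)$ by projection consistency and the noise bounds, provided $\frac{1}{T_\ell}X_i^{(\ell)\prime}M_{\hat F^{(\ell)}}X_i^{(\ell)}$ stays bounded away from singularity. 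I expect the main obstacle to be precisely the identification step that extracts $B=o_p(1)$ from $A+B+2D_1=o_p(1)$ in the presence of the heterogeneous-slope nuisance, which must be carried out along the lines of \cite{bai2009panel}; a secondary technical point is securing the invertibility of $\frac{1}{T_\ell}X_i^{(\ell)\prime}M_{\hat F^{(\ell)}}X_i^{(\ell)}$, which follows from projection consistency together with the non-collinearity that becomes available once $P_{\hat F^{(\ell)}}$ is close to $P_{F^{0,(\ell)}}$.
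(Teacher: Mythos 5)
Your skeleton is the paper's own argument: both proofs substitute the model into the concentrated objective, use Lemma \ref{Lem:D1}(i)--(iv) to dispose of the pure-noise term $\frac{1}{nT_\ell}\sum_i e_i^{(\ell)\prime}(P_{F^{(\ell)}}-P_{F^{0,(\ell)}})e_i^{(\ell)}$ and the two noise cross terms uniformly over the compact parameter set and $\mathcal{F}^{(\ell)}$, and reduce matters to the nonnegative ``signal'' criterion $\tilde{S}_{NT}=\frac{1}{nT_\ell}\sum_i\bigl(X_i^{(\ell)}\hat{\delta}_i+F^{0,(\ell)}\lambda_i^0\bigr)^{\prime}M_{\hat{F}^{(\ell)}}\bigl(X_i^{(\ell)}\hat{\delta}_i+F^{0,(\ell)}\lambda_i^0\bigr)$. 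Where you diverge is only in packaging: the paper asserts, citing \cite{song2013essays} and \cite{bai2009panel}, that $\tilde{S}_{NT}$ is uniquely minimized at $(\{\theta_i^{0,(\ell)}\},F^{0,(\ell)}H^{(\ell)})$, obtains slope consistency from that, and then gets $\|P_{\hat{F}^{(\ell)}}-P_{F^{0,(\ell)}}\|\convP 0$ via Proposition 1 of \cite{bai2009panel}; you reverse the order, extracting $B=o_p(1)$ first and then recovering pointwise slope consistency from the closed form \eqref{PIFE_est_theta} using $M_{\hat{F}^{(\ell)}}F^{0,(\ell)}=(P_{F^{0,(\ell)}}-P_{\hat{F}^{(\ell)}})F^{0,(\ell)}$. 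That reordering is legitimate, and your closed-form step is sound given projection consistency (modulo invertibility of $\frac{1}{T_\ell}X_i^{(\ell)\prime}M_{\hat{F}^{(\ell)}}X_i^{(\ell)}$, which strictly requires a non-collinearity condition of the type in Assumption \ref{ass:10}(ii) that neither you nor the lemma's stated hypotheses supply --- the paper inherits it silently through the citation).

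The genuine gap is your justification of the key extraction. You claim that because ``the slope signal $X_i^{(\ell)}\hat{\delta}_i$ spans only a fixed $p$-dimensional nuisance space'' while $\frac{1}{n}\sum_i\lambda_i^0\lambda_i^{0\prime}\convP\Sigma_\Lambda^{(\ell)}>0$, the combined form being $o_p(1)$ ``pins down $B=o_p(1)$ on its own.'' This is false as stated, and the premise is also wrong: each $X_i^{(\ell)}\hat{\delta}_i$ lies in $\mathrm{col}(X_i^{(\ell)})$, a $p$-dimensional subspace of $\mathbb{R}^{T_\ell}$ that varies with $i$ and can align with the factor space. Concretely, take $p=1$, $r_0=1$, and the low-rank regressor $X_{it}=w_i f_t^0$ (exactly the configuration Assumption \ref{ass:10}(iii.b) is designed to control). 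Then with $\hat{\delta}_i=-\lambda_i^0/w_i$ one has $X_i^{(\ell)}\hat{\delta}_i+F^{0,(\ell)}\lambda_i^0=0$ identically, so $A+B+2D_1=0$ for \emph{every} $\hat{F}^{(\ell)}$, while $\frac{1}{n}\sum_i\hat{\delta}_i^2$ stays bounded away from zero and $B$ need not vanish --- the minimizing factor estimate is then pinned down by noise, not by $F^{0,(\ell)}$. The full-rank loading condition of Assumption \ref{ass:8}(i) cannot rule this out; what does is an identification assumption separating regressors from the factor space, i.e., Assumption A of \cite{bai2009panel} and its heterogeneous-slope analogue in \cite{song2013essays} (compare also the low-rank/high-rank regressor condition the paper imposes later, in Assumption \ref{ass:10}(iii), for the distribution theory in \cite{moon2017dynamic}'s framework). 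The paper's proof imports this assumption wholesale through its citation at the unique-minimization step; your proof goes through if you do the same, but not via the dimensional-counting argument you offer, which would purport to \emph{derive} identification from $\Sigma_\Lambda^{(\ell)}>0$ alone.
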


\begin{proof}
{\small Let 
\begin{equation*}
S_{NT}\left( \left\{ \theta _{i}\right\} ,F^{(\ell )}\right) =\frac{1}{%
nT_{\ell }}\sum_{_{i\in \mathcal{N}}}(Y_{i}^{(\ell )}-X_{i}^{(\ell )}\theta
_{i})^{\prime }M_{F^{(\ell )}}(Y_{i}^{(\ell )}-X_{i}^{(\ell )}\theta _{i})-%
\frac{1}{nT_{\ell }}\sum_{_{i\in \mathcal{N}}}e_{i}^{(\ell )\prime
}M_{F^{0,(\ell )}}e_{i}^{(\ell )}.
\end{equation*}%
Recall from \eqref{PIFE_obj} that $(\{\hat{\theta}_{i}^{(\ell )}\},\hat{F}%
^{(\ell )})$ is the minimizer of $S_{NT}\left( \left\{ \theta _{i}\right\}
,F^{(\ell )}\right) $. By \eqref{PIFE_model} and Lemma \ref{Lem:D1}, we have 
\begin{align*}
S_{NT}(\left\{ \theta _{i}\right\} ,F^{(\ell )})& =\tilde{S}_{NT}(\left\{
\theta _{i}\right\} _{\forall i},F^{(\ell )})+\frac{2}{n}\sum_{i\in \mathcal{%
N}}(\theta _{i}-\theta _{i}^{0,(\ell )})^{\prime }\frac{X_{i}^{(\ell )\prime
}M_{F^{(\ell )}}e_{i}^{(\ell )}}{T_{\ell }} \\
& +\frac{2}{nT_{\ell }}\sum_{i\in \mathcal{N}}\lambda _{i}^{0\prime
}F^{0,(\ell )\prime }M_{F^{(\ell )}}e_{i}^{(\ell )}+\frac{1}{nT_{\ell }}%
\sum_{i\in \mathcal{N}}e_{i}^{(\ell )\prime }P_{F^{(\ell )}}e_{i}^{(\ell )}-%
\frac{1}{nT_{\ell }}\sum_{i\in \mathcal{N}}e_{i}^{(\ell )\prime
}P_{F^{0,(\ell )}}e_{i}^{(\ell )} \\
& =\tilde{S}_{NT}(\left\{ \theta _{i}\right\} ,F^{(\ell )})+o_{p}(1),
\end{align*}%
where 
\begin{align*}
\tilde{S}_{NT}(\left\{ \theta _{i}\right\} ,F^{(\ell )})& =\frac{1}{n}%
\sum_{i\in \mathcal{N}}(\theta _{i}-\theta _{i}^{0,(\ell )})^{\prime }\frac{%
X_{i}^{(\ell )\prime }M_{F^{(\ell )}}e_{i}^{(\ell )}}{T_{\ell }}(\theta
_{i}-\theta _{i}^{0,(\ell )})+\frac{1}{nT_{\ell }}\sum_{i\in \mathcal{N}%
}\lambda _{i}^{0\prime }F^{0,(\ell )\prime }M_{F^{(\ell )}}F^{0,(\ell
)}\lambda _{i}^{0} \\
& +\frac{2}{n}\sum_{i\in \mathcal{N}}(\theta _{i}-\theta _{i}^{0,(\ell
)})^{\prime }\frac{X_{i}^{(\ell )\prime }M_{F^{(\ell )}}e_{i}^{(\ell )}}{%
T_{\ell }}F^{0,(\ell )}\lambda _{i}^{0}.
\end{align*}%
}

{\small Following \cite{song2013essays} and \cite{bai2009panel}, we can show
that $\tilde{S}_{NT}\left( \left\{ \theta _{i}\right\} _{\forall i},F^{(\ell
)}\right) $ is uniquely minimized at $(\{\theta _{i}^{0,(\ell
)}\},F^{0,(\ell )}H^{(\ell )}),$ where $H^{(\ell )}$ is a rotation matrix.
Hence, we conclude that $\hat{\theta}_{i}^{(\ell )}-\theta _{i}^{0,(\ell )}%
\overset{p}{\longrightarrow }0$. Following the proof of Proposition 1 of 
\cite{bai2009panel}, we can show that $||P_{\hat{F}^{(\ell )}}-P_{F^{0,(\ell
)}}||\overset{p}{\longrightarrow }0$. }
\end{proof}

{\small Let $B_{N}$ denote the uniform convergence rate for $\hat{\theta}%
_{i}^{(\ell )}$. That is, $\max_{i\in \mathcal{N}}||\hat{\theta}_{i}^{(\ell
)}-\theta _{i}^{0,(\ell )}||=O_{p}(B_{N})$. }

\begin{lemma}
{\small \label{Lem:Fhat} Under Assumptions \ref{ass:1*}, \ref{ass:2} and \ref%
{ass:8}, we have $\frac{1}{\sqrt{T_{\ell }}}\left\Vert \hat{F}^{(\ell
)}-F^{0,(\ell )}H^{(\ell )}\right\Vert =O_{p}(B_{N}+\frac{1}{\sqrt{N\wedge T}%
}),$ where $H^{(\ell )}:=\left( \frac{\Lambda _{n}^{0\prime }\Lambda _{n}^{0}%
}{n}\right) \left( \frac{F^{0,(\ell )\prime }\hat{F}^{(\ell )}}{T_{\ell }}%
\right) V_{NT}^{(\ell )-1}$. }
\end{lemma}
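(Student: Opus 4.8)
The plan is to start from the eigenvector characterization \eqref{PIFE_est_F} of $\hat{F}^{(\ell)}$ and expand it through the model \eqref{PIFE_model}. Writing the within-regime residual as $Z_i^{(\ell)}:=Y_i^{(\ell)}-X_i^{(\ell)}\hat{\theta}_i^{(\ell)}=F^{0,(\ell)}\lambda_i^0+e_i^{(\ell)}-X_i^{(\ell)}d_i^{(\ell)}$, where $d_i^{(\ell)}:=\hat{\theta}_i^{(\ell)}-\theta_i^{0,(\ell)}$, the outer product $Z_i^{(\ell)}Z_i^{(\ell)\prime}$ decomposes into the leading loading term $F^{0,(\ell)}\lambda_i^0\lambda_i^{0\prime}F^{0,(\ell)\prime}$ plus cross terms and quadratic terms in $e_i^{(\ell)}$ and $d_i^{(\ell)}$. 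Averaging over $i\in\mathcal{N}$, post-multiplying by $\hat{F}^{(\ell)}/T_\ell$, and using \eqref{PIFE_est_F}, the leading term reproduces $F^{0,(\ell)}\left(\tfrac{\Lambda_n^{0\prime}\Lambda_n^0}{n}\right)\tfrac{F^{0,(\ell)\prime}\hat{F}^{(\ell)}}{T_\ell}$. Post-multiplying the resulting identity by $\hat{V}_{NT}^{(\ell)-1}$ and recalling the definition of $H^{(\ell)}$, I obtain the exact Bai-type expansion $\hat{F}^{(\ell)}-F^{0,(\ell)}H^{(\ell)}=\mathcal{R}^{(\ell)}\hat{V}_{NT}^{(\ell)-1}$, where $\mathcal{R}^{(\ell)}$ collects the six remainder terms (three involving $e_i^{(\ell)}$ linearly, two mixed $e_i^{(\ell)}$–$d_i^{(\ell)}$ terms, and the quadratic noise term), as in \cite{bai2009panel}.

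Second, I would show that $\hat{V}_{NT}^{(\ell)-1}=O_p(1)$, i.e., the smallest diagonal entry of $\hat{V}_{NT}^{(\ell)}$ is bounded away from zero w.p.a.1. This is the pervasiveness step: combining the projection consistency $\|P_{\hat{F}^{(\ell)}}-P_{F^{0,(\ell)}}\|\overset{p}{\longrightarrow}0$ from Lemma \ref{Lem:theta_const} with the full-rank limits of $\tfrac{1}{T_\ell}F^{0,(\ell)\prime}F^{0,(\ell)}$ and $\tfrac{1}{N_k^{(\ell)}}\Lambda_k^{0,(\ell)\prime}\Lambda_k^{0,(\ell)}$ in Assumption \ref{ass:8}(i), together with Assumption \ref{ass:2}, forces the eigenvalues of the bracketed matrix in \eqref{PIFE_est_F} to be bounded and bounded away from zero, so that $\hat{V}_{NT}^{(\ell)}$ and its inverse are $O_p(1)$. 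Importantly, this bound only needs the consistency results, not the rate, so it can be established before the remainder terms are quantified.

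Third, I would bound $\tfrac{1}{\sqrt{T_\ell}}\|\mathcal{R}^{(\ell)}\|$ term by term, using the normalization $\tfrac{1}{\sqrt{T_\ell}}\|\hat{F}^{(\ell)}\|=O_p(1)$ (since $\tfrac{\hat{F}^{(\ell)\prime}\hat{F}^{(\ell)}}{T_\ell}=I_{r_0}$) and the boundedness of $\{\lambda_i^0,f_t^0\}$ (Assumption \ref{ass:2} and Lemma \ref{Lem:bounded u&v}(i)). The terms carrying $d_i^{(\ell)}$ are controlled by $\max_{i\in\mathcal{N}}\|d_i^{(\ell)}\|=O_p(B_N)$ together with the uniform moment bound $\max_i\tfrac{1}{T_\ell}\sum_{t\in\mathcal{T}_\ell}\|X_{it}\|^2\le C$ from Assumption \ref{ass:8}(ii), yielding the $O_p(B_N)$ contribution. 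For the noise terms, the linear cross term $\tfrac{1}{n}\sum_{i\in\mathcal{N}}e_i^{(\ell)}\lambda_i^{0\prime}\tfrac{F^{0,(\ell)\prime}\hat{F}^{(\ell)}}{T_\ell}$ is $O_p(\sqrt{T_\ell/n})$ in Frobenius norm (by a conditional second-moment computation under the m.d.s.\ structure of Assumption \ref{ass:1*}, cf.\ Lemma \ref{Lem:D1}), contributing the $O_p(N^{-1/2})$ piece after dividing by $\sqrt{T_\ell}$; the quadratic term $\tfrac{1}{nT_\ell}\sum_{i\in\mathcal{N}}e_i^{(\ell)}e_i^{(\ell)\prime}\hat{F}^{(\ell)}$ is bounded via the operator-norm control on the error matrix over $\mathcal{N}$, established as in Lemma \ref{Lem:score op}, which supplies the remaining piece. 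Collecting the slowest contributions gives $\tfrac{1}{\sqrt{T_\ell}}\|\hat{F}^{(\ell)}-F^{0,(\ell)}H^{(\ell)}\|=O_p(B_N+(N\wedge T)^{-1/2})$.

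The hard part will be the joint handling of the quadratic noise term and the eigenvalue matrix. The $ee^\prime$ term is the only one whose naive entrywise bound is too crude, so it must be treated through the operator norm rather than elementwise, and the factor $\hat{V}_{NT}^{(\ell)-1}$ and the remainder $\mathcal{R}^{(\ell)}$ are jointly determined by the same eigenproblem. I would resolve this by first fixing the eigenvalue bound from projection consistency as in the second step, which decouples the two issues, and only then feeding $\hat{V}_{NT}^{(\ell)-1}=O_p(1)$ into the remainder estimates; the rest is routine collection of rates exactly along the lines of \cite{bai2009panel}, adapted to the subsample $\mathcal{N}$ and to the m.d.s.\ errors of Assumption \ref{ass:1*}.
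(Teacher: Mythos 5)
Your proposal is correct and follows essentially the same route as the paper: the same eigenproblem expansion of \eqref{PIFE_est_F} into the Bai-type identity $\hat{F}^{(\ell)}V_{NT}^{(\ell)}-F^{0,(\ell)}\bigl(\tfrac{\Lambda_n^{0\prime}\Lambda_n^0}{n}\bigr)\tfrac{F^{0,(\ell)\prime}\hat{F}^{(\ell)}}{T_\ell}=\sum_m J_m^{(\ell)}$, the same term-by-term rate bounds giving $O_p(B_N)$ from the $d_i^{(\ell)}$ terms, $O_p(N^{-1/2})$ from the loading--noise cross terms, and the $(N\wedge T)^{-1/2}$ piece from the quadratic noise term, followed by inversion of $V_{NT}^{(\ell)}$. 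Your two minor deviations are immaterial: the paper establishes $V_{NT}^{(\ell)-1}=O_p(1)$ after bounding the $J_m^{(\ell)}$ (by premultiplying the identity with $\hat{F}^{(\ell)\prime}$) rather than before via projection consistency, and it bounds $\tfrac{1}{nT_\ell}\sum_i e_i^{(\ell)}e_i^{(\ell)\prime}$ by a direct conditional second-moment computation rather than through the operator-norm bound of Lemma \ref{Lem:score op}, but both variants deliver the stated rate.
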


\begin{proof}
{\small Recall that $V_{NT}^{(\ell )}$ is the diagonal matrix that contains
the eigenvalues of $\frac{1}{nT_{\ell }}\sum_{i\in \mathcal{N}}(Y_{i}^{(\ell
)}-X_{i}^{(\ell )}\theta _{i})^{\prime }(Y_{i}^{(\ell )}-X_{i}^{(\ell
)}\theta _{i})$ along its diagonal line. By inserting \eqref{PIFE_model}
into \eqref{PIFE_est_F}, we obtain that 
\begin{equation}
\hat{F}^{(\ell )}V_{NT}^{(\ell )}-F^{0,(\ell )}\left( \frac{\Lambda
_{n}^{0\prime }\Lambda _{n}^{0}}{n}\right) \left( \frac{F^{0,(\ell )\prime }%
\hat{F}^{(\ell )}}{T_{\ell }}\right) =\sum_{m\in \lbrack 8]}J_{m}^{(\ell )},
\label{Lem:Fhat_1}
\end{equation}%
where%
\begin{align*}
& J_{1}^{(\ell )}=\frac{1}{nT_{\ell }}\sum_{i\in \mathcal{N}}X_{i}^{(\ell
)}(\theta _{i}^{0,(\ell )}-\hat{\theta}_{i}^{(\ell )})(\theta _{i}^{0,(\ell
)}-\hat{\theta}_{i}^{(\ell )})^{\prime }X_{i}^{(\ell )\prime }\hat{F}^{(\ell
)},J_{2}^{(\ell )}=\frac{1}{nT_{\ell }}\sum_{i\in \mathcal{N}}X_{i}^{(\ell
)}(\theta _{i}^{0,(\ell )}-\hat{\theta}_{i}^{(\ell )})\lambda _{i}^{0\prime
}F^{0,(\ell )\prime }\hat{F}^{(\ell )}, \\
& J_{3}^{(\ell )}=\frac{1}{nT_{\ell }}\sum_{i\in \mathcal{N}}X_{i}^{(\ell
)}(\theta _{i}^{0,(\ell )}-\hat{\theta}_{i}^{(\ell )})e_{i}^{(\ell )\prime }%
\hat{F}^{(\ell )},\text{ }J_{4}^{(\ell )}=\frac{1}{nT_{\ell }}\sum_{i\in 
\mathcal{N}}F^{0,(\ell )}\lambda _{i}^{0}(\theta _{i}^{0,(\ell )}-\hat{\theta%
}_{i}^{(\ell )})^{\prime }X_{i}^{(\ell )\prime }\hat{F}^{(\ell )}, \\
& J_{5}^{(\ell )}=\frac{1}{nT_{\ell }}\sum_{i\in \mathcal{N}}e_{i}^{(\ell
)}(\theta _{i}^{0,(\ell )}-\hat{\theta}_{i}^{(\ell )})^{\prime }X_{i}^{(\ell
)\prime }\hat{F}^{(\ell )},\quad J_{6}^{(\ell )}=\frac{1}{nT_{\ell }}%
\sum_{i\in \mathcal{N}}F^{0,(\ell )}\lambda _{i}^{0}e_{i}^{(\ell )\prime }%
\hat{F}^{(\ell )}, \\
& J_{7}^{(\ell )}=\frac{1}{nT_{\ell }}\sum_{i\in \mathcal{N}}e_{i}^{(\ell
)}\lambda _{i}^{0\prime }F^{0,(\ell )\prime }\hat{F}^{(\ell )},\quad \text{%
and}\quad J_{8}^{(\ell )}=\frac{1}{nT_{\ell }}\sum_{i\in \mathcal{N}%
}e_{i}^{(\ell )}e_{i}^{(\ell )\prime }\hat{F}^{(\ell )}.
\end{align*}%
We show the convergence rate for $J_{m}^{(\ell )}~\forall m\in \lbrack 8]$
below.. }

{\small For $J_{1}^{(\ell )}$, we notice that 
\begin{equation}
\frac{1}{\sqrt{T_{\ell }}}\left\Vert J_{1}^{(\ell )}\right\Vert \leq
\max_{i\in \mathcal{N}}\left\Vert \hat{\theta}_{i}^{(\ell )}-\theta
_{i}^{0,(\ell )}\right\Vert ^{2}\frac{1}{\sqrt{T_{\ell }}}\left\Vert \hat{F}%
^{(\ell )}\right\Vert \frac{1}{nT_{\ell }}\sum_{i\in \mathcal{N}}\left\Vert
X_{i}^{(\ell )}\right\Vert ^{2}=O_{p}(B_{N}^{2}),  \label{Lem:Fhat_J1}
\end{equation}%
where the equality holds by Assumption \ref{ass:8}(ii) and normalization of
the factor vector. Similarly, we have 
\begin{align*}
& \frac{1}{\sqrt{T_{\ell }}}\left\Vert J_{2}^{(\ell )}\right\Vert \leq
\max_{i\in \mathcal{N}}\left\Vert \hat{\theta}_{i}^{(\ell )}-\theta
_{i}^{0,(\ell )}\right\Vert \max_{i\in \mathcal{N}}\left\Vert \lambda
_{i}^{0}\right\Vert _{2}\frac{\left\Vert F^{0,(\ell )}\right\Vert }{\sqrt{%
T_{\ell }}}\frac{\left\Vert \hat{F}^{(\ell )}\right\Vert }{\sqrt{T_{\ell }}}%
\frac{1}{n\sqrt{T_{\ell }}}\sum_{i\in \mathcal{N}}\left\Vert X_{i}^{(\ell
)}\right\Vert =O_{p}(B_{N}), \\
& \frac{1}{\sqrt{T_{\ell }}}\left\Vert J_{3}^{(\ell )}\right\Vert \leq
\max_{i\in \mathcal{N}}\left\Vert \hat{\theta}_{i}^{(\ell )}-\theta
_{i}^{0,(\ell )}\right\Vert \frac{\left\Vert \hat{F}^{(\ell )}\right\Vert }{%
\sqrt{T_{\ell }}}\sqrt{\frac{1}{nT_{\ell }}\sum_{i\in \mathcal{N}}\left\Vert
X_{i}^{(\ell )}\right\Vert ^{2}}\sqrt{\frac{1}{nT_{\ell }}\sum_{i\in 
\mathcal{N}}\left\Vert e_{i}^{(\ell )}\right\Vert ^{2}}=O_{p}(B_{N}), \\
& \frac{1}{\sqrt{T_{\ell }}}\left\Vert J_{6}^{(\ell )}\right\Vert \leq \frac{%
1}{\sqrt{n}}\frac{\left\Vert F^{0,(\ell )}\right\Vert }{\sqrt{T_{\ell }}}%
\frac{\left\Vert \hat{F}^{(\ell )}\right\Vert }{\sqrt{T_{\ell }}}\sqrt{\frac{%
1}{T_{\ell }}\sum_{t\in \mathcal{T}_{\ell }}\left\Vert \frac{1}{\sqrt{n}}%
\sum_{i\in \mathcal{N}}\lambda _{i}^{0}e_{it}\right\Vert ^{2}}=O_{p}\left(
N^{-1/2}\right)  \\
& \frac{1}{\sqrt{T_{\ell }}}\left\Vert J_{8}^{(\ell )}\right\Vert \leq \frac{%
\left\Vert \hat{F}^{(\ell )}\right\Vert }{\sqrt{T_{\ell }}}\frac{1}{nT_{\ell
}}\left\Vert \sum_{i\in \mathcal{N}}e_{i}^{(\ell )}e_{i}^{(\ell )\prime
}\right\Vert =O_{p}((N\wedge T)^{-1/2}),
\end{align*}%
where the third line is by the fact that $\frac{1}{T_{\ell }}\sum_{t\in 
\mathcal{T}_{\ell }}\left\Vert \frac{1}{\sqrt{n}}\sum_{i\in \mathcal{N}%
}\lambda _{i}^{0}e_{it}\right\Vert =O_{p}(1)$ by similar arguments as in %
\eqref{Lem:D1_1} and the last line is due to the fact that 
\begin{align}
& \mathbb{E}\left( \left\Vert \frac{1}{nT_{\ell }}\sum_{i\in \mathcal{N}%
}e_{i}^{(\ell )}e_{i}^{(\ell )\prime }\right\Vert ^{2}\Bigg|\mathscr{D}%
\right) =\frac{1}{(nT_{\ell })^{2}}\sum_{i\in \mathcal{N}}\sum_{i^{\ast }\in 
\mathcal{N}}\sum_{t\in \mathcal{T}_{\ell }}\sum_{t^{\ast }\in \mathcal{T}%
_{\ell }}\mathbb{E}\left( e_{it}e_{it^{\ast }}e_{i^{\ast }t}e_{i^{\ast
}t^{\ast }}\big|\mathscr{D}\right)   \notag  \label{Lem:Fhat_2} \\
& =\frac{1}{(nT_{\ell })^{2}}\sum_{i\in \mathcal{N}}\sum_{t\in \mathcal{T}%
_{\ell }}\sum_{t^{\ast }\in \mathcal{T}_{\ell }}\mathbb{E}\left(
e_{it}^{2}e_{it^{\ast }}^{2}\big|\mathscr{D}\right) +\frac{1}{(nT_{\ell
})^{2}}\sum_{i\in \mathcal{N}}\sum_{i^{\ast }\neq i}\sum_{t\in \mathcal{T}%
_{\ell }}\sum_{t^{\ast }\in \mathcal{T}_{\ell }}\mathbb{E}\left(
e_{it}e_{it^{\ast }}\big|\mathscr{D}\right) \mathbb{E}\left( e_{i^{\ast
}t}e_{i^{\ast }t^{\ast }}\big|\mathscr{D}\right)   \notag \\
& =\frac{1}{(nT_{\ell })^{2}}\sum_{i\in \mathcal{N}}\sum_{t\in \mathcal{T}%
_{\ell }}\mathbb{E}\left( e_{it}^{4}|\mathscr{D}\right) +\frac{1}{(nT_{\ell
})^{2}}\sum_{i\in \mathcal{N}}\sum_{t\in \mathcal{T}_{\ell }}\sum_{t^{\ast
}\in \mathcal{T}_{\ell },t^{\ast }\neq t}\mathbb{E}\left( e_{it}^{2}|%
\mathscr{D}\right) \mathbb{E}\left( e_{it^{\ast }}^{2}|\mathscr{D}\right)  
\notag \\
& +\frac{1}{(nT_{\ell })^{2}}\sum_{i\in \mathcal{N}}\sum_{i^{\ast }\neq
i}\sum_{t\in \mathcal{T}_{\ell }}\mathbb{E}\left( e_{it}^{2}|\mathscr{D}%
\right) \mathbb{E}\left( e_{i^{\ast }t}^{2}|\mathscr{D}\right)   \notag \\
& =O\left( (N)^{-1}+(T)^{-1}\right) ~a.s.
\end{align}%
by Assumption \ref{ass:1*}(i), (ii), (iii), and (v). Besides, we have }$%
{\small T}_{{\small \ell }}^{{\small -1/2}}{\small ||J}_{{\small 4}}^{(%
{\small \ell })}{\small ||}${\small $=O_{p}(B_{N}),$ }${\small T}_{{\small %
\ell }}^{{\small -1/2}}{\small ||J}_{{\small 5}}^{({\small \ell })}{\small %
||=O}_{p}{\small (B}_{N}{\small )}${\small , and }${\small T}_{{\small \ell }%
}^{{\small -1/2}}{\small ||J}_{{\small 4}}^{({\small \ell })}{\small ||}$%
{\small $=O_{p}(N^{-1/2})$ by similar analyses as used for $J_{2}^{(\ell )}$%
, $J_{3}^{(\ell )}$ and $J_{6}^{(\ell )}$, respectively. }

{\small Combining the above arguments, premultiplying both sides of %
\eqref{Lem:Fhat_1} by $\hat{F}^{(\ell )\prime }$ and using the fact that $%
\hat{F}^{(\ell )\prime }\hat{F}^{(\ell )}=T_{\ell }I_{r}$, we have 
\begin{equation}
\frac{1}{T_{\ell }}\left\Vert \hat{F}^{(\ell )}V_{NT}^{(\ell )}-F^{0,(\ell )}%
\frac{\Lambda _{n}^{0\prime }\Lambda _{n}^{0}}{n}\frac{F^{0,(\ell )\prime }%
\hat{F}^{(\ell )}}{T_{\ell }}\right\Vert =O_{p}(B_{N})+O_{p}((N\wedge
T)^{-1/2}),  \label{Lem:Fhat_3}
\end{equation}%
and 
\begin{equation}
V_{NT}^{(\ell )}=\frac{F^{0,(\ell )\prime }\hat{F}^{(\ell )}}{T_{\ell }}%
\frac{\Lambda _{n}^{0\prime }\Lambda _{n}^{0}}{n}\frac{F^{0,(\ell )\prime }%
\hat{F}^{(\ell )}}{T_{\ell }}+\frac{\hat{F}^{(\ell )\prime }}{\sqrt{T_{\ell }%
}}\frac{\sum_{m\in \lbrack 8]}J_{m}^{(\ell )}}{\sqrt{T_{\ell }}}=\frac{%
F^{0,(\ell )\prime }\hat{F}^{(\ell )}}{T_{\ell }}\frac{\Lambda _{n}^{0\prime
}\Lambda _{n}^{0}}{n}\frac{F^{0,(\ell )\prime }\hat{F}^{(\ell )}}{T_{\ell }}%
+o_{p}(1).  \notag
\end{equation}%
Then $V_{NT}^{(\ell )}$ is invertible and $||V_{NT}^{(\ell )}||=O_{p}(1)$.
By the definition of $H^{(\ell )}$ and \eqref{Lem:Fhat_3}, we have $\frac{1}{%
\sqrt{T_{\ell }}}||\hat{F}^{(\ell )}-F^{0,(\ell )}H^{(\ell
)}||=O_{p}(B_{N}+(N\wedge T)^{-1/2})$. }
\end{proof}

\begin{lemma}
{\small \label{Lem:thetahat_pre} Under Assumptions \ref{ass:1*}, \ref{ass:2}
and \ref{ass:8}, we have }

\begin{itemize}
\item[(i)] {\small $\frac{X_{i}^{(\ell )\prime }M_{\hat{F}^{(\ell
)}}X_{i^{\ast }}^{(\ell )}}{T_{\ell }}=\frac{X_{i}^{(\ell )\prime
}M_{F^{0,(\ell )}}X_{i^{\ast }}^{(\ell )}}{T_{\ell }}+O_{p}\left( B_{N}+%
\frac{1}{\sqrt{N\wedge T}}\right) $ uniformly in $i,i^{\ast }\in \mathcal{N}$%
, }

\item[(ii)] {\small $\frac{X_{i}^{(\ell )\prime }M_{\hat{F}^{(\ell
)}}e_{i}^{(\ell )}}{T_{\ell }}=\frac{X_{i}^{(\ell )\prime }M_{F^{0,(\ell
)}}e_{i}^{(\ell )}}{T_{\ell }}+O_{p}\left( B_{N}^{2}+B_{N}\sqrt{\frac{\log N%
}{T}}+\frac{1}{N\wedge T}+\sqrt{\frac{\log N}{(N\wedge T)T}}\right) $
uniformly in $i\in \mathcal{N}$, }

\item[(iii)] {\small $\frac{1}{nT_{\ell }^{2}}\sum_{i\in \mathcal{N}%
}\left\Vert e_{i}^{(\ell )\prime }\hat{F}^{(\ell )}\right\Vert
^{2}=O_{p}\left( B_{N}^{2}+\frac{1}{N\wedge T}\right) $, }

\item[(iv)] {\small $\max_{i\in \mathcal{N}}\left\Vert \frac{1}{T_{\ell }}%
X_{i}^{(\ell )\prime }M_{\hat{F}^{(\ell )}}F^{0,(\ell )}\right\Vert
=O_{p}\left( B_{N}+\frac{1}{\sqrt{N\wedge T}}\right) $, }

\item[(v)] {\small $\frac{1}{nT_{\ell }^{2}}\sum_{i^{\ast }\in \mathcal{N}%
}\left\Vert X_{i}^{(\ell )\prime }e_{i^{\ast }}^{(\ell )}\right\Vert
^{2}=O_{p}\left( \frac{\log N}{T}\right) $ uniformly in $i\in \mathcal{N}$, }

\item[(vi)] {\small $\left\Vert \frac{1}{nT_{\ell }}\sum_{i\in \mathcal{N}%
}\lambda _{i}^{0}e_{i}^{(\ell )\prime }\hat{F}^{(\ell )}\right\Vert
=O_{p}\left( \frac{B_{N}}{\sqrt{N}}+\frac{1}{N}+\frac{1}{\sqrt{NT}}\right) $%
, }

\item[(vii)] {\small $\max_{i\in \mathcal{N}}\left\Vert \frac{1}{nT_{\ell }}%
\sum_{i^{\ast }\in \mathcal{N}}X_{i}^{(\ell )\prime }M_{\hat{F}^{(\ell
)}}e_{i^{\ast }}^{(\ell )}\lambda _{i^{\ast }}^{0\prime }\right\Vert
=O_{p}\left( \frac{B_{N}}{\sqrt{N}}+\frac{1}{N}+\sqrt{\frac{\log N}{NT}}%
\right) $, }

\item[(viii)] {\small $\max_{i\in \mathcal{N}}\left\Vert \frac{1}{nT_{\ell
}^{2}}\sum_{i^{\ast }\in \mathcal{N}}X_{i}^{(\ell )\prime }M_{\hat{F}^{(\ell
)}}e_{i^{\ast }}^{(\ell )}e_{i^{\ast }}^{(\ell )\prime }\hat{F}^{(\ell
)}\right\Vert =O_{p}\left( B_{N}^{2}+B_{N}\sqrt{\frac{\log N}{T}}+\frac{%
\sqrt{\log N}}{N\wedge T}\right) $. }
\end{itemize}
\end{lemma}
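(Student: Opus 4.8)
The plan is to reduce each of (i)--(viii) to replacing $M_{\hat F^{(\ell)}}$ (or $\hat F^{(\ell)}$) by its population counterpart $M_{F^{0,(\ell)}}$ (or $F^{0,(\ell)}H^{(\ell)}$), controlling the induced error with the common rate $d_F:=T_\ell^{-1/2}\|\hat F^{(\ell)}-F^{0,(\ell)}H^{(\ell)}\|=O_p(B_N+(N\wedge T)^{-1/2})$ from Lemma \ref{Lem:Fhat}. Throughout I would use the identity $M_{\hat F^{(\ell)}}-M_{F^{0,(\ell)}}=P_{F^{0,(\ell)}}-P_{\hat F^{(\ell)}}$ together with $\|P_{F^{0,(\ell)}}-P_{\hat F^{(\ell)}}\|_{op}=O_p(d_F)$, which holds because $\hat F^{(\ell)\prime}M_{F^{0,(\ell)}}=(\hat F^{(\ell)}-F^{0,(\ell)}H^{(\ell)})^\prime M_{F^{0,(\ell)}}$ and $\|\hat F^{(\ell)}\|_{op}=\sqrt{T_\ell}$. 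The remaining standing ingredients are the uniform moment bounds $\max_{i\in\mathcal N}T_\ell^{-1/2}\|X_i^{(\ell)}\|=O_p(1)$ and $\max_{i\in\mathcal N}T_\ell^{-1/2}\|e_i^{(\ell)}\|=O_p(1)$ from Assumptions \ref{ass:8}(ii) and \ref{ass:1*}(iv), the cross-sectional averaging bound $T_\ell^{-1}\sum_t\|n^{-1}\sum_i\lambda_i^0 e_{it}\|^2=O_p(N^{-1})$ of the type used in \eqref{Lem:D1_1}, and the conditional Freedman inequality (Lemma \ref{Lem:Bern_mds}(ii)) applied to the martingale-difference sums $T_\ell^{-1}\sum_t X_{it}e_{i^*t}$, which are mean zero given $\mathscr G_{t-1}$ under Assumption \ref{ass:1*}(ii) since $X_{it}$ is $\mathscr G_{t-1}$-measurable. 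As in the proof of Theorem \ref{Thm1}, the unboundedness of $X_{it},e_{it}$ would be absorbed by a truncation argument on a high-probability event before invoking the exponential inequalities.

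Statements (i), (iii), (iv) and (v) should follow from crude Cauchy--Schwarz and operator-norm arguments. For (i) and (iv) I would write the left-hand side as $T_\ell^{-1}X_i^{(\ell)\prime}(P_{F^{0,(\ell)}}-P_{\hat F^{(\ell)}})(\cdot)$ and bound it by $T_\ell^{-1/2}\|X_i^{(\ell)}\|\cdot T_\ell^{-1/2}\|(\cdot)\|\cdot\|P_{F^{0,(\ell)}}-P_{\hat F^{(\ell)}}\|_{op}=O_p(d_F)$, uniformly. For (iii) I would split $\hat F^{(\ell)}=F^{0,(\ell)}H^{(\ell)}+\Delta_F$, so that $T_\ell^{-2}\|e_i^{(\ell)\prime}F^{0,(\ell)}H^{(\ell)}\|^2$ averages to $O_p(T^{-1})$ via \eqref{Lem:D1_1} while $T_\ell^{-2}\|e_i^{(\ell)\prime}\Delta_F\|^2\le(T_\ell^{-1}\|e_i^{(\ell)}\|^2)\,d_F^2=O_p(B_N^2+(N\wedge T)^{-1})$. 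For (v), applying Freedman to each $T_\ell^{-1}\sum_t X_{it}e_{i^*t}$ and a union bound over the $\le N^2$ pairs $(i,i^*)$ yields $\max_{i,i^*}\|T_\ell^{-1}\sum_t X_{it}e_{i^*t}\|^2=O_p(\log N/T)$, which dominates the average over $i^*$.

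The delicate statements are (ii), (vi), (vii) and (viii), whose stated rates are strictly sharper than the crude $O_p(d_F)$ and therefore require the refined expansion $\hat F^{(\ell)}-F^{0,(\ell)}H^{(\ell)}=\big(\sum_{m\in[8]}J_m^{(\ell)}\big)V_{NT}^{(\ell)-1}$ from \eqref{Lem:Fhat_1}, bounded term by term as in the proof of Lemma \ref{Lem:Fhat} and using $\|V_{NT}^{(\ell)-1}\|=O_p(1)$. The point is that contracting $P_{F^{0,(\ell)}}-P_{\hat F^{(\ell)}}$ (equivalently $\Delta_F$) against an error vector and then averaging over the cross section produces extra $N^{-1/2}$ and $\log N$ factors invisible to the operator-norm bound. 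For (ii) I would decompose $T_\ell^{-1}X_i^{(\ell)\prime}(P_{F^{0,(\ell)}}-P_{\hat F^{(\ell)}})e_i^{(\ell)}$ through the leading product $(T_\ell^{-1}X_i^{(\ell)\prime}\hat F^{(\ell)})(T_\ell^{-1}\hat F^{(\ell)\prime}e_i^{(\ell)})$, tracking the separate rates of $T_\ell^{-1}X_i^{(\ell)\prime}(\hat F^{(\ell)}-F^{0,(\ell)}H^{(\ell)})$ and $T_\ell^{-1}\hat F^{(\ell)\prime}e_i^{(\ell)}$, the latter inheriting the sharp rate established in (iii) and the martingale structure; the cross terms are what produce $B_N^2$, $B_N(\log N/T)^{1/2}$, $(N\wedge T)^{-1}$ and $((N\wedge T)T)^{-1/2}(\log N)^{1/2}$ precisely. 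Statements (vi) and (vii) follow the same template, the averaging against the loadings $\lambda_{i^*}^0$ (resp. $\lambda_i^0$) supplying the $N^{-1/2}$ gain, with Freedman applied uniformly in $i$ to the double sums for (vii).

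I expect (viii) to be the main obstacle. It combines all three error sources at once --- the replacement of $M_{\hat F^{(\ell)}}$, the replacement of $\hat F^{(\ell)}$, and the quadratic error factor $e_{i^*}^{(\ell)}e_{i^*}^{(\ell)\prime}$ --- and the target rate $O_p(B_N^2+B_N(\log N/T)^{1/2}+(\log N)^{1/2}(N\wedge T)^{-1})$ leaves no slack whatsoever. The plan there is to insert both $M_{\hat F^{(\ell)}}=M_{F^{0,(\ell)}}+(P_{F^{0,(\ell)}}-P_{\hat F^{(\ell)}})$ and $\hat F^{(\ell)}=F^{0,(\ell)}H^{(\ell)}+\Delta_F$, to split $n^{-1}\sum_{i^*}e_{i^*}^{(\ell)}e_{i^*}^{(\ell)\prime}$ into its conditional mean given $\mathscr D$ plus a fluctuation controlled as in \eqref{Lem:Fhat_2} and Lemma \ref{Lem:distri_homo}(vii)--(viii), and then to bound the resulting many terms by combining the expansion \eqref{Lem:Fhat_1} with the estimates assembled in (ii)--(vii), all kept uniform in $i$. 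The genuine labor is the bookkeeping: verifying that every cross term lands inside the stated rate rather than the coarser $O_p(d_F)$.
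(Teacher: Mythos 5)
Your proposal is correct and shares the paper's backbone --- Lemma \ref{Lem:Fhat} for $d_F:=T_\ell^{-1/2}\Vert \hat F^{(\ell)}-F^{0,(\ell)}H^{(\ell)}\Vert=O_p(B_N+(N\wedge T)^{-1/2})$, the expansion of $M_{F^{0,(\ell)}}-M_{\hat F^{(\ell)}}$, and truncated Bernstein/Freedman bounds for $\max_{i}\Vert T_\ell^{-1}\sum_t f_t^0e_{it}\Vert$ and $\max_{i,i^{\ast}}\Vert T_\ell^{-1}\sum_t X_{it}e_{i^{\ast}t}\Vert$, i.e., the paper's \eqref{Lem:emp_Bern} and \eqref{Lem:emp_freedman} --- but you allocate the heavy machinery differently. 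Parts (i) and (iii)--(v) match the paper's arguments essentially verbatim (for (iv) the paper uses $M_{\hat F^{(\ell)}}\hat F^{(\ell)}=0$ where you use $M_{F^{0,(\ell)}}F^{0,(\ell)}=0$; immaterial). Where you diverge is (vi)--(viii): the paper needs no $J_m^{(\ell)}$ expansion there at all. For (vi) it pairs the crude $d_F$ bound with the conditional second-moment bound $\Vert (nT_\ell)^{-1/2}\sum_i\lambda_i^0e_i^{(\ell)\prime}\Vert=O_p(1)$, so the $n^{-1/2}$ gain comes purely from cross-sectional averaging; (vii) is then (vi) plus the uniform Freedman bound \eqref{Lem:emp_freedman2} after splitting $M_{\hat F^{(\ell)}}=I-T_\ell^{-1}\hat F^{(\ell)}\hat F^{(\ell)\prime}$; and (viii) --- which you flag as the main obstacle with ``no slack whatsoever'' --- is in fact a two-line argument: the same split of $M_{\hat F^{(\ell)}}$ followed by Cauchy--Schwarz over $i^{\ast}$, pairing (v) with (iii), and since $\sqrt{(\log N)/T}\,(N\wedge T)^{-1/2}\le\sqrt{\log N}\,(N\wedge T)^{-1}$ the stated rate absorbs everything; no conditional-mean splitting of $n^{-1}\sum_{i^{\ast}}e_{i^{\ast}}^{(\ell)}e_{i^{\ast}}^{(\ell)\prime}$ is required, so your plan there is workable but over-engineered. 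Conversely, on (ii) you are, if anything, more careful than the paper: its decomposition \eqref{Lem4:MF_1} pairs $d_F$-factors with $\max_i\Vert(F^{0,(\ell)}H^{(\ell)})^\prime e_i^{(\ell)}\Vert/T_\ell=O_p(\sqrt{(\log N)/T})$, but for the one term in which $\delta_2^{(\ell)\prime}$ contracts $e_i^{(\ell)}$ directly (the $(F^{0,(\ell)}H^{(\ell)}/\sqrt{T_\ell})(F^{0,(\ell)\prime}F^{0,(\ell)}/T_\ell)^{-1}\delta_2^{(\ell)\prime}$ piece) the crude operator-norm bound only yields $O_p(d_F)$, which exceeds the stated rate; a refined, uniform-in-$i$ control of $(\hat F^{(\ell)}-F^{0,(\ell)}H^{(\ell)})^\prime e_i^{(\ell)}$ via exactly your term-by-term expansion \eqref{Lem:Fhat_1} is what closes that term, so your extra labor buys a tighter justification of precisely the step the paper treats most briskly.
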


\begin{proof}
{\small (i) Let }${\small \delta }_{{\small 1}}^{(\ell )}{\small :=}${\small 
$(T_{\ell }^{-1}\hat{F}^{(\ell )\prime }\hat{F}^{(\ell )})^{-1}-(T_{\ell
}^{-1}F^{0,(\ell )\prime }F^{0,(\ell )})^{-1}$ and }${\small \delta }_{%
{\small 2}}^{(\ell )}{\small :=}$$T_{\ell }^{-1/2}(\hat{F}^{(\ell )\prime
}-F^{0,(\ell )\prime }H^{(\ell )})${\small $.$} {\small Noting that $M_{\hat{%
F}^{(\ell )}}=I_{T_{\ell }}-\hat{F}^{(\ell )}(\hat{F}^{(\ell )\prime }\hat{F}%
^{(\ell )})^{-1}\hat{F}^{(\ell )\prime }$ and $M_{F^{0,(\ell )}}=I_{T_{\ell
}}-F^{0,(\ell )}(F^{0,(\ell )\prime }F^{0,(\ell )})^{-1}F^{0,(\ell )\prime }$%
, we can show that 
\begin{align}
M_{F^{0,(\ell )}}-M_{\hat{F}^{(\ell )}}& =\frac{\hat{F}^{(\ell )}}{\sqrt{%
T_{\ell }}}\left( \frac{\hat{F}^{(\ell )\prime }\hat{F}^{(\ell )}}{T_{\ell }}%
\right) ^{-1}\frac{\hat{F}^{(\ell )\prime }}{\sqrt{T_{\ell }}}-\frac{%
F^{0,(\ell )}}{\sqrt{T_{\ell }}}\left( \frac{F^{0,(\ell )\prime }F^{0,(\ell
)}}{T_{\ell }}\right) ^{-1}\frac{F^{0,(\ell )\prime }}{\sqrt{T_{\ell }}} 
\notag \\
& =\delta _{2}^{(\ell )}\delta _{1}^{(\ell )}\delta _{2}^{(\ell )\prime
}+\delta _{2}^{(\ell )}\delta _{1}^{(\ell )}\left( \frac{F^{0,(\ell
)}H^{(\ell )}}{\sqrt{T_{\ell }}}\right) ^{\prime }+\delta _{2}^{(\ell
)}\left( \frac{F^{0,(\ell )\prime }F^{0,(\ell )}}{T_{\ell }}\right)
^{-1}\delta _{2}^{(\ell )\prime }+\frac{F^{0,(\ell )}H^{(\ell )}}{\sqrt{%
T_{\ell }}}\delta _{1}^{(\ell )}\delta _{2}^{(\ell )\prime }  \notag \\
& +\delta _{2}^{(\ell )}\left( \frac{F^{0,(\ell )\prime }F^{0,(\ell )}}{%
T_{\ell }}\right) ^{-1}\left( \frac{F^{0,(\ell )}H^{(\ell )}}{\sqrt{T_{\ell }%
}}\right) ^{\prime }+\frac{F^{0,(\ell )}H^{(\ell )}}{\sqrt{T_{\ell }}}\delta
_{1}^{(\ell )}\left( \frac{F^{0,(\ell )}H^{(\ell )}}{\sqrt{T_{\ell }}}%
\right) ^{\prime }  \notag \\
& +\frac{F^{0,(\ell )}H^{(\ell )}}{\sqrt{T_{\ell }}}\left( \frac{F^{0,(\ell
)\prime }F^{0,(\ell )}}{T_{\ell }}\right) ^{-1}\delta _{2}^{(\ell )\prime }.
\label{Lem4:MF_1}
\end{align}%
By Lemma \ref{Lem:Fhat}, Assumption \ref{ass:8}, the normalization for the
factor space, and the fact that 
\begin{equation}
\left\Vert \delta _{1}^{(\ell )}\right\Vert \leq \frac{\left\Vert \hat{F}%
^{(\ell )}-F^{0,(\ell )}H^{(\ell )}\right\Vert ^{2}}{T_{\ell }}+2\frac{%
\left\Vert F^{0,(\ell )}H^{(\ell )}\right\Vert \left\Vert \hat{F}^{(\ell
)}-F^{0,(\ell )}H^{(\ell )}\right\Vert }{T_{\ell }}=O_{p}(B_{N}+(N\wedge
T)^{-1/2}),  \label{Lem4:MF_2}
\end{equation}%
we can readily show that 
\begin{equation}
\left\Vert M_{F^{0,(\ell )}}-M_{\hat{F}^{(\ell )}}\right\Vert
=O_{p}(B_{N}+(N\wedge T)^{-1/2}).  \label{Lem4:MF_3}
\end{equation}%
Then $\max_{i,i^{\ast }\in \mathcal{N}}\left\Vert \frac{X_{i}^{(\ell )\prime
}M_{\hat{F}^{(\ell )}}X_{i^{\ast }}^{(\ell )}}{T_{\ell }}-\frac{X_{i}^{(\ell
)\prime }M_{F^{0,(\ell )}}X_{i^{\ast }}^{(\ell )}}{T_{\ell }}\right\Vert
\leq \max_{i\in \mathcal{N}}\frac{1}{T_{\ell }}\left\Vert X_{i}^{(\ell
)}\right\Vert ^{2}\left\Vert M_{F^{0,(\ell )}}-M_{\hat{F}^{(\ell
)}}\right\Vert =O_{p}(B_{N}+(N\wedge T)^{-1/2}),$ where recall that $%
\mathcal{N}:=\{n_{1},\cdots ,n_{n}\}$. }

{\small (ii) By \eqref{Lem4:MF_1}, we notice that 
\begin{align*}
& \max_{i\in \mathcal{N}}\left\Vert \frac{X_{i}^{(\ell )\prime }M_{\hat{F}%
^{(\ell )}}e_{i}^{(\ell )}}{\sqrt{T_{\ell }}}-\frac{X_{i}^{(\ell )\prime
}M_{F^{0,(\ell )}}e_{i}^{(\ell )}}{\sqrt{T_{\ell }}}\right\Vert \\
& =\sqrt{T_{\ell }}\max_{i\in \mathcal{N}}\frac{\left\Vert X_{i}^{(\ell
)}\right\Vert }{\sqrt{T_{\ell }}}\frac{\left\Vert e_{i}^{(\ell )}\right\Vert 
}{\sqrt{T_{\ell }}}O_{p}\left[ (B_{N}+(N\wedge
T)^{-1/2})^{3}+(B_{N}+(N\wedge T)^{-1/2})^{2}\right] \\
& +\sqrt{T_{\ell }}\max_{i\in \mathcal{N}}\frac{\left\Vert X_{i}^{(\ell
)}\right\Vert }{\sqrt{T_{\ell }}}\frac{\left\Vert \hat{F}^{(\ell
)}-F^{0,(\ell )}H^{(\ell )}\right\Vert }{\sqrt{T_{\ell }}}\left\Vert \left( 
\frac{F^{0,(\ell )\prime }F^{0,(\ell )}}{T_{\ell }}\right) ^{-1}\right\Vert 
\frac{\left\Vert \left( F^{0,(\ell )}H^{(\ell )}\right) ^{\prime
}e_{i}^{(\ell )}\right\Vert }{T_{\ell }} \\
& +\sqrt{T_{\ell }}\max_{i\in \mathcal{N}}\frac{\left\Vert X_{i}^{(\ell
)}\right\Vert }{\sqrt{T_{\ell }}}\frac{\left\Vert F^{0,(\ell )}H^{(\ell
)}\right\Vert }{\sqrt{T_{\ell }}}\left\Vert \delta _{1}^{(\ell )}\right\Vert 
\frac{\left\Vert \left( F^{0,(\ell )}H^{(\ell )}\right) ^{\prime
}e_{i}^{(\ell )}\right\Vert }{T_{\ell }} \\
& =\sqrt{T_{\ell }}\left[ O_{p}(B_{N}^{2}+(N\wedge
T)^{-1})+O_{p}(B_{N}+(N\wedge T)^{-1/2})O_{p}(\sqrt{(\log N)/T})\right] ,
\end{align*}%
where the last line holds by combining Assumption \ref{ass:8}(ii), %
\eqref{Lem4:MF_2}, Lemma \ref{Lem:Fhat} and the fact that 
\begin{equation*}
\max_{i\in \mathcal{N}}\frac{\left\Vert \left( F^{0,(\ell )}H^{(\ell
)}\right) ^{\prime }e_{i}^{(\ell )}\right\Vert }{T_{\ell }}\lesssim
\max_{i\in \mathcal{N}}\left\Vert \frac{1}{T_{\ell }}\sum_{t\in \mathcal{T}%
_{\ell }}f_{t}^{0}e_{it}\right\Vert =O_{p}(\sqrt{(\log N)/T}).
\end{equation*}%
We will show the last equality by using the Bernstein's inequality in Lemma %
\ref{Lem:Bern_mds}(i). }

{\small Note that 
\begin{align}
& \max_{i\in \mathcal{N},t\in \mathcal{T}_{\ell }}\left\Vert Var\left(
e_{it}f_{t}^{0}\big|\mathscr{D}\right) \right\Vert =\max_{i\in \mathcal{N}%
,t\in \mathcal{T}_{\ell }}\left\Vert \mathbb{E}\left( e_{it}^{2}\big|%
\mathscr{D}\right) f_{t}^{0}f_{t}^{0\prime }\right\Vert =O_{p}(1)\quad \text{%
and}  \notag \\
& \max_{i\in \mathcal{N},t\in \mathcal{T}_{\ell }}\left\Vert
e_{it}f_{t}^{0}\right\Vert =O_{p}\left( (NT)^{1/q}\right) ,  \label{Lem:var}
\end{align}%
where the second line is by Assumption \ref{ass:1*}(v) and the last line is
by Assumption \ref{ass:1*}(v). Define events $\mathscr{A}_{4,N}(M)=\left\{
\max_{i\in \mathcal{N},t\in \mathcal{T}_{\ell }}\left\Vert
e_{it}f_{t}^{0}\right\Vert \leq M(NT)^{1/q}\right\} $ and $\mathscr{A}%
_{4,N,i}(M)=\left\{ \max_{t\in \mathcal{T}_{\ell }}\left\Vert
e_{it}f_{t}^{0}\right\Vert \leq M(NT)^{1/q}\right\} $ for a large enough
constant $M$. Then for some large positive constants $c_{6}$ and $c_{7}$, we
have $\mathbb{P}\left( \mathscr{A}_{4,N}^{c}(M)\right) \rightarrow 0$ and 
\begin{align}
& \mathbb{P}\left( \max_{i\in \mathcal{N}}\left\Vert \frac{1}{T_{\ell }}%
\sum_{t\in \mathcal{T}_{\ell }}e_{it}f_{t}^{0}\right\Vert >c_{6}\sqrt{\frac{%
\log N}{T}}\right)   \notag  \label{Lem:emp_Bern} \\
& \leq \mathbb{P}\left( \max_{i\in \mathcal{N}}\left\Vert \frac{1}{T_{\ell }}%
\sum_{t\in \mathcal{T}_{\ell }}e_{it}f_{t}^{0}\right\Vert >c_{6}\sqrt{\frac{%
\log N}{T}},\mathscr{A}_{4,N}(M)\right) +\mathbb{P}\left( \mathscr{A}%
_{4,N}^{c}(M)\right)   \notag \\
& \leq \sum_{i\in \mathcal{N}}\mathbb{P}\left( \left\Vert \frac{1}{T_{\ell }}%
\sum_{t\in \mathcal{T}_{\ell }}e_{it}f_{t}^{0}\right\Vert >c_{6}\sqrt{\frac{%
\log N}{T}},\mathscr{A}_{4,N}(M)\right) +\mathbb{P}\left( \mathscr{A}%
_{4,N}^{c}(M)\right)   \notag \\
& \leq \sum_{i\in \mathcal{N}}\mathbb{P}\left( \left\Vert \frac{1}{T_{\ell }}%
\sum_{t\in \mathcal{T}_{\ell }}e_{it}f_{t}^{0}\right\Vert >c_{6}\sqrt{\frac{%
\log N}{T}},\mathscr{A}_{4,N,i}(M)\right) +\mathbb{P}\left( \mathscr{A}%
_{4,N}^{c}(M)\right)   \notag \\
& \leq \sum_{i\in \mathcal{N}}\mathbb{E}\mathbb{P}\left( \left\Vert \frac{1}{%
T_{\ell }}\sum_{t\in \mathcal{T}_{\ell }}e_{it}f_{t}^{0}\right\Vert >c_{6}%
\sqrt{\frac{\log N}{T}}\bigg|\mathscr{D}\right) \mathbf{1}\left\{ \mathscr{A}%
_{4,N,i}(M)\right\} +\mathbb{P}\left( \mathscr{A}_{4,N}^{c}(M)\right)  
\notag \\
& \leq \sum_{i\in \mathcal{N}}\exp \left\{ -\frac{c_{4}c_{6}^{2}T\log N/2}{%
c_{7}T+c_{6}M\sqrt{T\log N}(NT)^{1/q}(\log T)^{2}/3}\right\} +o(1)  \notag \\
& =o(1),
\end{align}%
where the last inequality holds by Lemma \ref{Lem:Bern_mds}(i), %
\eqref{Lem:var}, and the definition of event $\mathscr{A}_{4,N,i}$, and the
last line holds by Assumption \ref{ass:1*}(vi) and the fact that $q>8$. }

{\small (iii) By the fact that 
\begin{align}
& \frac{1}{nT_{\ell }^{2}}\sum_{i\in \mathcal{N}}\left\Vert e_{i}^{(\ell
)\prime }\hat{F}^{(\ell )}\right\Vert ^{2}\leq \frac{1}{nT_{\ell }^{2}}%
\sum_{i\in \mathcal{N}}\left\Vert e_{i}^{(\ell )\prime }F^{0,(\ell
)}H^{(\ell )}\right\Vert ^{2}+\frac{1}{nT_{\ell }^{2}}\sum_{i\in \mathcal{N}%
}\left\Vert e_{i}^{(\ell )\prime }\left( \hat{F}^{(\ell )}-F^{0,(\ell
)}H^{(\ell )}\right) \right\Vert ^{2}  \notag \\
& \lesssim \frac{1}{nT_{\ell }^{2}}\sum_{i\in \mathcal{N}}\left\Vert
e_{i}^{(\ell )\prime }F^{0,(\ell )}\right\Vert ^{2}+\frac{1}{nT_{\ell }}%
\sum_{i\in \mathcal{N}}\left\Vert e_{i}^{(\ell )\prime }\right\Vert ^{2}%
\frac{1}{T_{\ell }}\left\Vert \hat{F}^{(\ell )}-F^{0,(\ell )}H^{(\ell
)}\right\Vert ^{2}  \notag \\
& \leq \frac{1}{nT_{\ell }^{2}}\sum_{i\in \mathcal{N}}\left\Vert
e_{i}^{(\ell )\prime }F^{0,(\ell )}\right\Vert ^{2}+O_{p}(B_{N}^{2}+(N\wedge
T)^{-1})=O_{p}(B_{N}^{2}+(N\wedge T)^{-1}),
\end{align}%
where the last inequality holds by Assumption \ref{ass:8}(ii) and Lemma \ref%
{Lem:Fhat}, and last equality holds by the fact that 
\begin{equation*}
\frac{1}{nT_{\ell }^{2}}\sum_{i\in \mathcal{N}}\left\Vert e_{i}^{(\ell
)\prime }F^{0,(\ell )}\right\Vert _{2}^{2}=\frac{1}{nT_{\ell }^{2}}%
\sum_{i\in \mathcal{N}}\left\Vert \sum_{t\in \mathcal{T}_{\ell
}}e_{it}f_{t}^{0}\right\Vert _{2}^{2}=\frac{1}{T_{\ell }}\left[ \frac{1}{n}%
\sum_{i\in \mathcal{N}}\left\Vert \frac{1}{\sqrt{T_{\ell }}}\sum_{t\in 
\mathcal{T}_{\ell }}e_{it}f_{t}^{0}\right\Vert _{2}^{2}\right] =O_{p}(T^{-1})
\end{equation*}%
by \eqref{Lem:D1_1}.}

{\small 
}

{\small 
}

{\small (iv) Noting that $M_{\hat{F}^{(\ell )}}\hat{F}^{(\ell )}=0$, we have 
\begin{align*}
\max_{i\in \mathcal{N}}\left\Vert \frac{1}{T_{\ell }}X_{i}^{(\ell )\prime
}M_{\hat{F}^{(\ell )}}F^{0,(\ell )}\right\Vert & =\max_{i\in \mathcal{N}%
}\left\Vert \frac{1}{T_{\ell }}X_{i}^{(\ell )\prime }M_{\hat{F}^{(\ell
)}}\left( F^{0,(\ell )}-\hat{F}^{(\ell )}H^{(\ell )-1}\right) \right\Vert \\
& \leq \max_{i\in \mathcal{N}}\left\Vert \frac{1}{\sqrt{T_{\ell }}}%
X_{i}^{(\ell )}M_{\hat{F}^{(\ell )}}\right\Vert O_{p}(B_{N}+(N\wedge
T)^{-1/2})=O_{p}(B_{N}+(N\wedge T)^{-1/2}),
\end{align*}%
where the last inequality is by Lemma \ref{Lem:Fhat} and the last equality
is by the fact that 
\begin{equation}
\max_{i\in \mathcal{N}}\left\Vert \frac{1}{\sqrt{T_{\ell }}}X_{i}^{(\ell
)\prime }M_{\hat{F}^{(\ell )}}\right\Vert \leq \max_{i\in \mathcal{N}%
}\left\Vert \frac{1}{\sqrt{T_{\ell }}}X_{i}^{(\ell )\prime }\right\Vert
=O_{p}(1)  \label{Lem:pre_iv_1}
\end{equation}%
by Assumption \ref{ass:8}(ii). }

{\small (v) Note that $\frac{1}{nT_{\ell }^{2}}\sum_{i^{\ast }\in \mathcal{N}%
}\left\Vert X_{i}^{(\ell )\prime }e_{i^{\ast }}^{(\ell )}\right\Vert ^{2}=%
\frac{1}{n}\sum_{i^{\ast }\in \mathcal{N}}\left\Vert \frac{1}{T_{\ell }}%
\sum_{t\in \mathcal{T}_{\ell }}X_{it}e_{i^{\ast }t}\right\Vert ^{2}\leq
\max_{i,i^{\ast }\in \mathcal{N}}\left\Vert \frac{1}{T_{\ell }}\sum_{t\in 
\mathcal{T}_{\ell }}X_{it}e_{i^{\ast }t}\right\Vert ^{2}.$ Under Assumptions %
\ref{ass:1*} and Assumption \ref{ass:8}(ii) 
\begin{align*}
& \max_{i,i^{\ast }\in \mathcal{N},t\in \mathcal{T}_{\ell }}\left\Vert
X_{it}e_{i^{\ast }t}\right\Vert =O_{p}((NT)^{1/q}), \\
& \max_{i,i^{\ast }\in \mathcal{N}}\left\Vert \sum_{t\in \mathcal{T}_{\ell }}%
\mathbb{E}(X_{it}X_{it}^{\prime }e_{i^{\ast }t}^{2}|\mathscr{G}%
_{t-1})\right\Vert \leq \max_{i^{\ast }\in \mathcal{N},t}\mathbb{E}%
(e_{i^{\ast }t}^{2}|\mathscr{G}_{t-1})\max_{i\in \mathcal{N}}\sum_{t\in 
\mathcal{T}_{\ell }}\left\Vert X_{it}\right\Vert ^{2}\leq c_{8}T~a.s.
\end{align*}%
Define events $\mathscr{A}_{5,N}(M)=\{\max_{i,i^{\ast }\in \mathcal{N},t\in 
\mathcal{T}_{\ell }}\left\Vert X_{it}e_{i^{\ast }t}\right\Vert \leq
M(NT)^{1/q}\}$ and $\mathscr{A}_{5,N,i,i^{\ast }}(M)=\{\max_{t\in \mathcal{T}%
_{\ell }}\left\Vert X_{it}e_{i^{\ast }t}\right\Vert $ $\leq M(NT)^{1/q}\}$
for a large enough constant $M$ such that $\mathbb{P}\left( \mathscr{A}%
_{5,N}^{c}(M)\right) \rightarrow 0$. Then we have 
\begin{align}
& \mathbb{P}\left( \max_{i,i^{\ast }\in \mathcal{N}}\left\Vert \frac{1}{%
T_{\ell }}\sum_{t\in \mathcal{T}_{\ell }}X_{it}e_{i^{\ast }t}\right\Vert
>c_{6}\sqrt{\frac{\log N}{T}}\right)   \notag  \label{Lem:emp_freedman} \\
& \leq \mathbb{P}\left( \max_{i,i^{\ast }\in \mathcal{N}}\left\Vert \frac{1}{%
T_{\ell }}\sum_{t\in \mathcal{T}_{\ell }}X_{it}e_{i^{\ast }t}\right\Vert
>c_{6}\sqrt{\frac{\log N}{T}},\mathscr{A}_{5,N}(M)\right) +\mathbb{P}\left( %
\mathscr{A}_{5,N}^{c}(M)\right)   \notag \\
& \leq \sum_{i\in \mathcal{N}}\sum_{i^{\ast }\in \mathcal{N}}\mathbb{P}%
\left( \left\Vert \frac{1}{T_{\ell }}\sum_{t\in \mathcal{T}_{\ell
}}X_{it}e_{i^{\ast }t}\right\Vert >c_{6}\sqrt{\frac{\log N}{T}},\mathscr{A}%
_{5,N}(M)\right) +\mathbb{P}\left( \mathscr{A}_{5,N}^{c}(M)\right)   \notag
\\
& \leq \sum_{i\in \mathcal{N}}\sum_{i^{\ast }\in \mathcal{N}}\mathbb{P}%
\left( \left\Vert \frac{1}{T_{\ell }}\sum_{t\in \mathcal{T}_{\ell
}}X_{it}e_{i^{\ast }t}\right\Vert >c_{6}\sqrt{\frac{\log N}{T}},\mathscr{A}%
_{5,N,i,i^{\ast }}(M)\right) +\mathbb{P}\left( \mathscr{A}%
_{5,N}^{c}(M)\right)   \notag \\
& \leq \sum_{i\in \mathcal{N}}\sum_{i^{\ast }\in \mathcal{N}}\exp \left\{ 
\frac{-c_{6}^{2}T\log N/2}{c_{8}T+Mc_{6}(NT)^{1/q}\sqrt{T\log N}/3}\right\}
+o(1)  \notag \\
& =o(1),
\end{align}%
where the last inequality holds by Lemma \ref{Lem:Bern_mds}(ii) and the last
line is by Assumption \ref{ass:1*}(vi). }

{\small (vi) Noted that 
\begin{align*}
& \mathbb{E}\left[ \left\Vert \frac{1}{\sqrt{nT_{\ell }}}\sum_{i\in \mathcal{%
N}}\lambda _{i}^{0}e_{i}^{(\ell )\prime }F^{0,(\ell )}\right\Vert ^{2}\bigg|%
\mathscr{D}\right] =\frac{1}{nT_{\ell }}\sum_{i\in \mathcal{N}}\sum_{i^{\ast
}\in \mathcal{N}}\sum_{t\in \mathcal{T}_{\ell }}\sum_{t^{\ast }\in \mathcal{T%
}_{\ell }}\mathbb{E}\left( \lambda _{i}^{0}f_{t}^{0\prime }f_{t^{\ast
}}^{0}\lambda _{i^{\ast }}^{0\prime }e_{it}e_{i^{\ast }t^{\ast }}|\mathscr{D}%
\right)  \\
& \leq \max_{i\in \mathcal{N}}\left\Vert \lambda _{i}^{0}\right\Vert
_{2}^{2}\max_{t\in \mathcal{T}_{\ell }}\left\Vert f_{t}^{0}\right\Vert
_{2}^{2}\frac{1}{nT_{\ell }}\sum_{i\in \mathcal{N}}\sum_{i^{\ast }\in 
\mathcal{N}}\sum_{t\in \mathcal{T}_{\ell }}\sum_{t^{\ast }\in \mathcal{T}%
_{\ell }}\left\vert \mathbb{E}\left( e_{it}e_{i^{\ast }t^{\ast }}|\mathscr{D}%
\right) \right\vert  \\
& \lesssim \frac{1}{nT_{\ell }}\sum_{i\in \mathcal{N}}\sum_{t\in \mathcal{T}%
_{\ell }}\left\vert \mathbb{E}\left( e_{it}^{2}|\mathscr{D}\right)
\right\vert =O(1)~a.s.,
\end{align*}%
where the last line holds by Lemma \ref{Lem:bounded u&v}(i) and Assumption %
\ref{ass:1*}. Similarly as above, we can also show that $\mathbb{E}\left[
\left\Vert \frac{1}{\sqrt{nT_{\ell }}}\sum_{i\in \mathcal{N}}\lambda
_{i}^{0}e_{i}^{(\ell )\prime }\right\Vert _{{}}^{2}\bigg|\mathscr{D}\right]
=O_{p}(1)$. Then 
\begin{equation*}
\left\Vert \frac{1}{\sqrt{nT_{\ell }}}\sum_{i\in \mathcal{N}}\lambda
_{i}^{0}e_{i}^{(\ell )\prime }F^{0,(\ell )}\right\Vert =O_{p}(1)\text{ and }%
\left\Vert \frac{1}{\sqrt{nT_{\ell }}}\sum_{i\in \mathcal{N}}\lambda
_{i}^{0}e_{i}^{(\ell )\prime }\right\Vert =O_{p}(1).
\end{equation*}%
Furthermore, we have 
\begin{align*}
\left\Vert \frac{1}{nT_{\ell }}\sum_{i\in \mathcal{N}}\lambda
_{i}^{0}e_{i}^{(\ell )\prime }\hat{F}^{(\ell )}\right\Vert & \leq \frac{1}{%
\sqrt{n}}\frac{1}{\sqrt{T_{\ell }}}\left\Vert \hat{F}^{(\ell )}-F^{0,(\ell
)}H^{(\ell )}\right\Vert \left\Vert \frac{1}{\sqrt{nT_{\ell }}}\sum_{i\in 
\mathcal{N}}\lambda _{i}^{0}e_{i}^{(\ell )\prime }\right\Vert +\frac{1}{%
\sqrt{nT_{\ell }}}\left\Vert \frac{1}{\sqrt{nT_{\ell }}}\sum_{i\in \mathcal{N%
}}\lambda _{i}^{0}e_{i}^{(\ell )\prime }F^{0,(\ell )}\right\Vert \left\Vert
H^{(\ell )}\right\Vert  \\
& =O_{p}\left( \frac{B_{N}}{\sqrt{N}}+\frac{1}{N+\sqrt{NT}}\right)
+O_{p}\left( \frac{1}{\sqrt{NT}}\right) =O_{p}\left( \frac{B_{N}}{\sqrt{N}}+%
\frac{1}{N}+\frac{1}{\sqrt{NT}}\right) .
\end{align*}%
}

{\small (vii) We first notice that 
\begin{equation}
\max_{i\in \mathcal{N}}\left\Vert \frac{1}{nT_{\ell }}\sum_{i^{\ast }\in 
\mathcal{N}}X_{i}^{(\ell )\prime }e_{i^{\ast }}^{(\ell )}\lambda _{i^{\ast
}}^{0\prime }\right\Vert =\max_{i\in \mathcal{N}}\left\Vert \frac{1}{%
nT_{\ell }}\sum_{i^{\ast }\in \mathcal{N}}\sum_{t\in \mathcal{T}_{\ell
}}X_{it}e_{i^{\ast }t}\lambda _{i^{\ast }}^{0\prime }\right\Vert
=O_{p}\left( \sqrt{\frac{\log N}{NT}}\right)   \label{Lem:emp_freedman2}
\end{equation}%
by similar arguments as used to obtain \eqref{Lem:emp_freedman}. This
result, in conjunction with Lemma \ref{Lem:thetahat_pre}(vi), implies that 
\begin{align*}
& \max_{i\in \mathcal{N}}\left\Vert \frac{1}{nT_{\ell }}\sum_{i^{\ast }\in 
\mathcal{N}}X_{i}^{(\ell )\prime }M_{\hat{F}^{(\ell )}}e_{i^{\ast }}^{(\ell
)}\lambda _{i^{\ast }}^{0\prime }\right\Vert  \\
& \leq \max_{i\in \mathcal{N}}\left\Vert \frac{1}{nT_{\ell }}\sum_{i^{\ast
}\in \mathcal{N}}X_{i}^{(\ell )\prime }e_{i^{\ast }}^{(\ell )}\lambda
_{i^{\ast }}^{0\prime }\right\Vert +\max_{i\in \mathcal{N}}\left\Vert \frac{1%
}{nT_{\ell }}\sum_{i^{\ast }\in \mathcal{N}}X_{i}^{(\ell )\prime }\frac{\hat{%
F}^{(\ell )}\hat{F}^{(\ell )\prime }}{T_{\ell }}e_{i^{\ast }}^{(\ell
)}\lambda _{i^{\ast }}^{0\prime }\right\Vert  \\
& \leq \max_{i\in \mathcal{N}}\left\Vert \frac{1}{nT_{\ell }}\sum_{i^{\ast
}\in \mathcal{N}}X_{i}^{(\ell )\prime }e_{i^{\ast }}^{(\ell )}\lambda
_{i^{\ast }}^{0\prime }\right\Vert +\max_{i\in \mathcal{N}}\frac{\left\Vert
X_{i}^{(\ell )}\right\Vert _{{}}}{\sqrt{T_{\ell }}}\frac{\hat{F}^{(\ell )}}{%
\sqrt{T_{\ell }}}\left\Vert \frac{1}{nT_{\ell }}\sum_{i^{\ast }\in \mathcal{N%
}}\hat{F}^{(\ell )\prime }e_{i^{\ast }}^{(\ell )}\lambda _{i^{\ast
}}^{0\prime }\right\Vert  \\
& =O_{p}\left( \sqrt{\frac{\log N}{NT}}\right) +O_{p}\left( \frac{B_{N}}{%
\sqrt{N}}+\frac{1}{N}+\frac{1}{\sqrt{NT}}\right) =O_{p}\left( \frac{B_{N}}{%
\sqrt{N}}+\frac{1}{N}+\sqrt{\frac{\log N}{NT}}\right) .
\end{align*}%
}

{\small (viii) By \eqref{Lem:emp_freedman} and Lemma \ref{Lem:thetahat_pre}%
(iii), we have 
\begin{align*}
& \max_{i\in \mathcal{N}}\left\Vert \frac{1}{nT_{\ell }^{2}}\sum_{i^{\ast
}\in \mathcal{N}}X_{i}^{(\ell )\prime }M_{\hat{F}^{(\ell )}}e_{i^{\ast
}}^{(\ell )}e_{i^{\ast }}^{(\ell )\prime }\hat{F}^{(\ell )}\right\Vert  \\
& \leq \max_{i\in \mathcal{N}}\left\Vert \frac{1}{nT_{\ell }^{2}}%
\sum_{i^{\ast }\in \mathcal{N}}X_{i}^{(\ell )\prime }e_{i^{\ast }}^{(\ell
)}e_{i^{\ast }}^{(\ell )\prime }\hat{F}^{(\ell )}\right\Vert +\max_{i\in 
\mathcal{N}}\left\Vert \frac{1}{nT_{\ell }^{2}}\sum_{i^{\ast }\in \mathcal{N}%
}X_{i}^{(\ell )\prime }\frac{\hat{F}^{(\ell )}\hat{F}^{(\ell )\prime }}{%
T_{\ell }}e_{i^{\ast }}^{(\ell )}e_{i^{\ast }}^{(\ell )\prime }\hat{F}%
^{(\ell )}\right\Vert  \\
& \leq \max_{i\in \mathcal{N}}\sqrt{\frac{1}{n}\sum_{i^{\ast }\in \mathcal{N}%
}\left\Vert \frac{1}{T_{\ell }}\sum_{t\in \mathcal{T}_{\ell
}}X_{it}e_{i^{\ast }t}\right\Vert _{2}^{2}}\sqrt{\frac{1}{nT_{\ell }^{2}}%
\sum_{i\in \mathcal{N}}\left\Vert e_{i}^{(\ell )\prime }\hat{F}^{(\ell
)}\right\Vert _{2}^{2}}+\frac{1}{nT_{\ell }^{2}}\sum_{i\in \mathcal{N}%
}\left\Vert e_{i}^{(\ell )\prime }\hat{F}^{(\ell )}\right\Vert
_{2}^{2}\max_{i\in \mathcal{N}}\frac{\left\Vert X_{i}^{(\ell )}\right\Vert }{%
\sqrt{T_{\ell }}}\frac{\left\Vert \hat{F}^{(\ell )}\right\Vert }{\sqrt{%
T_{\ell }}} \\
& =O_{p}\left( \sqrt{\frac{\log N}{T}}\right) O_{p}\left( B_{N}+\frac{1}{%
\sqrt{N\wedge T}}\right) +O_{p}\left( B_{N}^{2}+\frac{1}{N\wedge T}\right)
=O_{p}\left( B_{N}^{2}+B_{N}\sqrt{\frac{\log N}{T}}+\frac{\sqrt{\log N}}{%
N\wedge T}\right) .
\end{align*}%
}
\end{proof}

{\small Define 
\begin{align*}
& \xi_{i}^{0,(\ell )}:=\frac{X_{i}^{(\ell )\prime
}M_{F^{0,(\ell)}}e_{i}^{(\ell )}}{T_{\ell }},\quad S_{ii^{\ast }}^{0,(\ell
)}:=\frac{X_{i}^{(\ell )\prime }M_{F^{0,(\ell )}}X_{i^{\ast }}^{(\ell )}}{%
T_{\ell }},\quad a_{ii^{\ast }}^{0}:=\lambda_{i}^{0\prime }\left( \frac{%
\Lambda_{n}^{0\prime }\Lambda_{n}^{0}}{n}\right) ^{-1}\lambda_{i^{\ast
}}^{0}, \\
& G_{ii^{\ast }}^{0,(\ell )}:=S_{ii^{\ast }}^{0,(\ell
)}a_{ii^{\ast}}^{0},\quad \text{and }\Omega_{i}^{0,(\ell
)}:=Var(\xi_{i}^{0,(\ell )}).
\end{align*}
}

\begin{lemma}
{\small \label{Lem:heter_slope} Under Assumptions \ref{ass:1*}, \ref{ass:2}
and \ref{ass:8}, we have }

\begin{itemize}
\item[(i)] {\small $\mathbb{E}(S_{ii}^{0,(\ell )}|\mathscr{D})(\hat{\theta}%
_{i}^{(\ell )}-\theta _{i}^{0,(\ell )})(1-\frac{a_{ii}^{0}}{n})=\xi
_{i}^{0,(\ell )}+\mathcal{R}_{i}^{(\ell )}$ such that $\max_{i\in \mathcal{N}%
}||\mathcal{R}_{i}^{(\ell )}||=O_{p}(\log N/(N\wedge T))$, }

\item[(ii)] {\small $\sqrt{T_{\ell }}(\Omega _{i}^{0,(\ell )})^{-1/2}\mathbb{%
E}(S_{ii}^{0,(\ell )}|\mathscr{D})(\hat{\theta}_{i}^{(\ell )}-\theta
_{i}^{0,(\ell )})(1-\frac{a_{ii}^{0}}{n})\rightsquigarrow \mathbb{N}(0,1)$, }

\item[(iii)] {\small $\max_{i\in \mathcal{N}}||\hat{\theta}_{i}^{(\ell
)}-\theta _{i}^{0,(\ell )}||=O_{p}(\sqrt{(\log N)/T})$. }
\end{itemize}
\end{lemma}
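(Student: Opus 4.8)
The plan is to derive a single asymptotic expansion for $\hat{\theta}_i^{(\ell)}-\theta_i^{0,(\ell)}$ from the first-order condition (\ref{PIFE_est_theta}) and then to read off (i)--(iii) from it. Substituting the model (\ref{PIFE_model}), written as $Y_i^{(\ell)}=X_i^{(\ell)}\theta_i^{0,(\ell)}+F^{0,(\ell)}\lambda_i^0+e_i^{(\ell)}$, into (\ref{PIFE_est_theta}) yields
\[
\hat{\theta}_i^{(\ell)}-\theta_i^{0,(\ell)}=\left(\tfrac{1}{T_\ell}X_i^{(\ell)\prime}M_{\hat F^{(\ell)}}X_i^{(\ell)}\right)^{-1}\left[\tfrac{1}{T_\ell}X_i^{(\ell)\prime}M_{\hat F^{(\ell)}}F^{0,(\ell)}\lambda_i^0+\tfrac{1}{T_\ell}X_i^{(\ell)\prime}M_{\hat F^{(\ell)}}e_i^{(\ell)}\right].
\]
Thus the whole problem reduces to controlling, uniformly over $i\in\mathcal N$, the signal-leakage term $T_\ell^{-1}X_i^{(\ell)\prime}M_{\hat F^{(\ell)}}F^{0,(\ell)}\lambda_i^0$ and the noise term $T_\ell^{-1}X_i^{(\ell)\prime}M_{\hat F^{(\ell)}}e_i^{(\ell)}$, together with the invertibility of the residualized second-moment matrix, which holds under Assumptions \ref{ass:2} and \ref{ass:8}.

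First I would establish the uniform rate (iii), which simultaneously pins down the value $B_N=\sqrt{(\log N)/T}$ that is left implicit in Lemmas \ref{Lem:Fhat} and \ref{Lem:thetahat_pre}. Starting from the preliminary consistency in Lemma \ref{Lem:theta_const} and replacing $M_{\hat F^{(\ell)}}$ by $M_{F^{0,(\ell)}}$ via Lemma \ref{Lem:thetahat_pre}(ii) and (iv), the noise term is $\xi_i^{0,(\ell)}=T_\ell^{-1}\sum_{t\in\mathcal T_\ell}(M_{F^{0,(\ell)}}X_i^{(\ell)})_t\,e_{it}$, which is $O_p(\sqrt{(\log N)/T})$ uniformly in $i$ by the exponential inequalities already deployed in (\ref{Lem:emp_freedman}) and (\ref{Lem:emp_Bern}), while the signal-leakage term is of smaller order by Lemma \ref{Lem:thetahat_pre}(iv). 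Feeding this rate back as $B_N$ into Lemma \ref{Lem:thetahat_pre} in a guess-and-verify (fixed-point) fashion gives $\max_i\|\hat\theta_i^{(\ell)}-\theta_i^{0,(\ell)}\|=O_p(\sqrt{(\log N)/T})$ and makes every remainder order explicit.

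For the expansion (i) the delicate step is the signal-leakage term. Since $M_{\hat F^{(\ell)}}\hat F^{(\ell)}=0$ and $\hat F^{(\ell)}\approx F^{0,(\ell)}H^{(\ell)}$ by Lemma \ref{Lem:Fhat}, I would substitute the representation of $\hat F^{(\ell)}-F^{0,(\ell)}H^{(\ell)}$ from (\ref{Lem:Fhat_1}) (the $\sum_m J_m^{(\ell)}$ decomposition) into $X_i^{(\ell)\prime}M_{\hat F^{(\ell)}}F^{0,(\ell)}\lambda_i^0$. The pure-noise blocks $J_6^{(\ell)},J_7^{(\ell)},J_8^{(\ell)}$ generate a cross-sectional average $\tfrac{1}{nT_\ell}\sum_{i^*\in\mathcal N}X_i^{(\ell)\prime}M_{\hat F^{(\ell)}}e_{i^*}^{(\ell)}\lambda_{i^*}^{0\prime}(\tfrac{\Lambda_n^{0\prime}\Lambda_n^0}{n})^{-1}\lambda_i^0$; isolating the diagonal $i^*=i$ term produces exactly the self-correlation factor $a_{ii}^0/n$ multiplying the leading noise $\xi_i^{0,(\ell)}$, which is what is collected on the left as the factor $1-a_{ii}^0/n$, whereas the off-diagonal part is controlled uniformly by Lemma \ref{Lem:thetahat_pre}(vi)--(viii). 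Combining this with the $M_{\hat F^{(\ell)}}\!\to\!M_{F^{0,(\ell)}}$ replacement error of Lemma \ref{Lem:thetahat_pre}(ii), every remaining piece is $O_p(\log N/(N\wedge T))$ uniformly in $i$, giving the stated bound on $\mathcal R_i^{(\ell)}$.

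Finally, (ii) is a central limit theorem for the leading term after premultiplying (i) by $\sqrt{T_\ell}(\Omega_i^{0,(\ell)})^{-1/2}$ and using $1-a_{ii}^0/n\to1$ together with $\sqrt{T_\ell}\,\mathcal R_i^{(\ell)}=o_p(1)$. Conditional on $\mathscr D$, the array $\{(M_{F^{0,(\ell)}}X_i^{(\ell)})_t\,e_{it}\}_{t\in\mathcal T_\ell}$ is a martingale-difference sequence under Assumption \ref{ass:1*} (a mixing sequence under Assumption \ref{ass:1}), so a martingale (resp.\ mixing) CLT applies, with the Lyapunov/Lindeberg condition verified from the $q$th-moment bound in Assumption \ref{ass:1*}(v) and the limit variance identified as $\Omega_i^{0,(\ell)}=Var(\xi_i^{0,(\ell)})$, giving convergence to $\mathbb N(0,1)$. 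The main obstacle throughout is the uniform-in-$i$ bookkeeping of the $J_m^{(\ell)}$ expansion: extracting the precise $a_{ii}^0/n$ coefficient while simultaneously showing that all other interaction terms are $O_p(\log N/(N\wedge T))$ \emph{uniformly} over $\mathcal N$, which is where the exponential inequalities of Lemmas \ref{Lem:Bern_mds} and \ref{Lem:Bern_mixing} and the m.d.s.\ structure under Assumption \ref{ass:1*} do the heavy lifting.
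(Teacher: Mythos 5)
Your overall architecture is the same as the paper's — the first-order-condition expansion of $\hat{\theta}_{i}^{(\ell )}$, substitution of the $\sum_{m\in \lbrack 8]}J_{m}^{(\ell )}$ representation from \eqref{Lem:Fhat_1} into the leakage term $T_{\ell }^{-1}X_{i}^{(\ell )\prime }M_{\hat{F}^{(\ell )}}F^{0,(\ell )}\lambda _{i}^{0}$, exponential-inequality uniform bounds, a fixed point for $B_{N}$, and an m.d.s.\ CLT for (ii) — but there is a genuine gap: you never address $J_{2}^{(\ell )}=\frac{1}{nT_{\ell }}\sum_{i^{\ast }}X_{i^{\ast }}^{(\ell )}(\theta _{i^{\ast }}^{0,(\ell )}-\hat{\theta}_{i^{\ast }}^{(\ell )})\lambda _{i^{\ast }}^{0\prime }F^{0,(\ell )\prime }\hat{F}^{(\ell )}$, the slope-estimation-error feedback, which is the one genuinely delicate block. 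The crude bound on its contribution is $O_{p}(B_{N})$ — the same order as the leading term $\xi _{i}^{0,(\ell )}$ itself — so it cannot be absorbed into $\mathcal{R}_{i}^{(\ell )}=O_{p}(\log N/(N\wedge T))$, and your remainder accounting does not close. The paper's proof resolves this by rewriting the $J_{2}$ contribution as $\frac{1}{n}\sum_{i^{\ast }}\mathbb{E}(G_{ii^{\ast }}^{0,(\ell )}|\mathscr{D})(\hat{\theta}_{i^{\ast }}^{(\ell )}-\theta _{i^{\ast }}^{0,(\ell )})$, extracting the diagonal $i^{\ast }=i$ — this, not a noise block, is where the factor $1-a_{ii}^{0}/n$ originates, since $\mathbb{E}(G_{ii}^{0,(\ell )}|\mathscr{D})=a_{ii}^{0}\,\mathbb{E}(S_{ii}^{0,(\ell )}|\mathscr{D})$ and the term $\frac{a_{ii}^{0}}{n}(\hat{\theta}_{i}^{(\ell )}-\theta _{i}^{0,(\ell )})$ is moved to the left-hand side — and then recursively substituting the expansion into the off-diagonal sum, where conditional cross-sectional independence upgrades $\frac{1}{n}\sum_{i^{\ast }\neq i}\mathbb{E}(G_{ii^{\ast }}^{0,(\ell )}|\mathscr{D})[\mathbb{E}(S_{i^{\ast }i^{\ast }}^{0,(\ell )}|\mathscr{D})]^{-1}\xi _{i^{\ast }}^{0,(\ell )}$ from $O_{p}(B_{N})$ to $O_{p}(\sqrt{\log N/(NT)})$ uniformly in $i$ (as in \eqref{Lem:emp_freedman2}), the higher recursion terms shrinking in powers of $n^{-1}$.

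Two concrete consequences of this omission. First, your attribution of $1-a_{ii}^{0}/n$ to the diagonal of the pure-noise blocks is incorrect: among $J_{6},J_{7},J_{8}$ only $J_{7}$ produces the average $\frac{1}{nT_{\ell }}\sum_{i^{\ast }}X_{i}^{(\ell )\prime }M_{\hat{F}^{(\ell )}}e_{i^{\ast }}^{(\ell )}\lambda _{i^{\ast }}^{0\prime }(\tfrac{\Lambda _{n}^{0\prime }\Lambda _{n}^{0}}{n})^{-1}\lambda _{i}^{0}$ you write, and its \emph{entire} contribution, diagonal included, is bounded by Lemma \ref{Lem:thetahat_pre}(vii) at $O_{p}(B_{N}N^{-1/2}+N^{-1}+\sqrt{\log N/(NT)})$ and absorbed into the remainder; moreover the $J_{7}$ diagonal would attach the factor to $\xi _{i}^{0,(\ell )}$ on the right rather than to $\hat{\theta}_{i}^{(\ell )}-\theta _{i}^{0,(\ell )}$ on the left (a discrepancy of negligible order, but evidence that the lemma's factor is not generated there). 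Second, your route to (iii) fails as stated: Lemma \ref{Lem:thetahat_pre}(iv) bounds the leakage term only by $O_{p}(B_{N}+(N\wedge T)^{-1/2})$ — \emph{not} smaller order than $B_{N}$, because the factor-estimation error itself contains the slope errors through $J_{1}$--$J_{5}$ — so your guess-and-verify inequality reads $B_{N}\leq C(\sqrt{(\log N)/T}+B_{N}+(N\wedge T)^{-1/2})$ with an untracked constant and does not pin down $B_{N}$. It is precisely the recursive $J_{2}$ argument that converts the $O_{p}(B_{N})$-order feedback into coefficients of size $O(n^{-1})$ (diagonal) and $O(N^{-1/2})$ (off-diagonal, by independence), after which the self-consistency argument delivers $B_{N}=O_{p}(\sqrt{(\log N)/T})$ and hence (i) and (ii).
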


\begin{proof}
{\small (i) Noting from \eqref{PIFE_est_theta} that $\hat{\theta}_{i}^{(\ell
)}=(\hat{S}_{i}^{\left( \ell \right) })^{-1}T_{\ell }^{-1}X_{i}^{(\ell
)\prime }M_{\hat{F}^{(\ell )}}Y_{i}^{(\ell )}$ with }${\small \hat{S}}_{%
{\small ii}}^{\left( {\small \ell }\right) }{\small =}${\small $T_{\ell
}^{-1}X_{i}^{(\ell )\prime }M_{\hat{F}^{(\ell )}}X_{i}^{(\ell )}$, we have 
\begin{align}
& \hat{\theta}_{i}^{(\ell )}-\theta _{i}^{0,(\ell )}=(\hat{S}_{ii}^{\left(
\ell \right) })^{-1}\left[ \frac{1}{T_{\ell }}X_{i}^{(\ell )\prime }M_{\hat{F%
}^{(\ell )}}e_{i}^{(\ell )}+\frac{1}{T_{\ell }}X_{i}^{(\ell )\prime }M_{\hat{%
F}^{(\ell )}}F^{0,(\ell )}\lambda _{i}^{0}\right]  \notag \\
& =(\hat{S}_{ii}^{\left( \ell \right) })^{-1}\frac{X_{i}^{(\ell )\prime }M_{%
\hat{F}^{(\ell )}}e_{i}^{(\ell )}}{T_{\ell }}+(\hat{D}_{i}^{\left( \ell
\right) })^{-1}\frac{1}{T_{\ell }}X_{i}^{(\ell )\prime }M_{\hat{F}^{(\ell )}}%
\left[ \hat{F}^{(\ell )}H^{(\ell )-1}-\sum_{m\in \lbrack 8]}J_{m}^{(\ell
)}\left( \frac{F^{0,(\ell )\prime }\hat{F}^{(\ell )}}{T_{\ell }}\right)
^{-1}\left( \frac{\Lambda _{n}^{0\prime }\Lambda _{n}^{0}}{n}\right) ^{-1}%
\right] \lambda _{i}^{0}  \notag \\
& =(\hat{S}_{ii}^{\left( \ell \right) })^{-1}\frac{X_{i}^{(\ell )\prime }M_{%
\hat{F}^{(\ell )}}e_{i}^{(\ell )}}{T_{\ell }}-(\hat{D}_{i}^{\left( \ell
\right) })^{-1}\frac{1}{T_{\ell }}X_{i}^{(\ell )\prime }M_{\hat{F}^{(\ell
)}}\sum_{m\in \lbrack 8]}J_{m}^{(\ell )}\left( \frac{F^{0,(\ell )\prime }%
\hat{F}^{(\ell )}}{T_{\ell }}\right) ^{-1}\left( \frac{\Lambda _{n}^{0\prime
}\Lambda _{n}^{0}}{n}\right) ^{-1}\lambda _{i}^{0},  \label{Lem:thetahat_1}
\end{align}%
where the second equality is from \eqref{Lem:Fhat_1}. Note that 
\begin{equation*}
\max_{i\in \mathcal{N}}\left\Vert \frac{1}{T_{\ell }}X_{i}^{(\ell )\prime
}M_{\hat{F}^{(\ell )}}\sum_{m\in \lbrack 8]}J_{m}^{(\ell )}\left( \frac{%
F^{0,(\ell )\prime }\hat{F}^{(\ell )}}{T_{\ell }}\right) ^{-1}\left( \frac{%
\Lambda _{n}^{0\prime }\Lambda _{n}^{0}}{n}\right) ^{-1}\lambda
_{i}^{0}\right\Vert \lesssim \sum_{m\in \lbrack 8]}\max_{i\in \mathcal{N}%
}\left\Vert \frac{1}{T_{\ell }}X_{i}^{(\ell )\prime }M_{\hat{F}^{(\ell
)}}J_{m}^{(\ell )}\right\Vert =:\sum_{m\in \lbrack 8]}II_{m},
\end{equation*}%
by Lemmas \ref{Lem:bounded u&v}(i) and \ref{Lem:Fhat}, and the normalization
of the factor and factor loadings. Hence, it suffices to show the uniform
convergence rate }${\small II}_{{\small m}}${\small $~$\ for $m\in \lbrack
8]\backslash \{2\}$. The term associated with }${\small II}_{{\small 2}}$ 
{\small needs to be kept.}

{\small For }${\small II}_{{\small 1}},${\small \ we have $II_{1}\leq
\max_{i\in \mathcal{N}}\left\Vert \frac{1}{\sqrt{T_{\ell }}}X_{i}^{(\ell
)\prime }\right\Vert \left\Vert J_{1}^{(\ell )}\right\Vert
=O_{p}(B_{N}^{2})\ $by \eqref{Lem:Fhat_J1} and \eqref{Lem:pre_iv_1}. Next,
noting that 
\begin{align*}
II_{2,i}& :=\frac{1}{T_{\ell }}X_{i}^{(\ell )\prime }M_{\hat{F}^{(\ell
)}}J_{2}^{(\ell )}\left( \frac{F^{0,(\ell )\prime }\hat{F}^{(\ell )}}{%
T_{\ell }}\right) ^{-1}\left( \frac{\Lambda _{n}^{0\prime }\Lambda _{n}^{0}}{%
n}\right) ^{-1}\lambda _{i}^{0} \\
& =\frac{1}{T_{\ell }}X_{i}^{(\ell )\prime }M_{\hat{F}^{(\ell )}}\frac{1}{%
nT_{\ell }}\sum_{i^{\ast }\in \mathcal{N}}X_{i^{\ast }}^{(\ell )}\left(
\theta _{i^{\ast }}^{0,(\ell )}-\hat{\theta}_{i^{\ast }}^{(\ell )}\right)
\lambda _{i^{\ast }}^{0\prime }F^{0,(\ell )\prime }\hat{F}^{(\ell )}\left( 
\frac{F^{0,(\ell )\prime }\hat{F}^{(\ell )}}{T_{\ell }}\right) ^{-1}\left( 
\frac{\Lambda _{n}^{0\prime }\Lambda _{n}^{0}}{n}\right) ^{-1}\lambda
_{i}^{0} \\
& =\frac{1}{n}\sum_{i^{\ast }\in \mathcal{N}}\frac{X_{i}^{(\ell )\prime }M_{%
\hat{F}^{(\ell )}}X_{i^{\ast }}^{(\ell )}}{T_{\ell }}\left( \theta _{i^{\ast
}}^{0,(\ell )}-\hat{\theta}_{i^{\ast }}^{(\ell )}\right) \lambda _{i^{\ast
}}^{0\prime }\left( \frac{\Lambda _{n}^{0\prime }\Lambda _{n}^{0}}{n}\right)
^{-1}\lambda _{i}^{0},
\end{align*}%
we have $\max_{i\in \mathcal{N}}\left\Vert II_{2,i}\right\Vert =O_{p}(B_{N}),
$ and this term will be kept in the linear expansion for $\hat{\theta}%
_{i}^{(\ell )}$. For }${\small II}_{{\small 3}},${\small \ we have 
\begin{align*}
II_{3}& =\max_{i\in \mathcal{N}}\left\Vert \frac{1}{T_{\ell }}X_{i}^{(\ell
)\prime }M_{\hat{F}^{(\ell )}}\frac{1}{nT_{\ell }}\sum_{i^{\ast }\in 
\mathcal{N}}X_{i^{\ast }}^{(\ell )}\left( \theta _{i^{\ast }}^{0,(\ell )}-%
\hat{\theta}_{i^{\ast }}^{(\ell )}\right) e_{i^{\ast }}^{(\ell )\prime }\hat{%
F}^{(\ell )}\right\Vert  \\
& \leq \max_{i\in \mathcal{N}}\left\Vert \frac{1}{\sqrt{T_{\ell }}}%
X_{i}^{(\ell )\prime }\right\Vert \max_{i\in \mathcal{N}}\left\Vert \theta
_{i^{\ast }}^{0,(\ell )}-\hat{\theta}_{i^{\ast }}^{(\ell )}\right\Vert \sqrt{%
\frac{1}{nT_{\ell }}\sum_{i^{\ast }\in \mathcal{N}}\left\Vert X_{i^{\ast
}}^{(\ell )}\right\Vert ^{2}}\sqrt{\frac{1}{nT_{\ell }^{2}}\sum_{i^{\ast
}\in \mathcal{N}}\left\Vert e_{i^{\ast }}^{(\ell )\prime }\hat{F}^{(\ell
)}\right\Vert } \\
& =O_{p}(B_{N}^{2}+B_{N}(N\wedge T)^{-1/2})
\end{align*}%
by \eqref{Lem:pre_iv_1}, Assumption \ref{ass:8}(ii) and Lemma \ref%
{Lem:thetahat_pre}(iii). For }${\small II}_{{\small 4}},${\small \ we have 
\begin{align*}
II_{4}& =\max_{i\in \mathcal{N}}\left\Vert \frac{1}{T_{\ell }}X_{i}^{(\ell
)\prime }M_{\hat{F}^{(\ell )}}\frac{1}{nT_{\ell }}\sum_{i^{\ast }\in 
\mathcal{N}}F^{0,(\ell )}\lambda _{i^{\ast }}^{0}\left( \theta _{i^{\ast
}}^{0,(\ell )}-\hat{\theta}_{i^{\ast }}^{(\ell )}\right) ^{\prime
}X_{i^{\ast }}^{(\ell )\prime }\hat{F}^{(\ell )}\right\Vert  \\
& \leq \max_{i\in \mathcal{N}}\left\Vert \frac{1}{T_{\ell }}X_{i}^{(\ell
)\prime }M_{\hat{F}^{(\ell )}}F^{0,(\ell )}\right\Vert \max_{i\in \mathcal{N}%
}\left\Vert \lambda _{i}^{0}\right\Vert \max_{i\in \mathcal{N}}\left\Vert
\theta _{i}^{0,(\ell )}-\hat{\theta}_{i}^{(\ell )}\right\Vert \frac{1}{n%
\sqrt{T_{\ell }}}\sum_{i^{\ast }\in \mathcal{N}}\left\Vert X_{i^{\ast
}}^{(\ell )}\right\Vert \frac{\left\Vert \hat{F}^{(\ell )}\right\Vert _{{}}}{%
\sqrt{T_{\ell }}} \\
& =O_{p}(B_{N}+(N\wedge T)^{-1/2})O_{p}(B_{N})=O_{p}(B_{N}^{2}+B_{N}(N\wedge
T)^{-1/2}),
\end{align*}%
where the last line holds by Lemma \ref{Lem:thetahat_pre}(iv), the
normalization of factors and the fact that $\frac{1}{n\sqrt{T_{\ell }}}%
\sum_{i\in \mathcal{N}}||X_{i}^{(\ell )}||=\frac{1}{n}\sum_{i\in \mathcal{N}}%
\sqrt{\frac{1}{T_{\ell }}\sum_{t\in \mathcal{T}_{\ell }}\left\Vert
X_{it}\right\Vert ^{2}}=O_{p}(1)$ by Assumption \ref{ass:8}(ii). For }$%
{\small II}_{{\small 5}},${\small \ we have 
\begin{align*}
II_{5}& =\max_{i\in \mathcal{N}}\left\Vert \frac{1}{T_{\ell }}X_{i}^{(\ell
)\prime }M_{\hat{F}^{(\ell )}}\frac{1}{nT_{\ell }}\sum_{i^{\ast }\in 
\mathcal{N}}e_{i^{\ast }}^{(\ell )}\left( \theta _{i^{\ast }}^{0,(\ell )}-%
\hat{\theta}_{i^{\ast }}^{(\ell )}\right) ^{\prime }X_{i^{\ast }}^{(\ell
)\prime }\hat{F}^{(\ell )}\right\Vert  \\
& \lesssim \max_{i\in \mathcal{N}}\left\Vert \frac{1}{nT_{\ell }\sqrt{%
T_{\ell }}}\sum_{i^{\ast }\in \mathcal{N}}X_{i}^{(\ell )\prime }e_{i^{\ast
}}^{(\ell )}\left( \theta _{i^{\ast }}^{0,(\ell )}-\hat{\theta}_{i^{\ast
}}^{(\ell )}\right) ^{\prime }X_{i^{\ast }}^{(\ell )\prime }\right\Vert  \\
& +\max_{i\in \mathcal{N}}\left\Vert \frac{1}{T_{\ell }}X_{i}^{(\ell )\prime
}\hat{F}^{(\ell )}\frac{1}{n\sqrt{T_{\ell }}}\sum_{i^{\ast }\in \mathcal{N}}%
\frac{1}{T_{\ell }}\hat{F}^{(\ell )\prime }e_{i^{\ast }}^{(\ell )}\left(
\theta _{i^{\ast }}^{0,(\ell )}-\hat{\theta}_{i^{\ast }}^{(\ell )}\right)
^{\prime }X_{i^{\ast }}^{(\ell )\prime }\right\Vert  \\
& \leq \max_{i\in \mathcal{N}}\left\Vert \theta _{i}^{0,(\ell )}-\hat{\theta}%
_{i}^{(\ell )}\right\Vert _{2}\sqrt{\frac{1}{nT_{\ell }^{2}}\sum_{i^{\ast
}\in \mathcal{N}}\left\Vert X_{i}^{(\ell )\prime }e_{i^{\ast }}^{(\ell
)}\right\Vert ^{2}}\sqrt{\frac{1}{nT_{\ell }}\sum_{i^{\ast }\in \mathcal{N}%
}\left\Vert X_{i^{\ast }}^{(\ell )\prime }\right\Vert ^{2}} \\
& +\max_{i\in \mathcal{N}}\left\Vert \theta _{i}^{0,(\ell )}-\hat{\theta}%
_{i}^{(\ell )}\right\Vert _{2}\max_{i\in \mathcal{N}}\frac{\left\Vert
X_{i}^{(\ell )}\right\Vert }{\sqrt{T_{\ell }}}\frac{\left\Vert \hat{F}%
^{(\ell )}\right\Vert }{\sqrt{T_{\ell }}}\sqrt{\frac{1}{nT_{\ell }^{2}}%
\sum_{i^{\ast }\in \mathcal{N}}\left\Vert e_{i^{\ast }}^{(\ell )\prime }\hat{%
F}^{(\ell )}\right\Vert ^{2}}\sqrt{\frac{1}{nT_{\ell }}\sum_{i^{\ast }\in 
\mathcal{N}}\left\Vert X_{i^{\ast }}^{(\ell )}\right\Vert ^{2}} \\
& =O_{p}(B_{N})O_{p}(\sqrt{(\log N)/T})+O_{p}(B_{N})O_{p}\left(
B_{N}+(N\wedge T)^{-1/2}\right) =O_{p}(B_{N}^{2}+B_{N}\sqrt{(\log
N)/(N\wedge T)}),
\end{align*}%
where the last line holds by Lemma \ref{Lem:thetahat_pre}(iii), \ref%
{Lem:thetahat_pre}(v) and the fact that $\frac{1}{nT_{\ell }}\sum_{i^{\ast
}\in \mathcal{N}}||X_{i^{\ast }}^{(\ell )}||^{2}=O_{p}(1)$ under Assumption %
\ref{ass:8}(ii). Next, 
\begin{align*}
II_{6}& =\max_{i\in \mathcal{N}}\left\Vert \frac{1}{T_{\ell }}X_{i}^{(\ell
)\prime }M_{\hat{F}^{(\ell )}}\frac{1}{nT_{\ell }}\sum_{i^{\ast }\in 
\mathcal{N}}F^{0,(\ell )}\lambda _{i^{\ast }}^{0}e_{i^{\ast }}^{(\ell
)\prime }\hat{F}^{(\ell )}\right\Vert \leq \max_{i\in \mathcal{N}}\left\Vert 
\frac{1}{T_{\ell }}X_{i}^{(\ell )\prime }M_{\hat{F}^{(\ell )}}F^{0,(\ell
)}\right\Vert \left\Vert \frac{1}{nT_{\ell }}\sum_{i^{\ast }\in \mathcal{N}%
}\lambda _{i^{\ast }}^{0}e_{i^{\ast }}^{(\ell )\prime }\hat{F}^{(\ell
)}\right\Vert  \\
& =O_{p}((B_{N}+(N\wedge T)^{-1/2})O_{p}(B_{N}N^{-1/2}+N^{-1}+(NT)^{-1/2})
\end{align*}%
by Lemma \ref{Lem:thetahat_pre}(iv) and \ref{Lem:thetahat_pre}(vi). By} 
{\small Lemma \ref{Lem:thetahat_pre}(vii), 
\begin{align}
II_{7}& =\max_{i\in \mathcal{N}}\left\Vert \frac{1}{T_{\ell }}X_{i}^{(\ell
)\prime }M_{\hat{F}^{(\ell )}}\frac{1}{nT_{\ell }}\sum_{i^{\ast }\in 
\mathcal{N}}e_{i^{\ast }}^{(\ell )}\lambda _{i^{\ast }}^{0\prime }F^{0,(\ell
)\prime }\hat{F}^{(\ell )}\right\Vert   \notag  \label{Lem:heter_3} \\
& \lesssim \max_{i\in \mathcal{N}}\left\Vert \frac{1}{nT_{\ell }}%
\sum_{i^{\ast }\in \mathcal{N}}X_{i}^{(\ell )\prime }M_{\hat{F}^{(\ell
)}}e_{i^{\ast }}^{(\ell )}\lambda _{i^{\ast }}^{0\prime }\right\Vert
=O_{p}\left( B_{N}N^{-1/2}+N^{-1}+(NT/\log N)^{-1/2}\right) .
\end{align}%
By Lemma \ref{Lem:thetahat_pre}(viii), 
\begin{equation*}
II_{8}=\max_{i\in \mathcal{N}}\left\Vert \frac{1}{T_{\ell }}X_{i}^{(\ell
)\prime }M_{\hat{F}^{(\ell )}}\frac{1}{nT_{\ell }}\sum_{i^{\ast }\in 
\mathcal{N}}e_{i^{\ast }}^{(\ell )}e_{i^{\ast }}^{(\ell )\prime }\hat{F}%
^{(\ell )}\right\Vert =O_{p}\left( B_{N}^{2}+B_{N}\sqrt{(\log N)/T}+\sqrt{%
\log N}(N\wedge T)^{-1}\right) .
\end{equation*}%
}

{\small In sum, by \eqref{Lem:thetahat_1} and the above analyses, we have 
\begin{align}
\hat{\theta}_{i}^{(\ell )}-\theta _{i}^{0,(\ell )}& =(\hat{S}_{ii}^{\left(
\ell \right) })^{-1}\frac{X_{i}^{(\ell )\prime }M_{\hat{F}^{(\ell
)}}e_{i}^{(\ell )}}{T_{\ell }}+(\hat{S}_{ii}^{\left( \ell \right) })^{-1}%
\frac{1}{n}\sum_{i^{\ast }\in \mathcal{N}}\frac{X_{i}^{(\ell )\prime }M_{%
\hat{F}^{(\ell )}}X_{i^{\ast }}^{(\ell )}}{T_{\ell }}\left( \hat{\theta}%
_{i^{\ast }}^{(\ell )}-\theta _{i^{\ast }}^{0,(\ell )}\right) \lambda
_{i^{\ast }}^{0\prime }\left( \frac{\Lambda _{n}^{0\prime }\Lambda _{n}^{0}}{%
n}\right) ^{-1}\lambda _{i}^{0}  \notag \\
& +O_{p}\left( B_{N}^{2}+B_{N}\sqrt{\frac{\log N}{N\wedge T}}+\frac{\sqrt{%
\log N}}{N\wedge T}\right) 
\end{align}%
uniformly in $i\in \mathcal{N}$. Then by \eqref{Lem:thetahat_1} and Lemma %
\ref{Lem:thetahat_pre}(i)-(ii), we have 
\begin{align*}
\hat{\theta}_{i}^{(\ell )}-\theta _{i}^{0,(\ell )}& =\left( S_{ii}^{0,(\ell
)}\right) ^{-1}\frac{X_{i}^{(\ell )\prime }M_{F^{0,(\ell )}}e_{i}^{(\ell )}}{%
T_{\ell }}+\left( S_{ii}^{0,(\ell )}\right) ^{-1}\frac{1}{n}\sum_{i^{\ast
}\in \mathcal{N}}\frac{X_{i}^{(\ell )\prime }M_{F^{0,(\ell )}}X_{i^{\ast
}}^{(\ell )}}{T_{\ell }}\left( \hat{\theta}_{i^{\ast }}^{(\ell )}-\theta
_{i^{\ast }}^{0,(\ell )}\right) \lambda _{i^{\ast }}^{0\prime }\left( \frac{%
\Lambda _{n}^{0\prime }\Lambda _{n}^{0}}{n}\right) ^{-1}\lambda _{i}^{0} \\
& +O_{p}\left( B_{N}^{2}+B_{N}\sqrt{\frac{\log N}{N\wedge T}}+\frac{\sqrt{%
\log N}}{N\wedge T}\right) .
\end{align*}%
where recall that }${\small S}_{{\small ii}^{\ast }}^{{\small 0,}({\small %
\ell })}{\small :=}${\small $T_{\ell }^{-1}X_{i}^{(\ell )\prime
}M_{F^{0,(\ell )}}X_{i^{\ast }}^{(\ell )}.$} {\small Let $\mathfrak{x}%
_{it}^{0,(\ell )}$ be the $t$-th row of matrix $M_{F^{0,(\ell
)}}X_{i}^{(\ell )}$ and note that $\mathfrak{x}_{it}^{0,(\ell )}$ is strong
mixing across $t$ and independent across $i$ conditional on $\mathscr{D}$ by
Assumption \ref{ass:1*}(i), (iii). Then we can show that 
\begin{equation*}
S_{ii^{\ast }}^{0,(\ell )}-\mathbb{E}(S_{ii^{\ast }}^{0,(\ell )}|\mathscr{D}%
)=\frac{1}{T_{\ell }}\sum_{t\in \mathcal{T}_{\ell }}[\mathfrak{x}%
_{it}^{0,(\ell )}\mathfrak{x}_{i^{\ast }t}^{0,(\ell )\prime }-\mathbb{E}(%
\mathfrak{x}_{it}^{0,(\ell )}\mathfrak{x}_{i^{\ast }t}^{0,(\ell )\prime }|%
\mathscr{D})]=O_{p}\left( (T/\log N)^{-1/2}\right) 
\end{equation*}%
uniformly over $i,i^{\ast }\in \mathcal{N}$ by similar arguments as in %
\eqref{Lem:emp_Bern}. Then by the fact that 
\begin{equation*}
\max_{i,i^{\ast }\in \mathcal{N}}\left\Vert \frac{X_{i}^{(\ell )\prime
}M_{F^{0,(\ell )}}e_{i}^{(\ell )}}{T_{\ell }}\right\Vert =\max_{i,i^{\ast
}\in \mathcal{N}}\left\Vert \frac{1}{T_{\ell }}\sum_{t\in \mathcal{T}_{\ell
}}X_{it}e_{it}\right\Vert +\max_{i,i^{\ast }\in \mathcal{N}}\left\Vert \frac{%
X_{i}^{(\ell )\prime }F^{0,(\ell )}}{T_{\ell }}\right\Vert \left\Vert \frac{1%
}{T_{\ell }}\sum_{t\in \mathcal{T}_{\ell }}f_{t}^{0}e_{it}\right\Vert
=O_{p}\left( (T/\log N)^{-1/2}\right) 
\end{equation*}%
by \eqref{Lem:emp_Bern} and \eqref{Lem:emp_freedman}, we obtain that 
\begin{align}
\hat{\theta}_{i}^{(\ell )}-\theta _{i}^{0,(\ell )}& =\left[ \mathbb{E}\left(
S_{ii}^{0,(\ell )}\big|\mathscr{D}\right) \right] ^{-1}\xi _{i}^{0,(\ell )}+%
\left[ \mathbb{E}\left( S_{ii}^{0,(\ell )}\big|\mathscr{D}\right) \right]
^{-1}\frac{1}{n}\sum_{i^{\ast }\in \mathcal{N}}\mathbb{E}\left( G_{ii^{\ast
}}^{0,(\ell )}\big|\mathscr{D}\right) \left( \hat{\theta}_{i^{\ast }}^{(\ell
)}-\theta _{i^{\ast }}^{0,(\ell )}\right)   \notag  \label{Lem:thetahat_3} \\
& +O_{p}\left( B_{N}^{2}+B_{N}\sqrt{\frac{\log N}{N\wedge T}}+\frac{\sqrt{%
\log N}}{N\wedge T}\right) .
\end{align}%
}

{\small For the second term on the right side of \eqref{Lem:thetahat_3}, we
observe that 
\begin{align*}
& \left[ \mathbb{E}\left( S_{ii}^{0,(\ell )}\big|\mathscr{D}\right) \right]
^{-1}\frac{1}{n}\sum_{i^{\ast }\in \mathcal{N}}\mathbb{E}\left( G_{ii^{\ast
}}^{0,(\ell )}\big|\mathscr{D}\right) \left( \hat{\theta}_{i^{\ast }}^{(\ell
)}-\theta _{i^{\ast }}^{0,(\ell )}\right)  \\
& =\left[ \mathbb{E}\left( S_{ii}^{0,(\ell )}\big|\mathscr{D}\right) \right]
^{-1}\mathbb{E}\left( G_{ii}^{0,(\ell )}\big|\mathscr{D}\right) \frac{1}{n}%
\left( \hat{\theta}_{i}^{(\ell )}-\theta _{i}^{0,(\ell )}\right) +\left[ 
\mathbb{E}\left( S_{ii}^{0,(\ell )}\big|\mathscr{D}\right) \right] ^{-1}%
\frac{1}{n}\sum_{i^{\ast }\neq i}\mathbb{E}\left( G_{ii^{\ast }}^{0,(\ell )}%
\big|\mathscr{D}\right) \left( \hat{\theta}_{i^{\ast }}^{(\ell )}-\theta
_{i^{\ast }}^{0,(\ell )}\right)  \\
& =\frac{a_{ii}^{0}}{n}\left( \hat{\theta}_{i}^{(\ell )}-\theta
_{i}^{0,(\ell )}\right) +\left[ \mathbb{E}\left( S_{ii}^{0,(\ell )}\big|%
\mathscr{D}\right) \right] ^{-1}\frac{1}{n}\sum_{i^{\ast }\neq i}\mathbb{E}%
\left( G_{ii^{\ast }}^{0,(\ell )}\big|\mathscr{D}\right) \left( \hat{\theta}%
_{i^{\ast }}^{(\ell )}-\theta _{i^{\ast }}^{0,(\ell )}\right) .
\end{align*}%
By \eqref{Lem:thetahat_3}, it's clear that 
\begin{align*}
\frac{1}{n}\sum_{i^{\ast }\neq i}\mathbb{E}\left( G_{ii^{\ast }}^{0,(\ell )}%
\big|\mathscr{D}\right) \left( \hat{\theta}_{i^{\ast }}^{(\ell )}-\theta
_{i^{\ast }}^{0,(\ell )}\right) & =\frac{1}{n}\sum_{i^{\ast }\neq i}\mathbb{E%
}\left( G_{ii^{\ast }}^{0,(\ell )}\big|\mathscr{D}\right) \left[ \mathbb{E}%
\left( S_{i^{\ast }i^{\ast }}^{0,(\ell )}\big|\mathscr{D}\right) \right]
^{-1}\xi _{i^{\ast }}^{0,(\ell )} \\
& +\frac{1}{n}\sum_{i^{\ast }\neq i}\mathbb{E}\left( G_{ii^{\ast }}^{0,(\ell
)}\big|\mathscr{D}\right) \left[ \mathbb{E}\left( S_{i^{\ast }i^{\ast
}}^{0,(\ell )}\big|\mathscr{D}\right) \right] ^{-1}\frac{1}{n}%
\sum_{j=n_{1}}^{n_{n}}\mathbb{E}\left( G_{i^{\ast }j}^{0,(\ell )}\big|%
\mathscr{D}\right) \left( \hat{\theta}_{j}^{(\ell )}-\theta _{j}^{0,(\ell
)}\right)  \\
& +O_{p}\left( B_{N}^{2}+B_{N}\sqrt{\frac{\log N}{N\wedge T}}+\frac{\sqrt{%
\log N}}{N\wedge T}\right) 
\end{align*}%
where the second term on the right side of the above equality gives the
recursive form and shrinks to zero quickly owing to the $\frac{1}{n^{k}}$
term, and we only need to show the rate of the first term, i.e., 
\begin{align*}
\frac{1}{n}\sum_{i^{\ast }\neq i}\mathbb{E}\left( G_{ii^{\ast }}^{0,(\ell )}%
\big|\mathscr{D}\right) \left[ \mathbb{E}\left( S_{i^{\ast }i^{\ast
}}^{0,(\ell )}\big|\mathscr{D}\right) \right] ^{-1}\xi _{i^{\ast }}^{0,(\ell
)}& =\frac{1}{nT_{\ell }}\sum_{i^{\ast }\neq i}\sum_{t\in \mathcal{T}_{\ell
}}\mathbb{E}\left( G_{ii^{\ast }}^{0,(\ell )}\big|\mathscr{D}\right) \left[ 
\mathbb{E}\left( S_{i^{\ast }i^{\ast }}^{0,(\ell )}\big|\mathscr{D}\right) %
\right] ^{-1}\mathfrak{x}_{i^{\ast }t}e_{i^{\ast }t} \\
& =O_{p}\left( \sqrt{\frac{\log N}{NT}}\right) ~\text{uniformly over $i\in 
\mathcal{N}$,}
\end{align*}%
similarly to the result in \eqref{Lem:emp_freedman2}. This yields 
\begin{equation*}
\max_{i\in \mathcal{N}}\left\Vert \frac{1}{n}\sum_{i^{\ast }\neq i}\mathbb{E}%
\left( G_{ii^{\ast }}^{0,(\ell )}\big|\mathscr{D}\right) \left( \hat{\theta}%
_{i^{\ast }}^{(\ell )}-\theta _{i^{\ast }}^{0,(\ell )}\right) \right\Vert
=O_{p}\left( B_{N}^{2}+B_{N}\sqrt{\frac{\log N}{N\wedge T}}+\frac{\sqrt{\log
N}}{N\wedge T}\right) 
\end{equation*}%
and further gives 
\begin{equation*}
\left( \hat{\theta}_{i}^{(\ell )}-\theta _{i}^{0,(\ell )}\right) \left( 1-%
\frac{a_{ii}^{0}}{n}\right) =\left[ \mathbb{E}\left( S_{ii}^{0,(\ell )}\big|%
\mathscr{D}\right) \right] ^{-1}\xi _{i}^{0,(\ell )}+O_{p}\left(
B_{N}^{2}+B_{N}\sqrt{\frac{\log N}{N\wedge T}}+\frac{\sqrt{\log N}}{N\wedge T%
}\right) 
\end{equation*}%
with $B_{N}=\max_{i\in \mathcal{N}}||\hat{\theta}_{i}^{(\ell )}-\theta
_{i}^{0,(\ell )}||=O_{p}(\sqrt{(\log N)/T})$. Finally, we obtain that 
\begin{equation*}
\mathbb{E}\left( S_{ii}^{0,(\ell )}\big|\mathscr{D}\right) \left( \hat{\theta%
}_{i}^{(\ell )}-\theta _{i}^{0,(\ell )}\right) \left( 1-\frac{a_{ii}^{0}}{n}%
\right) =\xi _{i}^{0,(\ell )}+\mathcal{R}_{i}^{(\ell )}
\end{equation*}%
such that $\max_{i\in \mathcal{N}}\left\Vert \mathcal{R}_{i}^{(\ell
)}\right\Vert _{2}=O_{p}\left( \frac{\log N}{N\wedge T}\right) $. }

{\small (ii) Given the definition of $\Omega _{i}^{0,(\ell )}$ and by the
central limit theorem for m.d.s., we can easily obtain (ii). }

{\small (iii) The proof has already been done in the proof of (i). 
}
\end{proof}

\begin{lemma}
{\small \label{Lem:factor_uniform} Under Assumptions \ref{ass:1*}, \ref%
{ass:2} and \ref{ass:8}, we have }

\begin{itemize}
\item[(i)] {\small $\frac{1}{\sqrt{T_{\ell }}}\left\Vert \hat{F}^{(\ell
)}-F^{0,(\ell )}H^{(\ell )}\right\Vert =O_{p}\left( \sqrt{\frac{\log N}{%
N\wedge T}}\right) $, }

\item[(ii)] {\small $\left\Vert M_{\hat{F}^{(\ell )}}-M_{F^{0,(\ell
)}}\right\Vert =O_{p}\left( \sqrt{\frac{\log N}{N\wedge T}}\right) $, }

\item[(iii)] {\small $\max_{i\in \mathcal{N}}\left\Vert \hat{\lambda}%
_{i}^{(\ell )}-H^{(\ell )-1}\lambda _{i}^{0}\right\Vert =O_{p}\left( \sqrt{%
\frac{\log N}{N\wedge T}}\right) $, }

\item[(iv)] {\small $\max_{t}\left\Vert \hat{f}_{t}^{(\ell )}-H^{(\ell
)\prime }f_{t}^{0}\right\Vert =O_{p}\left( \sqrt{\frac{\log (N\vee T)}{%
N\wedge T}}\right) $. }
\end{itemize}
\end{lemma}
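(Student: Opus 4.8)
The plan is to bootstrap the sharpened slope rate from Lemma~\ref{Lem:heter_slope}(iii) through the machinery already developed for Lemmas~\ref{Lem:Fhat} and \ref{Lem:thetahat_pre}. First I would record that Lemma~\ref{Lem:heter_slope}(iii) upgrades the generic rate $B_N=\max_{i\in\mathcal{N}}\|\hat\theta_i^{(\ell)}-\theta_i^{0,(\ell)}\|$ to $B_N=O_p(\sqrt{(\log N)/T})$. Part (i) is then immediate: substituting this $B_N$ into Lemma~\ref{Lem:Fhat} gives $\tfrac{1}{\sqrt{T_\ell}}\|\hat F^{(\ell)}-F^{0,(\ell)}H^{(\ell)}\|=O_p(\sqrt{(\log N)/T}+(N\wedge T)^{-1/2})$, and since $T\ge N\wedge T$ and $\log N\ge 1$ both summands are $O_p(\sqrt{(\log N)/(N\wedge T)})$. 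Part (ii) follows from the algebraic identity \eqref{Lem4:MF_1}, which writes $M_{F^{0,(\ell)}}-M_{\hat F^{(\ell)}}$ as a sum of products of $\delta_1^{(\ell)}$ and $\delta_2^{(\ell)}$; feeding the sharpened rates $\|\delta_2^{(\ell)}\|=\tfrac{1}{\sqrt{T_\ell}}\|\hat F^{(\ell)}-F^{0,(\ell)}H^{(\ell)}\|=O_p(\sqrt{(\log N)/(N\wedge T)})$ (from part (i)) and $\|\delta_1^{(\ell)}\|=O_p(\sqrt{(\log N)/(N\wedge T)})$ (from \eqref{Lem4:MF_2}) into that identity refines \eqref{Lem4:MF_3} to the claimed bound.

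For part (iii) I would start from the definition of $\hat\lambda_i^{(\ell)}$, substitute the model for $Y_i^{(\ell)}$, and decompose $\hat\lambda_i^{(\ell)}-H^{(\ell)-1}\lambda_i^0$ into three pieces: $(\tfrac{1}{T_\ell}\hat F^{(\ell)\prime}F^{0,(\ell)}-H^{(\ell)-1})\lambda_i^0$, $\tfrac{1}{T_\ell}\hat F^{(\ell)\prime}X_i^{(\ell)}(\theta_i^{0,(\ell)}-\hat\theta_i^{(\ell)})$, and $\tfrac{1}{T_\ell}\hat F^{(\ell)\prime}e_i^{(\ell)}$. The first piece is handled by noting $\tfrac{1}{T_\ell}\hat F^{(\ell)\prime}F^{0,(\ell)}H^{(\ell)}=I_{r_0}-\tfrac{1}{T_\ell}\hat F^{(\ell)\prime}(\hat F^{(\ell)}-F^{0,(\ell)}H^{(\ell)})=I_{r_0}+O_p(\sqrt{(\log N)/(N\wedge T)})$ by part (i) and $\|\hat F^{(\ell)}\|/\sqrt{T_\ell}=O_p(1)$, whence $\tfrac{1}{T_\ell}\hat F^{(\ell)\prime}F^{0,(\ell)}=H^{(\ell)-1}+O_p(\sqrt{(\log N)/(N\wedge T)})$ using the boundedness of $H^{(\ell)-1}$; the factor $\max_i\|\lambda_i^0\|=O_p(1)$ from Lemma~\ref{Lem:bounded u&v}(i) makes this uniform. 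The second piece is $O_p(1)\cdot B_N$ uniformly via $\|\hat F^{(\ell)\prime}X_i^{(\ell)}\|/T_\ell=O_p(1)$ (Assumption~\ref{ass:8}(ii)). For the third piece I would split $\hat F^{(\ell)}=(\hat F^{(\ell)}-F^{0,(\ell)}H^{(\ell)})+F^{0,(\ell)}H^{(\ell)}$; the leading part is controlled by $\max_i\|\tfrac{1}{T_\ell}\sum_{t\in\mathcal{T}_\ell}f_t^0e_{it}\|=O_p(\sqrt{(\log N)/T})$ from the exponential inequality \eqref{Lem:emp_Bern}, and the remainder by part (i) together with $\max_i\tfrac{1}{T_\ell}\|e_i^{(\ell)}\|^2\le C$ a.s. (Assumption~\ref{ass:8}(ii)). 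All three pieces are $O_p(\sqrt{(\log N)/(N\wedge T)})$ uniformly in $i$.

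The only genuinely new work is part (iv). I would first convert \eqref{Lem:Fhat_1} into the clean expansion $\hat F^{(\ell)}-F^{0,(\ell)}H^{(\ell)}=\bigl(\sum_{m\in[8]}J_m^{(\ell)}\bigr)V_{NT}^{(\ell)-1}$ (using $H^{(\ell)}=(\Lambda_n^{0\prime}\Lambda_n^0/n)(F^{0,(\ell)\prime}\hat F^{(\ell)}/T_\ell)V_{NT}^{(\ell)-1}$ and the invertibility and boundedness of $V_{NT}^{(\ell)}$ established in the proof of Lemma~\ref{Lem:Fhat}), and read off the $t$-th row to get $\hat f_t^{(\ell)}-H^{(\ell)\prime}f_t^0=V_{NT}^{(\ell)-1}\sum_{m\in[8]}[J_m^{(\ell)}]_{t\cdot}^\prime$. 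It then suffices to bound $\max_t\|[J_m^{(\ell)}]_{t\cdot}\|$ for each $m$. The rows whose $t$-dependence enters only through the bounded factor vector $f_t^0$ (e.g.\ $J_2,J_4,J_6,J_7$) inherit the rates already derived in Lemma~\ref{Lem:thetahat_pre}, while the rows of $J_1,J_3,J_5$ are dominated by $B_N$-type quantities times $\|X_{it}\|$, handled with the truncation-plus-Bernstein device of \eqref{Lem:emp_Bern}. The main obstacle is the purely idiosyncratic term $J_8^{(\ell)}=\tfrac{1}{nT_\ell}\sum_i e_i^{(\ell)}e_i^{(\ell)\prime}\hat F^{(\ell)}$, whose $t$-th row $\tfrac{1}{nT_\ell}\sum_i e_{it}e_i^{(\ell)\prime}\hat F^{(\ell)}$ mixes a cross-sectional average (conditionally independent across $i$ by Assumption~\ref{ass:1*}(i)) with a serial sum over $s\ne t$; here I would replace $\hat F^{(\ell)}$ by $F^{0,(\ell)}H^{(\ell)}$ using part (i), isolate the diagonal $s=t$ contribution and the off-diagonal m.d.s.\ contribution (using Assumption~\ref{ass:1*}(ii)--(iii)), and then apply the conditional Bernstein/Freedman inequalities of Lemma~\ref{Lem:Bern_mds} with a union bound over $t\in[T]$. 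It is precisely this union bound over the $T$ time indices --- as opposed to the $n\asymp N$ cross-sectional indices in (iii) --- that replaces $\log N$ by $\log(N\vee T)$ and yields the stated rate $O_p(\sqrt{\log(N\vee T)/(N\wedge T)})$.
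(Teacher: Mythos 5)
Your proposal is correct and follows essentially the same route as the paper: parts (i) and (ii) are obtained exactly as you say, by feeding the sharpened rate $B_N=O_p(\sqrt{(\log N)/T})$ from Lemma \ref{Lem:heter_slope}(iii) into Lemma \ref{Lem:Fhat} and \eqref{Lem4:MF_3}; your three-piece decomposition in (iii) coincides with the paper's $I_{i,1}+I_{i,2}+I_{i,3}-I_{i,4}$ (your third piece is their $I_{i,2}+I_{i,3}$, split the same way via $\hat F^{(\ell)}=(\hat F^{(\ell)}-F^{0,(\ell)}H^{(\ell)})+F^{0,(\ell)}H^{(\ell)}$ and \eqref{Lem:emp_Bern}); and in (iv) the paper likewise reads off the $t$-th row of the expansion $\hat F^{(\ell)\prime}-H^{(\ell)\prime}F^{0,(\ell)\prime}=V_{NT}^{(\ell)-1}\sum_{m\in[8]}J_m^{(\ell)\prime}$ and bounds each row uniformly in $t$. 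Two sub-steps differ, though. First, for $J_8^{(\ell)}$ the paper does not use your centering-plus-Freedman argument: it simply bounds $\max_t\|\cdot\|$ by the root of the average over $t$, i.e.\ $\max_t\frac{1}{T_\ell}\sum_s(\frac{1}{\sqrt n}\sum_i e_{is}e_{it})^2\le\frac{1}{T_\ell}\sum_t\sum_s(\frac{1}{\sqrt n}\sum_i e_{is}e_{it})^2$, and invokes the second-moment computation behind \eqref{Lem:Fhat_2} to get $O_p(N^{-1/2})$ with no exponential inequality at all; your route (replace $\hat F^{(\ell)}$ by $F^{0,(\ell)}H^{(\ell)}$ via part (i), separate the diagonal $s=t$ term from the off-diagonal conditionally mean-zero part under Assumption \ref{ass:1*}(ii)--(iii), then Lemma \ref{Lem:Bern_mds} with a union bound over $t$) also works and is more transparent, at the cost of an extra $\sqrt{\log T}$ factor that is harmless for the stated rate. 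Second, your bookkeeping of where the $\log(N\vee T)$ comes from is slightly off: in the paper it enters through $J_7^{(\ell)}$, whose $t$-th row depends on $t$ through $e_{it}$ (not $f_t^0$), and is controlled by $\max_t\|\frac{1}{n}\sum_{i\in\mathcal N}\lambda_i^0 e_{it}\|=O_p(\sqrt{(\log T)/N})$ via exactly the cross-sectional Bernstein-with-union-over-$t$ device you describe for $J_8$; similarly the rows of $J_2^{(\ell)}$ depend on $t$ through $X_{it}$ rather than $f_t^0$, but since $\max_t\frac{1}{n}\sum_{i\in\mathcal N}\|X_{it}\|^2$ is bounded a.s.\ (Assumptions \ref{ass:1}(iv)/\ref{ass:8}(ii)) this needs no stochastic union bound. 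These slips do not affect the validity of your argument: every bound you assert is attainable, and the final rates match the paper's.
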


\begin{proof}
{\small (i) We obtain the result by combining Lemma \ref{Lem:Fhat} and Lemma %
\ref{Lem:heter_slope}(iii). }

{\small (ii) We obtain the result by combining \eqref{Lem4:MF_3} and Lemma %
\ref{Lem:heter_slope}(iii). }

{\small (iii) Recall that 
\begin{align*}
\hat{\lambda}_{i}^{(\ell )}& =\left( \hat{F}^{(\ell )\prime }\hat{F}^{(\ell
)}\right) ^{-1}\hat{F}^{(\ell )\prime }\left( Y_{i}^{(\ell )}-X_{i}^{(\ell
)\prime }\hat{\theta}_{i}^{(\ell )}\right) =\frac{1}{T_{\ell }}\hat{F}%
^{(\ell )\prime }\left[ Y_{i}^{(\ell )}-X_{i}^{(\ell )\prime }\theta
_{i}^{0,(\ell )}-X_{i}^{(\ell )\prime }\left( \hat{\theta}_{i}^{(\ell
)}-\theta _{i}^{0,(\ell )}\right) \right] \\
& =\frac{1}{T_{\ell }}\hat{F}^{(\ell )\prime }\left[ F^{0,(\ell )}\lambda
_{i}^{0}+e_{i}^{(\ell )}-X_{i}^{(\ell )\prime }\left( \hat{\theta}%
_{i}^{(\ell )}-\theta _{i}^{0,(\ell )}\right) \right] \\
& =\frac{1}{T_{\ell }}\hat{F}^{(\ell )\prime }\left( F^{0,(\ell )}-\hat{F}%
^{(\ell )}H^{(\ell )-1}\right) \lambda _{i}^{0}+\frac{1}{T_{\ell }}\hat{F}%
^{(\ell )\prime }\hat{F}^{(\ell )}H^{(\ell )-1}\lambda _{i}^{0}+\frac{1}{%
T_{\ell }}\hat{F}^{(\ell )\prime }e_{i}^{(\ell )}-\frac{1}{T_{\ell }}\hat{F}%
^{(\ell )\prime }X_{i}^{(\ell )\prime }\left( \hat{\theta}_{i}^{(\ell
)}-\theta _{i}^{0,(\ell )}\right) \\
& =H^{(\ell )-1}\lambda _{i}^{0}+\frac{1}{T_{\ell }}\hat{F}^{(\ell )\prime
}\left( F^{0,(\ell )}-\hat{F}^{(\ell )}H^{(\ell )-1}\right) \lambda _{i}^{0}+%
\frac{1}{T_{\ell }}\hat{F}^{(\ell )\prime }e_{i}^{(\ell )}-\frac{1}{T_{\ell }%
}\hat{F}^{(\ell )\prime }X_{i}^{(\ell )\prime }\left( \hat{\theta}%
_{i}^{(\ell )}-\theta _{i}^{0,(\ell )}\right)
\end{align*}%
where the second and fifth equalities are by the normalization that $\frac{%
\hat{F}^{(\ell )\prime }\hat{F}^{(\ell )}}{T_{\ell }}=I_{r_{0}}$. It follows
that 
\begin{align*}
\hat{\lambda}_{i}^{(\ell )}-H^{(\ell )-1}\lambda _{i}^{0}& =\frac{1}{T_{\ell
}}\hat{F}^{(\ell )\prime }\left( F^{0,(\ell )}-\hat{F}^{(\ell )}H^{(\ell
)-1}\right) \lambda _{i}^{0}+\frac{1}{T_{\ell }}\hat{F}^{(\ell )\prime
}e_{i}^{(\ell )}-\frac{1}{T_{\ell }}\hat{F}^{(\ell )\prime }X_{i}^{(\ell
)\prime }\left( \hat{\theta}_{i}^{(\ell )}-\theta _{i}^{0,(\ell )}\right) \\
& =\frac{1}{T_{\ell }}\hat{F}^{(\ell )\prime }\left( F^{0,(\ell )}-\hat{F}%
^{(\ell )}H^{(\ell )-1}\right) \lambda _{i}^{0}+\frac{1}{T_{\ell }}\left( 
\hat{F}^{(\ell )}-F^{0,(\ell )}H^{(\ell )}\right) ^{\prime }e_{i}^{(\ell )}
\\
& +H^{(\ell )\prime }\frac{1}{T_{\ell }}F^{0,(\ell )}e_{i}^{(\ell )}-\frac{1%
}{T_{\ell }}\hat{F}^{(\ell )\prime }X_{i}^{(\ell )\prime }\left( \hat{\theta}%
_{i}^{(\ell )}-\theta _{i}^{0,(\ell )}\right) :=I_{i,1}^{(\ell
)}+I_{i,2}^{(\ell )}+I_{i,3}^{(\ell )}-I_{i,4}^{(\ell )}.
\end{align*}%
}

{\small First, by Lemmas \ref{Lem:factor_uniform}(i) and \ref{Lem:bounded
u&v}(i) and Assumption \ref{ass:8}(ii), 
\begin{eqnarray*}
\max_{i\in \mathcal{N}}\left\Vert I_{i,1}^{(\ell )}\right\Vert &\leq
&\left\Vert \frac{\hat{F}^{(\ell )}}{\sqrt{T_{\ell }}}\right\Vert \frac{%
\left\Vert F^{0,(\ell )}-\hat{F}^{(\ell )}H^{(\ell )-1}\right\Vert }{\sqrt{%
T_{\ell }}}\max_{i\in \mathcal{N}}\left\Vert \lambda _{i}^{0}\right\Vert
=O_{p}\left( \sqrt{\frac{\log N}{N\wedge T}}\right) ,\text{ and} \\
\max_{i\in \mathcal{N}}\left\Vert I_{i,2}^{(\ell )}\right\Vert &\leq &\frac{1%
}{\sqrt{T_{\ell }}}\left\Vert \hat{F}^{(\ell )}-F^{0,(\ell )}H^{(\ell
)}\right\Vert \frac{\left\Vert e_{i}^{(\ell )}\right\Vert }{\sqrt{T_{\ell }}}%
=O_{p}\left( \sqrt{\frac{\log N}{N\wedge T}}\right) .
\end{eqnarray*}%
Similarly $\max_{i\in \mathcal{N}}||I_{i,3}^{(\ell )}||=O_{p}(\sqrt{(\log
N)/T})$ by \eqref{Lem:emp_Bern}. Now, 
\begin{equation*}
\max_{i\in \mathcal{N}}\left\Vert I_{i,4}^{(\ell )}\right\Vert \leq
\left\Vert \frac{\hat{F}^{(\ell )}}{\sqrt{T_{\ell }}}\right\Vert \frac{%
\left\Vert X_{i}^{(\ell )}\right\Vert }{\sqrt{T_{\ell }}}\max_{i\in \mathcal{%
N}}\left\Vert \hat{\theta}_{i}^{(\ell )}-\theta _{i}^{0,(\ell )}\right\Vert
=O_{p}\left( \sqrt{\frac{\log N}{T}}\right) ,
\end{equation*}%
by Lemma \ref{Lem:heter_slope}(iii). Combining the above results yields $%
\max_{i\in \mathcal{N}}||\hat{\lambda}_{i}-H^{(\ell )-1}\lambda _{i}||$ $%
=O_{p}(\sqrt{(\log N)/(N\wedge T)}).$}

{\small (iv) Recall from \eqref{Lem:Fhat_1} that $\hat{F}^{(\ell )\prime
}-H^{(\ell )\prime }F^{\prime }=V_{NT}^{(\ell )-1}\sum_{m\in \lbrack
8]}J_{m}^{(\ell )\prime },$ where $J_{m}^{(\ell )},~m\in \lbrack 8],$ are
defined in the proof of Lemma \ref{Lem:Fhat}. Let $J_{m,t}^{(\ell )}$ be the 
$t$-th column of $V_{NT}^{(\ell )-1}J_{m,t}^{(\ell )}$ for $m\in \lbrack 8]$%
. We observe that $\hat{f}_{t}-H^{(\ell )\prime }f_{t}^{0}$ is the $t$-th
column of $\hat{F}^{(\ell )\prime }-H^{(\ell )\prime }F^{\prime }$. It
remains to show the convergence rate for $J_{m,t}^{(\ell )},m\in \lbrack 8]$%
. }

{\small For $J_{1,t}^{(\ell )}$, we notice that 
\begin{align*}
\max_{t\in \mathcal{T}_{\ell }}\left\Vert J_{1,t}^{(\ell )}\right\Vert &
=\max_{t\in \mathcal{T}_{\ell }}\left\Vert V_{NT}^{(\ell )-1}\hat{F}^{(\ell
)\prime }\left[ \frac{1}{NT_{\ell }}\sum_{i\in \mathcal{N}}X_{i}^{(\ell
)}\left( \theta _{i}^{0,(\ell )}-\hat{\theta}_{i}^{(\ell )}\right) \left(
\theta _{i}^{0,(\ell )}-\hat{\theta}_{i}^{(\ell )}\right) ^{\prime }X_{it}%
\right] \right\Vert  \\
& \leq \left\Vert V_{NT}^{(\ell )-1}\right\Vert \frac{\left\Vert \hat{F}%
^{(\ell )}\right\Vert }{\sqrt{T_{\ell }}}\max_{i\in \mathcal{N}}\left\Vert 
\hat{\theta}_{i}^{(\ell )}-\theta _{i}^{0,(\ell )}\right\Vert ^{2}\max_{i\in 
\mathcal{N}}\frac{\left\Vert X_{i}^{(\ell )}\right\Vert }{\sqrt{T_{\ell }}}%
\max_{t\in \mathcal{T}_{\ell }}\frac{1}{n}\sum_{i\in \mathcal{N}}\left\Vert
X_{it}\right\Vert  \\
& \lesssim \max_{i\in \mathcal{N}}\left\Vert \hat{\theta}_{i}^{(\ell
)}-\theta _{i}^{0,(\ell )}\right\Vert ^{2}=O_{p}((\log N)/T),
\end{align*}%
by Lemma \ref{Lem:heter_slope}(iii) and Assumption \ref{ass:8}(ii).
Similarly, by Lemma \ref{Lem:bounded u&v}(i), 
\begin{align*}
\max_{t\in \mathcal{T}_{\ell }}\left\Vert J_{2,t}^{(\ell )}\right\Vert &
=\max_{t\in \mathcal{T}_{\ell }}\left\Vert V_{NT}^{(\ell )-1}\hat{F}^{(\ell
)\prime }F^{0,(\ell )}\left[ \frac{1}{nT_{\ell }}\sum_{i\in \mathcal{N}%
}\lambda _{i}^{0}\left( \theta _{i}^{0,(\ell )}-\hat{\theta}_{i}^{(\ell
)}\right) ^{\prime }X_{it}\right] \right\Vert  \\
& \lesssim \max_{i\in \mathcal{N}}\left\Vert \hat{\theta}_{i}^{(\ell
)}-\theta _{i}^{0,(\ell )}\right\Vert =O_{p}(\sqrt{(\log N)/T}), \\
\max_{t\in \mathcal{T}_{\ell }}\left\Vert J_{3,t}^{(\ell )}\right\Vert &
=\max_{t\in \mathcal{T}_{\ell }}\left\Vert V_{NT}^{(\ell )-1}\hat{F}^{(\ell
)\prime }\left[ \frac{1}{nT_{\ell }}\sum_{i\in \mathcal{N}}e_{i}^{(\ell
)}\left( \theta _{i}^{0,(\ell )}-\hat{\theta}_{i}^{(\ell )}\right) ^{\prime
}X_{it}\right] \right\Vert  \\
& \lesssim \max_{i\in \mathcal{N}}\left\Vert \hat{\theta}_{i}^{(\ell
)}-\theta _{i}^{0,(\ell )}\right\Vert _{2}=O_{p}(\sqrt{(\log N)/T}) \\
\max_{t\in \mathcal{T}_{\ell }}\left\Vert J_{4,t}^{(\ell )}\right\Vert &
=\max_{t\in \mathcal{T}_{\ell }}\left\Vert V_{NT}^{(\ell )-1}\hat{F}^{(\ell
)\prime }\left[ \frac{1}{nT_{\ell }}\sum_{i\in \mathcal{N}}X_{i}^{(\ell
)}\left( \theta _{i}^{0,(\ell )}-\hat{\theta}_{i}^{(\ell )}\right) \lambda
_{i}^{0\prime }\right] f_{t}^{0}\right\Vert  \\
& \lesssim \max_{i\in \mathcal{N}}\left\Vert \hat{\theta}_{i}^{(\ell
)}-\theta _{i}^{0,(\ell )}\right\Vert _{2}=O_{p}(\sqrt{(\log N)/T}), \\
\max_{t\in \mathcal{T}_{\ell }}\left\Vert J_{5,t}^{(\ell )}\right\Vert _{2}&
=\max_{t\in \mathcal{T}_{\ell }}\left\Vert V_{NT}^{(\ell )-1}\hat{F}^{(\ell
)\prime }\left[ \frac{1}{nT_{\ell }}\sum_{i\in \mathcal{N}}X_{i}^{(\ell
)}\left( \theta _{i}^{0,(\ell )}-\hat{\theta}_{i}^{(\ell )}\right) e_{it}%
\right] \right\Vert  \\
& \lesssim \max_{i\in \mathcal{N}}\left\Vert \hat{\theta}_{i}^{(\ell
)}-\theta _{i}^{0,(\ell )}\right\Vert _{2}=O_{p}(\sqrt{(\log N)/T}).
\end{align*}%
Next, 
\begin{equation*}
\max_{t\in \mathcal{T}_{\ell }}\left\Vert J_{6,t}^{(\ell )}\right\Vert
=\max_{t\in \mathcal{T}_{\ell }}\left\Vert V_{NT}^{(\ell )-1}\hat{F}^{(\ell
)\prime }\left[ \frac{1}{nT_{\ell }}\sum_{i\in \mathcal{N}}e_{i}^{(\ell
)}\lambda _{i}^{0\prime }\right] f_{t}^{0}\right\Vert \lesssim \frac{1}{n%
\sqrt{T_{\ell }}}\left\Vert \sum_{i\in \mathcal{N}}e_{i}^{(\ell )}\lambda
_{i}^{0\prime }\right\Vert =O_{p}(N^{-1/2}),
\end{equation*}%
by the fact that 
\begin{equation*}
\mathbb{E}\left( \frac{1}{nT_{\ell }}\left\Vert \sum_{i\in \mathcal{N}%
}e_{i}^{(\ell )}\lambda _{i}^{0\prime }\right\Vert ^{2}\Bigg|\mathscr{D}%
\right) =\mathbb{E}\left( \frac{1}{T_{\ell }}\sum_{t\in \mathcal{T}_{\ell
}}\left\Vert \frac{1}{\sqrt{n}}\sum_{i\in \lbrack N]}e_{it}\lambda
_{i}^{0}\right\Vert ^{2}\Bigg|\mathscr{D}\right) =O_{p}(1)
\end{equation*}%
with the same manner as \eqref{Lem:D1_1}. Similarly, 
\begin{equation*}
\max_{t\in \mathcal{T}_{\ell }}\left\Vert J_{7,t}^{(\ell )}\right\Vert
=\max_{t\in \mathcal{T}_{\ell }}\left\Vert V_{NT}^{(\ell )-1}\hat{F}^{(\ell
)\prime }F^{0,(\ell )}\left[ \frac{1}{nT_{\ell }}\sum_{i\in \mathcal{N}%
}\lambda _{i}^{0}e_{it}\right] \right\Vert \lesssim \max_{t\in \mathcal{T}%
_{\ell }}\left\Vert \frac{1}{n}\sum_{i\in \mathcal{N}}\lambda
_{i}^{0}e_{it}\right\Vert =O_{p}(\sqrt{(\log T)/N}),
\end{equation*}%
by using the Bernstein's inequality for the independent sequence in Lemma %
\ref{Lem:Bern_mds}(i). For }${\small J}_{{\small 8,t}}^{({\small \ell })}$%
{\small , we have 
\begin{align*}
\max_{t\in \mathcal{T}_{\ell }}\left\Vert J_{8,t}^{(\ell )}\right\Vert &
\lesssim \frac{1}{\sqrt{n}}\max_{t\in \mathcal{T}_{\ell }}\left\Vert \frac{1%
}{\sqrt{nT_{\ell }}}\sum_{i\in \mathcal{N}}e_{i}^{(\ell )}e_{it}\right\Vert =%
\frac{1}{\sqrt{n}}\sqrt{\max_{t\in \mathcal{T}_{\ell }}\frac{1}{T_{\ell }}%
\sum_{s\in \lbrack T_{\ell }]}\left( \frac{1}{\sqrt{n}}\sum_{i\in \mathcal{N}%
}e_{is}e_{it}\right) ^{2}} \\
& \leq \frac{1}{\sqrt{n}}\sqrt{\frac{1}{T_{\ell }}\sum_{t\in \mathcal{T}%
_{\ell }}\sum_{s\in \lbrack T_{\ell }]}\left( \frac{1}{\sqrt{n}}\sum_{i\in 
\mathcal{N}}e_{is}e_{it}\right) ^{2}}=O_{p}(N^{-1/2}),
\end{align*}%
where the last equality holds by the fact that $\mathbb{E}[\frac{1}{T_{\ell }%
}\sum_{t\in \mathcal{T}_{\ell }}\sum_{s\in \lbrack T_{\ell }]}(\frac{1}{%
\sqrt{N}}\sum_{i\in \mathcal{N}}e_{is}e_{it})^{2}]=O(1)$ by %
\eqref{Lem:Fhat_2}. In sum, we have $\max_{t}||\hat{f}_{t}-H^{(\ell )\prime
}f_{t}^{0}||=O_{p}(\sqrt{(\log N\vee T)/(N\wedge T)})$. }
\end{proof}

\begin{lemma}
{\small \label{Lem:variance_estimates} Under Assumptions \ref{ass:1*}, \ref%
{ass:2} and \ref{ass:8}, we have 
\begin{equation*}
\max_{i\in \mathcal{N}}\left\Vert \hat{S}_{ii}^{(\ell )}-\mathbb{E}\left(
S_{ii}^{0,(\ell )}\big|\mathscr{D}\right) \right\Vert =O_{p}\left( \sqrt{%
\frac{\log N\vee T}{N\wedge T}}\right) ~\text{and}~\max_{i\in \mathcal{N}%
}\left\Vert \hat{\Omega}_{i}^{(\ell )}-\Omega _{i}^{0,(\ell )}\right\Vert
=o_{p}(1).
\end{equation*}%
}
\end{lemma}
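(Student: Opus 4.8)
The plan is to prove the two statements separately, and in each case to split the target into an \emph{estimation-error} part (which replaces $\hat F^{(\ell)}$, $\hat\lambda_i^{(\ell)}$, $\hat f_t^{(\ell)}$ and $\hat e_{it}$ by their population counterparts) and a \emph{sampling-fluctuation} part (a centered average controlled by a Bernstein-type inequality together with a union bound over $i\in\mathcal N$). Throughout I would work conditionally on $\mathscr D$ and use that, under Assumption \ref{ass:1*}, $\{e_{it}\}$ is an m.d.s. that is conditionally serially independent, so that the object $\Omega_i^{0,(\ell)}$ coincides with the limit of $\frac{1}{T_\ell}\sum_{t\in\mathcal T_\ell}\mathbb E(\mathfrak x_{it}^{0,(\ell)}\mathfrak x_{it}^{0,(\ell)\prime}e_{it}^2\,|\,\mathscr D)$.

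For the first claim I would write $\hat S_{ii}^{(\ell)}-\mathbb E(S_{ii}^{0,(\ell)}|\mathscr D)=[\hat S_{ii}^{(\ell)}-S_{ii}^{0,(\ell)}]+[S_{ii}^{0,(\ell)}-\mathbb E(S_{ii}^{0,(\ell)}|\mathscr D)]$. The first bracket equals $T_\ell^{-1}X_i^{(\ell)\prime}(M_{\hat F^{(\ell)}}-M_{F^{0,(\ell)}})X_i^{(\ell)}$, which is $O_p(B_N+(N\wedge T)^{-1/2})$ uniformly in $i$ by Lemma \ref{Lem:thetahat_pre}(i); inserting $B_N=O_p(\sqrt{(\log N)/T})$ from Lemma \ref{Lem:heter_slope}(iii) gives $O_p(\sqrt{\log(N\vee T)/(N\wedge T)})$. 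The second bracket is the centered average $T_\ell^{-1}\sum_{t\in\mathcal T_\ell}[\mathfrak x_{it}^{0,(\ell)}\mathfrak x_{it}^{0,(\ell)\prime}-\mathbb E(\mathfrak x_{it}^{0,(\ell)}\mathfrak x_{it}^{0,(\ell)\prime}|\mathscr D)]$, for which the conditional Bernstein argument that produced \eqref{Lem:emp_Bern}, applied entrywise and combined with the union bound over $i\in\mathcal N$, yields $O_p(\sqrt{(\log N)/T})$. Both pieces sit inside the stated rate.

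For the HAC estimator I would first reduce $\hat\Omega_i^{(\ell)}$ to its infeasible analog $\bar\Omega_i^{(\ell)}$ built from $\mathfrak x_{it}^{0,(\ell)}$ and $e_{it}$. The replacement errors are controlled uniformly: writing $\hat e_{it}-e_{it}=-X_{it}^\prime(\hat\theta_i^{(\ell)}-\theta_i^{0,(\ell)})+(\lambda_i^{0\prime}f_t^0-\hat\lambda_i^{(\ell)\prime}\hat f_t^{(\ell)})$, the second difference is $O_p(\sqrt{\log(N\vee T)/(N\wedge T)})$ after inserting the rotation $H^{(\ell)}$ and invoking Lemmas \ref{Lem:factor_uniform}(iii)--(iv), while $\hat{\mathfrak x}_{it}^{(\ell)}-\mathfrak x_{it}^{0,(\ell)}$ is governed by $\|M_{\hat F^{(\ell)}}-M_{F^{0,(\ell)}}\|$ from Lemma \ref{Lem:factor_uniform}(ii); all these remainders enter the quadratic HAC forms through averages of the type $T_\ell^{-1}\sum_{t}\|X_{it}\|^2$, which are $O_p(1)$ by Assumption \ref{ass:8}(ii). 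I would then show $\bar\Omega_i^{(\ell)}\to\Omega_i^{0,(\ell)}$ uniformly: the no-lag term $T_\ell^{-1}\sum_{t}\mathfrak x_{it}^{0,(\ell)}\mathfrak x_{it}^{0,(\ell)\prime}e_{it}^2$ converges to its conditional mean by another Bernstein-plus-union-bound step and the conditional mean converges to $\Omega_i^{0,(\ell)}$; crucially, each kernel-weighted cross-lag block ($j\ge1$) has conditional mean zero by the m.d.s. property in Assumption \ref{ass:1*}(ii)--(iii), so the weighted sum over $j\in\mathcal T_{\ell,-1}$ is a centered average whose conditional variance is of order $S_T/T$ and hence, under the standard bandwidth requirement $S_T\to\infty$ with $S_T/T\to0$, is $o_p(1)$ uniformly in $i$.

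The main obstacle is the uniform (over $i\in\mathcal N$) control of the cross-lag terms: these are weighted sums, over a number of lags growing with $S_T$, of fourth-order products $\mathfrak x_{it}e_{it}\mathfrak x_{i,t+j}e_{i,t+j}$ in unbounded regressors and errors, and the estimation-error substitution contaminates each of them. Since only $q$-th moments are assumed (Assumption \ref{ass:1*}(v)), truncating these products introduces the factor $(NT)^{1/q}$, which must be balanced against the $\sqrt{\log N}$ cost of the union bound and the $S_T$-dependent number of summands. I expect the delicate part to be choosing $S_T$ to diverge slowly enough that, schematically, $S_T(NT)^{1/q}\sqrt{(\log N)/T}=o(1)$ holds while still $S_T\to\infty$ so that the HAC bias vanishes; this I would handle with the truncation-and-Bernstein device already used for \eqref{Lem:emp_Bern} and \eqref{Lem:emp_freedman}.
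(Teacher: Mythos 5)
Your treatment of the first claim coincides with the paper's: the paper also splits $\hat{S}_{ii}^{(\ell)}-\mathbb{E}(S_{ii}^{0,(\ell)}|\mathscr{D})$ into the estimation-error piece controlled by Lemma \ref{Lem:thetahat_pre}(i) with $B_{N}=O_{p}(\sqrt{(\log N)/T})$ from Lemma \ref{Lem:heter_slope}(iii), and the centered average controlled by the Bernstein-plus-union-bound device of \eqref{Lem:emp_Bern}. For the second claim, however, you take a genuinely different and substantially heavier route. The paper's proof opens by invoking Assumption \ref{ass:1*}(iii) to note that under conditional serial independence of the errors the estimator degenerates to the no-lag form $\hat{\Omega}_{i}^{(\ell)}=\frac{1}{T_{\ell}}\sum_{t\in\mathcal{T}_{\ell}}\hat{\mathfrak{x}}_{it}^{(\ell)}\hat{\mathfrak{x}}_{it}^{(\ell)\prime}\hat{e}_{it}^{2}$ (this convention is announced in the main text right after Theorem \ref{Thm3}), so the entire cross-lag analysis you describe never arises: the proof then consists only of the uniform bounds $\max_{i,t}\|\hat{\mathfrak{x}}_{it}^{(\ell)}-\mathfrak{x}_{it}^{(\ell)}\|$ via $\max_{t}\|\hat{F}^{(\ell)}\hat{f}_{t}^{(\ell)}-F^{0,(\ell)}f_{t}^{0}\|$ from Lemma \ref{Lem:factor_uniform}, the uniform bound on $\hat{e}_{it}-e_{it}$, the expansion of the quadratic form in \eqref{Lem:var_7}, and one Bernstein step \eqref{Lem:var_8} — i.e.\ exactly your "no-lag" block. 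What your route buys is a stronger statement (consistency of the full kernel-weighted HAC under m.d.s.\ errors); what it costs is a bandwidth condition of the type $S_{T}\to\infty$ with $S_{T}(NT)^{1/q}\sqrt{(\log N)/T}\to 0$, which is \emph{not} among the lemma's hypotheses, so as a proof of the lemma as stated it does not close under Assumptions \ref{ass:1*}, \ref{ass:2} and \ref{ass:8} alone — unless you first invoke the degeneracy, at which point the lag machinery is vacuous.

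One substantive imprecision in your sketch deserves flagging even within your own route: the claim that each cross-lag block "has conditional mean zero by the m.d.s.\ property" is not exact in dynamic panels. The regressor $\mathfrak{x}_{it}^{0,(\ell)}$ is a row of $M_{F^{0,(\ell)}}X_{i}^{(\ell)}$ and therefore involves the \emph{entire} time path of $X_{i}$; with lagged dependent variables, $X_{is}$ for $s>t+j$ depends on $e_{i,t+j}$, so the tower argument with respect to $\mathscr{G}_{t+j-1}$ does not annihilate $\mathbb{E}(\mathfrak{x}_{it}^{0,(\ell)}e_{it}\mathfrak{x}_{i,t+j}^{0,(\ell)\prime}e_{i,t+j}|\mathscr{D})$. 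The projection-induced contamination is of order $T^{-1}$ per lag, hence $O(S_{T}/T)=o(1)$ after kernel weighting, so your conclusion survives, but the justification requires this extra decomposition (compare how Lemma \ref{Lem:distri_homo} separates $\bar{\mathbb{X}}_{j,k}^{(\ell)}=\mathbb{E}(\mathbb{X}_{j,k}^{(\ell)}|\mathscr{D})$ from $\tilde{\mathbb{X}}_{j,k}^{(\ell)}$ precisely to deal with this non-adaptedness) rather than a bare appeal to the martingale-difference property.
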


\begin{proof}
{\small Recall that $S_{ii}^{0,(\ell )}=\frac{X_{i}^{(\ell )\prime
}M_{F^{0,(\ell )}}X_{i}^{(\ell )}}{T_{\ell }}$ and $\hat{S}_{ii}^{(\ell )}=%
\frac{X_{i}^{(\ell )\prime }M_{\hat{F}^{(\ell )}}X_{i}^{(\ell )}}{T_{\ell }}$%
. Combining \eqref{Lem:heter_3}, Lemma \ref{Lem:thetahat_pre}(i) and Lemma %
\ref{Lem:heter_slope}(iii), we have 
\begin{equation*}
\max_{i\in \mathcal{N}}\left\Vert \hat{S}_{ii}^{(\ell )}-\mathbb{E}\left(
S_{ii}^{0,(\ell )}\big|\mathscr{D}\right) \right\Vert =O_{p}\left( \sqrt{%
\frac{\log N\vee T}{N\wedge T}}\right) .
\end{equation*}%
}

{\small Recall that $\mathfrak{x}_{it}^{(\ell )\prime }$ is the $t$-th row
of $M_{F^{0,(\ell )}}X_{i}^{(\ell )}$ and let $\hat{\mathfrak{x}}%
_{it}^{(\ell )\prime }$ be the $t$-th row of $M_{\hat{F}^{(\ell
)}}X_{i}^{(\ell )}$, respectively. Under Assumption \ref{ass:1*}(iii), we
have $\hat{\Omega}_{i}^{(\ell )}=\frac{1}{T_{\ell }}\sum_{t\in \mathcal{T}%
_{\ell }}\hat{\mathfrak{x}}_{it}^{(\ell )}\hat{\mathfrak{x}}_{it}^{(\ell
)\prime }\hat{e}_{it}^{2}\quad $and $\Omega _{i}^{0,(\ell )}=\frac{1}{%
T_{\ell }}\sum_{t\in \mathcal{T}_{\ell }}\mathbb{E}(\mathfrak{x}_{it}^{(\ell
)}\mathfrak{x}_{it}^{(\ell )\prime }e_{it}^{2}).$ It remains to show 
\begin{equation}
\max_{i\in \mathcal{N}}\left\Vert \frac{1}{T_{\ell }}\sum_{t\in \mathcal{T}%
_{\ell }}\left[ \hat{\mathfrak{x}}_{it}^{(\ell )}\hat{\mathfrak{x}}%
_{it}^{(\ell )\prime }\hat{e}_{it}^{2}-\mathbb{E}(\mathfrak{x}_{it}^{(\ell )}%
\mathfrak{x}_{it}^{(\ell )\prime }e_{it}^{2})\right] \right\Vert =o_{p}(1).
\label{Lem:var_wts1}
\end{equation}%
From the definitions of $\mathfrak{x}_{it}^{(\ell )}$ and $\hat{\mathfrak{x}}%
_{it}^{(\ell )}$, we notice that $\mathfrak{x}_{it}^{(\ell )}=X_{it}-\frac{1%
}{T_{\ell }}X_{i}^{(\ell )\prime }F^{0,(\ell )}f_{t}^{0}$ and $\hat{%
\mathfrak{x}}_{it}^{(\ell )}=X_{it}-\frac{1}{T_{\ell }}X_{i}^{(\ell )\prime }%
\hat{F}^{(\ell )}\hat{f}_{t}^{0}$, which gives 
\begin{equation}
\mathfrak{x}_{it}^{(\ell )}-\hat{\mathfrak{x}}_{it}^{(\ell )}=\frac{1}{%
T_{\ell }}X_{i}^{(\ell )\prime }(\hat{F}^{(\ell )}\hat{f}_{t}^{0}-F^{0,(\ell
)}f_{t}^{0}).  \label{Lem:var_5}
\end{equation}%
Note that 
\begin{align*}
\max_{t\in \mathcal{T}_{\ell }}\frac{1}{\sqrt{T_{\ell }}}\left\Vert \hat{F}%
^{(\ell )}\hat{f}_{t}^{0}-F^{0,(\ell )}f_{t}^{0}\right\Vert _{2}& \leq \frac{%
\left\Vert \hat{F}^{(\ell )}-F^{0,(\ell )}H^{(\ell )}\right\Vert }{\sqrt{%
T_{\ell }}}\max_{t}\left\Vert \hat{f}_{t}^{(\ell )}-H^{(\ell )\prime
}f_{t}^{0}\right\Vert +\frac{\left\Vert F^{0,(\ell )}H^{(\ell )}\right\Vert 
}{\sqrt{T_{\ell }}}\max_{t}\left\Vert \hat{f}_{t}^{(\ell )}-H^{(\ell )\prime
}f_{t}^{0}\right\Vert \\
& +\frac{\left\Vert \hat{F}^{(\ell )}-F^{0,(\ell )}H^{(\ell )}\right\Vert }{%
\sqrt{T_{\ell }}}\max_{t}\left\Vert H^{(\ell )\prime }f_{t}^{0}\right\Vert
=O_{p}\left( \sqrt{\frac{\log N\vee T}{N\wedge T}}\right) ,
\end{align*}
by Lemma \ref{Lem:factor_uniform}(i) and (iv) and Lemma \ref{Lem:bounded u&v}%
(i). Then by \eqref{Lem:var_5} and Assumption \ref{ass:8}(ii), we have 
\begin{equation}
\max_{i\in \mathcal{N},t\in \mathcal{T}_{\ell }}\left\Vert \mathfrak{x}%
_{it}^{(\ell )}-\hat{\mathfrak{x}}_{it}^{(\ell )}\right\Vert \leq \frac{1}{%
\sqrt{T_{\ell }}}\left\Vert X_{i}\right\Vert O_{p}\left( \sqrt{\frac{\log
N\vee T}{N\wedge T}}\right) =O_{p}\left( \sqrt{\frac{\log N\vee T}{N\wedge T}%
}\right) .  \label{Lem:var_6}
\end{equation}%
}

{\small Next, for $i\in \mathcal{N},t\in \mathcal{T}_{\ell }$, note that 
\begin{align*}
\hat{e}_{it}& =Y_{it}-X_{it}^{\prime }\hat{\theta}_{i}^{(\ell )}-\hat{\lambda%
}_{i}^{(\ell )\prime }\hat{f}_{t}^{(\ell )} \\
& =e_{it}-\bigg[X_{it}^{\prime }\left( \theta _{i}^{0,(\ell )}-\hat{\theta}%
_{i}^{(\ell )}\right) +\left( \hat{\lambda}_{i}^{(\ell )}-H^{(\ell
)-1}\lambda _{i}^{0}\right) ^{\prime }\left( \hat{f}_{t}^{(\ell )}-H^{(\ell
)\prime }f_{t}^{0}\right) +\left( \hat{\lambda}_{i}^{(\ell )}-H^{(\ell
)-1}\lambda _{i}^{0}\right) ^{\prime }H^{(\ell )\prime }f_{t}^{0} \\
& +\left( H^{(\ell )-1}\lambda _{i}^{0}\right) ^{\prime }\left( \hat{f}%
_{t}^{(\ell )}-H^{(\ell )\prime }f_{t}^{0}\right) \bigg].
\end{align*}%
Then 
\begin{align}
& \max_{i\in \mathcal{N},t\in \mathcal{T}_{\ell }}\left\vert \hat{e}%
_{it}-e_{it}\right\vert =O_{p}\left( \sqrt{\frac{\log N}{T}}%
(NT)^{1/q}\right) +O_{p}\left( \sqrt{\frac{\log N\vee T}{N\wedge T}}\right)
,\quad \text{and}  \notag  \label{Lem:var_ehat} \\
& \hat{e}_{it}^{2}-e_{it}^{2}=e_{it}\left( \hat{e}_{it}-e_{it}\right)
+\left( \hat{e}_{it}-e_{it}\right) ^{2}=e_{it}X_{it}^{\prime
}R_{1,it}+R_{2,it} \\
& \text{s.t.}\max_{i\in \mathcal{N},t\in \mathcal{T}_{\ell }}\left\Vert
R_{1,it}\right\Vert _{2}=O_{p}\left( \sqrt{\frac{\log N}{T}}\right)
,\max_{i\in \lbrack N],t\in \mathcal{T}_{\ell }}\left\vert
R_{2,it}\right\vert =O_{p}\left( \sqrt{\frac{\log N\vee T}{N\wedge T}}\right)
\notag
\end{align}%
by Lemmas \ref{Lem:heter_slope}(iii), \ref{Lem:factor_uniform}(iii), and \ref%
{Lem:factor_uniform}(iv). It follows that 
\begin{align}
\frac{1}{T_{\ell }}\sum_{t\in \mathcal{T}_{\ell }}\hat{\mathfrak{x}}%
_{it}^{(\ell )}\hat{\mathfrak{x}}_{it}^{(\ell )\prime }\hat{e}_{it}^{2}& =%
\frac{1}{T_{\ell }}\sum_{t\in \mathcal{T}_{\ell }}\mathfrak{x}_{it}^{(\ell )}%
\mathfrak{x}_{it}^{(\ell )\prime }e_{it}^{2}+\frac{1}{T_{\ell }}\sum_{t\in 
\mathcal{T}_{\ell }}\left( \mathfrak{x}_{it}^{(\ell )}-\hat{\mathfrak{x}}%
_{it}^{(\ell )}\right) \left( \mathfrak{x}_{it}^{(\ell )}-\hat{\mathfrak{x}}%
_{it}^{(\ell )}\right) ^{\prime }\left( \hat{e}_{it}^{2}-e_{it}^{2}\right) 
\notag  \label{Lem:var_7} \\
& +\frac{1}{T_{\ell }}\sum_{t\in \mathcal{T}_{\ell }}\left( \mathfrak{x}%
_{it}^{(\ell )}-\hat{\mathfrak{x}}_{it}^{(\ell )}\right) \left( \mathfrak{x}%
_{it}^{(\ell )}-\hat{\mathfrak{x}}_{it}^{(\ell )}\right) ^{\prime
}e_{it}^{2}+\frac{1}{T_{\ell }}\sum_{t\in \mathcal{T}_{\ell }}\mathfrak{x}%
_{it}^{(\ell )}\left( \mathfrak{x}_{it}^{(\ell )}-\hat{\mathfrak{x}}%
_{it}^{(\ell )}\right) ^{\prime }\left( \hat{e}_{it}^{2}-e_{it}^{2}\right) 
\notag \\
& +\frac{1}{T_{\ell }}\sum_{t\in \mathcal{T}_{\ell }}\left( \mathfrak{x}%
_{it}^{(\ell )}-\hat{\mathfrak{x}}_{it}^{(\ell )}\right) \mathfrak{x}%
_{it}^{(\ell )\prime }\left( \hat{e}_{it}^{2}-e_{it}^{2}\right) +\frac{1}{%
T_{\ell }}\sum_{t\in \mathcal{T}_{\ell }}\left( \mathfrak{x}_{it}^{(\ell )}-%
\hat{\mathfrak{x}}_{it}^{(\ell )}\right) \mathfrak{x}_{it}^{(\ell )\prime
}e_{it}^{2}  \notag \\
& +\frac{1}{T_{\ell }}\sum_{t\in \mathcal{T}_{\ell }}\mathfrak{x}%
_{it}^{(\ell )}\left( \mathfrak{x}_{it}^{(\ell )}-\hat{\mathfrak{x}}%
_{it}^{(\ell )}\right) ^{\prime }e_{it}^{2}+\frac{1}{T_{\ell }}\sum_{t\in 
\mathcal{T}_{\ell }}\mathfrak{x}_{it}^{(\ell )}\mathfrak{x}_{it}^{(\ell
)\prime }\left( \hat{e}_{it}^{2}-e_{it}^{2}\right)  \notag \\
& =\frac{1}{T_{\ell }}\sum_{t\in \mathcal{T}_{\ell }}\mathfrak{x}%
_{it}^{(\ell )}\mathfrak{x}_{it}^{(\ell )\prime }e_{it}^{2}+O_{p}\left( 
\sqrt{\frac{\log N\vee T}{N\wedge T}}\right) \quad \text{uniformly over $%
i\in \mathcal{N}$,}
\end{align}%
where the last line holds by \eqref{Lem:var_6}, \eqref{Lem:var_ehat}, and
Assumptions \ref{ass:8}(ii) and \ref{ass:1*}(iv). Using similar arguments as
used to derive \eqref{Lem:emp_freedman} by the Bernstein's inequality for
m.d.s., for a positive constant $c_{9}$, we have 
\begin{equation}
\mathbb{P}\left\{ \max_{i\in \mathcal{N}}\left\Vert \frac{1}{T_{\ell }}%
\sum_{t\in \mathcal{T}_{\ell }}\left[ \mathfrak{x}_{it}^{(\ell )}\mathfrak{x}%
_{it}^{(\ell )\prime }e_{it}^{2}-\mathbb{E}\left( \mathfrak{x}_{it}^{(\ell )}%
\mathfrak{x}_{it}^{(\ell )\prime }e_{it}^{2}\right) \right] \right\Vert
>c_{9}\sqrt{\frac{\log N}{T}}\right\} =o(1).  \label{Lem:var_8}
\end{equation}%
Combining \eqref{Lem:var_7} and \eqref{Lem:var_8}, we obtain %
\eqref{Lem:var_wts1}. 
}
\end{proof}

\begin{lemma}
{\small \label{Lem:size} Let $\hat{\Gamma}^{(\ell )}$ be as defined in %
\eqref{PIFE_test}. Under Assumptions \ref{ass:1*}, \ref{ass:2} and \ref%
{ass:8}, we have $\hat{\Gamma}^{(\ell )}\rightsquigarrow \mathbb{N}(0,1)$
under $H_{0}$. }
\end{lemma}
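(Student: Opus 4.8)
The plan is to show that, under $H_{0}$, the feasible statistic $\hat{\Gamma}^{(\ell)}$ is asymptotically equivalent to an infeasible statistic built from the scores $\xi_{i}^{0,(\ell)}=T_{\ell}^{-1}X_{i}^{(\ell)\prime}M_{F^{0,(\ell)}}e_{i}^{(\ell)}$, and then to apply a central limit theorem conditional on $\mathscr{D}$. Concretely, I would write $\hat{\Gamma}^{(\ell)}=(2pn)^{-1/2}\sum_{i\in\mathcal{N}}(\hat{\mathbb{S}}_{i}^{(\ell)}-p)$ and introduce the infeasible summands $\mathbb{S}_{i}^{0,(\ell)}:=\|\sqrt{T_{\ell}}(\Omega_{i}^{0,(\ell)})^{-1/2}\xi_{i}^{0,(\ell)}\|^{2}$. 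By Lemma~\ref{Lem:heter_slope}(ii) each $\mathbb{S}_{i}^{0,(\ell)}$ is asymptotically $\chi^{2}(p)$, with conditional mean $p+o(1)$ and conditional variance $2p+o(1)$; and under Assumption~\ref{ass:1*}(i) the sequences $\{e_{it},X_{it}\}_{t}$, hence the scores $\xi_{i}^{0,(\ell)}$ and the quadratic forms $\mathbb{S}_{i}^{0,(\ell)}$, are independent across $i$ given $\mathscr{D}$. The bulk of the argument is therefore to establish $(2pn)^{-1/2}\sum_{i\in\mathcal{N}}(\hat{\mathbb{S}}_{i}^{(\ell)}-\mathbb{S}_{i}^{0,(\ell)})=o_{p}(1)$.

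The first reduction handles the demeaning. Under $H_{0}$ all $\theta_{i}^{0,(\ell)}=\bar{\theta}^{0,(\ell)}$, so $\hat{\theta}_{i}^{(\ell)}-\hat{\bar{\theta}}^{(\ell)}=(\hat{\theta}_{i}^{(\ell)}-\theta_{i}^{0,(\ell)})-(\hat{\bar{\theta}}^{(\ell)}-\bar{\theta}^{0,(\ell)})$, and $\hat{\bar{\theta}}^{(\ell)}-\bar{\theta}^{0,(\ell)}=n^{-1}\sum_{i}(\hat{\theta}_{i}^{(\ell)}-\theta_{i}^{0,(\ell)})=O_{p}((NT)^{-1/2})$ by the linear expansion and the conditional independence of the $\xi_{i}^{0,(\ell)}$ across $i$. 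Since this centering is of smaller order than the individual deviations, its contribution to $\sum_{i}\hat{\mathbb{S}}_{i}^{(\ell)}$ is $O_{p}(1)$, hence $o_{p}(\sqrt{n})$ after normalization. The second reduction replaces the feasible weights: Lemma~\ref{Lem:variance_estimates} gives $\max_{i}\|\hat{S}_{ii}^{(\ell)}-\mathbb{E}(S_{ii}^{0,(\ell)}|\mathscr{D})\|=O_{p}(\sqrt{(\log N\vee T)/(N\wedge T)})$ and $\max_{i}\|\hat{\Omega}_{i}^{(\ell)}-\Omega_{i}^{0,(\ell)}\|=o_{p}(1)$, while $\hat{a}_{ii}^{(\ell)}/n$ and $a_{ii}^{0}/n$ are uniformly $o_{p}(1)$. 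Plugging these in and invoking the linearization of Lemma~\ref{Lem:heter_slope}(i), namely $\mathbb{E}(S_{ii}^{0,(\ell)}|\mathscr{D})(\hat{\theta}_{i}^{(\ell)}-\theta_{i}^{0,(\ell)})(1-a_{ii}^{0}/n)=\xi_{i}^{0,(\ell)}+\mathcal{R}_{i}^{(\ell)}$ with $\max_{i}\|\mathcal{R}_{i}^{(\ell)}\|=O_{p}(\log N/(N\wedge T))$, reduces $\hat{\mathbb{S}}_{i}^{(\ell)}$ to $\mathbb{S}_{i}^{0,(\ell)}$ plus remainder terms.

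Writing $\eta_{i}:=\sqrt{T_{\ell}}(\Omega_{i}^{0,(\ell)})^{-1/2}\xi_{i}^{0,(\ell)}$ and $\rho_{i}:=\sqrt{T_{\ell}}(\Omega_{i}^{0,(\ell)})^{-1/2}\mathcal{R}_{i}^{(\ell)}$, the remainder takes the form $\hat{\mathbb{S}}_{i}^{(\ell)}-\mathbb{S}_{i}^{0,(\ell)}=2\eta_{i}^{\prime}\rho_{i}+\|\rho_{i}\|^{2}$ plus lower-order terms from the weight replacement. The quadratic piece is controlled by $n^{-1/2}\sum_{i}\|\rho_{i}\|^{2}\le\sqrt{n}\max_{i}\|\rho_{i}\|^{2}$, which is $o_{p}(1)$ precisely because Assumption~\ref{ass:1*}(vi) forces $\sqrt{N}(\log N)^{2}T^{-1}\to0$ and $T(\log N)^{2}N^{-3/2}\to0$. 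The genuinely delicate term, and the main obstacle, is the cross term $n^{-1/2}\sum_{i}\eta_{i}^{\prime}\rho_{i}$: a crude bound via $\max_{i}\|\eta_{i}\|=O_{p}(\sqrt{\log N})$ and $\max_{i}\|\rho_{i}\|$ is not sharp enough, as it leaves a residual factor of order $(\log N)^{3/2}$, and even Cauchy--Schwarz on $\sum_{i}\|\eta_{i}\|\,\|\rho_{i}\|$ fails. The resolution is to exploit cancellation rather than absolute bounds, decomposing $\mathcal{R}_{i}^{(\ell)}$ as in the proof of Lemma~\ref{Lem:heter_slope}(i) into a part that, conditional on $\mathscr{D}$, is a function of unit $i$'s data alone, so that $\{\eta_{i}^{\prime}\rho_{i}\}$ is a conditionally uncorrelated array and a second-moment argument yields $O_{p}(\max_{i}\|\rho_{i}\|)=o_{p}(1)$, plus a part driven by the common factor-estimation error whose summed contribution is shown negligible using the uniform rates in Lemmas~\ref{Lem:factor_uniform} and \ref{Lem:thetahat_pre} together with Assumption~\ref{ass:1*}(vi).

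Finally, with the reduction in place, it remains to apply a conditional CLT to $(2pn)^{-1/2}\sum_{i\in\mathcal{N}}(\mathbb{S}_{i}^{0,(\ell)}-p)$. Conditional on $\mathscr{D}$, the summands are independent across $i$ with conditional mean $p+o(1)$ and conditional variance $2p+o(1)$ by Lemma~\ref{Lem:heter_slope}(ii); the Lyapunov condition is verified using the $q>8$ conditional moments of Assumption~\ref{ass:1*}(v), which bound $\mathbb{E}(|\mathbb{S}_{i}^{0,(\ell)}|^{q/2}\,|\,\mathscr{D})$ uniformly. I would also check that the aggregate bias $n^{-1/2}\sum_{i}(\mathbb{E}(\mathbb{S}_{i}^{0,(\ell)}|\mathscr{D})-p)$ is $o_{p}(1)$, which holds because in the martingale-difference case $\Omega_{i}^{0,(\ell)}$ matches the conditional second moment of the score, so that $\mathbb{E}(\mathbb{S}_{i}^{0,(\ell)}|\mathscr{D})=p$ up to a uniformly $o(n^{-1/2})$ finite-$T_{\ell}$ term controlled by Assumption~\ref{ass:1*}(vi). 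The conditional CLT then gives $(2pn)^{-1/2}\sum_{i}(\mathbb{S}_{i}^{0,(\ell)}-p)\rightsquigarrow\mathbb{N}(0,1)$, and since the limit law does not depend on $\mathscr{D}$ the convergence also holds unconditionally, completing the proof.
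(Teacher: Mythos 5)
Your proposal follows essentially the same route as the paper's proof: under $H_{0}$ you expand $\hat{\mathbb{S}}_{i}^{(\ell )}$ around the infeasible quadratic form in $\xi _{i}^{0,(\ell )}$, kill the centering and weight-estimation terms using Lemmas \ref{Lem:heter_slope} and \ref{Lem:variance_estimates} together with Assumption \ref{ass:1*}(vi), and close with a CLT across $i$ via conditional independence given $\mathscr{D}$ --- exactly the paper's decomposition into $\hat{\mathbb{S}}_{i,1}^{(\ell )}$, $\hat{\mathbb{S}}_{i,2}^{(\ell )}$, $\hat{\mathbb{S}}_{i,3}^{(\ell )}$ and its final step. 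One caveat: your claim $\hat{\bar{\theta}}^{(\ell )}-\bar{\theta}^{0,(\ell )}=O_{p}((NT)^{-1/2})$ overstates what the linearization delivers, since the averaged remainders $\mathcal{R}_{i}^{(\ell )}$ contain common factor-estimation error and only give the paper's rate $O_{p}(\log N/(N\wedge T))$ in \eqref{Lem:size_2}, but this weaker rate already makes the centering terms $o_{p}(1)$ after the $\sqrt{n}$ normalization under Assumption \ref{ass:1*}(vi), so the conclusion is unaffected --- and your explicit handling of the cross term $n^{-1/2}\sum_{i}\eta _{i}^{\prime }\rho _{i}$ is in fact more careful than the paper, which absorbs it silently into its final ``$+o_{p}(1)$''.
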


\begin{proof}
{\small Under the null that $\theta _{i}^{0,(\ell )}=\theta ^{0,(\ell )}$
for $\forall i\in \mathcal{N}$, we use Lemma \ref{Lem:heter_slope}(i) to
obtain that 
\begin{align}
\hat{\bar{\theta}}^{(\ell )}-\theta ^{0,(\ell )}& =\frac{1}{n}\sum_{i\in 
\mathcal{N}}\left[ \mathbb{E}\left( S_{ii}^{0,(\ell )}\big|\mathscr{D}%
\right) \right] ^{-1}\xi _{i}^{(\ell )}  \notag  \label{Lem:size_1} \\
& +\frac{1}{n^{2}}\sum_{i\in \mathcal{N}}a_{ii}^{0}\left[ \mathbb{E}\left(
S_{ii}^{0,(\ell )}\big|\mathscr{D}\right) \right] ^{-1}\left( \hat{\theta}%
_{i}^{(\ell )}-\theta _{i}^{0,(\ell )}\right) +O_{p}\left( \frac{\log N}{%
N\wedge T}\right) ,
\end{align}%
such that 
\begin{equation*}
\left\Vert \frac{1}{n^{2}}\sum_{i\in \mathcal{N}}a_{ii}^{0}(\hat{\theta}%
_{i}^{(\ell )}-\theta _{i}^{0,(\ell )})\right\Vert \leq \frac{1}{n}%
\max_{i\in \mathcal{N}}\left\vert a_{ii}^{(\ell )}\right\vert \left\Vert 
\hat{\theta}_{i}^{(\ell )}-\theta _{i}^{0,(\ell )}\right\Vert =O_{p}\left( 
\frac{\sqrt{\log N}}{N\sqrt{T}}\right),
\end{equation*}%
with Lemma \ref{Lem:heter_slope}(iii) and the fact that $\max_{i\in \mathcal{%
N}}\left\vert a_{ii}^{0}\right\vert =O(1)$. For the first term on the right
side of \eqref{Lem:size_1}, we have 
\begin{equation*}
\frac{1}{n}\sum_{i\in \mathcal{N}}\left[ \mathbb{E}\left( S_{ii}^{0,(\ell )}%
\big|\mathscr{D}\right) \right] ^{-1}\xi _{i}^{(\ell )}=\frac{1}{nT_{\ell }}%
\sum_{i\in \mathcal{N}}\sum_{t\in \mathcal{T}_{\ell }}\left[ \mathbb{E}%
\left( S_{ii}^{0,(\ell )}\big|\mathscr{D}\right) \right] ^{-1}\mathfrak{x}%
_{it}e_{it}=O_{p}(\frac{1}{\sqrt{nT}}),
\end{equation*}%
by the central limit theorem for m.d.s., which yields that 
\begin{equation}
\left\Vert \hat{\bar{\theta}}^{0,(\ell )}-\theta ^{0,(\ell )}\right\Vert
=O_{p}\left( \frac{\log N}{N\wedge T}\right) .  \label{Lem:size_2}
\end{equation}%
}

{\small Recall from \eqref{PIFE_test} that $\hat{\Gamma}^{(\ell )}=\sqrt{n}%
\cdot \frac{\frac{1}{n}\sum_{i\in \mathcal{N}}\hat{\mathbb{S}}_{i}^{(\ell
)}-p}{\sqrt{2p}}$. Note that 
\begin{align*}
\hat{\mathbb{S}}_{i}^{(\ell )}& =T_{\ell }\left( \hat{\theta}_{i}^{(\ell )}-%
\hat{\bar{\theta}}^{(\ell )}\right) ^{\prime }\hat{S}_{ii}^{(\ell )}\left( 
\hat{\Omega}_{i}^{(\ell )}\right) ^{-1}\hat{S}_{ii}^{(\ell )}\left( \hat{%
\theta}_{i}^{(\ell )}-\hat{\bar{\theta}}^{(\ell )}\right) \left( 1-\hat{a}%
_{ii}^{(\ell )}/n\right) ^{2} \\
& =T_{\ell }\left( \hat{\theta}_{i}^{(\ell )}-\theta ^{0,(\ell )}\right)
^{\prime }\hat{S}_{ii}^{(\ell )}\left( \hat{\Omega}_{i}^{(\ell )}\right)
^{-1}\hat{S}_{ii}^{(\ell )}\left( \hat{\theta}_{i}^{(\ell )}-\theta
^{0,(\ell )}\right) \left( 1-\hat{a}_{ii}^{(\ell )}/n\right) ^{2} \\
& +T_{\ell }\left( \hat{\bar{\theta}}^{(\ell )}-\theta ^{0,(\ell )}\right)
^{\prime }\hat{S}_{ii}^{(\ell )}\left( \hat{\Omega}_{i}^{(\ell )}\right)
^{-1}\hat{S}_{ii}^{(\ell )}\left( \hat{\bar{\theta}}^{(\ell )}-\theta
^{0,(\ell )}\right) \left( 1-\hat{a}_{ii}^{(\ell )}/n\right) ^{2} \\
& -2T_{\ell }\left( \hat{\theta}_{i}^{(\ell )}-\theta ^{0,(\ell )}\right)
^{\prime }\hat{S}_{ii}^{(\ell )}\left( \hat{\Omega}_{i}^{(\ell )}\right)
^{-1}\hat{S}_{ii}^{(\ell )}\left( \hat{\bar{\theta}}^{(\ell )}-\theta
^{0,(\ell )}\right) \left( 1-\hat{a}_{ii}^{(\ell )}/n\right) ^{2}:=\hat{%
\mathbb{S}}_{i,1}^{(\ell )}+\hat{\mathbb{S}}_{i,2}^{(\ell )}-\hat{\mathbb{S}}%
_{i,3}^{(\ell )}.
\end{align*}%
Below we show that $\frac{1}{\sqrt{n}}\sum_{i\in \mathcal{N}}\hat{\mathbb{S}}%
_{i,2}^{(\ell )}$ and $\frac{1}{\sqrt{n}}\sum_{i\in \mathcal{N}}\hat{\mathbb{%
S}}_{i,3}^{(\ell )}$ are smaller terms and $\frac{1}{\sqrt{n}}\sum_{i\in 
\mathcal{N}}\hat{\mathbb{S}}_{i,1}^{(\ell )}\rightsquigarrow \mathcal{N}(0,1)
$. }

{\small First, noted that 
\begin{align*}
& \left\vert \frac{1}{\sqrt{n}}\sum_{i\in \mathcal{N}}\hat{\mathbb{S}}%
_{i,2}^{(\ell )}\right\vert \leq \sqrt{n}T_{\ell }\left\Vert \hat{\bar{\theta%
}}^{(\ell )}-\theta ^{0,(\ell )}\right\Vert ^{2}\max_{i\in \mathcal{N}%
}\lambda _{\max }\left( \hat{S}_{ii}^{(\ell )}\left( \hat{\Omega}_{i}^{(\ell
)}\right) ^{-1}\hat{S}_{ii}^{(\ell )}\right) \max_{i\in \mathcal{N}}\left( 1-%
\hat{a}_{ii}^{(\ell )}/n\right) ^{2} \\
& =\sqrt{n}T_{\ell }O_{p}\left( \frac{(\log N)^{2}}{N^{2}\wedge T^{2}}%
\right) \max_{i\in \mathcal{N}}\left\Vert \hat{S}_{ii}^{(\ell )}\right\Vert
^{2}\max_{i\in \mathcal{N}}\left\Vert \hat{\Omega}_{i}^{(\ell )}\right\Vert
\max_{i\in \mathcal{N}}\left( 1-\hat{a}_{ii}^{(\ell )}/n\right) ^{2} \\
& =\sqrt{n}T_{\ell }O_{p}\left( \frac{(\log N)^{2}}{N^{2}\wedge T^{2}}%
\right) \left[ \max_{i\in \mathcal{N}}\left\Vert S_{ii}^{0,(\ell
)}\right\Vert ^{2}\max_{i\in \mathcal{N}}\left\Vert \Omega _{i}^{0,(\ell
)}\right\Vert \max_{i\in \mathcal{N}}\left( 1-a_{ii}^{0}/N\right)
^{2}+o_{p}(1)\right] =o_{p}(1),
\end{align*}%
where the first equality is by \eqref{Lem:size_2}, the second equality is by
Lemma \ref{Lem:variance_estimates} and the fact that $\max_{i\in \mathcal{N}%
}|\hat{a}_{ii}^{(\ell )}-a_{ii}^{0}|=o_{p}(1)$ owing to Lemma \ref%
{Lem:factor_uniform}(iii), and the last equality holds by the fact that $%
\max_{i\in \mathcal{N}}||S_{ii}^{0,(\ell )}||=O(1)$, $\max_{i\in \mathcal{N}%
}||\Omega _{i}^{0,(\ell )}||=O(1)$, and $\max_{i\in \mathcal{N}%
}|a_{ii}^{(\ell )}|=O(1)$ and Assumption \ref{ass:1*}(vi). }

{\small Second, by analogous arguments as used above, we have 
\begin{align*}
\left\vert \frac{1}{\sqrt{n}}\sum_{i\in \mathcal{N}}\hat{\mathbb{S}}%
_{i,3}^{(\ell )}\right\vert & \leq 2\sqrt{n}T_{\ell }\max_{i\in \mathcal{N}%
}\left\Vert \hat{\theta}_{i}^{(\ell )}-\theta ^{0,(\ell )}\right\Vert
\left\Vert \hat{\bar{\theta}}^{(\ell )}-\theta ^{0,(\ell )}\right\Vert
\max_{i\in \mathcal{N}}\lambda _{\max }\left( \hat{S}_{ii}^{(\ell )}(\hat{%
\Omega}_{i}^{(\ell )})^{-1}\hat{S}_{ii}^{(\ell )}\right) \max_{i\in \mathcal{%
N}}(1-\hat{a}_{ii}^{(\ell )}/n)^{2} \\
& =\sqrt{n}T_{\ell }O_{p}\left( \sqrt{\frac{\log N}{T}}\right) O_{p}\left( 
\frac{\log N}{N\wedge T}\right) =o_{p}(1).
\end{align*}%
At last for $\hat{\mathbb{S}}_{i,1}^{(\ell )}$, it's clear that 
\begin{align*}
\frac{1}{\sqrt{n}}\sum_{i\in \mathcal{N}}\hat{\mathbb{S}}_{i,1}^{(\ell )}& =%
\frac{1}{\sqrt{n}}\sum_{i\in \mathcal{N}}T_{\ell }(\hat{\theta}_{i}^{(\ell
)}-\theta ^{0,(\ell )})^{\prime }\hat{S}_{ii}^{(\ell )}(\hat{\Omega}%
_{i}^{(\ell )})^{-1}\hat{S}_{ii}^{(\ell )}(\hat{\theta}_{i}^{(\ell )}-\theta
^{0,(\ell )})(1-\hat{a}_{ii}^{(\ell )}/n)^{2}=\frac{1}{\sqrt{n}}\sum_{i\in 
\mathcal{N}}\hat{z}_{i}^{(\ell )}+o_{p}(1), \\
\text{where}\quad \hat{z}_{i}^{(\ell )}& =T_{\ell }(\hat{\theta}_{i}^{(\ell
)}-\theta ^{0,(\ell )})^{\prime }S_{ii}^{0,(\ell )}(\Omega _{i}^{0,(\ell
)})^{-1}S_{ii}^{0,(\ell )}(\hat{\theta}_{i}^{(\ell )}-\theta ^{0,(\ell
)})\left( 1-a_{ii}^{0}/n\right) ^{2}.
\end{align*}%
Then by the central limit theorem, we have $\hat{\Gamma}^{(\ell )}=\frac{1}{%
\sqrt{n}}\sum_{i\in \mathcal{N}}\frac{\hat{z}_{i}^{(\ell )}-p}{\sqrt{2p}}%
+o_{p}(1)\rightsquigarrow \mathbb{N}(0,1).$}
\end{proof}

\begin{lemma}
{\small \label{Lem:power} Under Assumptions \ref{ass:1*}, \ref{ass:2} and %
\ref{ass:8}, we have $|\hat{\Gamma}^{(\ell )}|/(\log N)^{1/2}\rightarrow
\infty $ under $H_{1}$ if $\frac{T_{\ell }}{\sqrt{n}}\sum_{i\in \mathcal{N}%
}||c_{i}^{(\ell )}||^{2}/(\log N)^{1/2}\rightarrow \infty $. }
\end{lemma}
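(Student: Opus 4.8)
The plan is to show that under $H_1$ the statistic $\hat{\Gamma}^{(\ell)}$ is driven by a deterministic ``signal'' of order $\frac{T_\ell}{\sqrt n}\sum_{i\in\mathcal{N}}\|c_i^{(\ell)}\|^2$, which by the maintained condition dominates $(\log N)^{1/2}$. The natural starting point is a decomposition of the centered slope estimators. Since $\theta_i^{0,(\ell)}=\bar\theta^{0,(\ell)}+c_i^{(\ell)}$ with $\frac1n\sum_{i\in\mathcal{N}}c_i^{(\ell)}=0$ by construction, writing $d_i:=\hat\theta_i^{(\ell)}-\theta_i^{0,(\ell)}$ and $\bar d:=\frac1n\sum_{i\in\mathcal{N}}d_i$ gives
\[
\hat\theta_i^{(\ell)}-\hat{\bar\theta}^{(\ell)}=c_i^{(\ell)}+(d_i-\bar d).
\]
Here $c_i^{(\ell)}$ is the signal and $d_i-\bar d$ is estimation noise. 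By Lemma \ref{Lem:heter_slope}(iii) we have $\max_{i\in\mathcal{N}}\|d_i\|=O_p(\sqrt{(\log N)/T})$, while the linear expansion of Lemma \ref{Lem:heter_slope}(i) together with a CLT over the independent mean-zero scores (exactly as in the derivation of the bound on $\hat{\bar\theta}-\theta^0$ in the proof of Lemma \ref{Lem:size}) gives $\|\bar d\|=O_p(\log N/(N\wedge T))$, so $\max_{i\in\mathcal{N}}\|d_i-\bar d\|=O_p(\sqrt{(\log N)/T})$.

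Next I would substitute this decomposition into $\hat{\mathbb{S}}_i^{(\ell)}$ from \eqref{PIFE_test}. Writing $A_i:=\hat{S}_{ii}^{(\ell)}(\hat\Omega_i^{(\ell)})^{-1}\hat{S}_{ii}^{(\ell)}$ and expanding the quadratic form produces three pieces,
\[
\hat{\mathbb{S}}_i^{(\ell)}=T_\ell\big[c_i^{(\ell)\prime}A_ic_i^{(\ell)}+2c_i^{(\ell)\prime}A_i(d_i-\bar d)+(d_i-\bar d)^\prime A_i(d_i-\bar d)\big](1-\hat a_{ii}^{(\ell)}/n)^2.
\]
Averaging over $i$, the last bracketed term reproduces exactly the object analyzed in the proof of Lemma \ref{Lem:size}, so its sample average equals $p+O_p(n^{-1/2})$ and is in particular $O_p(1)$; call it the Noise average. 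For the leading term I would invoke Lemma \ref{Lem:variance_estimates} to replace $\hat{S}_{ii}^{(\ell)}$ and $\hat\Omega_i^{(\ell)}$ by $\mathbb{E}(S_{ii}^{0,(\ell)}\mid\mathscr{D})$ and $\Omega_i^{0,(\ell)}$ uniformly in $i$, and then use the boundedness of these limiting matrices (and the non-degeneracy required for the test to be well defined) to obtain $\lambda_{\min}(A_i)\ge c>0$ w.p.a.1 uniformly, together with $(1-\hat a_{ii}^{(\ell)}/n)^2\to1$. This yields the lower bound
\[
\frac{1}{n}\sum_{i\in\mathcal{N}}T_\ell c_i^{(\ell)\prime}A_ic_i^{(\ell)}(1-\hat a_{ii}^{(\ell)}/n)^2\ge c\,\frac{T_\ell}{n}\sum_{i\in\mathcal{N}}\|c_i^{(\ell)}\|^2\quad\text{w.p.a.1,}
\]
which is the Signal average.

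The cross term is the only piece needing real care, and I would control it by Cauchy--Schwarz relative to the semi-inner product induced by the positive semidefinite $A_i$: $|c_i^{(\ell)\prime}A_i(d_i-\bar d)|\le(c_i^{(\ell)\prime}A_ic_i^{(\ell)})^{1/2}((d_i-\bar d)^\prime A_i(d_i-\bar d))^{1/2}$, followed by Cauchy--Schwarz on the sum over $i$. This bounds the cross-term average by $2(\mathrm{Signal})^{1/2}(\mathrm{Noise})^{1/2}$; since the Noise average is $O_p(1)$ and the Signal average diverges, the cross term is of smaller order than the Signal, giving
\[
\frac{1}{n}\sum_{i\in\mathcal{N}}\hat{\mathbb{S}}_i^{(\ell)}-p\ge \tfrac12\,c\,\frac{T_\ell}{n}\sum_{i\in\mathcal{N}}\|c_i^{(\ell)}\|^2\quad\text{w.p.a.1.}
\]
Because $\hat\Gamma^{(\ell)}=\sqrt n\,\big(\tfrac1n\sum_{i\in\mathcal{N}}\hat{\mathbb{S}}_i^{(\ell)}-p\big)/\sqrt{2p}$, this delivers $\hat\Gamma^{(\ell)}\gtrsim\frac{T_\ell}{\sqrt n}\sum_{i\in\mathcal{N}}\|c_i^{(\ell)}\|^2$, so dividing by $(\log N)^{1/2}$ and applying the condition $\frac{T_\ell}{\sqrt n}\sum_{i\in\mathcal{N}}\|c_i^{(\ell)}\|^2/(\log N)^{1/2}\to\infty$ gives $|\hat\Gamma^{(\ell)}|/(\log N)^{1/2}\to\infty$. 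I expect the main obstacle to be establishing the uniform eigenvalue lower bound $\lambda_{\min}(A_i)\ge c>0$ simultaneously with the $O_p(1)$ control of the Noise average, since both must be secured while the signal is permitted to diverge; once these are in hand, the Cauchy--Schwarz bound forces the signal to dominate automatically.
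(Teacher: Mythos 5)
Your proposal is correct and follows essentially the same route as the paper's proof: the paper expands $\hat{\theta}_i^{(\ell)}-\hat{\bar{\theta}}^{(\ell)}$ into the signal $c_i^{(\ell)}$ plus noise, isolates the signal quadratic form $\hat{\mathbb{S}}_{i,7}^{(\ell)}$ and lower-bounds it via the uniform convergence of $\hat{S}_{ii}^{(\ell)},\hat{\Omega}_i^{(\ell)},\hat{a}_{ii}^{(\ell)}$ (Lemma \ref{Lem:variance_estimates}) and an eigenvalue bound $[\lambda_{\max}(S_{ii}^{0,(\ell)-1}\Omega_i^{0,(\ell)}S_{ii}^{0,(\ell)-1})]^{-1}$, recycles the Lemma \ref{Lem:size} analysis for the noise terms, and kills the cross terms by exactly your Cauchy--Schwarz bound $|\frac{1}{\sqrt{n}}\sum_i\hat{\mathbb{S}}_{i,8}^{(\ell)}|\le 2(\frac{1}{\sqrt{n}}\sum_i\hat{\mathbb{S}}_{i,4}^{(\ell)})^{1/2}(\frac{1}{\sqrt{n}}\sum_i\hat{\mathbb{S}}_{i,7}^{(\ell)})^{1/2}$. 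The only (immaterial) difference is that you merge the paper's two noise components $(\hat{\theta}_i^{(\ell)}-\theta_i^{0,(\ell)})$ and $(\hat{\bar{\theta}}^{(\ell)}-\bar{\theta}^{0,(\ell)})$ into the single term $d_i-\bar{d}$, collapsing the paper's six-term expansion into three.
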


\begin{proof}
{\small Noting that $\theta _{i}^{0,(\ell )}=\theta ^{0,(\ell
)}+c_{i}^{(\ell )}$, we have 
\begin{equation*}
\hat{\theta}_{i}^{(\ell )}-\hat{\bar{\theta}}^{(\ell )}=(\hat{\theta}%
_{i}^{(\ell )}-\theta _{i}^{0,(\ell )})-(\hat{\bar{\theta}}^{(\ell )}-\bar{%
\theta}^{0,(\ell )})+\theta _{i}^{0,(\ell )}-\theta ^{0,(\ell )}=(\hat{\theta%
}_{i}^{(\ell )}-\theta _{i}^{0,(\ell )})-(\hat{\bar{\theta}}^{(\ell )}-\bar{%
\theta}^{0,(\ell )})+c_{i}^{(\ell )}.
\end{equation*}%
Then%
\begin{align*}
\hat{\mathbb{S}}_{i}^{(\ell )}& =T_{\ell }(\hat{\theta}_{i}^{(\ell )}-\theta
_{i}^{0,(\ell )})^{\prime }\hat{S}_{ii}^{(\ell )}(\hat{\Omega}_{i}^{(\ell
)})^{-1}\hat{S}_{ii}^{(\ell )}(\hat{\theta}_{i}^{(\ell )}-\theta
_{i}^{0,(\ell )})(1-\hat{a}_{ii}^{(\ell )}/n)^{2} \\
& +T_{\ell }(\hat{\bar{\theta}}^{(\ell )}-\bar{\theta}^{0,(\ell )})^{\prime }%
\hat{S}_{ii}^{(\ell )}(\hat{\Omega}_{i}^{(\ell )})^{-1}\hat{S}_{ii}^{(\ell
)}(\hat{\bar{\theta}}^{(\ell )}-\bar{\theta}^{0,(\ell )})(1-\hat{a}%
_{ii}^{(\ell )}/n)^{2} \\
& -2T_{\ell }(\hat{\theta}_{i}^{(\ell )}-\theta _{i}^{0,(\ell )})^{\prime }%
\hat{S}_{ii}^{(\ell )}(\hat{\Omega}_{i}^{(\ell )})^{-1}\hat{S}_{ii}^{(\ell
)}(\hat{\bar{\theta}}^{(\ell )}-\bar{\theta}^{0,(\ell )})(1-\hat{a}%
_{ii}^{(\ell )}/n)^{2} \\
& +T_{\ell }c_{i}^{(\ell )\prime }\hat{S}_{ii}^{(\ell )}(\hat{\Omega}%
_{i}^{(\ell )})^{-1}\hat{S}_{ii}^{(\ell )}c_{i}^{(\ell )}(1-\hat{a}%
_{ii}^{(\ell )}/n)^{2}+2T_{\ell }(\hat{\theta}_{i}^{(\ell )}-\theta
_{i}^{0,(\ell )})^{\prime }\hat{S}_{ii}^{(\ell )}(\hat{\Omega}_{i}^{(\ell
)})^{-1}\hat{S}_{ii}^{(\ell )}c_{i}^{(\ell )}(1-\hat{a}_{ii}^{(\ell )}/n)^{2}
\\
& -2T_{\ell }(\hat{\bar{\theta}}^{(\ell )}-\bar{\theta}^{0,(\ell )})^{\prime
}\hat{S}_{ii}^{(\ell )}(\hat{\Omega}_{i}^{(\ell )})^{-1}\hat{S}_{ii}^{(\ell
)}c_{i}^{(\ell )}(1-\hat{a}_{ii}^{(\ell )}/n)^{2}:=\sum_{m=4}^{9}\hat{%
\mathbb{S}}_{i,m}^{(\ell )}.
\end{align*}%
In the proof of Lemma \ref{Lem:size}, we have already shown that 
\begin{equation*}
\frac{1}{\sqrt{n}}\sum_{i\in \mathcal{N}}\frac{\hat{\mathbb{S}}_{i,4}^{(\ell
)}-p}{\sqrt{2p}}\rightsquigarrow \mathbb{N}(0,1)\quad \text{and}\quad
\left\vert \frac{1}{\sqrt{n}}\sum_{i\in \mathcal{N}}\hat{\mathbb{S}}%
_{i,m}^{(\ell )}\right\vert =o_{p}(1),~m=5,6.
\end{equation*}%
As for $\hat{\mathbb{S}}_{i,7}^{(\ell )}$, we can show that 
\begin{align*}
\frac{1}{\sqrt{n}}\sum_{i\in \mathcal{N}}\hat{\mathbb{S}}_{i,7}^{(\ell )}& =%
\frac{T_{\ell }}{\sqrt{n}}\sum_{i\in \mathcal{N}}c_{i}^{(\ell )\prime }\hat{S%
}_{ii}^{(\ell )}(\hat{\Omega}_{i}^{(\ell )})^{-1}\hat{S}_{ii}^{(\ell
)}c_{i}^{(\ell )}(1-\hat{a}_{ii}^{(\ell )}/n)^{2} \\
& =\frac{T_{\ell }}{\sqrt{n}}\sum_{i\in \mathcal{N}}\left[ c_{i}^{(\ell
)\prime }S_{ii}^{0,(\ell )}(\Omega _{i}^{0,(\ell )})^{-1}S_{ii}^{0,(\ell
)}c_{i}^{(\ell )}\left( 1-a_{ii}^{0}/n\right) ^{2}+o_{p}(1)\right]  \\
& \geq \frac{T_{\ell }}{\sqrt{n}}\sum_{i\in \mathcal{N}}\left[ [\lambda
_{\max }(S_{ii}^{0,(\ell )-1}\Omega _{i}^{0,(\ell )}S_{ii}^{0,(\ell
)-1})]^{-1}\left\Vert c_{i}^{(\ell )}\right\Vert _{2}^{2}\left(
1-a_{ii}^{0}/n\right) ^{2}+o_{p}(1)\right]  \\
& \geq \frac{T_{\ell }}{\sqrt{n}}\sum_{i\in \mathcal{N}}\left[ \frac{%
\left\Vert c_{i}^{(\ell )}\right\Vert _{2}^{2}\left( 1-a_{ii}^{0}/n\right)
^{2}}{\left\Vert S_{ii}^{0,(\ell )-1}\right\Vert ^{2}\left\Vert \Omega
_{i}^{0,(\ell )}\right\Vert }+o_{p}(1)\right]  \\
& =\frac{T_{\ell }}{\sqrt{n}}\sum_{i\in \mathcal{N}}\left[ \frac{\left\Vert
c_{i}^{(\ell )}\right\Vert _{2}^{2}}{\left\Vert S_{ii}^{0,(\ell
)-1}\right\Vert ^{2}\left\Vert \Omega _{i}^{0,(\ell )}\right\Vert }+o_{p}(1)%
\right]  \\
& \geq \frac{1}{\max_{i\in \mathcal{N}}\left\Vert S_{ii}^{0,(\ell
)-1}\right\Vert ^{2}\left\Vert \Omega _{i}^{0,(\ell )}\right\Vert }\frac{%
T_{\ell }}{\sqrt{n}}\left[ \sum_{i\in \mathcal{N}}\left\Vert c_{i}^{(\ell
)}\right\Vert _{2}^{2}+o_{p}(1)\right] \rightarrow \infty \text{ at a rate
faster than }(\log N)^{1/2},
\end{align*}%
where the second line is by the uniform convergence of $\hat{S}_{ii}^{(\ell
)}$, $\hat{\Omega}_{i}^{(\ell )}$ and $\hat{a}_{ii}^{(\ell )}$, the fifth
line is by the fact that $\max_{i\in \mathcal{N}}|\frac{a_{ii}^{0}}{n}%
|=o_{p}(1).$ and the last line draws from the assumption that $\frac{T_{\ell
}}{\sqrt{n}}\sum_{i\in \mathcal{N}}||c_{i}^{(\ell )}||^{2}/(\log
N)^{1/2}\rightarrow \infty $. By Cauchy's inequality, we observe that 
\begin{align*}
& \left\vert \frac{1}{\sqrt{n}}\sum_{i\in \mathcal{N}}\hat{\mathbb{S}}%
_{i,8}^{(\ell )}\right\vert \leq 2\sqrt{\frac{1}{\sqrt{n}}\sum_{i\in 
\mathcal{N}}\hat{\mathbb{S}}_{i,4}^{(\ell )}}\sqrt{\frac{1}{\sqrt{n}}%
\sum_{i\in \mathcal{N}}\hat{\mathbb{S}}_{i,7}^{(\ell )}}=o_{p}\left( \frac{1%
}{\sqrt{n}}\sum_{i\in \mathcal{N}}\hat{\mathbb{S}}_{i,7}^{(\ell )}\right) ,
\\
& \left\vert \frac{1}{\sqrt{n}}\sum_{i\in \mathcal{N}}\hat{\mathbb{S}}%
_{i,9}^{(\ell )}\right\vert \leq 2\sqrt{\frac{1}{\sqrt{n}}\sum_{i\in 
\mathcal{N}}\hat{\mathbb{S}}_{i,5}^{(\ell )}}\sqrt{\frac{1}{\sqrt{n}}%
\sum_{i\in \mathcal{N}}\hat{\mathbb{S}}_{i,7}^{(\ell )}}=o_{p}\left( \frac{1%
}{\sqrt{n}}\sum_{i\in \mathcal{N}}\hat{\mathbb{S}}_{i,7}^{(\ell )}\right) .
\end{align*}%
Combining the above results yields that $|\hat{\Gamma}^{(\ell )}|/(\log
N)^{1/2}\rightarrow \infty $. }
\end{proof}

\section{\protect\small Algorithm for Nuclear Norm Regularization}

\label{sec:algorihm} 
{\small To solve the optimization problem in \eqref{obj}, there
are different algorithms in the literature such as the Alternating Direction
Method of Multipliers (ADMM) algorithm and the singular value thresholding
(SVT) procedure. \cite{wang2022low-rank} provide the ADMM algorithm based on
a quantile regression framework, which can be easily extended to the
linear conditional mean regression framework. In this section, we focus on
the SVT procedure for the case of low-rank estimation with two regressors.
The case of more than two regressors is self-evident.

We can iteratively use SVT estimation to obtain the nuclear norm regularized
regression estimates. Specifically, given $\Theta _{1}$ and $\Theta _{2}$,
we solve for $\Theta _{0}$ with 
\begin{equation*}
\Theta _{0}(\Theta _{1},\Theta _{2})=\operatornamewithlimits{\argmin}\limits%
_{\Theta _{0}}\left\Vert Y-X_{1}\odot \Theta _{1}-X_{2}\odot \Theta
_{2}-\Theta _{0}\right\Vert _{F}+\nu _{0}NT\left\Vert \Theta _{0}\right\Vert
_{\ast }.
\end{equation*}%
Given $\Theta _{0}$ and $\Theta _{2}$, we solve for $\Theta _{1}$ with 
\begin{equation*}
\Theta _{1}(\Theta _{0},\Theta _{2})=\operatornamewithlimits{\argmin}\limits%
_{\Theta _{1}}\left\Vert Y-\Theta _{0}-X_{2}\odot \Theta _{2}-X_{1}\odot
\Theta _{1}\right\Vert _{F}+\nu _{1}NT\left\Vert \Theta _{1}\right\Vert
_{\ast }.
\end{equation*}%
Given $\Theta _{0}$ and $\Theta _{1}$, we solve for $\Theta _{2}$ with 
\begin{equation*}
\Theta _{2}(\Theta _{0},\Theta _{1})=\operatornamewithlimits{\argmin}\limits%
_{\Theta _{2}}\left\Vert Y-\Theta _{0}-X_{1}\odot \Theta _{1}-X_{2}\odot
\Theta _{2}\right\Vert _{F}+\nu _{2}NT\left\Vert \Theta _{2}\right\Vert
_{\ast }.
\end{equation*}%
Specifically, the algorithm goes as follows:\newline
\textit{Step 1:} initialize $\Theta _{0}$, $\Theta _{1}$ and $\Theta _{1}$
to be $\Theta _{0}^{1}$, $\Theta _{1}^{1}$ and $\Theta _{1}^{1}$ and set $k=1
$.\newline
\textit{Step 2:} let 
\begin{align*}
& \Theta _{0}^{k+1}=S_{\frac{\nu _{0}NT}{2}}\left( Y-X_{1}\odot \Theta
_{1}^{k}-X_{2}\odot \Theta _{2}^{k}\right) , \\
& \Theta _{1}^{k+1}=S_{\frac{\tau \nu _{1}NT}{2}}\left( \Theta _{1}^{k}-\tau
X_{1}\odot \left( X_{1}\odot \Theta _{1}^{k}-Y+\Theta _{0}^{k+1}+X_{2}\odot
\Theta _{2}^{k}\right) \right) , \\
& \Theta _{2}^{k+1}=S_{\frac{\tau \nu _{2}NT}{2}}\left( \Theta _{2}^{k}-\tau
X_{2}\odot \left( X_{2}\odot \Theta _{2}^{k}-Y+\Theta _{0}^{k+1}+X_{1}\odot
\Theta _{1}^{k+1}\right) \right) , \\
& k=k+1,
\end{align*}%
where $\tau $ is the step size, and $S_{\lambda }(M)$ is the singular value
operator for any matrix $M$ and fixed parameter $\lambda $. By SVD, we have $%
M=U_{M}D_{M}V_{M}^{\prime }$. Define $D_{M,\lambda }$ by replacing the
diagonal entry $D_{M,ii}$ of $D_{M}$ by $\max (D_{M,ii}-\lambda ,0)$, and
then let $S_{\lambda }(M)=U_{M}D_{M,\lambda }V_{M}^{\prime }$.\newline
\textit{Step 3:} repeat step 2 until convergence.

We can follow \cite{chernozhukov2019inference}, which gives the expression
to pin down the step size $\tau$. In addition, Proposition 2.1 of \cite%
{chernozhukov2019inference} shows the convergence of the above algorithm.}

\end{document}